\newtheorem{thm}{THEOREM}[section]
\newtheorem*{thm*}{THEOREM}
\newtheorem{cor}[thm]{COROLLARY}
\newtheorem{lem}[thm]{LEMMA}
\newtheorem{prop}[thm]{PROPOSITION}
\theoremstyle{definition}
\newtheorem{defn}[thm]{Definition}
\newtheorem{rem}[thm]{Remark}
\newcommand{\subsubsubsection}[1]{\textbf{#1}}
\newcounter{subequation}
	\newenvironment{subequation}%
	{\addtocounter{equation}{-1}%
	\stepcounter{subequation}%
	\begin{equation}}%
	{\end{equation}%
}
\newcommand{\sgn}{\mbox{sgn}}
\newcommand{\bA}{\mathbf{A}}
\newcommand{\ba}{\mathbf{a}}
\newcommand{\be}{\mathbf{e}}
\newcommand{\bF}{\mathbf{F}}
\newcommand{\bg}{\mathbf{g}}
\newcommand{\beq}{\begin{equation}}
\newcommand{\eeq}{\end{equation}}
\newcommand{\bseq}{\begin{subequation}}
\newcommand{\eseq}{\end{subequation}}
\newcommand{\refeq}[1]{(\ref{#1})}
\newcommand{\sign}{{\mathrm{sign\,}}}
\newcommand{\fa}{\mathfrak{a}}
\newcommand{\fh}{\mathfrak{h}}
\newcommand{\fl}{\mathfrak{l}}
\newcommand{\fm}{\mathfrak{m}}
\newcommand{\ft}{\mathfrak{t}}
\newcommand{\fB}{\mathfrak{B}}
\newcommand{\fD}{\mathfrak{D}}
\newcommand{\fH}{\mathfrak{H}}
\newcommand{\fM}{\mathfrak{M}}
\newcommand{\fN}{\mathfrak{N}}
\newcommand{\fO}{\mathfrak{O}}
\newcommand{\fP}{\mathfrak{P}}
\newcommand{\fQ}{\mathfrak{Q}}
\newcommand{\fR}{\mathfrak{R}}
\newcommand{\fS}{\mathfrak{S}}
\newcommand{\fT}{\mathfrak{T}}
\newcommand{\fV}{\mathfrak{V}}
\newcommand{\p}{\partial}
\newcommand{\cF}{\Phi}
\newcommand{\cA}{\mathcal{A}}
\newcommand{\cR}{\mathcal{R}}
\newcommand{\cC}{\mathcal{C}}
\newcommand{\cE}{{\mathcal E}}
\newcommand{\cH}{{\mathcal H}}
\newcommand{\cI}{{\mathcal I}}
\newcommand{\cK}{{\mathcal K}}
\newcommand{\cM}{{\mathcal M}}
\newcommand{\cN}{{\mathcal N}}
\newcommand{\cO}{\mathcal{O}}
\newcommand{\cP}{{\mathcal P}}
\newcommand{\cS}{{\mathcal S}}
\newcommand{\cW}{\mathcal{W}}
\newcommand{\sS}{\mathcal{S}}
\newcommand{\cZ}{\mathcal{Z}}
\newcommand{\Cset}{{\mathbb C}}
\newcommand{\Rset}{{\mathbb R}}
\newcommand{\Sset}{{\mathbb S}}
\newcommand{\Zset}{{\mathbb Z}}
\newcommand{\la}{\lambda}
\newcommand{\La}{\Lambda}
\newcommand{\de}{\delta}
\newcommand{\De}{\varpi}
\newcommand{\al}{\alpha}
\newcommand{\ga}{\gamma}
\newcommand{\Ga}{\Gamma}
\newcommand{\ep}{\epsilon}
\newcommand{\ka}{\kappa}
\newcommand{\om}{\omega}
\newcommand{\Om}{\Omega}
\newcommand{\si}{\sigma}
\newcommand{\Si}{\Sigma}
\newcommand{\nab}{\nabla}
\newcommand{\half}{\frac{1}{2}}
\newcommand{\diag}{\mbox{diag}}
\newcommand{\bna}{\begin{eqnarray}}
\newcommand{\ena}{\end{eqnarray}}
\newcommand{\bea}{\begin{eqnarray*}}
\newcommand{\eea}{\end{eqnarray*}}
\newcommand{\ben}{\begin{enumerate}}
\newcommand{\een}{\end{enumerate}}
\newcommand{\bi}{\begin{itemize}}
\newcommand{\ei}{\end{itemize}}
\newcommand{\RR}{{\mathbb R}}
\newcommand{\rad}{\mathrm{rad}}
\newcommand{\sH}{{\sf H}}
\newcommand{\rcyl}{{\mathcal{Z}}}
\begin{document}
%
%\title{The Dirac equation for a point electron in  zero-gravity\\  Kerr--Newman spacetime}
%
\title{The Dirac point electron in zero-gravity Kerr--Newman spacetime}
\author{\normalsize\sc{M. K.-H. Kiessling and A. S. Tahvildar-Zadeh}\\
	{$\phantom{nix}$}\\[-0.1cm]
        \normalsize Department of Mathematics\\[-0.1cm]
	Rutgers, The State University of New Jersey\\[-0.1cm]
	110 Frelinghuysen Rd., Piscataway, NJ 08854}
\vspace{-0.3cm}
\date{Orig. submission: Nov. 13, 2014; Revised version: March 21, 2015} %minor corrections compared to Nov.13 version
%\email{}
%\thanks{}%
%\subjclass{}%
%\keywords{}%
%\dedicatory{}%
%\commby{}%
\maketitle
% ----------------------------------------------------------------
\begin{abstract} 
\noindent
	Dirac's wave equation for a point electron in the topologically nontrivial maximal analytically extended
electromagnetic Kerr--Newman spacetime is studied in a limit $G\to 0$, where $G$ is Newton's constant of universal 
gravitation.
	The following results are obtained:
the formal Dirac Hamiltonian on the static spacelike slices is essentially self-adjoint;
the spectrum of the self-adjoint extension is symmetric about zero, featuring a 
continuum with a gap about zero that, under two smallness conditions, contains a point spectrum.
       The symmetry result extends to the Dirac operator on a generalization of the zero-$G$ Kerr--Newman 
spacetime with different electric-monopole / magnetic-dipole-moment ratio.
\end{abstract}

%%%%%%%%%%%%%%%%%%%%%%%%%%%%%%%%%%%%%%%%%%%%%%%%%%%

\vfill
\hrule
\smallskip
\copyright{2014. The authors.}
\newpage

\section{Introduction}
	In 1976 Chandrasekhar \cite{ChandraDIRACinKERRgeom,ChandraDIRACinKERRgeomERR},
Page \cite{Page76}, and Toop \cite{Too76}		%(mentioned in \cite{Page76})
showed that Dirac's equation for a point electron in the Kerr--Newman spacetime \cite{NCCEPT65} separates essentially 
completely \footnote{In contrast to the familiar separation-of-variables results for, say,
 the Laplacian in a rectangular box or a cylinder or a sphere, Chandrasekhar, Page, and Toop 
 obtained a system of ODEs for functions of only one variable each which is \emph{not} of 
 triangular structure, and so cannot be  solved one equation at a time.}
in oblate spheroidal coordinates.
	Although this remarkable discovery enabled detailed mathematical studies of the behavior of a Dirac electron in a charged, 
rotating black hole spacetime
\cite{KalMil1992,BelMar99,FinsterETalDperDNEerr,FinsterETalDperDNE,FinsterETalDKNa,FinsterETalDKNb,WINKLMEIERa,WINKLMEIERb,WINKLMEIERc,BelCac2010}
(see also \cite{ChandraBOOKonBH} for neutral rotating black holes), there are perplexing conceptual issues which await clarification.
	Beside those that hark back to the enigmatic quantum-mechanical meaning of Dirac's equation in Minkowski spacetime, see 
\cite{Thaller} and \cite{KieTah14b}, 
serious new issues arise because of the physically somewhat questionable character of the Kerr--Newman solution, 
unveiled by Carter \cite{Car68}; see also \cite{ONeillBOOK,HehlREVIEW}.

        Namely, the maximal analytical extension of the stationary axisymmetric Kerr--Newman spacetime has a very strong
curvature singularity on a timelike\footnote{In a 
  limiting sense of course, since the metric is singular on this surface.}
cylindrical surface whose cross-section with constant-$t$ hypersurfaces is a circle; here, $t$ is a
coordinate pertinent to the \emph{asymptotically (at spacelike $\infty$) timelike} Killing field that encodes the stationarity of the
``outer regions'' of the Kerr--Newman spacetime.
	This circle is commonly referred to as \emph{the ``ring'' singularity}.
	The region near the ring is especially pathological since it harbors closed timelike 
loops.\footnote{The timelike ring singularity of the Kerr--Newman manifold is itself the limit 
 of closed timelike loops, for which reason it is not possible to interpret this singular source 
 of the stationary and axisymmetric Kerr--Newman electromagnetic fields
 outside of the outer ergosphere horizon as a ``rotating charged ring.'' 
 For a careful analysis of the ring sources of the electromagnetic z$\,G$KN fields, see \cite{zGKN}.}
	Carter \cite{Car68} also showed that  the maximal analytically extended Kerr--Newman manifold
is ``cross-linked through the ring.''\footnote{The complement of a wedding ring in ordinary three-dimensional Euclidean space
		is topologically non-trivial, too, but ``looping through the ring once brings you back to where you began;''
		in a spacelike slice of the maximal analytically extended Kerr--Newman spacetime ``you need to
		loop through the ring twice to get back to square one.''}
 Interestingly, this non-trivial topology was discovered a few years 
earlier in a family of static vacuum spacetimes by Zipoy \cite{Zip66}, who completely described their maximal analytical 
extension.  
  Since Zipoy seems to have been the first to discover this non-trivial topology in exact spacetime solutions to Einstein's 
vacuum equations, we henceforth will refer to it as the \emph{Zipoy topology}.
        Carter showed that the Zipoy topology survives the vanishing-charge
limit of the Kerr--Newman manifold, which yields the maximal analytic extension \cite{BoyLin67} of Kerr's solution \cite{Ker63}
to Einstein's vacuum equations (${R}_{\mu\nu}=0$), cf.~\cite{HawEll73}.
	He furthermore showed that this topology also survives the vanishing-mass
 limit of the Kerr manifold, which yields an
otherwise flat vacuum spacetime consisting of two static spacetime ends which are cross-linked through the ring.
	This vanishing-mass limit of the Kerr manifold coincides with the vanishing-mass limit of Zipoy's oblate spheroidal family
of static vacuum spacetimes\footnote{In the same paper Zipoy also described another, 
 prolate spheroidal family of static vacuum spacetimes, whose metric is nowadays known as ``Zipoy--Voorhees metric."}
	
	In the black-hole sector of their parameter space the Kerr--Newman spacetimes also have a Cauchy horizon, an
event horizon, and an ergosphere horizon; see \cite{HawEll73,ONeillBOOK,HehlREVIEW}.
	From the ``safe perspective of an observer at spatial infinity'' the ring singularity, the acausal region, and the 
Cauchy horizon are invisible, being ``hidden'' behind the event horizon, and no exotic or even objectionable physics would 
ever seem to happen: 
a Dirac spinor wavefunction initially supported outside the event horizon will either keep spreading within the outer region 
or eventually (as $t\to\infty$) accumulate (in parts or wholly) at the event horizon, see \cite{FinsterETalDKNa,FinsterETalDKNb}.
	Yet, an inquisitive physicist may also want to study 
the spinor wave function in other coordinates designed to ``follow it across the event horizon,''\footnote{When the analogous
  study was carried out by Oppenheimer and Snyder \cite{OppSny} for classical gaseous matter undergoing gravitational collapse it
  leveled the ground for building our modern understanding of the physics of gravitational collapse, involving the formation
  of black holes and their singularities.
Poetically speaking their work revealed that there is more physics in general relativity than meets the (distant observer's) eye.}
however, it is neither clear how to continue in a ``physically correct'' manner beyond the
Cauchy horizon\footnote{If instead of the Cauchy problem one studies $t$-periodic solutions, then one can continue across the
		Cauchy and the event horizons using a weak matching procedure \cite{FinsterETalDperDNEerr,FinsterETalDperDNE}.
		However, Finster et al. \cite{FinsterETalDperDNEerr,FinsterETalDperDNE} found that no $t$-periodic
		solutions exist which are normalized over a constant-$t$ slice of the ``physical black hole spacetime;'' see
		main text.}
nor what to make of the acausal region of closed timelike loops, nor how to correctly handle the timelike singularity.
	Moreover, if one inquires into the ``physics beyond the event horizon,'' one has the option of allowing the support of
the initial spinor wave function to be spread over both asymptotically flat ends, or some other parts of the maximal
analytically extended Kerr--Newman spacetime.
	It is not so clear which options  (if any) are physically reasonable and which ones are science fiction,
although astrophysicists can argue for the ``physical black hole spacetime,'' i.e. the part of the
maximal analytically extended Kerr--Newman spacetime which is the asymptotic limit of the topologically
simple spacetime of a charged, rotating star collapsing into a black hole \cite{StraumannBOOK}.

        The horizons are absent in the hyper-extremal parameter regime.
        Even though the absence of the Cauchy horizon is a welcome simplification,
this regime is rarely studied because the absence of the event horizon renders the singularity ``naked,'' and the (weak) 
cosmic censorship hypothesis, according to which ``nature abhors naked singularities'' \cite{PenroseNAKEDsing,PenroseCOSMICcensor},
has (unfortunately) discouraged  physicists from investigating spacetimes with naked singularities.
        Yet once a piece of a Dirac spinor wave function has crossed the event horizon of a Kerr--Newman black hole 
it is no longer shielded from possible harm done by the spacetime singularity --- viz., inside the event
horizon the singularity \emph{is} naked ---, and so one may as well study the effects of naked singularities directly.
        Be that as it may, the hyper-extremal regime retains the closed timelike loops which 
according to the standard interpretation of general relativity turn the entire manifold into a \emph{causally vicious set}, 
something that many physicists (including the authors) would regard as physically suspicious.

        A strategy to rid the Kerr--Newman manifold from its Cauchy horizon, and all its other acausal aspects, is
to take a \emph{zero-gravity limit} $G\to 0$, where $G$ is Newton's constant of universal gravitation.
	This would be quite uninteresting if the zero-$G$ limit of the Kerr--Newman manifold would simply yield a
Minkowski spacetime decorated with the electric field of a point charge and the magnetic field of a point dipole,
as one might be tempted to guess from the asymptotically flat ends of the Kerr--Newman spacetime.
        However, as shown by one of us in the accompanying paper \cite{zGKN},
the zero-gravity limit of the maximal analytically extended Kerr--Newman spacetime yields a static, flat,
yet \emph{two-leafed, cross-linked} spacetime which is decorated with
Appell--Sommerfeld \cite{Appell,Som97} electromagnetic fields\footnote{The zero-$G$ limit of the electromagnetic Kerr--Newman
	fields yields fields originally discovered by Appell \cite{Appell}, who obtained them from 
        the Coulomb potential of a point charge by a complex translation of the charge's position.
	Appell noticed that the fields change sign when looping once through the ring, but did not conclude
        --- apparently --- that they live naturally on a topologically nontrivial space.
	Sommerfeld \cite{Som97} seems to have been first to introduce  ``branched Riemann spaces,'' 
	three-dimensional analogues of topologically non-trivial Riemann surfaces, to which we will refer as 
        \emph{Sommerfeld spaces}, and to construct electromagnetic fields (harmonic functions) 
        on them, which in general we will call \emph{Sommerfeld fields}.
	Eventually Evans \cite{Evans51} and his students \cite{Neu51}, \cite{Alz04} laid their rigorous foundations.}
$\bF = d\bA$ with $\bA$ as in \refeq{def:AKN},
whose sources are certain ``finite charge and current distributions''
supported by the one-dimensional ring.\footnote{Strictly speaking, the ring singularity
        is not part of the manifold; it's rather a ring ``defect.''}
        Although  the gravitational (viz.: curvature) aspects of the Kerr--Newman manifold, its event horizon included, 
vanish in this limit too, one does retain the topological, the singular, and all the electromagnetic aspects of the 
spacetime.
        Studies of the Dirac equation for a point electron in this zero-$G$ Kerr--Newman (z$G$KN) spacetime will therefore 
illuminate  the role played by the topological and electromagnetic aspects of the Kerr--Newman manifold in the
relativistic quantum mechanics of the electron.

        In this paper we study the Dirac equation for a point electron in static, electromagnetic, flat spacetimes with 
Zipoy topology which include the z$G$KN spacetimes as special case, but which in general can sport 
Sommerfeld fields $\bF = d\bA$ with $\bA$ as in \refeq{def:AQI}, which differ from the Appell--Sommerfeld fields only in a single number,
the ratio $I\pi a^2/Qa$ of their magnetic dipole moment to the magnetic dipole moment of the 
Appell--Sommerfeld fields of same charge $Q$; here, $|a|$ is the radius of the ring singularity.
	By constructing an operator that anti-commutes with the pertinent Dirac Hamiltonian we 
show that the spectrum of any of its self-adjoint extensions is symmetric about zero; this result holds for 
arbitrary $(Q,I)$. 
   All other results are obtained for Dirac's electron in the z$G$KN spacetime ($I\pi a^2/Qa=1$): by
adapting an argument of Winklmeier--Yamada for the Dirac equation of a point electron in the outer region
of the Kerr--Newman black hole spacetime, we show that our formal Dirac Hamiltonian is essentially self-adjoint 
on a spacelike slice of the maximal analytically extended, static z$G$KN spacetime.
	Then we exploit the Chandrasekhar--Page--Toop separation-of-variables method for Dirac's equation on a
general Kerr--Newman spacetime, and the Pr\"ufer transform, to show that the self-adjoint Dirac operator 
has a continuous spectrum with a gap about zero which, under two smallness conditions, contains a pure point spectrum
associated with time-periodic $L^2$ spinor fields, representing
bound states of Dirac's point electron in the electromagnetic field of the ring singularity of the z$G$KN spacetime.

        In the next section we formulate our main results about the Dirac equation for a point electron in the z$G$KN spacetime;
one result is valid also for a Dirac electron in static, flat spacetimes having Zipoy topology featuring  electromagnetic Sommerfeld 
fields of arbitrary $I\pi a/Q$-ratio. 
	In sections 3, 4, 5, 6, and 7 we prove our main theorems about the spectrum.
        In section 8 we conclude with a list of interesting questions left unanswered by this work.
\vskip-0.9truecm $\phantom{nix}$

%%%%%%%%%%%%%%%%%%%%%%%%%%%%%%%%%%%
%%%%%%%%%%%%%%%%%%%%%%%%%%%%%%%%%%%
%%%%%%%%%%%%%%%%%%%%%%%%%%%%%%%%%%%
\section{Formulation of the main results}
%%%%%%%%%%%%%%%%%%%%%%%%%%%%%%%%%%%
%%%%%%%%%%%%%%%%%%%%%%%%%%%%%%%%%%%
%%%%%%%%%%%%%%%%%%%%%%%%%%%%%%%%%%%

     We begin by formulating the Dirac equation for a point electron in electromagnetic, static, flat spacetimes with 
Zipoy topology which generalize zero-$G$ Kerr--Newman spacetimes to zero-$G$ Kerr spacetimes equipped with Sommerfeld 
fields of arbitrary $I\pi a/Q$-ratio. 
     We then state our main theorems about the spectrum of the pertinent Dirac operators.

\vskip-20pt$\phantom{nix}$
%%%%%%%%%%%%%%%%%%%%%%%%%%%%%%%%%%%
%%%%%%%%%%%%%%%%%%%%%%%%%%%%%%%%%%%
\subsection{Dirac's equation for a point electron on zero-$G$ Kerr spacetimes equipped with electromagnetic 
  Sommerfeld fields of arbitrary $I\pi a/Q$-ratio}
%%%%%%%%%%%%%%%%%%%%%%%%%%%%%%%%%%%
%%%%%%%%%%%%%%%%%%%%%%%%%%%%%%%%%%%

%%%%%%%%%%%%%%%%%%%%%%%%%%%%%%%%%%%
\subsubsection{Zero-$G$ Kerr spacetimes}\label{sec:zGK}
%%%%%%%%%%%%%%%%%%%%%%%%%%%%%%%%%%%
	Our limit $G\to 0$ of the maximal analytic extension of the well-known {\em Kerr} family of stationary, axisymmetric
spacetime solutions of Einstein's vacuum equations yields a one-parameter  family of static, flat, but topologically
nontrivial spacetimes\footnote{ We emphasize that Zipoy \cite{Zip66} found a large class of static, axisymmetric, 
  flat but topologically nontrivial solutions to the Einstein vacuum equations which in their zero-$G$ limit coincide with 
  the members of the zero-$G$ Kerr spacetime family.}
$(\cM,\bg)$ which consist of two ``cross-linked leafs.''
      Explicitly, let $\cC\equiv\{(t,r,\theta,\varphi): t\in\Rset, r\in\Rset,\theta\in[0,\pi], \varphi\in[0,2\pi)\}$
denote a rectangular ``four-dimensional cylinder,'' and let
$\cS\equiv \{(t,r,\theta,\varphi): t\in\Rset, r=0, \theta =\pi/2,\varphi\in[0,2\pi)\}\subset\cC$
denote a rectangular ``two-dimensional slab'' in $\cC$.
        Then $\cC\setminus \cS$ is a covering chart of oblate spheroidal (Boyer--Lindquist, or BL) coordinates\footnote{The
         notation $(t,r,\theta,\varphi)$ for the BL coordinates is standard in the relativity literature, and should not be
         confused for instance with the Schwarzschild coordinates on the outer region of that spacetime, or with just 
         standard spherical coordinates of Minkowski spacetime.
	 All standard non-flat $(t,r,\theta,\varphi)$ coordinate systems reduce to standard spherical coordinates
	of the flat Minkowski spacetime near ``$r = \infty$.''
		Note that in BL coordinates $r$ takes any real value.}
for this spacetime, with line element
\beq\label{metricOS}
ds_{\mathbf{g}}^2
=
dt^2 - \left(r^2+a^2\right) \sin^2\theta\, d\varphi^2 - \frac{r^2+a^2 \cos^2\theta}{r^2+a^2} \left(dr^2 + \left(r^2+a^2\right) d\theta^2\right);
\eeq
here, $a^2>0$ is the only parameter of these spacetimes, and we have set the speed of light $c=1$.
 Our sign convention of $(+,-,-,-)$ for the metric follows \cite{FinsterETalDperDNE}.
%\newpage

        The static, axisymmetric character of these zero-$G$ Kerr spacetimes is manifest in (\ref{metricOS}).
        Also, since $r\in\Rset$  occurs strictly quadratically in (\ref{metricOS}), it is clear that the manifold consists of two
``conjoined identical twins.''
	To exhibit their flatness, and in the process also the topological nontrivial juncture, we introduce
cylindrical coordinates $(t,\varrho,z,\varphi)$ on Minkowski spacetime $\Rset^{1,3}$, with the same $(t,\varphi)$ as in
Boyer--Lindquist coordinates, and with the cylindrical coordinates $(\varrho,z)$ related to  the elliptical coordinates
$(r,\theta)$ by
\beq\label{CYLtoELL}
\varrho = \sqrt{r^2+a^2}\sin\theta,\qquad z = r \cos \theta.
\eeq
       In cylindrical  $(t,\varrho,z,\varphi)$ coordinates the metric takes the familiar form for flat Minkowski spacetime 
\beq
ds_{\mathbf{g}}^2 = dt^2 - d\varrho^2 - \varrho^2d\varphi^2 - dz^2,
\eeq
except that the map (\ref{CYLtoELL}) makes it plain that the chart $\cC\setminus\cS$ will be mapped into \emph{two} copies of Minkowski
spacetime which are ``doubly conjoined,'' in a smooth yet crossing manner, at the set  spanned by $\cS$.
  The metric ${\mathbf{g}}$ given by the line element (\ref{metricOS}) has a singularity at $\cS$, which is
the singularity \emph{of the spacetime}, not \emph{in the spacetime}. 
  The set $\cS$ is the boundary of a timelike open solid cylinder in the z$G$KN manifold;
the cross section at any instant $t$ of this cylindrical surface is a translate of the {\em ring}
$\cR_{0}\equiv\{(t,r,\theta,\varphi):t=0,r=0, \theta=\pi/2,\varphi\in[0,2\pi)\}$ of Euclidean radius $\sqrt{a^2}=|a|>0$, 
for which reason one speaks of a ``ring singularity;'' we write $|a|$ because we need to allow $a\in\Rset\backslash\{0\}$
for reasons that become clear in the next subsection.
 The points on the ring are {\em conical singularities} for the metric, meaning that the limit as the radius goes to zero
of the ratio of the circumference to radius of a small circle centered at a point of the ring and lying in a meridional
plane $\varphi=$const. is not $2\pi$; instead, here it is $4\pi$.
  See \cite{zGKN} for details.

        The key topological features of this manifold can easily be visualized.
	Namely, although the fixed timelike planes $\{(t,r,\theta,\varphi): t=t_0,\varphi=\varphi_0\}$ 
cannot be embedded into $\Rset^3$, each such plane can be immersed in it, the immersion consisting of
two Euclidean half planes, stacked up upon each other, then cut along a line segment of length $|a|$ orthogonal to the planes'
boundaries ``with scissors,'' then smoothly ``cross-glued'' at the cut such that the ``upper'' and ``lower'' sheets are cross-linked
like an $\times$ along the cut, while remaining like $\|$ beyond the cut; the singular endpoint of the $\times$-line is not
part of the two-sheeted manifold.
 Shown in Fig.1  are the ring singularity and the part $\{r\in(-1,1),\theta\in(0,\pi)\}$ of a constant-azimuth section 
(slightly curved, for the purpose of visualization) of the two-sheeted static spacelike slice of the zero-$G$ Kerr spacetime.
 The coordinate grid shows a few of the curves $\theta = \mbox{const.}$ (hyperbolas, transiting  through the ring from one sheet 
to the other) and $r = \mbox{const.}$ (oblate semi-ellipses, remaining outside the ring on a single sheet).
\vspace{-.1cm}$\phantom{nix}$
\begin{figure}[ht]
\begin{center}
\includegraphics[width=9cm,height=7cm]{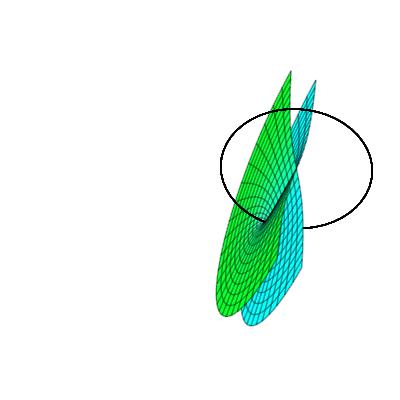}
\end{center}
\vspace{-2cm}
\caption{An illustration of the Zipoy topology.\label{fig:ZipTop}}
\end{figure}

        We can view $\cM$ as a {\em branched covering} over the base manifold $\Rset^{1,3}\setminus\cS$ (mildly abusing notation), 
with the projection map $\Pi: \cM \to \Rset^{1,3}\setminus\cS$ being $\Pi(t,r,\theta,\varphi) = (t,\varrho,z,\varphi)$.
        The pre-image of a point in the base consists of two points, degenerating into one point at each ``ring''
\beq
\cR_{t_0} = \{(t,r,\theta,\varphi)\ |\ t=t_0,\ r=0, \ \theta = \pi/2, 0\leq \varphi\ \leq 2\pi \}.
\eeq
         The pullback of the Minkowski metric $\boldsymbol{\eta}$ under $\Pi$ endows $\cM$ with a flat Lorentzian metric 
$\mathbf{g} = \Pi^*\boldsymbol{\eta}$, whose line element is given in (\ref{metricOS}).

%%%%%%%%%%%%%%%%%%%%%%%%%%%%%%%%%%%
\subsubsection{Zero-G Kerr-Newman spacetimes}\label{sec:zGKN}
%%%%%%%%%%%%%%%%%%%%%%%%%%%%%%%%%%%
         The  spacetime $(\cM,\bg)$ introduced above can be decorated with any static electromagnetic Sommerfeld field $\bF=d\bA$, 
satisfying the flat space Maxwell equations locally but respecting the topologically nontrivial character of the spacetime.
         The zero-$G$ limit of the maximal analytical extension of the {\em Kerr--Newman} family\footnote{We recall that 
for fixed $G>0$ (and speed of light $c$) the Kerr--Newman family is a three-parameter family of electrovac spacetimes, 
the parameters being ADM mass (energy) $M(>0)$, ADM angular momentum $J=Ma\in\Rset$, and total charge $Q\in\Rset$, 
all defined in a single asymptotic end.
  Note that in units where $c=1$ the angular momentum per unit mass $a$ has physical dimension of length; equivalently, of time.}
of stationary axisymmetric solutions to the Einstein--Maxwell equations, written in BL coordinates,
yields precisely the zero-$G$ Kerr (z$G$K) spacetime decorated with a particular electromagnetic Sommerfeld field,
the \emph{Appell--Sommerfeld field}, whose four-potential one-form reads
\beq\label{def:AKN}
\bA = - \frac{r}{r^2+a^2\cos^2\theta} (Qdt - Qa \sin^2\theta\, d\varphi).
\eeq
  Here, $Q$ is the total charge ``seen from infinity'' in the $r>0$ sheet, defined by computing the electric outward flux 
through a spherical surface surrounding the ring singularity in that sheet, and we need to allow $a\in\RR\backslash\{0\}$ to
accomodate a magnetic dipole moment vector pointing either ``up'' or ``down'' w.r.t. the $z$ axis defined in (\ref{CYLtoELL}).

  The field $\bF$ is singular on the same ring $\cR_t$ as is the metric, while for $r$ very large positive its electric and
magnetic components  approach, respectively, the asymptotics of an ``electric monopole field in $\Rset^3$ of a charge $Q$'' and a 
``magnetic dipole field in $\Rset^3$ of dipole moment $Qa$,'' aligned parallel or anti-parallel to the ``$z$ direction;''
for $-r$ very large positive, i.e. in the other sheet,
these electric monopole and magnetic dipole fields correspond to a charge $-Q$ and magnetic dipole moment $-Qa$.

\begin{rem}\textit{
 Note that in contrast to the metric (\ref{metricOS}), the electromagnetic potential (\ref{def:AKN}) is not invariant 
under a change of sign of $a\in \RR\backslash\{0\}$; however, the difference is merely in the direction of the magnetic 
moment vector that corresponds to the above potential, which either points along or opposite to the ``$z$ direction'' 
defined by (\ref{CYLtoELL}).
 Since a 180-degree rotation around a diametrical axis through the ring transforms its magnetic moment into the negative thereof,
and since all objectively physical quantities are invariant under such a rotation, the choices $a>0$ and $a<0$ are physically 
equivalent, given $Q$ and $a^2>0$.
 Therefore, without loss of generality, we can choose $a>0$ in all ensuing calculations involving (\ref{def:AKN}).}
\end{rem}

\begin{rem}\textit{
  It may be tempting to speculate whether the magnetic dipole moment $Qa$ can be interpreted as due to a ``gyrating charged ring,'' 
with ``$a$'' the angular momentum per unit mass ``of the singularity,'' as has been attempted for Kerr--Newman spacetimes.
 Moreover, since Kerr--Newman spacetimes have a {\em gyromagnetic ratio} $Q/M =:g_{\mbox{\tiny{KN}}}^{}Q/2M$ 
(in units with $c=1$), amounting to a $g$-factor $g_{\mbox{\tiny{KN}}}^{}=2$ (see \cite{Car68}), and
since the KN parameters $(M,Q,a)$ are independent of $G$, and so is the KN gyromagnetic ratio, one could be
tempted to assign the z$\,G$KN spacetime the same gyromagnetic ratio of $Q/M$ and $g$-factor of $2$.
 However, since $M$ does not show in the z$\,G$KN metric, such an assignment would be reasonable only if there were no other
way to construct z$\,G$KN than taking the zero-$G$ limit of KN.  
 Yet this is not the case: as already pointed out in footnote 10, the underlying spacetime manifold of z$\,G$KN can be obtained
as zero-$G$ limit of either, the \emph{stationary} family of Kerr spacetimes --- having both an ADM mass $M_{\mbox{\tiny{K}}}^{}$ 
and ADM angular momentum $J=M_{\mbox{\tiny{K}}}^{}a$ ---, or a \emph{static} family of Zipoy spacetimes --- having an ADM mass 
$M_{\mbox{\tiny{Z}}}^{}$ but zero ADM angular momentum; note also that $M_{\mbox{\tiny{Z}}}^{}\neq M_{\mbox{\tiny{K}}}^{}$ in 
general.
 This (say) z$\,G$Z spacetime can now be equipped with an arbitrary Sommerfeld field, in particular: the Appell--Sommerfeld
field of z$\,G$KN, without being logical compelled to interpret its magnetic moment $Qa$ as being due to a ``gyrating ring of
charge $Q$'' with angular momentum per unit mass ``$a$,'' although this is logically possible; 
in any event, it's better to refrain from assigning z$\,G$KN any {\em spacetime $g$-factor}.
}
\end{rem}
%%%%%%%%%%%%%%%%%%%%%%%%%%%%%%%%%%%
\subsubsection{Generalizations of z$G$KN spacetimes to arbitrary charge and current.}
%%%%%%%%%%%%%%%%%%%%%%%%%%%%%%%%%%%

      Since the electric and magnetic components of Maxwell's vacuum equations decouple in the zero-$G$ limit, 
to decorate the zero-$G$ Kerr spacetime with a generalization of the electromagnetic Appell--Sommerfeld field $\bF = d\bA$ 
having electric charge $Q\in\RR$ and current $I\in\RR$, all that needs to be done is to replace the magnetic dipole moment $Qa$ in 
formula (\ref{def:AKN}) by $I\pi a^2$, thus % ($I\pi a^2/c$ in conventional Gaussian units)
\beq\label{def:AQI}
\bA = - \frac{r}{r^2+a^2\cos^2\theta} (Q dt - I\pi a^2 \sin^2\theta\, d\varphi).
\eeq
   Again, electric charge $Q$ and magnetic dipole moment $I\pi a^2$ 
are as ``seen'' from spacelike infinity in the $r>0$ sheet; 
viewing from spacelike infinity in the other sheet one ``sees'' $-Q$ and $-I\pi a^2$.

\begin{rem}\textit{
  The Sommerfeld field (\ref{def:AQI}) makes it plain that the z$\,G$KN spacetimes are but a special one-parameter
subfamily in a two-parameter family of qualitatively similar electromagnetic spacetimes with arbitrary charge $Q$ and magnetic 
moment $I\pi a^2$ (given $a^2$). 
 The ease with which this result was accomplished stands in stark contrast to the difficulties in generalizing
the Kerr--Newman family to electromagnetic spacetimes with a magnetic moment different from $Qa$. }
\end{rem}

\begin{rem}\textit{ 
  As does the metric (\ref{metricOS}), the generalized Sommerfeld field (\ref{def:AQI}) depends on ``$a$'' only through
$a^2$, hence it is invariant under a sign change of $a\in \RR\backslash\{0\}$. 
 So when (\ref{metricOS}) is combined with (\ref{def:AQI}) we can choose $a>0$ without any further ado. 
 Of course, in (\ref{def:AQI}) there is now the new parameter $I\in\RR$, but by the same rotation argument as
given for the z$\,G$KN fields the choices $I>0$ and $I<0$ are physically equivalent, given $Q$ and $a^2>0$.
 Therefore, without loss of generality, from here on we choose $\sign(I)=\sign(Q)$, and $a>0$.}
\end{rem}

%%%%%%%%%%%%%%%%%%%%%%%%%%%%%%%%%%%
\subsubsection{The Dirac equation on electromagnetic spacetimes: Cartan's frame method}
%%%%%%%%%%%%%%%%%%%%%%%%%%%%%%%%%%%

In arbitrary cordinates $(x^\mu)$ (with $c=1$ and $\hbar=1$), 
the Dirac equation for a spin-$1/2$ electron of empirical rest mass $m$ and charge $-e<0$ interacting (through minimal coupling) 
with an electromagnetic field $\bF = d\bA$ in a spacetime $(\cM,\bg)$ reads
\beq\label{eq:DirEqA}
{\tilde\ga}^\mu (-i \nab_\mu  + e A_\mu) \Psi + m \Psi = 0;
\eeq
here $\nab$ is the covariant derivative (on bi-spinors) associated to the spacetime metric $\bg$, and the $({\tilde\ga^\mu})_{\mu=0}^3$ are 
Dirac matrices associated to this metric, i.e. satisfying
\beq
\tilde{\ga}^\mu \tilde{\ga}^\nu + \tilde{\ga}^\nu \tilde{\ga}^\mu = 2 g^{\mu\nu}{\boldsymbol{1}}_{4\times4},
\eeq
while the $A_\mu$ are the pertinent components of 
the electromagnetic potential, $\bA = A_\mu dx^\mu$. 

Using Cartan's frame method (see \cite{BrillCohen66} and refs. therein) one can express the above covariant derivative on spinors 
in terms of standard derivatives:
\beq
\tilde{\ga}^\mu \nab_\mu    = \ga^\mu\be_\mu  +\frac{1}{4} \Om_{\mu\nu\la} \ga^\la\ga^\mu\ga^\nu,
\eeq
where the $\ga^\nu$ are Dirac gamma matrices for the Minkowski spacetime, satisfying 
\beq\label{MinkGamma}
\ga^\nu\ga^\mu+\ga^\mu\ga^\nu = 2{{\eta}}^{\mu\nu}{\boldsymbol{1}}_{4\times4},
\eeq
with
\beq
({\boldsymbol{\eta}}) = \diag(1,-1,-1,-1)
\eeq 
being the matrix of the Minkowski metric in rectangular coordinates;
and $\{\be_\mu\}_{\mu=0}^3$ is a {\em Cartan frame}, i.e. an orthonormal frame of vectors spanning the tangent space at each point
of the spacetime manifold.
We thus have
\beq (\be_\mu)^\nu (\be_\la)^\ka {g}_{\nu\ka} = {{\eta}}_{\mu\la}.
\eeq

On the one hand, it follows that
\beq
\tilde{\ga}^\mu = (\be_\nu)^\mu \ga^\nu.
\eeq
On the other hand, let $\{\boldsymbol{\om}^\mu\}_{\mu=0}^3$ denote the {\em dual} frame to $\{\be_\mu\}$,
 i.e. the orthonormal basis for the cotangent space at each point of the manifold that is dual to the basis for the tangent space:
\beq 
\boldsymbol{\om}_\mu \big(\be^\nu\big)  = \be^\nu \big( \boldsymbol{\om}_\mu\big) = \de_\mu^\nu.
\eeq
Then the $\Om_{\mu\nu\la}$ are by definition the {\em Ricci rotation coefficients} of the frame $\{\boldsymbol{\om}^\mu\}_{\mu=0}^3$, 
defined in the following way:  Let the one-forms $\boldsymbol{\Om}^\mu_\nu$ satisfy
\beq
d \boldsymbol{\om}^\mu + \boldsymbol{\Om}^\mu_\nu \wedge \boldsymbol{\om}^\nu = 0.
\eeq
This does not uniquely define the $\boldsymbol{\Om}^\mu_\nu$.
  However, there exists a unique set of such 1-forms satisfying the extra condition
\beq
\boldsymbol{\Om}_{\mu\nu} = - \boldsymbol{\Om}_{\nu\mu},
\eeq 
where the first index is lowered by the Minkowski metric:   $\boldsymbol{\Om}_{\mu\nu} := {{\eta}}_{\mu\la}\boldsymbol{\Om}^{\la}_\nu$.
Since $\left\{\boldsymbol{\om}^\mu\right\}$ forms a basis for the space of 1-forms, we then have  
$\boldsymbol{\Om}_{\mu\nu} = \Om_{\mu\nu\la}\boldsymbol{\om}^\la$, which defines the rotation coefficients $\Om_{\mu\nu\la}$.

The Dirac equation (\ref{eq:DirEqA})
on a spacetime $(\cM,\bg)$ with an electromagnetic 4-potential $\bA$ can thus be written in the following form:
\beq\label{eq:DirEqAstandard}
\ga^\mu \left(\be_\mu + \Ga_\mu + i e  \tilde{A}_\mu\right)\Psi  + im\Psi = 0;
\eeq
here, the $\Ga_\mu$ are connection coefficients,
\beq
\Ga_\mu := \frac{1}{4} \Om_{\nu\la\mu} \ga^\nu\ga^\la = \frac{1}{8} \Om_{\nu\la\mu}[\ga^\nu,\ga^\la],
\eeq
and the $\tilde{A}_\mu$ are the components of the potential $\bA$ in the ${\boldsymbol{\om}^\mu}$ basis, 
i.e. $\bA =\tilde{A}_\mu \boldsymbol{\om}^\mu$, or,
\beq
\tilde{A}_\mu  := (\be_\mu)^\nu A_\nu.
\eeq
%%%%%%%%%%%%%%%%%%%%%%%%%%%%%%%%%%%
\subsubsection{Frame formulation of the Dirac equation  on z$G$K spacetimes featuring generalized electromagnetic 
              Sommerfeld fields with arbitrary $I\pi a/Q$-ratio}
%%%%%%%%%%%%%%%%%%%%%%%%%%%%%%%%%%%

As explained in section \ref{sec:zGK}, the single chart $\cC\setminus\cS$ 
of oblate spheroidal coordinates $(t,r,\theta,\varphi)$ covers the whole zero-$G$ Kerr spacetime $(\cM,\bg)$,
and in section \ref{sec:zGKN} we saw that in these coordinates the generalized electromagnetic Sommerfeld one-form
$\bA$ is everywhere on  $(\cM,\bg)$ given by the simple formula (\ref{def:AQI}). 
It is therefore natural that one would like to write Dirac's equation \refeq{eq:DirEqA} in these coordinates as well, 
in the hope of achieving at least some partial separation of variables.\footnote{The idea of using special frames adapted 
to a coordinate system in order to separate spinorial wave equations in those coordinates goes back to Kinnersley \cite{Kin69} 
and Teukolsky \cite{Teu72}.}

 However, unlike Cartesian coordinates $(x^\mu)$ in Minkowski spacetime, oblate spheroidal coordinate derivatives do not give 
rise to an orthonormal basis for the tangent space at each point of a zero-$G$ Kerr spacetime.
 Thus, to bring \refeq{eq:DirEqA} into the Cartan form \refeq{eq:DirEqAstandard} using oblate spheroidal coordinates, one
also needs to construct a suitable Cartan frame. 
  Following Chandrasekhar \cite{ChandraDIRACinKERRgeom,ChandraDIRACinKERRgeomERR}, Page \cite{Page76}, Toop \cite{Too76} (see also
Carter-McLenaghan \cite{CarMcL82}), we introduce a special orthonormal frame $\{\be_\mu\}_{\mu=0}^3$ on the 
tangent bundle $T\cM$ which is adapted to the oblate spheroidal coordinates, such that the Dirac equation 
takes a comparatively simple form.

 We begin by introducing a Cartan (co-)frame $\{\boldsymbol{\om}^\mu\}_{\mu=0}^3$ for the cotangent bundle\footnote{This 
   particular frame is called a {\em canonical symmetric tetrad} in \cite{CarMcL82}.}
\beq\label{CcoF}
\boldsymbol{\om}^0 := \frac{\De}{|\rho|} (dt - a \sin^2\theta\, d\varphi),\quad
\boldsymbol{\om}^1 := |\rho|d\theta,\quad
\boldsymbol{\om}^2 := \frac{\sin\theta}{|\rho|} (-a dt + \De^2 d\varphi),\quad
\boldsymbol{\om}^3 := \frac{|\rho|}{\De}dr,
\eeq
with the abbreviations
\beq\label{vpirho}
\De := \sqrt{r^2 + a^2}, \quad \rho:= r + i a \cos\theta.
\eeq
Let us denote the oblate spheroidal coordinates $(t,r,\theta,\varphi)$ collectively by $(y^\nu)$.
  Let $g_{\mu\nu}$ denote the coefficients of the spacetime metric \refeq{metricOS} in oblate spheroidal coordinates, 
i.e. $g_{\mu\nu} = \bg\Big(\frac{\p}{\p y^\mu},\frac{\p}{\p y^\nu}\Big)$.
     One easily checks that written in the $\{\boldsymbol{\om}^{\mu}\}$ frame, the spacetime line element is
\beq 
ds_{\bg}^2 = g_{\mu\nu}dy^\mu dy^\nu = {{\eta}}_{\alpha\beta} \boldsymbol{\om}^{\alpha}\boldsymbol{\om}^{\beta}.
\eeq
 This shows that the frame $\{\boldsymbol{\om}^\mu\}_{\mu=0}^3$ is indeed orthonormal.
 With respect to this frame the electromagnetic Sommerfeld  potential \refeq{def:AQI} becomes
$\bA = \tilde{A}_\mu \boldsymbol{\om}^\mu$, with
\beq\label{def:Atilde}
\tilde{A}_0 =  -Q\frac{r}{|\rho|\De} -  \left(Q-{I\pi a}\right)\frac{a^2r\sin^2\theta}{\De |\rho|^3},\quad
\tilde{A}_1 = 0, \quad
\tilde{A}_2 = - \left(Q-I\pi a\right)\frac{ar\sin\theta}{|\rho|^3},\quad 
\tilde{A}_3 = 0.
\eeq

\begin{rem}\textit{
 We observe that for $Q =I\pi a$, all but one of the quantities $\tilde{A}_\mu$ vanish, and the non-vanishing one,
$\tilde{A}_0$, reduces to $-{Qr}/{|\rho|\De}$.}
\end{rem}
\begin{rem}\textit{
 Clearly, the Cartan co-frame (\ref{CcoF}) is not invariant under the replacement $a\to-a$; it is adapted to (\ref{def:AKN}).
 If we replace $a\to-a$ in (\ref{def:AKN}), we also need to replace $a\to-a$ in (\ref{CcoF}); in the same vein, replacing
$a\to-a$ in (\ref{CcoF}) corresponds to replacing $I\to-I$ in (\ref{def:AQI}).
 Therefore, in the following, a change $a\to-a$ needs to be simultaneously accompanied by the change $I\to-I$.}
\end{rem}

  Next, let the frame of vector fields $\{{\be}_{\mu}\}$ be the {\em dual} frame to $\{\boldsymbol{\om}^{\mu}\}$.
 Thus $\{\be_\mu\}$ yields an orthonormal basis for the tangent space at each point in the manifold:
\beq
\be_0 = \frac{\De}{|\rho|} \p^{}_t + \frac{a}{\De|\rho|} \p^{}_\varphi,\quad
\be_1 = \frac{1}{|\rho|}\p^{}_\theta,\quad
\be_2 = \frac{a\sin\theta}{|\rho|} \p^{}_t + \frac{1}{|\rho|\sin\theta} \p^{}_\varphi,\quad
\be_3 = \frac{\De}{|\rho|} \p^{}_r\;.
\eeq

  Next, the anti-symmetric matrix $\big(\boldsymbol{\Om}_{\mu\nu}\big) = \big({{\eta}}_{\mu\la}\boldsymbol{\Om}^\la_\nu\big)$ is computed to be
\beq
(\boldsymbol{\Om}_{\mu\nu}) = \left(\begin{array}{cccc}
0&-C\boldsymbol{\om}^0 - D\boldsymbol{\om}^2 & D\boldsymbol{\om}^1 - B\boldsymbol{\om}^3 & -A\boldsymbol{\om}^0- B \boldsymbol{\om}^2\\
& 0 & D\boldsymbol{\om}^0 + F\boldsymbol{\om}^2 &-E \boldsymbol{\om}^1 - C\boldsymbol{\om}^3  \\
&\mbox{(anti-sym)}& 0 & -B \boldsymbol{\om}^0 - E\boldsymbol{\om}^2 \\
& & & 0
\end{array}\right),
\eeq
with
\beq
A := \frac{a^2 r \sin^2\theta}{\De |\rho|^3},\,
B := \frac{a r \sin\theta}{|\rho|^3},\,
C := \frac{a^2 \sin\theta\cos\theta}{|\rho|^3},\,
D := \frac{a\cos\theta\De}{|\rho|^3},\,
E := \frac{r\De}{|\rho|^3},\,
F := \frac{\De^2\cos\theta}{|\rho|^3\sin\theta}.
\eeq

 With respect to this frame on a zero-$G$ Kerr spacetime, and picking the {\em Weyl} 
representation\footnote{Here and throughout this paper we use the {\em Weyl} (spinor) 
  representation for the gamma matrices, see \cite{ThallerBOOK} for details.} 
for the Dirac matrices $\ga^\mu$,  the covariant derivative part of 
the Dirac operator (\ref{eq:DirEqA}) can be expressed with the help of the operator
\beq
\fO := \tilde{\ga}^\mu\nab_\mu = \left(\begin{array}{cc} 0 & \fl'+\fm'\\ \fl+\fm &0 \end{array}\right),
\eeq
where
\beq
\fl := 
\frac{1}{|\rho|} \left(\begin{array}{cc} D_+ & L_- \\ L_+ & D_-\end{array}\right)
\eeq
and
\beq
\fl' := 
\frac{1}{|\rho|} \left(\begin{array}{cc} D_- & -L_- \\ -L_+ & D_+\end{array}\right),
\eeq
with
\beq\label{eq:DpmLpm}
D_\pm := \pm \De \p^{}_r + \left( \De \p^{}_t + \frac{a}{\De} \p^{}_\varphi\right),
\qquad 
L_\pm :=\p^{}_\theta  \pm i \left(a \sin\theta\,\p^{}_t+\csc\theta \p^{}_\varphi\right),
\eeq
while
\beq
\begin{aligned}
\fm &:=  \half\bigl[ (-2C+F+iB)\si^{}_1+(-A+2E+iD)\si^{}_3\bigr] \\
&\ = \frac{1}{2|\rho|} \left(\begin{array}{cc} \frac{r}{\De}+ \frac{\De}{\bar{\rho}} &\cot\theta +  \frac{ia\sin\theta}{\bar{\rho}}\\
\cot\theta + \frac{ia\sin\theta}{\bar{\rho}} &  -\frac{r}{\De} - \frac{\De}{\bar{\rho}}\end{array}\right)
\end{aligned}
\eeq
and 
\beq
\fm': = \half\bigl[ (2C-F+iB)\si^{}_1+(A-2E+iD)\si^{}_3\bigr] = -\fm^*,
\eeq
where the $\si^{}_k$ are Pauli matrices:
\beq
 \si^{}_1 = \left(\begin{array}{cc} 0 & 1\\ 1 & 0\end{array}\right),\quad
 \si^{}_2 = \left(\begin{array}{cc} 0 & -i\\ i &\ 0\end{array}\right),\quad
 \si^{}_3 = \left(\begin{array}{cc} 1 & \ 0\\ 0 & -1\end{array}\right).
\eeq

  We note that the principal part of $|\rho|\fO$ has an additive separation property:
\beq\label{eq:DDprincipal}
\begin{aligned}
|\rho|\left(\begin{array}{cc} 0 & \fl'\\ \fl & 0\end{array}\right) 
=
\left[
 \ga^3 \De \p^{}_r + \ga^0\left((\De \p^{}_t + \frac{a}{\De} \p^{}_\varphi\right)\right]
 + 
\Bigl[ \ga^1 \p^{}_\theta + \ga^2(a\sin\theta \p^{}_t + \csc\theta\,\p^{}_\varphi)\Bigr],
\end{aligned}
\eeq
where the coefficients of the two square-bracketed operators are functions of only $r$, respectively only $\theta$.
 Moreover, it is possible to transform away the lower order term in $\fO$, so that exact separation can be achieved for $|\rho|\fO$.
 Namely, let 
\beq
\chi(r,\theta) := \half \log( \De \bar{\rho}\sin\theta).
\eeq  
 It is easy to see that
\beq
\fm = \fl\chi,\qquad \fm' = \fl'\bar{\chi}.
\eeq
 Let us therefore define the diagonal matrix
\beq\label{def:D}
\fD := \diag( e^{-\chi},e^{-\chi}, e^{-\bar{\chi}}, e^{-\bar{\chi}})
\eeq
and a new bispinor $\hat{\Psi}$ related to the original $\Psi$ by
\beq
\Psi = \fD \hat{\Psi}.
\eeq
 Denoting the upper and lower components of a bispinor $\Psi$ by $\psi_1$ and $\psi_2$ respectively, it then follows that
\beq
(\fl + \fm)\psi_1 = 
(\fl + \fm)(e^{-\chi}\hat{\psi}_1) =
 e^{-\chi} \left[ \fl - \fl\chi + \fm\right]\hat{\psi}_1 =
 e^{-\chi} \fl\hat{\psi}_1,
\eeq
and similarly
\beq
(\fl'+ \fm')\psi_2 = e^{-\bar{\chi}} \fl'\hat{\psi}_2.
\eeq

 We now put it all together.
 We set 
\beq
\fR := \diag(\rho,\rho,\bar{\rho},\bar{\rho})
\eeq 
and note that $|\rho|\fD^{-*}\fD = \fR$ while $\fD^{-*}\ga^\mu\fD = \ga^\mu$.
 Thus, setting $\Psi = \fD \hat{\Psi}$ in \refeq{eq:DirEqA} and left-multiplying the equation by the diagonal matrix 
$\fD' := |\rho|\fD^{-*}$  we  conclude that $\hat{\Psi}$ solves a new Dirac equation
\beq\label{eq:newDir}
\left(|\rho|\ga^\mu (\be_\mu + ie\tilde{A}_\mu) + im\fR\right) \hat{\Psi} = 0.
\eeq

  Finally, let us compute the Hamiltonian form of \refeq{eq:newDir}. 
  Let matrices $M^\mu$ be defined by
\beq
|\rho|\ga^\mu\be_\mu = M^\mu\p^{}_\mu.
\eeq
  Thus in particular
\beq
M^0 = \De\ga^0 + a \sin\theta\, \ga^2.
\eeq
We  may thus rewrite \refeq{eq:newDir} as
\beq
M^0 \p^{}_t\hat{\Psi} = - \left( M^k\p^{}_k + ie|\rho|\ga^\mu\tilde{A}_\mu + im\fR\right)\hat{\Psi},
\eeq
so that, defining
\beq\label{def:Hhat}
\hat{H} := -i (M^0)^{-1} \left( M^k\p^{}_k + ie|\rho|\ga^\mu\tilde{A}_\mu + im\fR\right),
\eeq
we can now rewrite the Dirac equation \refeq{eq:newDir} in Hamiltonian form:
\beq\label{eq:DIRACeqHAMformat}
i \p^{}_t \hat{\Psi} = \hat{H}\hat{\Psi}.
\eeq

\begin{rem}\textit{
We note that for $Q \ne I\pi a$ the quantity $|\rho|\ga^\mu\tilde{A}_\mu$ in \refeq{def:Hhat} is a function of both $r$ and $\theta$, 
and unlike the other terms in the Dirac equation \refeq{eq:newDir}  it does {\em not} separate into a sum of two 
terms each depending only on one of these variables. 
It follows that the Dirac equation will not be exactly separable in its four spacetime variables on zGK spacetimes decorated with a 
generalized Sommerfeld field having a magnetic moment different from $Qa$.}

\textit{Even when $Q =I\pi a$, so that $|\rho|\ga^\mu\tilde{A}_\mu$ reduces to  $|\rho|\ga^0\tilde{A}_0 = -({Qr}/{\De})\ga^0$, which is 
a function of only $r$, the separation of variables Ansatz does not yield a system of ordinary differential equations 
which can be solved one at a time, unlike the situation for the familiar Dirac equation for the spectrum of Hydrogen 
in Minkowski spacetime.}
\end{rem}

%%%%%%%%%%%%%%%%%%%%%%%%%%%%%%%%%%%%%%%%%%%%%%%%%%%%%%%%%%%%
\subsubsection{A Hilbert space for $\hat{H}$}
%%%%%%%%%%%%%%%%%%%%%%%%%%%%%%%%%%%%%%%%%%%%%%%%%%%%%%%%%%%%
In order to decide what is the correct inner product to use for the space of bispinor fields defined on the z$G$KN spacetime, 
we pause to consider the action for the original Dirac equation \refeq{eq:DirEqA}, which should be obtainable from this equation 
upon left-multiplying it by the {\em conjugate bispinor} $\overline{\Psi}$, defined as
\beq
\overline{\Psi} := \Psi^\dag \ga^0,
\eeq
and integrating the result on the spacetime. Note that the dagger in the above formula is the usual notation for ``conjugate-transpose'',
 i.e. $\Psi^\dag = \Psi^{*t}$, and that $\ga^0$ is the zero-th Dirac gamma matrix for the Minkowski space, defined by 
\refeq{MinkGamma}.\footnote{On the Minkowski space, the $\gamma^0$ matrix plays a double role:  In addition to being one 
of the four Dirac gamma matrices, it is also the matrix of the Hermitian quadratic form of signature (2,2) defined on the 
bispinor space, which is one of the key geometric structures needed in order to define the Dirac operator on a general four-dimensional 
Lorentzian manifold, see \cite{CanJad98} for details.
  In our context, the first role is played by $\tilde{\ga}^0$, and the second one by $\ga^0$.}
 Thus, using oblate spheroidal coordinates,
\beq
\cS[\Psi] = \int dt \int_{\Si_t} \Psi^\dag \ga^0 \left[ \tilde{\ga}^\mu \nab_\mu \Psi + \dots \right] d\mu^{}_{\Si_t},
\eeq
where 
\beq\label{vol-elem}
d\mu^{}_{\Si_t} = |\rho|^2\sin\theta  d\theta d\varphi dr
\eeq
 is the volume element of $\Si_t$, the spacelike 
$t=$ constant slice of z$G$KN.
It follows that the natural inner product for bispinors on $\Si_t$ needs to be
\beq
\langle \Psi,\Phi\rangle = \int_\Si \Psi^\dag\ga^0\tilde{\ga}^0 \Phi d\mu^{}_\Si 
= \int_0^{2\pi}\int_0^\pi \int_{-\infty}^\infty \Psi^\dag M \Phi |\rho|^2 \sin\theta d\theta d\varphi dr,
\eeq
with
\beq
M := \ga^0 \tilde{\ga}^0 = \ga^0 \be_\nu^0 \ga^\nu = \frac{\De}{|\rho|} \al^0 + \frac{a\sin\theta}{|\rho|} \al^2.
\eeq
Here, $\alpha^2$ is the second one of the three Dirac alpha matrices in the Weyl (spinor) represenation, 
viz.
\beq
\al^k = \ga^0 \ga^k = \left(\begin{array}{cc} \si^{}_k & \ 0\\ 0 & -\si^{}_k\end{array}\right),\qquad k=1,2,3;
\eeq
for notational convenience, we have also set
\beq
\al^0 =  \left(\begin{array}{cc} \boldsymbol{1}_{2\times2} & 0 \\ 0 & \boldsymbol{1}_{2\times2}\end{array}\right)
\eeq
for the $4\times4$ identity matrix.

Now, let $\Psi = \fD \hat{\Psi}$ and $\Phi = \fD \hat{\Phi}$, with $\fD$ as in \refeq{def:D}.
  Then we have
\beq
\langle \Psi,\Phi \rangle = 
\int_{-\infty}^\infty 
\int_0^{2\pi}
\int_0^\pi 
\hat{\Psi}^\dag \hat{M} \hat{\Phi} d\theta d\varphi dr,
\eeq
where
\beq
\hat{M} := \al^0 + \frac{a\sin\theta}{\De} \al^2.
\eeq
The eigenvalues of $\hat{M}$ are $\la_\pm = 1 \pm \frac{a\sin\theta}{\De}$, both of which are positive everywhere on this space with
Zipoy topology.
	(Note that $\la_- \to 0$ on the ring, which is not part of the space time but at its boundary.)
	We may thus take the above as the definition of a positive definite inner product
 given by the matrix $\hat{M}$ for bispinors defined on the rectangular cylinder
$\rcyl :=\RR\times [0,\pi]\times  [0,2\pi]$  (which is the $t=const.$ section of $\cC$) with its natural measure:
\beq\label{def:innerPROD}
\langle \hat{\Psi},\hat{\Phi}\rangle_{\hat{M}} := \int_{\rcyl} \hat{\Psi}^\dag\hat{M} \hat{\Phi} d\theta d\varphi dr.
\eeq
 An alternative way of arriving at this inner product is to define the conserved Dirac current
\beq\label{dir-curr}
j^\mu = \overline{\Psi}\tilde{\ga}^\mu \Psi = \Psi^\dag \ga^0 \tilde{\ga}^\mu \Psi,
\eeq
and consider the integral of its time component $j^0$ on the Cauchy hypersurface $\Si_t$ with its induced measure \refeq{vol-elem}:
\beq
\int_{\Si_t} j^0 d\mu_\Si  = \int_{-\infty}^\infty 
\int_0^{2\pi}
\int_0^\pi 
\Psi^\dag \ga^0\tilde{\ga}^0 \Psi |\rho|^2 \sin\theta d\theta d\varphi dr 
= \int_{-\infty}^\infty 
\int_0^{2\pi}
\int_0^\pi 
\hat{\Psi}^\dag \hat{M}\hat{\Psi} d\theta d\varphi dr.
\eeq
 The corresponding Hilbert space is denoted by ${\sf H}$, thus
\beq
{\sf H} 
:= \left\{ \hat\Psi:\rcyl \to \Cset^4\ | \ \|\hat\Psi\|_{\hat{M}}^2 := \langle \hat\Psi,\hat\Psi\rangle_{\hat{M}} < \infty \right\}.
\eeq
Note  that ${\sf H}$ is \emph{not equivalent} to standard $L^2(\rcyl)$ whose inner product has the identity matrix in place of $\hat{M}$.

 After these preparations we are now ready to state our main results.

%%%%%%%%%%%%%%%%%%%%%%%%%%%%%%%%%%%
%%%%%%%%%%%%%%%%%%%%%%%%%%%%%%%%%%%
\subsection{Statement of the Main Theorems}
%%%%%%%%%%%%%%%%%%%%%%%%%%%%%%%%%%%
%%%%%%%%%%%%%%%%%%%%%%%%%%%%%%%%%%%

Our results about  the symmetry of the spectrum are valid
for the Dirac Hamiltonian on a static spacelike slice of the zero-$G$ Kerr spacetime
decorated with Sommerfeld fields of arbitrary charge $Q$ and current $I$.
The essential self-adjointness, and location of essential and point spectra, are stated only
for the Dirac Hamiltonian on a static spacelike slice of the z$G$KN spacetime; however,
we conjecture that these results also hold for the more general Hamiltonian as long as the coupling
constant  $(Q -I\pi a)e$ is sufficiently small.

In the ensuing four sections we will prove the following Theorems about  $\hat{H}$.

%%%%%%%%%%%%%%%%%%%%%%%%%%%%%%%%%%%
\subsubsection{Symmetry of the spectrum of the Dirac Hamiltonians}
%%%%%%%%%%%%%%%%%%%%%%%%%%%%%%%%%%%

We shall find an operator which anti-commutes with any self-adjoint extension of
the formal Dirac operator $\hat{H}$ on $\sf H$, with the help of which we prove:

\begin{thm}\label{thm:sym}
  Let any self-adjoint extension of the formal Dirac operator $\hat{H}$ on $\sf H$ be denoted by the same letter.
  Suppose $E\in\mathrm{spec}\,\hat{H}$. 
  Then $-E\in\mathrm{spec}\,\hat{H}$. 
\end{thm}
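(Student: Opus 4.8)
The plan is to exhibit a bounded, invertible operator $\cW$ on $\sf H$ that anticommutes with $\hat H$, i.e. $\cW\hat H = -\hat H\cW$ (on a suitable core, hence on any self-adjoint extension after taking closures). Once such a $\cW$ exists, the spectral symmetry follows by a standard argument: if $(\hat H - E)$ is not boundedly invertible, then neither is $\cW(\hat H-E)\cW^{-1} = -(\hat H + E) = -(\hat H-(-E))$, so $-E\in\mathrm{spec}\,\hat H$ as well; and the argument respects the decomposition of the spectrum into point, continuous, and residual parts, though the theorem as stated only asks for $\mathrm{spec}\,\hat H$ itself.

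The natural candidate for $\cW$ is built from a constant Dirac matrix composed with complex conjugation (a charge-conjugation-type operator), possibly dressed with a coordinate reflection. Concretely, I would look for a $4\times4$ matrix $\bC$ and try $\cW\hat\Psi := \bC\,\overline{\hat\Psi}$ (with $\overline{\phantom{x}}$ componentwise complex conjugation), and check which choice of $\bC$ (e.g. one of $\ga^1\ga^3$, $i\ga^2$, or products with $\ga^5$, in the Weyl representation used here) conjugates the Hamiltonian $\hat H$ from \refeq{def:Hhat} into its negative. The three structural pieces to track are: (i) the kinetic term $M^k\p_k$, whose $\p_r$, $\p_\theta$, $\p_\varphi$ parts each pick up signs from conjugation and from the $\bC$-conjugation of $\ga^\mu$ and of the Pauli matrices; (ii) the mass term $im\fR = im\,\diag(\rho,\rho,\bar\rho,\bar\rho)$, where complex conjugation swaps $\rho\leftrightarrow\bar\rho$, which is exactly compensated by a $\bC$ that swaps upper and lower bispinor blocks; and (iii) the electromagnetic term $ie|\rho|\ga^\mu\tilde A_\mu$ with $\tilde A_\mu$ real from \refeq{def:Atilde}, so the $i$ flips sign under conjugation and we need $\bC\ga^\mu\bC^{-1}$ to have the right sign to match. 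A reflection such as $\varphi\mapsto 2\pi-\varphi$ or $r\mapsto -r$ may be needed to fix the relative sign between the $\p_\varphi$ (resp.\ $\p_r$) kinetic piece and the rest; note $r\mapsto-r$ leaves $\De$ invariant but flips the odd-in-$r$ quantities, and is a symmetry of the $t=$const.\ slice of $\cC$ by the $r$-quadratic structure of the metric. I also need to verify $\cW$ is a bounded bijection of $\sf H$ onto itself: since $\hat M = \al^0 + (a\sin\theta/\De)\al^2$ has real entries and is $\varphi$- and $r$-independent, antiunitarity of conjugation and a check that $\bC^\dagger \hat M\bC = \hat M$ (or $=\hat M$ up to the reflection, which preserves the measure on $\rcyl$) gives that $\cW$ is isometric on $\sf H$, hence a genuine symmetry and not merely a formal one.

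The step I expect to be the main obstacle is pinning down the exact $\bC$ (and whether a coordinate reflection is truly required) so that \emph{all} terms flip sign simultaneously --- the mass term forces an off-diagonal $\bC$, the electromagnetic term forces a constraint on $\bC\ga^0\bC^{-1}$ and $\bC\ga^2\bC^{-1}$, and the three kinetic pieces each impose their own sign condition; reconciling these is a finite but delicate linear-algebra computation in the Weyl representation, complicated by the $\theta$-dependent coefficients $a\sin\theta$, $\csc\theta$ and by the appearance of $|\rho|$ versus $\rho,\bar\rho$. A secondary subtlety is the passage from the formal identity $\cW\hat H=-\hat H\cW$ on a dense domain (e.g. smooth compactly supported spinors away from the ring) to the statement for an arbitrary self-adjoint extension: one argues that $\cW$ maps the domain of the minimal operator to itself and commutes appropriately with taking the adjoint/closure, so that $\cW(\mathrm{dom}\,\hat H) = \mathrm{dom}\,\hat H$ for any self-adjoint extension $\hat H$, whence $-\hat H = \cW\hat H\cW^{-1}$ is self-adjoint on that same domain and has spectrum $-\mathrm{spec}\,\hat H$. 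Given the essential self-adjointness result quoted earlier for the z$G$KN case the extension is unique there, but for general $(Q,I)$ the theorem is phrased for any self-adjoint extension, so this domain-invariance remark is where that generality is earned.
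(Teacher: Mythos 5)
Your overall strategy --- exhibit a bounded (anti)linear operator that anticommutes with $\hat H$ and then run the standard conjugation argument on the spectrum --- is exactly the paper's, and your bookkeeping of the three structural pieces (kinetic, mass, electromagnetic), of the passage from a core to an arbitrary self-adjoint extension, and of the $\hat M$-isometry is on target. The gap is that you have not identified the operator, and the candidates you leave on the table would not all survive the computation you defer; in particular, the one ingredient that actually makes the proof go through is missing. A pure matrix-times-conjugation operator $\bC\hat K$ cannot work: already for $Q=I\pi a$ the electrostatic term in $\fH$ is $-eQ(r/\De^2)\al^0$, which is real, scalar, and commutes with every constant matrix, so it cannot change sign without a coordinate map that flips $r$. (This is the content of the paper's remark that $\tilde C=i\ga^2\hat K$ is \emph{charge conjugation}: it intertwines $\hat H_0+e\cA$ with the \emph{different} Hamiltonian $\hat H_0-e\cA$ rather than anticommuting with $\hat H$.) The reflection is therefore not optional, and neither of your two candidates is the right one: $\varphi\mapsto 2\pi-\varphi$ does nothing to the $\varphi$-independent potentials $\tilde A_\mu$, and $r\mapsto-r$ alone sends $\rho=r+ia\cos\theta$ to $-\overline{\rho}$, so after complex conjugation the mass term produces $-\fR$ where one needs $-\overline{\fR}$, and the anticommutation fails there.

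The operator the paper uses is $\hat C=\ga^0\hat K\hat S$, where $\hat S$ is composition with the \emph{sheet swap} $\varsigma(r,\theta,\varphi)=(-r,\pi-\theta,\varphi)$, i.e.\ the deck transformation of the two-sheeted Zipoy cover. Two facts, invisible from a single-sheet viewpoint, make every term of $\fH$ undergo an odd number of sign changes: (i) the Sommerfeld potential is odd under the sheet swap, $\tilde A_\mu\circ\varsigma=-\tilde A_\mu$ (the electron sees charge $+Q$ on one sheet and $-Q$ on the other), which combined with the $i\to-i$ from $\hat K$ and the $\ga^0$-conjugation gives the electromagnetic term its net minus sign; and (ii) $\rho\circ\varsigma=-\rho$, hence $\fR\circ\varsigma=-\fR$, which together with the identity $\ga^0\overline{\fR}=\fR\ga^0$ handles the mass term. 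The quantities $\De$, $|\rho|$, $\sin\theta$, and hence $\hat M$, are invariant under $\varsigma$, which is what makes $\hat C$ an isometry of $\sf H$ as you require. So the missing idea is precisely the exploitation of the double-sheetedness --- the paper explicitly notes that it "plays an essential role in the definition of this operator" --- and with $\hat C$ in hand the rest of your outline (odd sign count term by term, domain invariance, Glazman's argument extending the eigenvalue case to the full spectrum) closes the proof.
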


Note that the above result holds for \emph{any} self-adjoint extension of $\hat{H}$, whatever $Q$ and $I$ are.

%%%%%%%%%%%%%%%%%%%%%%%%%%%%%%%%%%%
\subsubsection{Essential self-adjointness of the Dirac Hamiltonian on z$G$KN}
%%%%%%%%%%%%%%%%%%%%%%%%%%%%%%%%%%%

 Let $\rcyl^*$ denote $\rcyl$ with the ring singularity removed.
 By adapting an argument of Winklmeier--Yamada \cite{WINKLMEIERc}, we shall prove:
\begin{thm}\label{thm:esa}
For $Q =I\pi a$, i.e. for z$\,G$KN,  the operator $\hat{H}$ with domain $C^\infty_c(\rcyl^*,\Cset^4)$ is e.s.a. in~$\sf H$.  
\end{thm}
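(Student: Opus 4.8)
The plan is to reduce the essential self-adjointness of $\hat H$ on $C^\infty_c(\rcyl^*,\Cset^4)$ to a one-dimensional problem in the radial variable $r$, following the strategy of Winklmeier--Yamada. First I would exploit the fact that for $Q = I\pi a$ the potential term collapses to $|\rho|\ga^\mu\tilde A_\mu = -(Qr/\De)\ga^0$, which depends only on $r$, so that after the Chandrasekhar--Page--Toop separation the operator $\hat H$ decomposes as a direct sum (over the discrete angular/azimuthal quantum numbers) of radial Dirac-type operators on $L^2(\Rset, dr)\otimes\Cset^2$ with an appropriate weight coming from $\hat M$. Concretely, expand in the azimuthal modes $e^{ik\varphi}$ ($k\in\Zset+\tfrac12$, because spinors are anti-periodic) and in the eigenfunctions of the angular operator $\ga^1\p_\theta + \ga^2(\dots)$ built from the $L_\pm$; the separation constant $\lambda$ then plays the role of an angular eigenvalue, and one obtains a family $\{\hat H_{k,\lambda}\}$ of $1$-D operators. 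It suffices to show each $\hat H_{k,\lambda}$ is e.s.a. on $C^\infty_c(\Rset,\Cset^2)$, with uniform control so that the direct sum is e.s.a.\ on the algebraic direct sum of the domains, which contains (is a core for) $C^\infty_c(\rcyl^*,\Cset^4)$.

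Next I would bring each radial operator into a standard form. The coefficients $D_\pm = \pm\De\p_r + (\dots)$ have principal part $\pm\De\,\p_r$ with $\De=\sqrt{r^2+a^2}$, which never vanishes; the only ``boundary'' is $r\to\pm\infty$, since the ring ($r=0,\theta=\pi/2$) has been removed but $r=0$ with $\theta\neq\pi/2$ is an interior regular point — indeed after separation there is no singular point at finite $r$ at all, the weight eigenvalues $\lambda_\pm = 1\pm a\sin\theta/\De$ being bounded away from $0$ once we are off the ring, and the mode decomposition diagonalizes this. A change of the radial variable $s = \int^r \De^{-1}\,dr'$ (so $s\in\Rset$, with $s\sim\log|r|$ near $\pm\infty$) turns $\pm\De\p_r$ into $\pm\p_s$, converting $\hat H_{k,\lambda}$ into a perturbed constant-coefficient $1$-D Dirac operator $-i\sigma_3\p_s + V_{k,\lambda}(s)$ on $\Rset$. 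One then checks that the mass term $im\fR$ contributes a bounded (indeed oscillatory-bounded, since $\rho = r+ia\cos\theta$ and $|\rho|\sim|r|$) coefficient, the electric term $-(Qr/\De)\ga^0$ contributes a coefficient tending to $\mp Q$ as $s\to\pm\infty$, and the remaining first-order terms from the $a/\De\,\p_\varphi$ and $\csc\theta\,\p_\varphi$ pieces become bounded functions of $s$ after the mode projection. Essential self-adjointness on $\Rset$ then follows from Weyl's limit-point criterion at both ends: at $s\to\pm\infty$ the operator is a bounded perturbation of $-i\sigma_3\p_s$, which is in the limit-point case, and bounded perturbations preserve this — so there is no boundary condition to impose at either infinity.

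The main obstacle I anticipate is \emph{not} any single radial operator — each is a routine limit-point argument — but rather the passage between the separated picture and the original operator on $\rcyl^*$: one must verify that $C^\infty_c(\rcyl^*,\Cset^4)$ is contained in (or is a core relative to) the algebraic direct sum of the $C^\infty_c(\Rset,\Cset^2)$'s under the separation isomorphism, \emph{and} that essential self-adjointness of each summand, with a uniform gap/limit-point estimate in $(k,\lambda)$, upgrades to essential self-adjointness of the direct sum. This is exactly the delicate point in Winklmeier--Yamada: the angular operator has compact resolvent so its spectrum $\{\lambda_{k,n}\}$ is discrete and, crucially, $|\lambda_{k,n}|\to\infty$, which means the radial potentials $V_{k,\lambda}$ grow with the mode but in a way that only \emph{helps} the limit-point classification (larger $|\lambda|$ makes the singular-point analysis, if any, more strongly limit-point). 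I would therefore spend the real work on: (i) diagonalizing the angular part cleanly enough to see the weight $\hat M$ become a scalar on each mode, handling the $r$-dependence hidden in $\fR$ versus the $\theta$-dependence in $\hat M$; and (ii) an abstract lemma stating that a countable direct sum of e.s.a.\ operators is e.s.a.\ on the algebraic direct sum of cores, then checking $C^\infty_c(\rcyl^*,\Cset^4)$ maps into that algebraic direct sum (density of finite mode truncations of smooth compactly supported spinors). Once those two are in place the theorem follows.
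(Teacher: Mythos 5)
Your skeleton is the paper's: pass to a partial-wave reduction in the spirit of Winklmeier--Yamada, observe that for $Q=I\pi a$ the only singular piece of the potential (the one proportional to $Q-I\pi a$, which blows up on the ring) is absent so that what remains is bounded, and finish with a limit-point argument at $r\to\pm\infty$ for constant-coefficient-plus-bounded radial Dirac operators. That endgame is correct, and your observation that after separation there is no singular point at finite $r$ is also correct (since $\De\ge a>0$).

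The genuine gap is the claimed decomposition of $\hat H$ into ``a direct sum over the discrete angular/azimuthal quantum numbers of radial Dirac-type operators $\hat H_{k,\lambda}$.'' No such decomposition exists. The azimuthal decomposition is fine ($\hat H$ commutes with $\p_\varphi$), but $\hat H$ does not commute with any $E$-independent angular operator: the mass term involves $\fR=\diag(\rho,\rho,\bar\rho,\bar\rho)$ with $\rho=r+ia\cos\theta$, and the weight $\hat M=\al^0+\frac{a\sin\theta}{\De}\al^2$ likewise couples $r$ and $\theta$ and anticommutes with $\al^1$. The Chandrasekhar--Page--Toop separation constant $\lambda$ is an eigenvalue of $T_{ang}$, which contains $aE\sin\theta$ and hence depends on the spectral parameter; it separates the eigenvalue \emph{equation}, not the \emph{operator}. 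Consequently your item (i) --- ``diagonalizing the angular part cleanly enough to see the weight $\hat M$ become a scalar on each mode'' --- cannot be carried out; $\hat M$ and $\fR$ are genuinely two-variable matrix functions and are not diagonalized by any fixed angular eigenbasis. The workaround is not to diagonalize these terms but to discard them before separating: first note that $\hat H$ is e.s.a.\ on $\sf H$ if and only if $\fH:=\hat M\hat H$ is e.s.a.\ on the unweighted $L^2(\rcyl;\Cset^4)$ (this eliminates the weight entirely); then split $\fH$ into a principal part plus a bounded symmetric multiplication operator, where the bounded part absorbs the mass term, the $\frac{a}{\De^2}\al^0\p_\varphi$ term (bounded on each fixed-$\ka$ subspace), and the surviving potential $-eQ\frac{r}{\De^2}\al^0$. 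By the Kato--Rellich theorem only the principal part $\fH'_\ka=-i\al^3\p_r+\De^{-1}\fT_\ka$ must be shown e.s.a., and this one genuinely does decompose: it is block diagonal in $\pm\fh'_\ka$, and expanding an $L^2$ solution of $\fh'_\ka\hat\psi=\pm i\hat\psi$ in Winklmeier's eigenfunctions of $\fa_\ka$ reduces it to the trivially e.s.a.\ operators $i\si_1\p_r+\frac{\la_n^\ka}{\De}\si_3$ on $L^2(\Rset,dr)$. (In this Hamiltonian form the radial principal symbol is already $-i\al^3\p_r$ with constant coefficient, so your substitution $s=\int\De^{-1}dr$ is unnecessary.) With this reordering --- bounded perturbation first, separation second --- your argument closes; without it, the family $\hat H_{k,\lambda}$ you propose to sum over is not defined.
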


%%%%%%%%%%%%%%%%%%%%%%%%%%%%%%%%%%%
\subsubsection{The continuous spectrum of the Dirac Hamiltonians on z$G$KN}
%%%%%%%%%%%%%%%%%%%%%%%%%%%%%%%%%%%

By adapting an argument of Weidmann \cite{Wei82}, we shall prove:

\begin{thm}\label{thm:essspec}
For $Q=I\pi a$  the continuous spectrum of $\hat{H}$ on $\sf H$ is $\Rset\setminus(-m,m)$.
\end{thm}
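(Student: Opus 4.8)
\bigskip
\noindent\emph{Proof proposal.}
The essential spectrum of $\hat H$ is governed entirely by the two asymptotically flat ends $r\to+\infty$ and $r\to-\infty$ of the Zipoy manifold, where the metric \refeq{metricOS} flattens and, for $Q=I\pi a$, the Appell--Sommerfeld potential \refeq{def:AKN} decays like $O(1/r)$. The plan is to prove, first, $[m,\infty)\subseteq\sigma_{\mathrm{ess}}(\hat H)$ by exhibiting Weyl singular sequences concentrated at large $r$, and second, $\sigma_{\mathrm{ess}}(\hat H)\cap(-m,m)=\varnothing$ by a coercivity estimate near the ends; together with the identity $\sigma_{\mathrm{ess}}(\hat H)=-\sigma_{\mathrm{ess}}(\hat H)$ forced by the anticommuting operator of Theorem~\ref{thm:sym}, this yields $\sigma_{\mathrm{ess}}(\hat H)=\Rset\setminus(-m,m)$. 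The hypothesis $Q=I\pi a$ enters only to invoke Theorem~\ref{thm:esa}, which makes $\hat H$ a genuine self-adjoint operator and disposes of the ring singularity. The asymptotics needed are that, as $|r|\to\infty$, one has $\hat H=\alpha^3(-i\p_r)+m\gamma^0+O(1/r)$ in the operator sense on $\{|r|>R\}$, the remainder collecting the potential ($\sim -Q/r$), the metric corrections (coefficients $\sim a^2/r^2$, derivatives $\sim 1/r^3$), and the $\p_\theta,\p_\varphi$ terms, all of which carry a prefactor $|\rho|^{-1}=O(1/r)$; here $\alpha^3=\gamma^0\gamma^3$ satisfies $(\alpha^3)^2=\mathbf 1=(\gamma^0)^2$ and $\alpha^3\gamma^0+\gamma^0\alpha^3=0$, so $\kappa\alpha^3+m\gamma^0$ squares to $(\kappa^2+m^2)\mathbf 1$.

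\medskip
\noindent\emph{Step 1: $[m,\infty)\subseteq\sigma_{\mathrm{ess}}(\hat H)$.}
Fix $E>m$, set $\kappa:=\sqrt{E^2-m^2}>0$, and let $v\in\Cset^4$ be a unit eigenvector of $\kappa\alpha^3+m\gamma^0$ for the eigenvalue $E$. Pick $0\ne\phi\in C_c^\infty(\Rset)$ and a fixed smooth profile $g$ on the $(\theta,\varphi)$-torus supported away from $\theta\in\{0,\pi\}$, and set
\[
\Psi_n(r,\theta,\varphi):=\ell_n^{-1/2}\,\phi\!\left(\tfrac{r-r_n}{\ell_n}\right)e^{i\kappa r}\,g(\theta,\varphi)\,v,
\]
with $\ell_n\to\infty$, $\ell_n=o(r_n)$ and $r_n\to\infty$ so fast that the supports are pairwise disjoint (e.g.\ $\ell_n=n$, $r_n=n^3$). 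Since $\hat M\to\mathbf 1$ at $r=+\infty$, the norms $\|\Psi_n\|_{\hat M}$ are bounded above and below, and $\Psi_n\rightharpoonup0$; moreover $(\alpha^3(-i\p_r)+m\gamma^0)\Psi_n=E\Psi_n+O(\ell_n^{-1})$ because $\p_r$ hits $e^{i\kappa r}$ to leading order while the cutoff derivative is $O(\ell_n^{-1})$, and adding the $O(1/r_n)$ remainder gives $\|(\hat H-E)\Psi_n\|_{\hat M}\to0$. Hence $E\in\sigma_{\mathrm{ess}}(\hat H)$ for every $E>m$, so $m\in\sigma_{\mathrm{ess}}(\hat H)$ by closedness, and by symmetry $(-\infty,-m]\subseteq\sigma_{\mathrm{ess}}(\hat H)$ as well. (Equivalently, localize to a single Chandrasekhar--Page--Toop azimuthal/radial channel and apply Weidmann's essential-spectrum criterion to the radial Dirac system, whose coefficients converge at $r\to\pm\infty$ to those of the free radial Dirac operator of mass $m$, the separation constant entering only through a term $O(1/r)$.)

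\medskip
\noindent\emph{Step 2: $\sigma_{\mathrm{ess}}(\hat H)\cap(-m,m)=\varnothing$.}
Fix $E$ with $|E|<m$ and let $D_0:=\alpha^3(-i\p_r)+m\gamma^0$ be the free Dirac operator in the variable $r$, for which $\mathrm{dist}(E,\sigma(D_0))=m-|E|>0$. On $\{|r|>R\}$ the difference $\hat H-D_0$ is a first-order operator whose coefficients are $O(\delta(R))$ with $\delta(R):=C\sup_{|r|>R}(1/|r|)\to0$; combining $\|(D_0-E)\psi\|_{\hat M}\ge(m-|E|)\|\psi\|_{\hat M}$ with the elliptic a priori bound $\|\grad\psi\|\le C(\|\hat H\psi\|+\|\psi\|)$ valid for $\psi\in\mathrm{dom}(\hat H)$ supported in $\{|r|>R\}$ (bounded, uniformly elliptic coefficients there), one absorbs the remainder and obtains, for $R$ large, a coercivity estimate
\[
\|(\hat H-E)\psi\|_{\hat M}\ \ge\ c_R\,\|\psi\|_{\hat M},\qquad c_R>0,\quad\text{for all }\psi\in\mathrm{dom}(\hat H),\ \supp\psi\subseteq\{|r|>R\}.
\]
Now suppose $E\in\sigma_{\mathrm{ess}}(\hat H)$ with Weyl sequence $(\Psi_n)$, $\|\Psi_n\|_{\hat M}=1$, $\Psi_n\rightharpoonup0$, $(\hat H-E)\Psi_n\to0$; then $(\Psi_n)$ is bounded in the graph norm of $\hat H$, hence — by Theorem~\ref{thm:esa} and elliptic regularity — bounded in a space embedding compactly into $L^2_{\mathrm{loc}}(\rcyl)$ even across the ring. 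Split $\Psi_n=\chi_R\Psi_n+(1-\chi_R)\Psi_n$ with $\chi_R\in C_c^\infty$ equal to $1$ on $\{|r|\le R\}$ and $0$ on $\{|r|\ge R+1\}$. The inner part $\chi_R\Psi_n\to0$ strongly in $\sH$ (bounded set with compact closure, weak limit $0$); the commutator $[\hat H,\chi_R]$ is a bounded operator localized in the shell $\{R<|r|<R+1\}$, where $\Psi_n\to0$ strongly, so $(\hat H-E)[(1-\chi_R)\Psi_n]=(1-\chi_R)(\hat H-E)\Psi_n-[\hat H,\chi_R]\Psi_n\to0$, whence $\|(1-\chi_R)\Psi_n\|_{\hat M}\to0$ by the coercivity estimate. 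Then $\|\Psi_n\|_{\hat M}\to0$, a contradiction. Thus $\sigma_{\mathrm{ess}}(\hat H)\cap(-m,m)=\varnothing$, and with Step~1, $\sigma_{\mathrm{ess}}(\hat H)=\Rset\setminus(-m,m)$. (Alternatively one may excise a bounded neighborhood of the ring and a large ball via Glazman's decomposition principle and reduce to the half-line endpoint analysis in the manner of Weidmann.) Finally, identifying the \emph{continuous} spectrum with $\Rset\setminus(-m,m)$ needs the absence of eigenvalues there, which follows since for $|E|>m$ every solution of the separated radial system is bounded and oscillatory at $r\to\pm\infty$, hence not in $L^2$ — as one reads off via the Pr\"ufer transform.

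\medskip
\noindent\emph{Main obstacle.}
The delicate points are concentrated at the ring: that $\mathrm{dom}(\hat H)$ embeds compactly into $L^2_{\mathrm{loc}}(\rcyl)$ \emph{including} across the conical singularity, and that no boundary condition is forced there, both rest on the essential self-adjointness of Theorem~\ref{thm:esa} (which is the sole reason for assuming $Q=I\pi a$). Everything else is routine: the weight matrix $\hat M$ degenerates only at the ring and is uniformly comparable to $\mathbf 1$ on every region bounded away from it, so $\sH$ and ordinary $L^2$ may be freely interchanged in the elliptic and coercivity estimates; the required asymptotics of the metric coefficients and their first derivatives are explicit from \refeq{metricOS}; and the Step~1 Weyl sequences in fact work for arbitrary $(Q,I)$, it being only Theorem~\ref{thm:esa} — and hence the present theorem as stated — that needs $Q=I\pi a$.
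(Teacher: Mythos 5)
Your route is genuinely different from the paper's: you build Weyl sequences at the asymptotic ends and then try to empty the gap $(-m,m)$ by a coercivity estimate on $\{|r|>R\}$, whereas the paper separates variables \`a la Chandrasekhar--Page--Toop and quotes Weidmann's one--dimensional theorem for the radial operator $H_{rad}$, whose potential matrix tends to $\diag(m,-m)$. Your Step~1 is correct as written. The genuine gap is in Step~2.

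On the unbounded region $\{|r|>R\}$ the difference $\hat H-D_0$ is \emph{not} a perturbation that can be absorbed by ``coefficients $O(1/R)$'' plus an elliptic a priori bound. It contains the angular term $\De^{-1}\fN=\De^{-1}(-i\al^1\p_\theta-i\al^2\csc\theta\,\p_\varphi)$ and the azimuthal term $-i(a/\De^2)\al^0\p_\varphi$: the prefactors decay, but the operators they multiply are unbounded, and $\hat H$ is \emph{not} uniformly elliptic in the angular directions on $\{|r|>R\}$ --- its angular principal symbol itself decays like $1/|r|$ --- so the claimed estimate $\|\grad\psi\|\le C(\|\hat H\psi\|+\|\psi\|)$ with an $R$--independent constant is false there. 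Concretely, for $\psi$ concentrated at $r\sim R$ with angular frequency $\la\sim R$ one has $\|\De^{-1}\fN\psi\|\sim\|\psi\|\sim\|\hat H\psi\|$, so the relative bound of the remainder does not tend to zero as $R\to\infty$ and the absorption step collapses. The coercivity inequality you want is in fact true, but its proof must exploit the \emph{sign} of these terms rather than their smallness: $\De^{-1}\fN$ anticommutes with $-i\al^3\p_r$ and with $m\ga^0$, hence contributes positively to the square of the operator up to commutators that really are $O(1/R)$; equivalently, one passes to the angular eigenbasis, where $\De^{-1}\fN$ becomes $\la/\De$ and $\la^2/\De^2\ge 0$ dominates the $a\ka/\De^2$ term. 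Either repair essentially reinstates the separation of variables, i.e.\ the paper's argument. (A secondary soft spot: the compact embedding of the graph--norm ball into $L^2_{\mathrm{loc}}$ across the ring does not follow from essential self--adjointness alone and would need its own justification.) Note finally that the paper's Weidmann route delivers \emph{purely absolutely continuous} spectrum on $\Rset\setminus(-m,m)$ in each channel in a few lines, which is both stronger than and much shorter than essential spectrum plus absence of embedded eigenvalues.
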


%%%%%%%%%%%%%%%%%%%%%%%%%%%%%%%%%%%
\subsubsection{The point spectrum of the Dirac Hamiltonian on z$G$KN}
%%%%%%%%%%%%%%%%%%%%%%%%%%%%%%%%%%%

With the help of the Chandrasekhar--Page--Toop formalism to separate variables, and the Pr\"ufer transform,
we will be able to control the point spectrum for the z$G$KN Dirac Hamiltonian:

\begin{thm}\label{thm:ptspec}
Suppose $Q =I\pi a$.
Then, if $2m|a|<1$ and $|eQ|<\sqrt{2m|a|(1-2m|a|)}$, the point spectrum of $\hat{H}$ on $\sf H$ is nonempty and located in 
$(-m,m)$; the end points are not included.
\end{thm}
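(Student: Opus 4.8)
The plan is to run the Chandrasekhar--Page--Toop separation of variables and then analyse the resulting radial Dirac system by a Pr\"ufer-type shooting argument, using Theorem~\ref{thm:sym} to halve the work. Since $Q=I\pi a$, $\hat H$ commutes with $-i\p_\varphi$ and (by the Chandrasekhar--Page--Toop ansatz, equivalently by diagonalising the commuting angular operator of Carter--McLenaghan) a stationary spinor $\hat\Psi=e^{-iEt}\phi$ with $\hat H\phi=E\phi$ separates: fixing $\kappa\in\Zset+\tfrac12$ and an eigenvalue $\lambda=\lambda_{\kappa,n}(E)$ of the angular (spin-weighted oblate-spheroidal) operator, $\phi$ is assembled from solutions $f(r)\in\Cset^2$ of a first-order $2\times2$ radial system built from \refeq{eq:DpmLpm} after inserting $\p_t\to -iE$, $\p_\varphi\to i\kappa$ and the potential $\tilde A_0=-Qr/(|\rho|\De)$. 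Because $\De=\sqrt{r^2+a^2}$ here, the ``tortoise'' coordinate is $r$ itself, so the system lives on all of $\Rset$ with no interior singularity, and (the weight $\hat M$ being bounded above and below on $r$-compacta and tending to $\boldsymbol 1$ as $r\to\pm\infty$) membership $\phi\in\sf H$ reduces to $\int_\Rset|f|^2\,dr<\infty$. If one separates the PDE directly rather than via a commuting operator, the only extra bookkeeping is the mild dependence $\lambda=\lambda_{\kappa,n}(E)$, which must be carried along.

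First I would dispose of the easy half --- that $\mathrm{spec}_{\mathrm{pp}}\hat H\subseteq(-m,m)$ --- by analysing the radial system near its only singular points, $r\to\pm\infty$. There $\De\sim|r|$, the electromagnetic coefficient $(eQr-a\kappa)/\De^2$ and the coefficient $\lambda/\De$ both tend to $0$ while $mr/\De\to\pm m$, so the system is a short-range perturbation of the free, zero-angular-momentum radial Dirac system of energy $E$ and mass $m$. For $|E|>m$ all its solutions are purely oscillatory, hence not in $L^2$ near that end; for $|E|=m$ the asymptotic characteristic exponent is a double zero, and a subleading analysis (tracking the $O(1/r)$ corrections, whose coefficients involve $eQ$, $\kappa$, $\lambda$) shows every solution behaves at that end like a nonzero constant or a power $r^{\pm\beta}$ too weak to be $L^2$. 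Since both ends are in the limit-point case (a consequence of Theorem~\ref{thm:esa} restricted to a sector), there is no globally $L^2$ radial solution, so no eigenvalue lies outside $(-m,m)$; together with Theorem~\ref{thm:essspec} this gives the location claim with the endpoints excluded.

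The substance is the nonemptiness. For $E\in(-m,m)$ each end carries a one-dimensional space of exponentially decaying (hence $L^2$) solutions, spanned by $u_-(\cdot;E)$ at $-\infty$ and $u_+(\cdot;E)$ at $+\infty$, chosen real-analytic in $E$; then $E$ is an eigenvalue in the $(\kappa,n)$-sector iff the Wronskian $w_{\kappa,n}(E):=W[u_-,u_+](E)$ vanishes, equivalently iff the Pr\"ufer phase that $u_-$ accrues from $-\infty$ to $+\infty$ is congruent mod $\pi$ to the decaying direction of $u_+$. I would work in the smallest sector ($\kappa=\tfrac12$, lowest branch $n$, for which $\lambda_{\kappa,n}$ is close to $\pm1$ when $ma$ and $a|E|$ are small), introduce Pr\"ufer variables for the radial system, and track the difference $\vartheta_-(E)-\vartheta_+(E)$ of these two phases as $E$ ranges over $[0,m)$ --- the side $E\in(-m,0]$ being supplied by Theorem~\ref{thm:sym}. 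Standard Sturm--Pr\"ufer monotonicity for first-order Dirac systems makes this difference monotone in $E$ (one checks the feedback term $\tfrac{d}{dE}\lambda_{\kappa,n}(E)$, of size $O(a^2)$, does not reverse the sign --- this is where $2m|a|<1$ first enters), and integrating the Pr\"ufer angle equation against a comparison potential yields a lower bound on its total variation that exceeds $\pi$ precisely when $(eQ)^2+(2m|a|)^2<2m|a|$, i.e. when $2m|a|<1$ and $|eQ|<\sqrt{2m|a|(1-2m|a|)}$. An intermediate-value argument then produces a zero of $w_{\kappa,n}$, hence an eigenvalue in $(-m,m)$.

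The main obstacle is this quantitative oscillation estimate. The delicate points are: (a)~uniform control of $u_\pm(r;E)$ near the singular endpoints as $E\uparrow m$, where $\sqrt{m^2-E^2}\to0$ and the exponential decay degenerates, so one must follow the subexponential corrections to keep $w_{\kappa,n}$ and the Pr\"ufer phase continuous up to (but not including) the threshold; (b)~establishing monotonicity of the rotation number in $E$ with the angular-eigenvalue feedback $\lambda_{\kappa,n}(E)$ included; and (c)~squeezing from the Pr\"ufer angle equation a bound sharp enough to yield exactly the constants $2m|a|$ and $\sqrt{2m|a|(1-2m|a|)}$, which needs precise estimates on the lowest angular eigenvalue $\lambda_{1/2,0}(E)$ and on the effective radial potential built from $m$, $a$, $\kappa$ and $eQ$. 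The remaining ingredients --- the separation itself, the endpoint asymptotics, and the limit-point classification --- are routine given the structure assembled in Section~2.
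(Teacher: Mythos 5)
Your overall architecture — Chandrasekhar--Page--Toop separation, Pr\"ufer variables, an intermediate-value/shooting argument in the Pr\"ufer phase, and symmetry of the spectrum to halve the work — matches the paper's strategy, and your observation that the location claim follows from the absolutely continuous spectrum on $\Rset\setminus(-m,m)$ is fine (the paper gets it from Theorem~\ref{thm:essspec} plus the fact that the $L^2$ boundary conditions of Proposition~\ref{prop:bndryvals} force $E\in(-m,m)$). Where you genuinely diverge is in how the $E$--$\lambda$ coupling is resolved: you propose to feed the angular eigenvalue branch $\lambda_{\kappa,n}(E)$ into the radial problem and shoot once in $E$, invoking monotonicity of the phase difference; the paper instead alternates between the two separated problems, proving that the angular equation defines $\lambda=\La(E)$ with $0\le d\La/dE<a$ (Theorem~\ref{thm:La}) and the radial equation defines $E=\cE(\la)$ with $|d\cE/d\la|<1/a$ (Theorem~\ref{thm:E}), and closes the loop by the contraction mapping theorem. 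Your single-shooting route is a legitimate alternative in principle, but note that your monotonicity claim is not "standard Sturm--Pr\"ufer" here: the derivative bound on the feedback term $d\lambda_{\kappa,n}/dE$ that you need is essentially the content of the paper's Theorem~\ref{thm:La}, whose proof requires first constructing the angular saddles connector and running an integrating-factor argument along it. So that step is a theorem to be proved, not a check.

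The genuine gap is the quantitative core: your proposal asserts that "integrating the Pr\"ufer angle equation against a comparison potential yields a lower bound on its total variation that exceeds $\pi$ precisely when $(eQ)^2+(2m|a|)^2<2m|a|$," and this single sentence is where the entire theorem lives. Nothing in the proposal indicates what the comparison potential is or why the threshold comes out to exactly these constants. In the paper these conditions enter in two concrete places, neither of which is a soft comparison estimate: (i) the sign pattern of the coefficients of the quartic $p(\tau)$ controlling the discriminant of the radial nullcline equation, and the bracketing of $p$ between two reducible quartics $Q^\pm$, require $\ga=-eQ\in(-\sqrt{2a(1-2a)},0)$ and $a\in(0,\tfrac12)$ (after setting $m=1$) in order to establish the nullcline topology (assumption (A)) that makes the corridor winding number well defined and forces it to change; and (ii) the explicit barrier constructions (the lines $L$, $L'$, $\Upsilon$ in Propositions~\ref{prop:om1} and \ref{prop:om0}) that pin the winding number to $0$ for small $E$ and to $\ge 1$ for $E$ near $1$ use the same parameter restrictions, e.g. the inequality $|\ga|-1<0$ and $j(\xi_0)>2a(1-E)$. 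An additional technical point you flag but do not resolve — and which the paper handles by replacing hyperbolic linearization with center-manifold uniqueness for the saddle-nodes $S_\pm$ of the $\Om$ flow — is that the radial equilibria are degenerate (eigenvalue $\pm 2a\sqrt{1-E^2}$ and $0$), so the "exponentially decaying solutions $u_\pm(\cdot;E)$ chosen real-analytic in $E$" and the continuity of the Wronskian up to $E\uparrow m$ require an argument, not a citation to standard ODE asymptotics. Until the oscillation bound of your third paragraph is actually derived with these constants, the proof of nonemptiness is missing its main step.
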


This completes the formulation of our main results. 
We next turn to their proofs.
The proofs of our main theorems are distributed over four sections corresponding to the various aspects of the spectrum, 
i.e. symmetry, essential self-adjointness, continuous spectrum, and point spectrum.

%%%%%%%%%%%%%%%%%%%%%%%%%%%%%%%%%%%
%%%%%%%%%%%%%%%%%%%%%%%%%%%%%%%%%%%
%%%%%%%%%%%%%%%%%%%%%%%%%%%%%%%%%%%
\section{Proof of  Theorem \ref{thm:sym} (Symmetry of the energy spectrum)}\label{sec:proofofsymmetry}
%%%%%%%%%%%%%%%%%%%%%%%%%%%%%%%%%%%
%%%%%%%%%%%%%%%%%%%%%%%%%%%%%%%%%%%
%%%%%%%%%%%%%%%%%%%%%%%%%%%%%%%%%%%

Suppose $E\in\RR$ is an eigenvalue of $\hat{H}$.
  Then there exists $\hat\Psi\in \sH$ such that
\beq
\hat{H} \hat\Psi = E \hat\Psi.
\eeq
Suppose one can find a bounded linear, or conjugate-linear, operator $\hat{C}:\sH \to \sH$ that anti-commutes with $\hat{H}$, i.e.
\beq
\bigl[ \hat{C}, \hat{H}\bigr]_+^{} = \hat{C} \hat{H} + \hat{H}\hat{C} = 0.
\eeq
It is then easy to see that $-E$ must also be an eigenvalue of $\hat{H}$, since
\beq
\hat{H} \hat{C} \hat\Psi = - \hat{C} \hat{H} \hat\Psi = -\hat{C} E \hat\Psi = -E \hat{C}\hat\Psi.
\eeq
This argument can be extended to show the symmetry of other parts of the spectrum. (See e.g. Glazman \cite{Gla65}, p. 205.)

Let $\hat{K}:\sH \to \sH$ denote the complex conjugation opertor $\hat{K} \hat\Psi (x) = \hat\Psi^*(x)$, 
and let $\hat{S}:\sH\to \sH$ denote the  operator
$(\hat{S} \hat\Psi) (x) = \hat\Psi (\varsigma(x))$ where $\varsigma:\cZ \to \cZ$ is the sheet swapping map, 
\beq
\varsigma(r,\theta,\varphi) = (-r,\pi-\theta,\varphi).
\eeq
We claim that the operator $\hat{C}:\sH \to \sH$ given (in Weyl representation) by $\hat{C} := \ga^0 \hat{K} \hat{S}$, viz.
\beq
(\hat{C} \hat\Psi) (x) = \ga^0 \hat\Psi^*(\varsigma(x)),
\eeq
anti-commutes with $\hat{H}$.
 Note that the double-sheetedness of the underlying space plays an essential role in the definition of this operator.

\begin{rem}\emph{
 The operator $\hat{C}$ should not be confused with the operator $\tilde{C}$ given in Weyl representation by
$$
\tilde{C} := i\ga^2 \hat{K}.
$$
 One easily checks that if $\hat\Psi$ solves $i\hbar\p_t\hat\Psi = (\hat{H}_0 + e\cA)\hat\Psi$, then $\tilde{C}\hat\Psi$ solves
$i\hbar\p_t(\tilde{C}\hat\Psi) = (\hat{H}_0 - e\cA)(\tilde{C}\hat\Psi)$.
 In particular, if $\hat\Psi$ is an eigen-bi-spinor of $\hat{H}_0 + e\cA$ with eigenvalue $E$, then $\tilde{C}\hat\Psi$ is an eigen-bi-spinor of 
$\hat{H}_0 - e\cA$ with eigenvalue $-E$ (note that the two Hamiltonians here are different!).
  For this reason $\tilde{C}$ is called the {\em charge conjugation} operator.  }
\end{rem}
 To prove the claim, first note that $\ga^0 = \beta$ anti-commutes with all three $\al^k$ matrices.
  Recall that
\beq\label{def:hamhat}
\hat{H}(x) = \hat{M}^{-1}\fH
\eeq
and
\beq\label{def:hamfrak}
\fH := 
  -i\al^3 \p^{}_r + \frac{1}{\De} \Bigl( - i \al^1 \p^{}_\theta - i\al^2 \csc\theta\,\p^{}_\varphi\Bigr) -
 \frac{ia}{\De^2} \al^0 \p^{}_\varphi + \frac{m}{\De} \ga^0 \fR + \frac{e|\rho|}{\De} \left(\tilde{A}^0(x) \al^0 + 
\tilde{A}^2(x)\al^2\right).
\eeq
Now
\beq\label{minv}
\hat{M}^{-1} = \frac{\De^2}{|\rho|^2} \left(\al^0 - \frac{a\sin\theta}{\De} \al^2\right).
\eeq
Thus, keeping in mind that $\overline{\al^2} = -\al^2$, we find that
\beq
\hat{C} \hat{M}^{-1} =
\ga^0 \overline{\hat{M}^{-1}\circ\varsigma} \hat{K}\hat{S}=
  \frac{\De^2}{|\rho|^2} \ga^0\left(\al^0 +\frac{a\sin\theta}{\De}\al^2\right) \hat{K}\hat{S}=
  \frac{\De^2}{|\rho|^2}\left(\al^0 - \frac{a\sin\theta}{\De} \al^2\right)\ga^0\hat{K}\hat{S}=
 \hat{M}^{-1}\hat{C}.
\eeq
So we only need to check that $\hat{C}$ anti-commutes with $\fH$.

  It is enough to check that each term in $\fH$ goes through an odd number of sign changes (either one or three) as the 
three operators $\hat{K}$, $\hat{S}$, and multiplication by $\ga^0$, filter through that term. 
 Recalling that the potential $\bA$ is anti-symmetric with respect to sheet swap:  $\tilde{A}_\mu\circ\varsigma = - \tilde{A}_\mu$,
this becomes obvious for most terms in $\fH$. 
 Only the term involving $\fR$ requires some care. 
 We first check that $\fR\circ\varsigma = -\fR$ and that $\ga^0 \overline{\fR} = \fR \ga^0$. 
 Then
\beq
\hat{C} \ga^0 \fR = \ga^0 \ga^0 \overline{\fR\circ\varsigma} \hat{K} \hat{S} = 
-  \ga^0 \ga^0 \overline{\fR} \hat{K} \hat{S}=  
- \ga^0 \fR \ga^0 \hat{K} \hat{S} 
= - \ga^0\fR\hat{C},
\eeq
establishing the anti-commutation property.
The proof of Theorem \ref{thm:sym} is complete.

 Before moving on to the proof of the next theorem on the list, we pause briefly to recall our earlier discussion that
showed that the physics does not change if in the z$G$KN electromagnetic spacetime solution one changes $a\to-a$, 
respectively changes $I\to-I$ or $a\to-a$ in its generalization involving $Qa\to I\pi a^2$.
 This suggests that the spectrum of our Dirac Hamiltonian must be invariant under these transformations. 
 However, recall that our co-frame is adapted to the electromagnetic fields written as in (\ref{def:AKN}), respectively (\ref{def:AQI}),
so that a sign change $a\to-a$ needs to be accompanied by a sign change $I\to-I$ in order for the physics (here: the spectrum of the
Hamiltonian) to remain unchanged.
 We now use a variant of the strategy of proof of Theorem \ref{thm:sym} to prove exactly this.
 Thus we write the Hamiltonian defined by \refeq{def:hamhat} as $\hat{H} = \hat{H}_{a,I}$ to emphasize the dependence 
on the two parameters $a$ and $I$. 

\begin{prop}\label{prop:apos}
 There exists an involutive isometry $C: \sH \to \sH$ such that 
\beq\label{intertwine}
C \hat{H}_{a,I} = \hat{H}_{-a,-I} C.
\eeq
 Thus, the spectral properties of the two Hamiltonians are identical.
\end{prop}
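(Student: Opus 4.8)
The plan is to produce $C$ as a constant unitary matrix followed by complex conjugation, exactly in the spirit of the operator $\hat C=\ga^0\hat K\hat S$ used in the proof of Theorem~\ref{thm:sym}, and then to verify \refeq{intertwine} term by term in the Hamiltonian form \refeq{def:hamhat}--\refeq{def:hamfrak}. First I would record how the ingredients of $\hat H_{a,I}=\hat M^{-1}\fH$ transform under $(a,I)\mapsto(-a,-I)$: the scalars $\De=\sqrt{r^2+a^2}$, $|\rho|$ and $\csc\theta$ are invariant; $\rho=r+ia\cos\theta\mapsto\bar\rho$, so $\fR=\diag(\rho,\rho,\bar\rho,\bar\rho)\mapsto\overline{\fR}$ and $\hat M=\al^0+\tfrac{a\sin\theta}{\De}\al^2\mapsto\overline{\hat M}$; the Sommerfeld component $\tilde A^0$ is invariant while $\tilde A^2$ changes sign (because $Q-I\pi a$ is fixed under the combined change but the prefactor $a$ is not); and the term $-\tfrac{ia}{\De^2}\al^0\p_\varphi$ changes sign. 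The invariance of $\tilde A^0$ is exactly why the \emph{combined} change $a\to-a,\ I\to-I$ is the correct one, as anticipated in the paragraph preceding the proposition.

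Next I would observe that complex conjugation $\hat K$ already realizes most of this transformation on $\fH$: it flips the sign of each term carrying an explicit $i$, namely $-i\al^3\p_r$, $-\tfrac{i}{\De}\al^1\p_\theta$ and $-\tfrac{ia}{\De^2}\al^0\p_\varphi$, but it leaves $-\tfrac{i}{\De}\al^2\csc\theta\,\p_\varphi$ unchanged since $\overline{-i\al^2}=-i\al^2$; it sends $\fR\mapsto\overline{\fR}$ and $\hat M\mapsto\overline{\hat M}$; and it flips the $\tilde A^2\al^2$ piece of the potential since $\overline{\al^2}=-\al^2$. Matching against the list above, $\hat K$ gives precisely the required transformation on every term \emph{except} the two kinetic terms $-i\al^3\p_r$ and $-\tfrac{i}{\De}\al^1\p_\theta$, which $\hat K$ flips but the parameter change leaves fixed. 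So I would look for a constant unitary matrix $B$ with $B\al^1B^{-1}=-\al^1$, $B\al^3B^{-1}=-\al^3$, $B\al^2B^{-1}=\al^2$, $B\ga^0B^{-1}=\ga^0$; the last two, together with $B\al^0B^{-1}=\al^0$ and the fact that $\overline{\fR}$ is a scalar multiple of the identity on each chirality block, guarantee that $B$ disturbs neither the mass term nor the potential term. In the Weyl representation one checks directly that $B=\diag(\si^{}_2,\si^{}_2)$ (equivalently $-i\ga^1\ga^3$, up to a phase) satisfies all four conditions, and is essentially the only matrix that does.

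With such a $B$ fixed I would set $C:=B\hat K$, i.e. $(C\hat\Psi)(r,\theta,\varphi)=B\,\overline{\hat\Psi(r,\theta,\varphi)}$, and verify: (i) $C\hat M^{-1}_{a}=\hat M^{-1}_{-a}C$, using $\hat M^{-1}_{-a}=\overline{\hat M^{-1}_{a}}$ from \refeq{minv} together with $[B,\al^0]=[B,\al^2]=0$; (ii) $C\fH_{a,I}=\fH_{-a,-I}C$, by the term-by-term comparison just assembled; hence $C\hat H_{a,I}=\hat H_{-a,-I}C$. One then checks that $C$ is anti-unitary as a map between the spaces weighted by $\hat M_{a}$ and by $\hat M_{-a}$ --- this follows from $B$ being unitary and $B^{\dag}\overline{\hat M_{a}}B=\hat M_{-a}$, again a consequence of $[B,\al^2]=0$ --- and that $C^2=B\overline B=-\mathbf 1$, so $C$ is an anti-unitary involution-up-to-sign (a Kramers/charge-conjugation-type operator). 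This is all that is needed: an anti-unitary bijection intertwining the two self-adjoint extensions carries each spectral subspace of $\hat H_{a,I}$ conjugate-linearly onto the corresponding spectral subspace of $\hat H_{-a,-I}$, so the two operators have identical spectra, with point and continuous parts matching.

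The step I expect to be the crux is the sign bookkeeping in the second and third paragraphs: one must track scrupulously which sign changes come from $\hat K$, which from the parameter change $(a,I)\to(-a,-I)$, and which from conjugation by $B$, and then confirm that a matrix $B$ with exactly the prescribed pattern of commutations and anticommutations exists --- this plays the same role here as the ``odd number of sign changes'' check that closes the proof of Theorem~\ref{thm:sym}. A secondary point worth flagging is that the Hilbert-space weight $\hat M$ is itself $a$-dependent, so ``$C$ is an isometry $\sH\to\sH$'' must be read as an isometry between the two a priori inequivalent weighted $L^2$ spaces $(\sH,\hat M_a)$ and $(\sH,\hat M_{-a})$; this works precisely because $\hat M_{-a}=\overline{\hat M_a}$ and $B$ commutes with $\al^2$.
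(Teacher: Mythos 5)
Your construction is correct and does deliver the substantive conclusion, but it is a genuinely different operator from the one the paper uses. The paper takes $C=S_3T$ with $T\hat\Psi(r,\theta,\varphi)=\hat\Psi(r,\pi-\theta,-\varphi)$ and $S_3=\diag(\si_3,\si_3)$: a \emph{linear} unitary built from a sheet-preserving spatial reflection, for which the conjugations $S_3\al^1S_3=-\al^1$, $S_3\al^2S_3=-\al^2$, $S_3\al^3S_3=\al^3$ exactly compensate the sign flips that $\p_\theta\to-\p_\theta$, $\p_\varphi\to-\p_\varphi$, $\cos\theta\to-\cos\theta$ produce in \refeq{def:hamfrak} and \refeq{minv}, and which manifestly satisfies $C^2=\mathbf 1$. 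You instead take $C=B\hat K$ with $B=\diag(\si_2,\si_2)$, an \emph{anti-linear} operator; your sign bookkeeping checks out term by term, and $B=\diag(\si_2,\si_2)$ is indeed the (essentially unique) constant matrix with the required pattern of commutations and anticommutations. What your route buys is that no coordinate reflection is needed and the mechanism is transparently the same as in the proof of Theorem \ref{thm:sym}; what it costs is the letter of the statement: your $C$ is anti-unitary and satisfies $C^2=-\mathbf 1$, so it is not the \emph{involutive isometry} the proposition asserts (and no phase adjustment can fix the square of an anti-unitary), even though an anti-unitary intertwiner is entirely sufficient for the spectral identity that the proposition is actually invoked for. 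If you want the statement exactly as written, switch to the linear choice $S_3T$; alternatively, observe that your $B\hat K$ differs from $S_3T$ by an anti-unitary operator commuting with the Hamiltonian itself. Your closing remark that ``isometry $\sH\to\sH$'' must be read as an isometry between the spaces weighted by $\hat M_{a}$ and $\hat M_{-a}=\overline{\hat M_{a}}$ is a point the paper leaves implicit and is worth keeping.
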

\begin{proof}
Let $T: \sH \to \sH$ be defined by
$$
T \hat{\Psi} (r,\theta,\varphi) = \hat{\Psi}(r,\pi - \theta,-\varphi),
$$
and let $\vec{S}$ denote the ``classical spin operator,'' in Weyl representation given by:
$$
S_k := \left( \begin{array}{cc} \si_k & \\ & \si_k \end{array}\right),\qquad k=1,2,3.
$$
It can then be readily checked from \refeq{def:hamhat}, \refeq{def:hamfrak}, \refeq{minv}, and \refeq{def:Atilde} that 
$$
C := S_3 T
$$
will do the job.  
\end{proof}

%%%%%%%%%%%%%%%%%%%%%%%%%%%%%%%%%%%
%%%%%%%%%%%%%%%%%%%%%%%%%%%%%%%%%%%
%%%%%%%%%%%%%%%%%%%%%%%%%%%%%%%%%%%
\section{Proof of Theorem \ref{thm:esa} (Essential self-adjointness ($Q=I\pi a$))}
%%%%%%%%%%%%%%%%%%%%%%%%%%%%%%%%%%%
%%%%%%%%%%%%%%%%%%%%%%%%%%%%%%%%%%%
%%%%%%%%%%%%%%%%%%%%%%%%%%%%%%%%%%%

 We now show that the Dirac Hamiltonian $\hat{H}$ is essentially self-adjoint on $\sf H$;
recall that $\sf H$ is equipped with the inner product\footnote{As pointed out by the anonymous referee,
 an alternate proof of essential self-adjointness might be possible following the strategy of  Chernoff \cite{Che73}, 
who proved essential self-adjointness of the Dirac operator on certain {\em complete} spacetimes. 
 Such a proof would be highly welcome, indeed, for it would avoid a partial wave decomposition and be more direct. 
However, due to the presence of the naked ring singularity in the z$G$KN spacetime, the underlying manifold is not
 complete and thus it is far from obvious how to generalize Chernoff's result.}
 \refeq{def:innerPROD}.

 We observe that $M^0 = \De \ga^0 \hat{M}$, so we may rewrite \refeq{def:Hhat} as
\beq
\hat{H} = 
\hat{M}^{-1}\ga^0\left(\frac{-i}{\De} M^k\p^{}_k + e\frac{|\rho|}{\De}\ga^\mu\tilde{A}_\mu  + \frac{m}{\De} \fR\right) 
=  \hat{M}^{-1} \fH,
\eeq
where
\beq
\fH := \fM + \frac{1}{\De} \fN + \fP +\fQ,
\eeq
with
\bna
\fM & := & -i \al^3 \p^{}_r  \\
\fN & :=& -i \al^1 \p^{}_\theta  -i \al^2\csc\theta \p^{}_\varphi\\
\fP & := & -i \frac{a}{\De^2}\al^0 \p^{}_\varphi + \frac{m}{\De} \ga^0\fR \\
\fQ & :=  & e\frac{|\rho|}{\De}\ga^0\ga^\mu\tilde{A}_\mu\,.
\ena
Thus,
\beq
\langle \hat{\Psi} , \hat{H} \hat{\Phi}\rangle_{\hat{M}} = \int_{\rcyl}
\hat{\Psi} \fH \hat{ \Phi}  d\theta d\varphi dr.
\eeq
 Evidently, $\fH$ is Hermitian symmetric on the Hilbert space $L^2(\rcyl;\Cset^4)$ with its natural inner product  
\beq \label{def:IP}
(\hat\Phi,\hat\Psi) = \int_{\rcyl} \hat\Phi^\dag\hat\Psi d\theta d\varphi dr.
\eeq 
 It is furthermore easy to see that $\hat{H}$ is e.s.a. on ${\sf H}$ if and only if $\fH$ is e.s.a. on 
$L^2(\rcyl;\Cset^4)$.

We shall  prove that $\fH$ is e.s.a. on $L^2(\rcyl;\Cset^4)$ when $Q=I\pi a$, i.e. for a Dirac point electron in z$G$KN.
%Then we will extend this result to small $(Q-I\pi a)e$ using a Hardy-type estimate and perturbation theory.

\begin{thm}\label{thm:esaGOTIC}
 For $Q=I\pi a$ the operator $\fH$ with domain $C^\infty_c(\rcyl^*,\Cset^4)$ is e.s.a. in $L^2(\rcyl,\Cset^4)$. 
\end{thm}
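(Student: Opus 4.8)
The plan is to combine the Chandrasekhar--Page--Toop separation of variables with the Weyl limit-point analysis of the resulting radial operators, following Winklmeier--Yamada \cite{WINKLMEIERc}. The hypothesis $Q=I\pi a$ is exploited at the very start: by the Remark following \refeq{def:Atilde} it forces every frame component of the potential to vanish except $\tilde A_0=-Qr/(|\rho|\De)$, so the term $\fQ=e\tfrac{|\rho|}{\De}\ga^0\ga^\mu\tilde A_\mu$ collapses to $-(eQr/\De^2)\al^0$, which is globally bounded (by $|eQ|/(2|a|)$) and vanishes on the ring. Consequently every coefficient of $\fH=\fM+\tfrac1\De\fN+\fP+\fQ$ is smooth and bounded on the whole cylinder $\rcyl$, the sole exception being the factor $\csc\theta$ occurring in $\fN$ on the rotation axis $\{\theta=0\}\cup\{\theta=\pi\}$; in particular the ring singularity $\{\rho=0\}$ is \emph{not} a singularity of $\fH$. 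Since the ring has codimension two, a logarithmic cutoff argument shows that $\fH$ on $C^\infty_c(\rcyl^*,\Cset^4)$ has the same closure as $\fH$ on $C^\infty_c(\rcyl,\Cset^4)$, so one may work on the full cylinder.

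I would first separate off $\varphi$: since $\fH$ depends on $\varphi$ only through $\p_\varphi$, the Hilbert space $L^2(\rcyl,\Cset^4)$ splits into an orthogonal sum of Fourier sectors, and $\fH$ reduces on each to a two-dimensional operator $\fH_k$ in $(r,\theta)$. Next I would invoke the Carter--McLenaghan first-order symmetry operator adapted to the canonical symmetric tetrad \refeq{CcoF} \cite{CarMcL82}, which for $Q=I\pi a$ (and only then) commutes with $\fH_k$; because the z$G$KN spacetime is static and $\fH$ carries no $\p_t$, this operator, its eigenvalue --- the separation constant $\lambda$ --- and its eigenspinors are all independent of the spectral parameter, so its eigenspaces furnish a genuine orthogonal decomposition $\fH=\bigoplus_{k,\lambda}\mathfrak{h}_{k,\lambda}$ (Hilbert-space direct sum) into $2\times2$ radial ODE operators $\mathfrak{h}_{k,\lambda}$ on a weighted $L^2(\RR,\Cset^2)$. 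Upgrading this formal separation to an operator-level statement --- namely, showing that the symmetry operator is itself essentially self-adjoint with purely discrete spectrum (which is where the axis $\theta\in\{0,\pi\}$ is disposed of, the spin-weighted angular eigenfunctions being smooth across the poles), and that its spectral projections commute with $\fH$ on the maximal domain --- is the step I expect to require the most care.

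Granting the decomposition, it remains to show each $\mathfrak{h}_{k,\lambda}$ is essentially self-adjoint on $C^\infty_c(\RR,\Cset^2)$. Here $Q=I\pi a$ pays off a second time: after dividing the separated equation by $\De=\sqrt{r^2+a^2}$, the coefficients of $\mathfrak{h}_{k,\lambda}$ are built from $mr/\De$, $eQr/\De^2$, $ka/\De^2$ and $\lambda/\De$, all of which are smooth on all of $\RR$, so $r=0$ is an ordinary regular interior point and the only endpoints of the radial problem are $r=+\infty$ and $r=-\infty$, the two asymptotically flat ends of the two sheets. As $|r|\to\infty$ the operator $\mathfrak{h}_{k,\lambda}$ converges to the free one-dimensional Dirac operator up to a bounded perturbation which tends to zero; a bounded perturbation does not alter the limit-point classification at infinity, and the free operator is in the limit-point case there, so $\mathfrak{h}_{k,\lambda}$ is limit-point at both ends and hence essentially self-adjoint. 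The adaptation of Winklmeier--Yamada's argument consists precisely in replacing their endpoint analysis at the black-hole event horizon by this (easier) analysis at the second flat end $r=-\infty$.

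Finally I would assemble the pieces: if $u\in L^2(\rcyl,\Cset^4)$ satisfies $\fH^*u=\pm i\,u$, then projecting onto the $(k,\lambda)$-sectors gives $\mathfrak{h}_{k,\lambda}^*u_{k,\lambda}=\pm i\,u_{k,\lambda}$ for all $k,\lambda$, whence essential self-adjointness of each $\mathfrak{h}_{k,\lambda}$ forces $u_{k,\lambda}=0$ and thus $u=0$. The deficiency spaces being trivial, $\fH$ is essentially self-adjoint on $C^\infty_c(\rcyl^*,\Cset^4)$; by the equivalence established just above, $\hat H$ is then essentially self-adjoint on $\sf H$, which is Theorem \ref{thm:esa}.
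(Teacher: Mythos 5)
There is a genuine gap at the heart of your argument: the ``genuine orthogonal decomposition $\fH=\bigoplus_{k,\lambda}\mathfrak{h}_{k,\lambda}$'' via the Carter--McLenaghan/Chandrasekhar--Page symmetry operator does not exist in the form you need. The angular operator of the Chandrasekhar--Page--Toop separation is built from $L_\pm=\p_\theta\pm i\left(a\sin\theta\,\p_t+\csc\theta\,\p_\varphi\right)$, and the $a\sin\theta\,\p_t$ term does \emph{not} disappear in the zero-$G$ limit, because the canonical symmetric tetrad \refeq{CcoF} still mixes $\p_t$ and $\p_\varphi$ even though the metric is static. Consequently the separation constant is $\la=\la_n(am,aE,\ka)$ --- it depends on the spectral parameter $E$ --- and the corresponding ``eigenspinors'' depend on $E$ as well. (The paper makes exactly this point at the start of Section 7: the radial Hamiltonian depends on $\la$, which in turn depends on $E$.) So the projections you would need in your final step (``projecting onto the $(k,\lambda)$-sectors'') are not $E$-independent orthogonal projections commuting with $\fH^*$, and the deficiency-space argument cannot be run sector by sector. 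Staticity of the spacetime does not rescue this; substituting $\p_t\mapsto -iE$ is circular when the goal is to establish self-adjointness in the first place.

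The way around this --- which is the actual content of the Winklmeier--Yamada adaptation --- is to abandon the Chandrasekhar--Page angular operator for the purposes of self-adjointness and instead use the parameter-free operator $\fa_\ka=-i\si_2\p_\theta+\ka\csc\theta\,\si_1$ that appears verbatim in the $\theta$-part of $\fH$ after the $\varphi$-decomposition. Its eigenvalues $\la^\ka_n=\mbox{sgn}(n)(|\ka|-\half+|n|)$ and eigenfunctions $g^\ka_n$ depend only on $\ka$, not on $E$, $a$, $m$, or $eQ$. The price is that $\fa_\ka$ does not commute with the radial derivative term; rather $-i\si_3\p_r$ maps the $n$-eigenspace to the $(-n)$-eigenspace (since $g^\ka_{-n}=-\si_3 g^\ka_n$), so one obtains a genuine orthogonal decomposition into the two-dimensional blocks $\mathrm{span}\{g^\ka_n,g^\ka_{-n}\}$, on each of which the reduced operator is $i\si_1\p_r+\frac{\la^\ka_n}{\De}\si_3$ --- essentially self-adjoint on $C^\infty_c(\RR,\Cset^2)$ because $\la^\ka_n/\De$ is bounded. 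Everything else in $\fH$ --- the $\frac{a\ka}{\De^2}$ term, the mass term $\frac{m}{\De}\ga^0\fR$ (note $|\rho|\leq\De$), and, crucially for the hypothesis $Q=I\pi a$, the electromagnetic term $-eQ\,r\De^{-2}\al^0$ --- is a bounded symmetric multiplication operator and is absorbed by perturbation. Your observations that $Q=I\pi a$ kills the unbounded part of the potential and that $r=-\infty$ replaces the event horizon as the second limit-point end are both correct and are part of the paper's proof; it is the intermediate decomposition step that must be replaced as above.
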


\begin{proof}
Let us write
\beq
\fH = \fH^0 + \fQ.
\eeq
Here, $\fH^0$ is the free Hamiltonian.
  We will first show that $\fH^0$ is essentially self-adjoint; this proof 
is an easy adaptation of the method first employed by Winklmeier and Yamada \cite{WINKLMEIERc}.
	We will then conclude essential self-adjointness of $\fH$ by using a perturbation argument.

        To this end, let us consider the decomposition with respect to the azimuthal angle $\varphi$ of the Hilbert space 
$L^2(\rcyl;\Cset^4)$ into partial wave subspaces $L^2([0,\pi]\times\RR,d\theta dr)$:
\beq\label{pwd}
L^2(\rcyl;\Cset^4) = \oplus_{\ka\in\Zset+\half} \left(L^2_\ka([0,\pi]\times\RR,d\theta dr)\right)^4
\eeq
corresponding to the expansion of a bispinor field $\hat\Psi \in L^2(\rcyl;\Cset^4)$ given by 
\beq\label{decomp}
\hat\Psi (r,\theta,\varphi) = \sum_{\ka\in\Zset+\half} e^{i\ka\varphi}\hat\Psi_\ka(r,\theta).
\eeq
For a discussion of why $\kappa$ needs to be a half-integer, see \cite{FinsterETalDperDNE,FinsterETalDperDNEerr}.

Let $\fH^0_\ka := \left.\fH^0\right|_{L^2_\ka}$. 
 Then $\fH^0_\ka = \fS_\ka + \De^{-1}\fT_\ka +\fB_\ka$, with
\beq
\fS_\ka = -i \al^3 \p^{}_r ,
\qquad
\fT_\ka =-i \al^1 \p^{}_\theta  + \al^2\ka \csc\theta 
= 
\left(\begin{array}{cc} \ft_\ka& 0\\ 0 & -\ft_\ka\end{array}\right),
\qquad
\fB_\ka 
= 
\frac{a\ka}{\De^2}\al^0 + \frac{m}{\De} \ga^0\fR.
\eeq
 We note that $\fB_\ka$ is a symmetric {\em bounded} multiplication operator on $L^2_\ka$; in fact,
\beq
\|\fB_\ka \|_{L^\infty} \leq |{\ka}/{a}| +m,
\eeq
so that the task of showing e.s.a.-ness of $\fH^0_\ka$ reduces to showing e.s.a.-ness of $\fH'_\ka  := \fS_\ka+\De^{-1}\fT_\ka$.

Now $\fH'_\ka$ is block-diagonal:
\beq\label{fHprimeDecomp}
\fH'_\ka = \left(\begin{array}{cc}\fh'_\ka & 0 \\  0 & -\fh'_\ka \end{array}\right),\qquad \fh'_\ka := -i\si_3 \p_r + \De^{-1}\ft_\ka,\qquad
 \ft_\ka := -i \si_1 \p_\theta  + \si_2\ka \csc\theta.
\eeq
Thus it is enough to show $\fh'_\ka$ is e.s.a.
 We do so by showing that $\ker(\fh'_\ka \pm i) = \{0\}$: 
 Suppose $\hat\psi_\ka \in \left(L^2_\ka([0,\pi]\times\RR,d\theta dr)\right)^2$ satisfies
\beq\label{eq:ker}
\fh'_\ka \hat\psi_\ka = \pm i \hat\psi_\ka.
\eeq
 As observed in \cite{WINKLMEIERc}, it is possible to decompose \refeq{eq:ker} with respect to the eigenspaces of the operator
\beq
\fa_\ka:= W\ft_\ka W^{-1} = -i\si^{}_2 \p^{}_\theta + \ka \csc\theta \si^{}_1,
\eeq
where
\beq\label{def:W}
W := \left(\begin{array}{cc} 0 & 1\\ i & 0\end{array}\right).
\eeq
 The operator $\fa_\ka$ has pure point spectrum and a complete set of eigenfunctions.  
 More precisely, one has the following result \cite{WinklmeierPHD} (here quoted from \cite{WINKLMEIERc}):

\begin{thm} (Winklmeier, 2006) 
For all $\ka \in \Zset + \half$ the operator $\fa_\ka$ with domain $(C^\infty_c((0,\pi)))^2$ is essentially self-adjoint 
in $(L^2((0,\pi),d\theta))^2$.
  Its closure (denoted again by $\fa_\ka$) is compactly invertible and its spectrum consists of simple eigenvalues only, given by
\beq\label{def:lank}
\la_n^\ka := \mbox{sgn}(n)\left(|\ka|-{\textstyle\half} + |n|\right),\qquad n \in \Zset^* = \Zset\setminus\{0\},
\eeq
 with corresponding normalized eigenfunctions $\{g_n^\ka\}_{n\in\Zset^*}$ forming a complete orthonormal set in 
$(L^2((0,\pi),d\theta))^2$. 
 Moreover,
\beq
\la_{-n}^\ka = -\la_n^\ka,\qquad g_{-n}^\ka = -\si^{}_3 g_n^\ka.
\eeq
\end{thm}
We can therefore write
\beq\label{winkdec}
\hat\psi_\ka(r,\theta)= \sum_{n\in\Zset^*} \xi_n(r) g_n^\ka(\theta),
\eeq
with functions $\xi_n\in L^2(\RR,dr)$.
  Hence, performing a similarity transform on \refeq{eq:ker} with $W$ and projecting 
on the spans of $g_n^\ka$ and $g_{-n}^\ka$ we obtain the following system (see \cite{WINKLMEIERc} for details):
\beq
\left(\begin{array}{cc} \frac{\la_n^\ka}{\De} & i \p^{}_r \\
i\p^{}_r  & -\frac{\la_n^\ka}{\De}\end{array}\right)
\left(\begin{array}{c} \xi_n \\
\xi_{-n}\end{array}\right) = \pm i \left(\begin{array}{c} \xi_n \\
\xi_{-n}\end{array}\right)
\eeq
However the operator in the above eigenvalue problem $\cC_\ka = i\si^{}_1\p^{}_r + \frac{\la_n^\ka}{\De}\si^{}_3$ is clearly e.s.a., 
since $\frac{\la_n^\ka}{\De}$ is bounded, hence $\xi_n = \xi_{-n} = 0$.

  This completes the proof of essential self-adjointness of $\fH^0$.

  Consider now the term $\fQ= e\frac{|\rho|}{\De}\ga^0\ga^\mu\tilde{A}_\mu$ coming from the electromagnetic potential.
  It can be rewritten as  $\fQ = -eQ\fV_1 - e(Q-I\pi a)\fV_2$, where
\beq\label{eq:fQdecomp}
\fV_1 := \frac{r}{\De^2}\al^0,\qquad \fV_2 := \frac{ar\sin\theta}{\De|\rho|^2}\hat{M}.
\eeq
The first term, $\fV_1$, is clearly  bounded, whereas the second one, $\fV_2$, blows up on the ring.
However, since by hypothesis we restrict ourselves to the case $Q=I\pi a$, the $\fV_2$ term is absent from
$\fQ$, and   essential self-adjointness of $\fH$ follows easily from that of $\fH^0$ and the boundedness of $eQ\fV_1$. 

The proof of Theorem \ref{thm:esaGOTIC} is complete.
\end{proof}

 For the proof the remaining statements in this paper we rely on the fact that the Dirac equation of a point electron in z$G$KN separates into 
four (coupled) ordinary differential equations, each of which depends on only one of the four oblate spheroidal coordinates, with the coupling 
being effected through shared parameters in the equations.  
This is carried out in the next section before we resume with proving our claims.

%%%%%%%%%%%%%%%%%%%%%%%%%%%%%%%%%%%%%%%%%%%%%%%%%%%%%%
%%%%%%%%%%%%%%%%%%%%%%%%%%%%%%%%%%%%%%%%%%%%%%%%%%%%%%
%%%%%%%%%%%%%%%%%%%%%%%%%%%%%%%%%%%%%%%%%%%%%%%%%%%%%%
\section{Chandrasekhar--Page--Toop separation-of-variables ($Q=I\pi a$)}
%%%%%%%%%%%%%%%%%%%%%%%%%%%%%%%%%%%%%%%%%%%%%%%%%%%%%%
%%%%%%%%%%%%%%%%%%%%%%%%%%%%%%%%%%%%%%%%%%%%%%%%%%%%%%
%%%%%%%%%%%%%%%%%%%%%%%%%%%%%%%%%%%%%%%%%%%%%%%%%%%%%%

 When $Q=I\pi a$ the Dirac equation \refeq{eq:DIRACeqHAMformat} for the bispinor $\hat{\Psi}$ allows 
a clear separation also for the remaining $r$ and $\theta$ derivatives (commonly referred to in 
the literature as ``radial'' and ``angular'' derivatives, even though $r$ is 
not a radial distance and $\theta$ is not an angle, except at infinity).
 Thus, when $Q=I\pi a$ the Dirac equation \refeq{eq:DIRACeqHAMformat} becomes
\beq\label{eq:DirSep}
(\hat{R} +\hat{A}) \hat{\Psi} = 0,
\eeq
where 
\bna
\hat{R}& := & \left(\begin{array}{cccc} imr & 0 &D_-+ieQ\frac{r}{\varpi} & 0 \\
0 & imr & 0 & D_++ieQ\frac{r}{\varpi}\\
D_++ieQ\frac{r}{\varpi} & 0 & imr & 0 \\
0 & D_-+ieQ\frac{r}{\varpi} & 0 & imr \end{array} \right),\\
\hat{A} &:= & \left(\begin{array}{cccc} -m a \cos\theta & 0 & 0 & -L_- \\
0 & -ma\cos\theta & -L_+ &0 \\
0 & L_- & ma\cos\theta & 0 \\
L_+ & 0 & 0 & ma\cos\theta
\end{array}\right),
\ena
where $D_\pm$ and $L_\pm$ have been given in (\ref{eq:DpmLpm}).
 Once a solution $\hat{\Psi}$ to \refeq{eq:DirSep} is found, the bispinor $\Psi := \fD\hat{\Psi}$
solves the original Dirac equation \refeq{eq:DirEqA}.

%%%%%%%%%%%%%%%%%%%%%%%%%%%%%%%%%%%
\subsubsection{The Chandrasekhar Ansatz}
%%%%%%%%%%%%%%%%%%%%%%%%%%%%%%%%%%%
  Assume now that a solution $\hat{\Psi}$ of \refeq{eq:DirSep} is of the form
\beq\label{chandra-ansatz} 
\hat{\Psi} = e^{-i(Et-\kappa \varphi)} \left( \begin{array}{c}R_1S_1\\ R_2 S_2\\ R_2 S_1\\ R_1 S_2 \end{array}\right),
\eeq
with $R_k$ being complex-valued functions of $r$ alone, and $S_k$ real-valued functions of $\theta$ alone.  
Let
\beq
\vec{R} := \left(\begin{array}{c} R_1\\ R_2\end{array}\right),\qquad \vec{S} := \left(\begin{array}{c} S_1\\ S_2\end{array}\right).
\eeq
Plugging the Chandrasekhar Ansatz \eqref{chandra-ansatz} into \eqref{eq:DirSep} one easily finds that there must be 
$\la\in\Cset$ such that 
\beq\label{eq:rad} 
T_{rad}\vec{R} =  E\vec{R},
\eeq
\beq\label{eq:ang}
T_{ang}\vec{S} = \la \vec{S},
\eeq
where
\bna
T_{rad} & :=  \label{eq:Trad} 
& \left(\begin{array}{cc} d_- 
&-m\frac{r}{\De} - i\frac{\la}{\De} \\ -m\frac{r}{\De}+i\frac{\la}{\De} 
& -d_+ \end{array}\right)
\\
T_{ang}& := \label{eq:Srad} 
& \left(\begin{array}{cc}  -ma\cos\theta & -l_- \\
 l_+ &ma\cos\theta  \end{array}\right)
\ena
The operators $d_\pm$ and $l_\pm$ are now ordinary differential operators in $r$ and $\theta$ respectively, 
with coefficients that depend on the unknown $E$, and parameters $a$, $\kappa$, and $eQ$:
\bna\label{opdefs}
d_\pm & := & i \frac{d}{dr} \pm \frac{-a\kappa + eQ r}{\De^2}\\
l_\pm & := & \frac{d}{d\theta} \mp \left( aE\sin\theta - \kappa \csc\theta\right)
\ena
 The angular operator $T_{ang}$ in \refeq{eq:ang} is easily seen to be essentially self-adjoint on
 $(C^\infty_c((0,\pi),\sin\theta d\theta))^2$ and in fact is self-adjoint on its domain inside $(L^2((0,\pi),\sin\theta d\theta))^2$
(e.g. \cite{SufFacCos83,WINKLMEIERa}) with purely point spectrum $\la=\la_n(am,aE,\kappa)$, $n\in \Zset\setminus 0$.
  Thus in particular  $\la \in \RR$.
It then follows that the radial operator $T_{rad}$ is also essentially self-adjoint on $(C^\infty_c(\RR, dr))^2$ and in 
fact self-adjoint on its domain inside $(L^2(\RR,dr))^2.$

Suppose $\vec{R} = (R_1,R_2)^T \in (L^2(\RR))^2$ is a nontrivial solution to $T_{rad} \vec{R} = E\vec{R}$, with $E\in \RR$.
Then
\bea
\frac{dR_1}{dr} - i\left(E -  \frac{ a \kappa-eQr}{\De^2}\right)R_1 +\frac{1}{\De} (imr - \la) R_2 & = & 0\\
-\frac{d R_2}{dr} -i \left(E- \frac{ a \kappa-eQr}{\De^2}\right) R_2  +\frac{1}{\De} (imr + \la)R_1 & = & 0.
\eea
Multiply the first equation by $\bar{R}_1$ and the second equation by $\bar{R}_2$,  add them and take the real part, to obtain
\beq
\frac{d}{dr} \left(|R_1|^2 - |R_2|^2\right) = 0.
\eeq
Thus the difference of the moduli squared of $R_1$ and $R_2$ is constant, hence zero since they need to be integrable at infinity. 
I.e.,
\beq
|R_1| = |R_2| := R.
\eeq
Let $R_j = R e^{i\Phi_j}$ for $j=1,2$.
  Multiply the first equation by $\bar{R}_2$, multiply the complex conjugate of the second equation  by $R_1$, and add them to obtain
\beq
\frac{d}{dr} \left(\frac{R_1}{ \bar{R}_2}\right) = 0.
\eeq
Thus the ratio $R_1 / \bar{R}_2$, and hence the sum of the arguments $\Phi_1+ \Phi_2$ must be a constant, say $\de$.
 Thus $R_1 = \bar{R}_2e^{i\de}$.
Since multiplication by a constant phase factor is a gauge transformation for Dirac bispinors, we can replace $\hat{\Psi}$ 
with $\hat{\Psi}' = e^{-i\de/2}\hat{\Psi}$ without changing anything.
  The spinor thus obtained has the same form as \refeq{chandra-ansatz}, now with $R'_1 = \bar{R}'_2$.
  Thus without loss of generality we can assume $\de = 0$ and $R_1 = \bar{R}_2$.

This motivates us to set
\beq
R_1 =\frac{1}{\sqrt{2}}( v-iu),\qquad R_2  =\frac{1}{\sqrt{2}}( v + iu)
\eeq
 for real funcions $u$ and $v$.
   Consider the unitary matrix
\beq
U := \frac{1}{\sqrt{2}}\left(\begin{array}{cc} -i & 1 \\ \ i & 1 \end{array}\right).
\eeq
A change of basis using $U$ brings the radial system \refeq{eq:rad} into the following standard (Hamiltonian) form
\beq\label{eq:hamil}
(H_{rad} -E)\left(\begin{array}{c} u \\ v \end{array}\right) = \left(\begin{array}{c}0 \\ 0 \end{array}\right),
\eeq
where
\beq\label{eq:Hrad}
H_{rad} := \left(\begin{array}{cc} m \frac{r}{\De} + \frac{\ga r+a\kappa}{\De^2} & -\p^{}_r + \frac{\la}{\De} \\[20pt]
 \p^{}_r +\frac{\la}{\De} & -m\frac{r}{\De} + \frac{\ga r+a\kappa}{\De^2}  \end{array}\right),
\eeq
(cf. \cite{ThallerBOOK}, eq (7.105)) with
\beq
\ga := -eQ <0.
\eeq  
%%%%%%%%%%%%%%%%%%%%%%%%%%%%%%%%%%%%%%%%%%%%%%%%%%%%%%
%%%%%%%%%%%%%%%%%%%%%%%%%%%%%%%%%%%%%%%%%%%%%%%%%%%%%%
%%%%%%%%%%%%%%%%%%%%%%%%%%%%%%%%%%%%%%%%%%%%%%%%%%%%%%
\section{Proof of Theorem \ref{thm:essspec}  (Continuous spectrum of $\hat{H}$ on z$G$KN)}\label{sec:contspec}
%%%%%%%%%%%%%%%%%%%%%%%%%%%%%%%%%%%%%%%%%%%%%%%%%%%%%%
%%%%%%%%%%%%%%%%%%%%%%%%%%%%%%%%%%%%%%%%%%%%%%%%%%%%%%
%%%%%%%%%%%%%%%%%%%%%%%%%%%%%%%%%%%%%%%%%%%%%%%%%%%%%%

 Following Weidmann \cite{Wei82} we now prove the theorem about the continuous spectrum of $\hat{H}$.
  Recall the partial wave decomposition \eqref{pwd}. 
 Let $\hat{H}_\ka$ denote the restriction of $\hat{H}$ to $L^2_\ka$.  
 The Chandrasekhar separation \refeq{chandra-ansatz} and equation \refeq{eq:rad} yield that 
the spectrum of $\hat{H}_\ka$ coincides with that of $T_{rad}$, which coincides with that of $H_{rad}$ 
since these last two are unitarily equivalent.
  Furthermore, the spectrum of $\hat{H}$ equals the union of the spectra of $\hat{H}_\ka$. 
 Thus in order to prove the claim about the essential spectrum, it suffices to show that it 
holds for $H_{rad}$ regardless of the values of $\ka$ and $\lambda$.  

Since $H_{rad}$ is a radial Dirac operator, one can then use results that are particular to one dimension. 
 One such result is due to Weidmann \cite{Wei82}: 
\begin{thm*} Let $P$ and $J$ be matrices such that $H_{rad} = J\p_r + P$.
 Suppose $P$ can be written as $P_1+P_2$ in such a way that each component 
of $P_1$ is integrable in $[R,\infty)$ for some $R>0$, $P_2$ is of bounded variation on $[R,\infty)$ and 
$$
\lim_{r\to \infty} P(r) = \left(\begin{array}{cc} a & 0 \\ 0  & b \end{array} \right),\qquad a> b.
$$
Then {\em each} self-adjoint extension of $h$ has a purely absolutely continuous spectrum in $(-\infty,b]\cup[a,\infty)$.  
\end{thm*}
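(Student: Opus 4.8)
The plan is to prove this in three stages: (i) asymptotic integration of the associated first-order system near $r=\infty$; (ii) a limit-point classification showing that, for $E$ in the \emph{open} interval $(-\infty,b)\cup(a,\infty)$, all solutions are bounded and oscillatory; and (iii) a transfer of this information to absolute continuity of the spectrum, either via Gilbert--Pearson subordinacy theory or, equivalently, via boundary values of the Titchmarsh--Weyl $m$-function. For stage (i), fix $E$ in the interior of $(-\infty,b)\cup(a,\infty)$ and rewrite $H_{rad}\vec\psi=E\vec\psi$ as $\vec\psi\,'=J^{-1}(E\mathbf{1}-P(r))\vec\psi$. The constant limiting coefficient matrix $A_\infty(E):=J^{-1}(E\mathbf{1}-\diag(a,b))$ has eigenvalues $\pm i\omega(E)$ with $\omega(E):=\sqrt{(E-a)(E-b)}>0$, purely imaginary and distinct \emph{precisely because} $E$ lies outside $[b,a]$. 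Writing $P=\diag(a,b)+\bigl(P_2-\diag(a,b)\bigr)+P_1$, the perturbation of $A_\infty(E)$ splits into a term of bounded variation tending to $0$ and an $L^1$ term.

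A preliminary sequence of diagonalizing (``conditioning'') transformations moves the bounded-variation part onto the diagonal, where it only affects the phase, after which Levinson's asymptotic theorem applies to the remaining $L^1$ perturbation — its dichotomy hypothesis holds since the limiting eigenvalues are purely imaginary and separated — and yields a fundamental system
\beq
\vec\psi_\pm(r,E)=e^{\pm i\int_R^r\omega(E)\,ds}\bigl(\vec v_\pm(E)+o(1)\bigr),\qquad r\to\infty,
\eeq
where $\vec v_\pm(E)$ are the eigenvectors of $A_\infty(E)$ and the $o(1)$ is locally uniform in $E$. Crucially, none of the hypotheses degenerates if $E$ is given a small imaginary part, so these asymptotics persist, locally uniformly, up to and slightly across the real axis.

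For stage (ii), the displayed asymptotics show that neither $\vec\psi_+$ nor $\vec\psi_-$ nor any nontrivial combination is square-integrable near $\infty$, so $H_{rad}$ is in the limit-point case at $\infty$, $E$ is not an eigenvalue, and all solutions at such $E$ share the same amplitude order, hence none is subordinate at $\infty$. For stage (iii) one invokes the Gilbert--Pearson theorem: the set of energies admitting no subordinate solution is a minimal support of the absolutely continuous part of the spectral measure and carries no singular part, whence $(-\infty,b)\cup(a,\infty)\subseteq\sigma_{\mathrm{ac}}$ and the spectrum of every self-adjoint extension is purely absolutely continuous there; equivalently, feeding the uniform asymptotics into the Weyl-disk construction shows $m(z)$ extends continuously to the open interval with $0<\mathrm{Im}\,m(E)<\infty$, giving an a.c. density $\tfrac1\pi\mathrm{Im}\,m(E)$. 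Finally, at the two turning points $E=a,b$ (where $\omega=0$) a coarser estimate shows solutions grow at most polynomially, so these points are not eigenvalues either, upgrading the conclusion to the closed set $(-\infty,b]\cup[a,\infty)$.

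I expect stage (i) to be the main obstacle: combining an $L^1$ perturbation with a bounded-variation perturbation — the conditioning transformations and the verification of Levinson's dichotomy — together with the requirement that everything be uniform in the spectral parameter down to and just across the real axis, is the only part demanding genuine work. Once that is in hand, stages (ii) and (iii) are direct appeals to standard one-dimensional spectral theory.
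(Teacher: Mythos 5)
First, a structural point: the paper does not prove this statement. It is quoted verbatim as an unnumbered theorem from Weidmann \cite{Wei82} and then simply applied to $H_{rad}$, so there is no in-paper proof to compare yours against. Measured against Weidmann's actual argument, your outline follows essentially the same strategy: remove the bounded-variation part of the coefficient matrix by a diagonalizing transformation (legitimate because for real $E$ outside $[b,a]$ the $r$-dependent eigenvalues of $J^{-1}(E\mathbf{1}-P_2(r))$ form a $\pm$ pair that stays purely imaginary and bounded away from zero for large $r$, since $\det(E\mathbf{1}-P_2(r))\to(E-a)(E-b)>0$), apply Levinson's theorem to the residual $L^1$ perturbation, conclude that every nontrivial solution is bounded and bounded away from zero near $+\infty$, and transfer this to pure absolute continuity. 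Weidmann predates Gilbert--Pearson and carries out the last step through his own ``all solutions bounded, locally uniformly in $E$, implies purely a.c.'' theorem via estimates on the Weyl--Titchmarsh function; your subordinacy packaging is an equivalent modern substitute, and it does correctly cover \emph{each} self-adjoint extension, because the criteria invoked (no $L^2$ solution at $+\infty$, no subordinate solution at $+\infty$) are statements about solutions of the ODE and are insensitive to the boundary condition imposed elsewhere.

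The one step that does not hold up as written is the inclusion of the endpoints $E=a,b$. There $A_\infty(E)$ is nilpotent, the Levinson dichotomy degenerates, and --- more importantly --- ``solutions grow at most polynomially'' does not rule out an eigenvalue: a solution decaying like $r^{-1}$ is square-integrable while still obeying any polynomial upper bound. Since a two-point set can carry neither a.c. nor singular-continuous mass, passing from the open to the closed set is exactly the assertion that $a$ and $b$ are not eigenvalues, and that requires a \emph{lower} bound on all nontrivial solutions at the degenerate eigenvalue (a Levinson-type analysis adapted to the nilpotent block), not the upper bound you gesture at. For the use the paper makes of the theorem this is harmless --- the continuous spectrum is closed, so the open-interval version already yields $\Rset\setminus(-m,m)\subseteq\overline{\sigma_{\mathrm{ac}}}$ --- but as a proof of the quoted statement the endpoint step is a genuine gap.
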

Using this result, our claim follows by noticing that the hypotheses on $P$ are satisfied, and 
$$
\lim_{r\to \infty} P(r) = \left(\begin{array}{cc} m & 0 \\ 0 & -m \end{array}\right).
$$
Proof of Theorem \ref{thm:essspec} is  complete.

%%%%%%%%%%%%%%%%%%%%%%%%%%%%%%%%%%%%%%%%%%%%%%%%%%%%%%%
%%%%%%%%%%%%%%%%%%%%%%%%%%%%%%%%%%%%%%%%%%%%%%%%%%%%%%
%%%%%%%%%%%%%%%%%%%%%%%%%%%%%%%%%%%%%%%%%%%%%%%%%%%%%%
\section{Proof of Theorem \ref{thm:ptspec} (Point spectrum of $\hat{H}$ on z$G$KN)}
%%%%%%%%%%%%%%%%%%%%%%%%%%%%%%%%%%%%%%%%%%%%%%%%%%%%%%
%%%%%%%%%%%%%%%%%%%%%%%%%%%%%%%%%%%%%%%%%%%%%%%%%%%%%%
%%%%%%%%%%%%%%%%%%%%%%%%%%%%%%%%%%%%%%%%%%%%%%%%%%%%%%

By the remarks at the beginning of Section~\ref{sec:contspec}, we are interested in the eigenvalues $E$ and square-integrable eigenfunctions in 
$L^2(\Rset,dr)^2$ of the operator $H_\rad$.
  One complication is that in our case the radial Hamiltonian $H_{rad}$ depends on the unknown eigenvalues $\la$ 
of the angular operator $T_{ang}$ in \refeq{eq:ang}, which in turn depend on the energy $E$.
 Since the angular operator is the same as the one on Kerr and Kerr-Newman spacetime studied in \cite{SufFacCos83,  WINKLMEIERa},
and since it is known that for a given value of $E$ there is a largest negative eigenvalue $\la = \La(E)$, our strategy is to show the 
existence, for a given value of $\lambda<0$, of a smallest positive eigenvalue $E = \cE(\la)$ for $H_{rad}$, and then set up an 
iteration that 
converges to a  pair $(E,\la)$ for which the radial \refeq{eq:rad} and the angular \refeq{eq:ang} equations jointly have 
$L^2$ solutions, thereby establishing the existence of a ``positive-energy eigenstate'' for the full Dirac Hamiltonian; note that
by the symmetry of the spectrum there also exists a ``negative-energy eigenstate.''

%%%%%%%%%%%%%%%%%%%%%%%%%%%%%%%%%%%%%%%%%%%%%%%%%%%%%%
%%%%%%%%%%%%%%%%%%%%%%%%%%%%%%%%%%%%%%%%%%%%%%%%%%%%%%
\subsection{The Pr\"ufer transform}
%%%%%%%%%%%%%%%%%%%%%%%%%%%%%%%%%%%%%%%%%%%%%%%%%%%%%%
%%%%%%%%%%%%%%%%%%%%%%%%%%%%%%%%%%%%%%%%%%%%%%%%%%%%%%
Consider the equations \refeq{eq:hamil} and \refeq{eq:ang} for unknowns $(u,v)$ and $(S_1,S_2)$.
  Let us define new unknowns $(R,\Om)$ and $(S,\Theta)$ via the Pr\"ufer transform \cite{Pru26}
\beq\label{eq:prufer}
u =\sqrt{2} R \cos\frac{\Om}{2},\quad v = \sqrt{2} R \sin\frac{\Om}{2},\quad S_1 = S \cos\frac{\Theta}{2},\quad S_2 = S \sin\frac{\Theta}{2}.
\eeq
Thus
\beq
R =\half\sqrt{u^2+v^2},\quad\Om = 2\tan^{-1}\frac{v}{u},\quad S = \sqrt{S_1^2+S_2^2},\quad \Theta = 2\tan^{-1}\frac{S_2}{S_1}.
\eeq
As a result, $R_1 = -iRe^{i\Om/2}$ and $R_2 = iRe^{-i\Om/2}$.  Hence $\hat{\Psi}$ can be re-expressed in terms of the Pr\"ufer variables,
thus
\beq\label{ontology}
\hat{\Psi}(t,r,\theta,\varphi) = R(r)S(\theta)e^{-i(Et-\ka \varphi)} \left(\begin{array}{l}
 -i\cos(\Theta(\theta)/2)e^{+i\Om(r)/2}\\
+i\sin(\Theta(\theta)/2) e^{-i\Om(r)/2}\\
+i\cos(\Theta(\theta)/2)e^{-i\Om(r)/2}\\
-i\sin(\Theta(\theta)/2)e^{+i\Om(r)/2}\end{array}\right),
\eeq
and we obtain the following equations for the new unknowns
\bna
\frac{d}{dr}\Om    &=& 2 \frac{mr}{\De} \cos\Om + 2\frac{\la}{\De} \sin\Om +2\frac{a\kappa + \gamma r}{\De^2} - 2E ,\label{eq:Om}\\
\frac{d}{dr} \ln R &=& \frac{mr}{\De}\sin\Om - \frac{\la}{\De} \cos\Om .\label{eq:R}
\ena
Similarly,
\bna
\frac{d}{d\theta}\Theta &=& -2ma\cos\theta\cos\Theta + 2\left(aE \sin\theta - \frac{\kappa}{\sin\theta}\right)\sin\Theta + 2\la,\label{eq:Theta}\\
\frac{d}{d\theta} \ln S &=& -ma \cos\theta\sin\Theta - \left(aE\sin\theta - \frac{\kappa}{\sin\theta}\right)\cos\Theta. \label{eq:S}
\ena
To simplify the analysis of these systems and reduce the number of parameters involved, we will henceforth set $m=1$.
  Note that this is always possible by defining the constants $a'=ma$, $E'=E/m$, and a change of variable $r'=mr$.

\begin{rem}
\textit{Equations \eqref{eq:Om}--\eqref{eq:S} have certain symmetries that are connected with the action of operators $\hat{C}$ and $\tilde{C}$ introduced in Section~\ref{sec:proofofsymmetry}:  It is easy to see that these four equations are preserved under each of the following two transformations:
\beq\label{trans1}
r \to -r,\quad \theta \to \pi - \theta,\quad \la \to -\la, \quad \ka \to -\ka, \quad E \to -E,
\eeq
and 
\beq\label{trans2}
\Om \to \pi - \Om,\quad \Theta \to \pi -\Theta, \quad \la \to -\la,  \quad \ka \to -\ka, \quad E \to -E, \quad \ga \to -\ga.
\eeq
The map \refeq{trans1} corresponds to the action of $\hat{C}$ and the map \refeq{trans2} to the action of $\tilde{C}$.  Note that the latter does not preserve the  Hamiltonian, since the sign of $\ga = -eQ$ is changed.  }
\end{rem}
%%%%%%%%%%%%%%%%%%%%%%%%%%%%%%%%%%%%%%%%%%%%%%%%%%%%%%
%%%%%%%%%%%%%%%%%%%%%%%%%%%%%%%%%%%%%%%%%%%%%%%%%%%%%%
\subsection{The realm of $L^2$ solutions}
%%%%%%%%%%%%%%%%%%%%%%%%%%%%%%%%%%%%%%%%%%%%%%%%%%%%%%
%%%%%%%%%%%%%%%%%%%%%%%%%%%%%%%%%%%%%%%%%%%%%%%%%%%%%%
We note that in both of the above systems (\ref{eq:Om},\ref{eq:R}) and (\ref{eq:Theta},\ref{eq:S}), when a solution to the first equation is known, the second equation 
in the system can be solved by quadrature.
  Moreover, the requirement that $R$ and $S$ be $L^2$ functions of their argument determines what boundary values 
the solutions to the $\Omega$ and $\Theta$ equations should have.
  More precisely,
\begin{prop}\label{prop:bndryvals} 
Any bispinor $\hat{\Psi}$ of the form \refeq{ontology} constructed from solutions of (\ref{eq:Om}), (\ref{eq:R}), (\ref{eq:Theta}), (\ref{eq:S}),
 with $|\kappa|\geq \half$, $E>0$ and $\la<0$, belongs to the Hilbert space $\sH$ provided
 \beq\label{asympOm}
\lim_{r\to -\infty} \Omega(r) = -\pi + \cos^{-1}(E),\qquad \lim_{r\to \infty} \Omega(r) = - \cos^{-1}(E),
\eeq
and
\beq\label{asympTh} \Theta(0) = 0,\qquad \Theta(\pi) =-\pi.
\eeq
\end{prop}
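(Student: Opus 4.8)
The plan is to determine the $L^2$ membership conditions by a boundary-layer analysis of the $\Omega$ and $\Theta$ equations near the singular points, using the quadrature formulas \refeq{eq:R} and \refeq{eq:S} to translate the asymptotics of $\Omega$ and $\Theta$ into growth/decay rates for $R$ and $S$. Since the $\hat{M}$-weighted norm on $\sH$ reduces, after the Chandrasekhar Ansatz, to $\int_{-\infty}^\infty R^2\,dr$ times $\int_0^\pi S^2\sin\theta\,d\theta$ (up to the bounded, positive factors coming from $\hat M$, whose eigenvalues $\lambda_\pm = 1\pm a\sin\theta/\varpi$ are bounded above and below away from $0$ except at the ring), it suffices to analyze these two one-dimensional integrals separately.

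First I would treat the radial integrability at $r\to\pm\infty$. As $r\to\pm\infty$ one has $\varpi = \sqrt{r^2+a^2}\sim|r|$, so the $\Omega$-equation \refeq{eq:Om} becomes, to leading order, $\Omega' \approx 2\cos\Omega - 2E$ (recall $m=1$ now), whose constant solutions solve $\cos\Omega = E$. With $0<E<1$ there are two such values mod $2\pi$; one is a stable rest point of the flow and one is unstable. Linearizing around $\Omega_\infty$ with $\cos\Omega_\infty = E$ gives $\delta\Omega' \approx -2\sin\Omega_\infty\,\delta\Omega$, so convergence to $\Omega_\infty$ is exponential precisely when $\sin\Omega_\infty<0$ as $r\to+\infty$ (so $\Omega_\infty = -\cos^{-1}E$) and when $\sin\Omega_\infty>0$ as $r\to-\infty$ — but one must also check the sign of $r$ in the flow direction, which flips the stability, giving $\Omega_{-\infty} = -\pi+\cos^{-1}E$. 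Along such a trajectory, \refeq{eq:R} gives $\frac{d}{dr}\ln R \approx \frac{r}{|r|}\sin\Omega_\infty \mp \frac{\lambda}{|r|}\cos\Omega_\infty \to \mp\sin|\cos^{-1}E| < 0$ as $|r|\to\infty$ in the appropriate half-line, forcing exponential decay of $R$, hence $R\in L^2(\mathbb R,dr)$; any other limiting value of $\Omega$ would force $R$ to blow up (or fail to decay), destroying integrability. This pins down \refeq{asympOm}. I would also need a short remark that the $O(1/|r|)$ corrections from $\lambda/\varpi$ and from $(a\kappa+\gamma r)/\varpi^2 \sim \gamma/r$ do not affect the exponential-decay conclusion (they contribute only an integrable logarithmic perturbation to $\ln R$, i.e.\ a power of $|r|$, which is beaten by the exponential).

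Next I would treat the angular endpoints $\theta\to 0^+$ and $\theta\to\pi^-$. Near $\theta=0$ the dominant term in \refeq{eq:Theta} is $-2\kappa\csc\theta\,\sin\Theta$, so writing $\theta$ small, $\Theta' \approx -\frac{2\kappa}{\theta}\sin\Theta$; the rest points are $\Theta\equiv 0,\pm\pi$ mod $2\pi$, and with $|\kappa|\ge\frac12$ the solution is driven to one of them with an algebraic rate $|\sin\Theta|\sim\theta^{2|\kappa|}$ or $\theta^{-2|\kappa|}$ depending on stability. Feeding this into \refeq{eq:S}, the dominant term is $+\kappa\csc\theta\cos\Theta$, so $\frac{d}{d\theta}\ln S \approx \frac{\kappa}{\theta}\cos\Theta \to \pm\frac{\kappa}{\theta}$, giving $S\sim\theta^{\pm\kappa}$; for $S^2\sin\theta\sim\theta^{\pm2\kappa+1}$ to be integrable at $0$ we need the exponent $>-1$, i.e.\ $S\sim\theta^{|\kappa|}$ (with $\cos\Theta(0)=+1$ if $\kappa>0$, i.e.\ $\Theta(0)=0$, and correspondingly $\Theta(0)=0$ also works out when $\kappa<0$ after checking signs) — this is where the condition $|\kappa|\ge\frac12$ enters, guaranteeing $\pm2\kappa+1$ is on the right side of $-1$ only for the correct rest point. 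An identical analysis at $\theta=\pi$, using $\csc\theta\sim(\pi-\theta)^{-1}$ and $\cos\theta\to -1$, yields $\Theta(\pi)=-\pi$. So \refeq{asympTh} follows.

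The main obstacle I expect is \emph{bookkeeping of signs}: there are four places where a sign choice matters (direction of the $r$- or $\theta$-flow, sign of $\kappa$, sign of $\lambda$, and which of the two mod-$2\pi$ rest points of the $\Omega$- or $\Theta$-equation is selected), and getting the \emph{specific} values $-\pi+\cos^{-1}E$, $-\cos^{-1}E$, $0$, $-\pi$ (rather than these plus a multiple of $2\pi$, or their sign-flips) requires carefully tracking which rest point is the attractor as one integrates inward from the singular endpoint. The cleanest way to organize this is to linearize at each rest point, read off the sign of the linearization coefficient (which includes the sign of $r$ resp.\ $\csc\theta$ and of $\kappa$), and then consult \refeq{eq:R} resp.\ \refeq{eq:S} to see which attractor makes $R$ resp.\ $S$ decay fast enough; the hypotheses $E>0$, $\lambda<0$, $|\kappa|\ge\frac12$ are exactly what make the selection unambiguous. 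A secondary, minor point is justifying that convergence of $\Omega$ and $\Theta$ to these limits actually occurs (not just that the limits are the only integrable candidates) — but since the relevant rest points are hyperbolic attractors of the respective scalar ODEs in the inward direction, standard ODE stable-manifold reasoning suffices, and one need only note that every solution of the $\Omega$-equation is defined for all $r\in\mathbb R$ (no blow-up, as the right-hand side is bounded in $\Omega$) and similarly the $\Theta$-equation on $(0,\pi)$.
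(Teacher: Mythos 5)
Your approach is essentially the paper's: reduce membership in $\sH$ to $R\in L^2(\Rset,dr)$ and $S\in L^2$ on $(0,\pi)$, then read off the asymptotics of $\ln R$ and $\ln S$ from the quadrature equations \refeq{eq:R}, \refeq{eq:S} once the limiting values of $\Om$ and $\Theta$ at the singular endpoints are fixed. The paper merely dresses the endpoint analysis in the language of a compactified flow on a finite cylinder (equilibria $S_\pm$, $N_\pm$), because that machinery is reused for the existence argument later; your direct boundary-layer computation of $\frac{d}{d\theta}\ln S\sim \ka/\theta$ (hence $S\sim\theta^{\ka}$) and $\frac{d}{dr}\ln R\to \mp\sqrt{1-E^2}$ (hence $R\sim e^{-|r|\sqrt{1-E^2}}$) is exactly what the paper does along the saddles connector. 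Since the limits \refeq{asympOm}--\refeq{asympTh} are \emph{hypotheses} of the proposition, this sufficiency argument goes through.

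Three concrete corrections, though. First, your stability bookkeeping is inverted: the prescribed limits $-\cos^{-1}E$, $-\pi+\cos^{-1}E$ (and $\Theta=0,-\pi$) are the \emph{saddle} equilibria of the limiting flows, not attractors. At $r\to+\infty$ the linearization $\delta\Om'\approx-2\sin\Om_\infty\,\delta\Om$ is contracting when $\sin\Om_\infty>0$, i.e.\ at $+\cos^{-1}E$, which is precisely the node that makes $R$ blow up; the integrable limit $-\cos^{-1}E$ is the repelling one. Consequently your closing claim that convergence to these limits is automatic by ``hyperbolic attractor'' reasoning is false — a generic solution converges to the nodes and fails to be $L^2$, and establishing that a solution tending to the saddles actually exists is the entire content of the paper's subsequent corridor/winding-number argument (Theorem \ref{thereAREsaddlesaddleCONNECTORS}). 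This does not damage the proposition, but the selection of the correct rest point must be made (as you in fact do) by asking which limit makes $R$, $S$ decay, not by stability. Second, the norm on $\sH$ is taken with respect to $d\theta\,d\varphi\,dr$ (see \refeq{def:innerPROD}), so the relevant condition is $S\in L^2((0,\pi),d\theta)$, \emph{without} the weight $\sin\theta$; with your weighted measure the discrimination between $\theta^{|\ka|}$ and $\theta^{-|\ka|}$ fails exactly at the borderline $|\ka|=\half$. This is harmless for the sufficiency direction (with $\Theta(0)=0$ one gets $S\sim\theta^{|\ka|}$, bounded, hence in $L^2$ for either measure), but it is the unweighted measure that makes $|\ka|\ge\half$ the sharp threshold excluding the nodes. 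Third, a sign slip: as $r\to-\infty$ one has $\frac{d}{dr}\ln R\to+\sqrt{1-E^2}>0$ (which is still decay of $R$ in that direction), not a negative limit at both ends.
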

\begin{proof}
It is straightforward to compute that for a $\hat{\Psi}$ of the form \refeq{ontology},
\bea
\|\hat{\Psi}\|_{\hat{M}}^2 
&=& 
2\int_0^{2\pi}\int_0^\pi\int_{-\infty}^\infty 
R^2(r)S^2(\theta)\left( 1+ \frac{a\sin\theta}{\De} \sin\Theta(\theta) \sin\Omega(r)\right) dr d\theta d\varphi \\
&=& 
4\pi\left[ \int_{-\infty}^\infty R^2 dr \int_0^\pi 
S^2 d\theta + a \int_{-\infty}^\infty R^2 \sin\Om \frac{dr}{\De} \int_0^\pi S^2 \sin\Theta \sin\theta d\theta\right] \\
&\leq& 
8 \pi \|R\|^2_{L^2} \|S\|_{L^2}^2,
\eea
and thus $\hat\Psi \in \sH$ provided $R \in L^2(\RR,dr)$ and $S \in L^2((0,\pi),d\theta)$.  

Now \refeq{eq:Theta} can be written as a smooth dynamical system in the $(\theta,\Theta)$ plane by introducing a new independent 
variable $\tau$ such that $\frac{d\theta}{d\tau} = \sin\theta$.
Then, with dot representing differentiation in $\tau$,  we have, 
\beq\label{dynsysTh}
\left\{\begin{array}{rcl}
\dot{\theta} & = & \sin\theta\\
       \dot{\Theta} & = & -2a\sin\theta\cos\theta\cos\Theta+2aE\sin^2\theta\sin\Theta - 2\ka\sin\Theta + 2\la\sin\theta
       \end{array}\right.
\eeq
Identifying the line $\Theta=\pi$ with $\Theta=-\pi$, this becomes a dynamical system on a closed finite cylinder 
$\cC_1=[0,\pi]\times\Sset^1$. 
 The only equilibrium points of the flow are on the two circular boundaries: 
Two on the left boundary: $S_- = (0,0)$, $N_- = (0,\pi)$; two on the right: $S_+ = (\pi,-\pi)$ and $N_+ = (\pi,0)$.  

For $\ka>0$, the linearization of the flow at the equilibrium points reveals that $S_-$ and $S_+$ are hyperbolic saddle points (with 
eigenvalues $\{1,-2\ka\}$ and $\{-1,2\ka\}$ respectively), while $N_-$ is a  source node (with  eigenvalues 1 and $2\ka$) 
and $N_+$ is a sink node (with eigenvalues $-1$ and $-2\ka$).  Note that the situation with $\ka<0$ is entirely analogous, with the 
critical points switching their roles.  For the remainder of this section therefore, we will assume $\ka>0$.  

  The $\al$-limit set of the orbit of any point in the interior of the cylinder must necessarily be either $S_-$ or $N_-$,
 and likewise its $\om$-limit set can only be either $N_+$ or $S_+$. 
 The only possible boundary values for $\Theta(\theta)$ are therefore $0$ and $\pm\pi$ at each endpoint of the 
interval $0\leq \theta\leq \pi$.  

The boundary values (\ref{asympTh}) correspond to a heteroclinic orbit connecting the two saddles $S_-$ and $S_+$.
  Suppose such a saddles connection exists.
  Since the eigendirection corresponding to the unstable manifold of $S_-$ is $v_1 = (\ka+\half,\la-a)^T$ and the 
stable manifold of $S_+$ has the same eigendirection, it follows that for the said saddles connection, we have
\beq
\frac{d\Theta }{d\theta}_{|_{\theta=0}} = \frac{d\Theta}{d\theta}_{|_{\theta = \pi}} = \frac{\la -a}{\ka+\half} =:\de <0.
\eeq
Thus $\Theta = \de\theta + o(1)$ as $\theta\to 0$ and $\Theta = -\pi+\de(\theta-\pi) +o(1)$ as $\theta \to \pi$. 
Consider now the $S$ equation \refeq{eq:S}.  
By the above, 
\beq
\frac{d}{d\theta}\ln S =\left\{\begin{array}{ll}  \frac{\ka}{\theta} +o(1) & \mbox{ as }\theta \to 0\\
\frac{\ka}{\theta-\pi} + o(1) & \mbox{ as }\theta \to \pi
\end{array}\right.
\eeq
Integrating in $\theta$ we thereby conclude that $S \sim |\theta|^\ka$ for $\theta$ small and $S \sim |\pi-\theta|^\ka$ 
for $\theta$ near $\pi$.  
Therefore $S$ is integrable on $(0,\pi)$ and indeed it belongs to $L^p((0,\pi),d\theta)$ for any $p\geq 1$.  

 In an analogous manner one shows that $S\not\in L^2$ when $\Theta$ takes boundary values corresponding to either of 
the nodes $N_-$ or $N_+$.

  Consider next the $\Om$ equation \refeq{eq:Om}.
  It can also be rewritten as a smooth dynamical system on a cylinder, in this case by setting 
$\tau := \frac{r}{a}$ as new independent variable, as well as introducing a new dependent variable
\beq
\xi := \tan^{-1}\frac{r}{a} = \tan^{-1}\tau 
\eeq
Then, with dot again representing differentiation in $\tau$, \refeq{eq:Om} is equivalent to
\beq\label{dynsysOm}
\left\{\begin{array}{rcl}
\dot{\xi} & = & \cos^2\xi \\
\dot{\Om} & = & 2a\sin\xi\cos\Om+2\la\cos\xi\sin\Om + 2\ga\sin\xi\cos\xi+2\ka\cos^2\xi - 2aE
\end{array}\right.
\eeq
Once again, identifying $\Om=-\pi$ with $\Om  = \pi$ turns this into a smooth flow on the closed finite cylinder 
$\cC_2 := [-\frac{\pi}{2},\frac{\pi}{2}]\times \Sset^1$.  The only equilibrium points of the flow are on the two circular boundaries. 
For $E\in[0,1)$ there are two equilibria on each: $S_- = (-\frac{\pi}{2},-\pi+\cos^{-1}E)$ and $N_-=(-\frac{\pi}{2},\pi - \cos^{-1}E)$ 
on the left boundary, and $S_+ = (\frac{\pi}{2}, -\cos^{-1}E)$ and $N_+ = (\frac{\pi}{2},\cos^{-1}E)$ on the right boundary.
 $S_\mp$ are non-hyperbolic (degenerate) saddle-nodes, with eigenvalues $0$ and $\pm 2a\sqrt{1-E^2}$, while $N_-$ is a degenerate 
source-node and $N_+$ a degenerate sink-node.\footnote{For $E=1$ each $S,N$ pair coalesces into one  degenerate equilibrium: 
$N^1_- = (-\frac{\pi}{2},\pm\pi)$ and $N^1_+ = (\frac{\pi}{2},0)$ with both eigenvalues being zero.}

 The boundary values \refeq{asympOm} correspond to a heteroclinic orbit connecting $S_-$ and $S_+$. 
 Suppose such a saddles connection exists, and consider the $R$ equation \refeq{eq:R}.
   As $r \to \pm\infty$, we will then have
\beq
\frac{d}{dr}\ln R \sim - \sgn(r)\sqrt{1-E^2}
\eeq
so that integrating in $r$ we will obtain
\beq
R(r) \sim e^{-|r|\sqrt{1-E^2}}\qquad \mbox{ as } r\to \pm\infty
\eeq
which ensures that $R$ is integrable at infinity.
  Since the right-hand-side of the $R$ equation is smooth in $\Om$ and $r$,
 and $\Om$ itself is smooth, it follows that $R \in L^p(\RR,dr)$ for all $p\geq 1$.

 In an analogous manner one shows that $R\not\in L^2$ when $\Omega$ takes boundary values corresponding to either of 
the nodes $N_-$ or $N_+$.
\end{proof}

%%%%%%%%%%%%%%%%%%%%%%%%%%%%%%%%%%%%%%%%%%%%%%%%%%%%%%
%%%%%%%%%%%%%%%%%%%%%%%%%%%%%%%%%%%%%%%%%%%%%%%%%%%%%%
\subsection{Existence of heteroclinic orbits connecting the two saddles}
%%%%%%%%%%%%%%%%%%%%%%%%%%%%%%%%%%%%%%%%%%%%%%%%%%%%%%
%%%%%%%%%%%%%%%%%%%%%%%%%%%%%%%%%%%%%%%%%%%%%%%%%%%%%%
{F}rom the proof of Proposition~\ref{prop:bndryvals} it is evident that in order to establish the existence of an eigenfunction 
for the Dirac Hamiltonian of a point electron in the z$G$KN spacetime, we need to show that there exists a pair $(E,\la)$ such 
that both dynamical systems \refeq{dynsysTh} and \refeq{dynsysOm} have a {\em saddle-saddle connecting orbit} for those values 
of $E$ and $\la$.  
We call this type of  orbit a {\em saddles connector} for the corresponding flow.  
We pave the road for our proof by recalling some general facts of flow on a cylinder.

%%%%%%%%%%%%%%%%%%%%%%%%%%%%%%%%%%%%%%%%%%%%%%%%%%%%%%
\subsubsection{Flow on a finite cylinder}
%%%%%%%%%%%%%%%%%%%%%%%%%%%%%%%%%%%%%%%%%%%%%%%%%%%%%%
Let $\cC := [x_-^{},x_+^{}]\times \Sset^1$ be a finite cylinder. 
We denote its universal cover by $\bar{\cC} := [x_-^{},x_+^{}]\times\RR$, with coordinates $(x,y)$, and fix a fundamental 
domain $\widetilde{\cC} := [x_-^{},x_+^{}]\times[-\pi,\pi)$ in $\bar{\cC}$.
 Consider the flow $\Phi_t$ on $\bar{\cC}$ given by the dynamical system
\beq\label{eq:flow}
\left\{\begin{array}{rcl}
          \dot{x} & = & f(x)\\ \dot{y} & = & g(x,y)
         \end{array}\right.
\eeq
where the dot  represents differentiation with respect to a formal ``time'' parameter $\tau$, 
the functions $f$ and $g$ are smooth, and $g$ is $2\pi$-periodic in $y$: $g(x,y) = g(x,y+2\pi)$.
  Let us moreover assume that $f$ satisfies
\beq f(x_-^{}) = f(x_+^{}) = 0,\qquad f(x)>0\ \forall\ x\in(x_-^{},x_+^{})
\eeq
while $g$ satisfies
\beq
g(x_-^{},y) = 0 \implies y\in\{n_-^{}, s_-^{}\},\qquad g(x_+^{},y) = 0 \implies y\in\{n_+^{},s_+^{}\}
\eeq
where $-\pi\leq s_-^{} < n_-^{} \leq\pi$ and $-\pi\leq s_+^{} < n_+^{} \leq\pi$.
  These assumptions imply that the following four distinct points in $\cC$ are equilibrium points for the flow:
\beq
N_\pm := (x_\pm,n_\pm),\qquad S_\pm := (x_\pm, s_\pm).
\eeq
We shall further assume that the flow does not have any non-wandering points other than the above four equilibria.  

The following assumptions fix the character of the four equilibrium points:
\beq
f'(x_-^{})\geq 0,\quad f'(x_+^{})\leq 0,\quad f''(x_\pm) \ne 0
\eeq
(where by $f'(x_\pm)$ we mean the left derivate at $x_+^{}$ and the right derivative at $x_-^{}$),
and 
\beq\label{cond:hyp}
D_yg(x_-^{},n_-^{})>0,\quad D_yg(x_-^{},s_-^{})<0,\quad D_yg(x_+^{},n_+^{})<0,\quad D_yg(x_+^{},s_+^{})>0,
\eeq
where $D_yg$ is the $y$-derivative of $g(x,y)$.
Thus $N_-$ is a (source) node, $N_+$ a (sink) node, and $S_\pm$ are saddle points.
 These will be hyperbolic if $f'(x_\pm) \ne 0$, and non-hyperbolic (degenerate) otherwise.  

Later on, in order to have a well-defined notion of index for certain distinguished orbits on $\cC$, 
we will also assume that the locations of the equilibria on the boundary of the cylinder are not arbitrary,
 but are subject to the single condition
\beq\label{assump}
 s_-^{} - n_-^{} = n_+^{} - s_+^{} \qquad(\mbox{mod}\ 2\pi)
\eeq
(This is a condition on $g(x,y)$.  
Although we will not pursue this approach here, under this condition \refeq{eq:flow} can be viewed as a flow 
with two equilibrium points on a {\em 2-torus}.)

For a point $p\in\cC$, let $\cO(p)$ denote the flow orbit through $p$.
  Since $\cC$ is compact, all orbits are 
complete, meaning they exist for all $s\in \RR$, and since the flow is autonomous, two orbits are either disjoint 
or they coincide.
  The orbit of an equilibrium point consists of only one point, namely the equilibrium itself.
  The $\omega$-limit of any other orbit in $\cC$ can be either $N_+$ or $S_+$, and the $\alpha$-limit likewise can
only be either $N_-$ or $S_-$.
  All these facts are easy consequences of the existence and uniqueness theorem for ODEs.  

%%%%%%%%%%%%%%%%%%%%%%%%%%%%%%%%%%%%%%%%%%%%%%%%%%%%%%
\subsubsection{Connecting orbits and corridors}
%%%%%%%%%%%%%%%%%%%%%%%%%%%%%%%%%%%%%%%%%%%%%%%%%%%%%%
Given a flow on $\cC$ as in the above, there are two distinguished orbits in the interior of the cylinder:
  Let ${\cW}^-$ denote the unique orbit of the flow whose $\al$-limit point is the saddle $S_-$, and let ${\cW}^+$ denote 
the unique orbit whose $\om$-limit point is the saddle $S_+$.
  In the hyperbolic case ($f'(x_\pm)\ne 0$) the uniqueness is immediate because ${\cW}^-$ is  the unstable manifold of 
$S_-$ and ${\cW}^+$ is the stable manifold of $S_+$.
  In the non-hyperbolic case $f'(x_\pm)= 0$ the orbits ${\cW}^\pm$ are {\em center} manifolds for the corresponding 
saddle-nodes $S_\pm$.  Recall that center manifolds may be non-unique, but in our case the uniqueness is assured 
because the equilibrium points are on the boundary of the domain, so the relevant part of the center manifolds are 
on the ``saddle side'' of the equilibrium, and not on the ``node side'' (see Figure~\ref{fig:sn}).
\begin{figure}[ht]
\begin{center}
\includegraphics[scale=0.3]{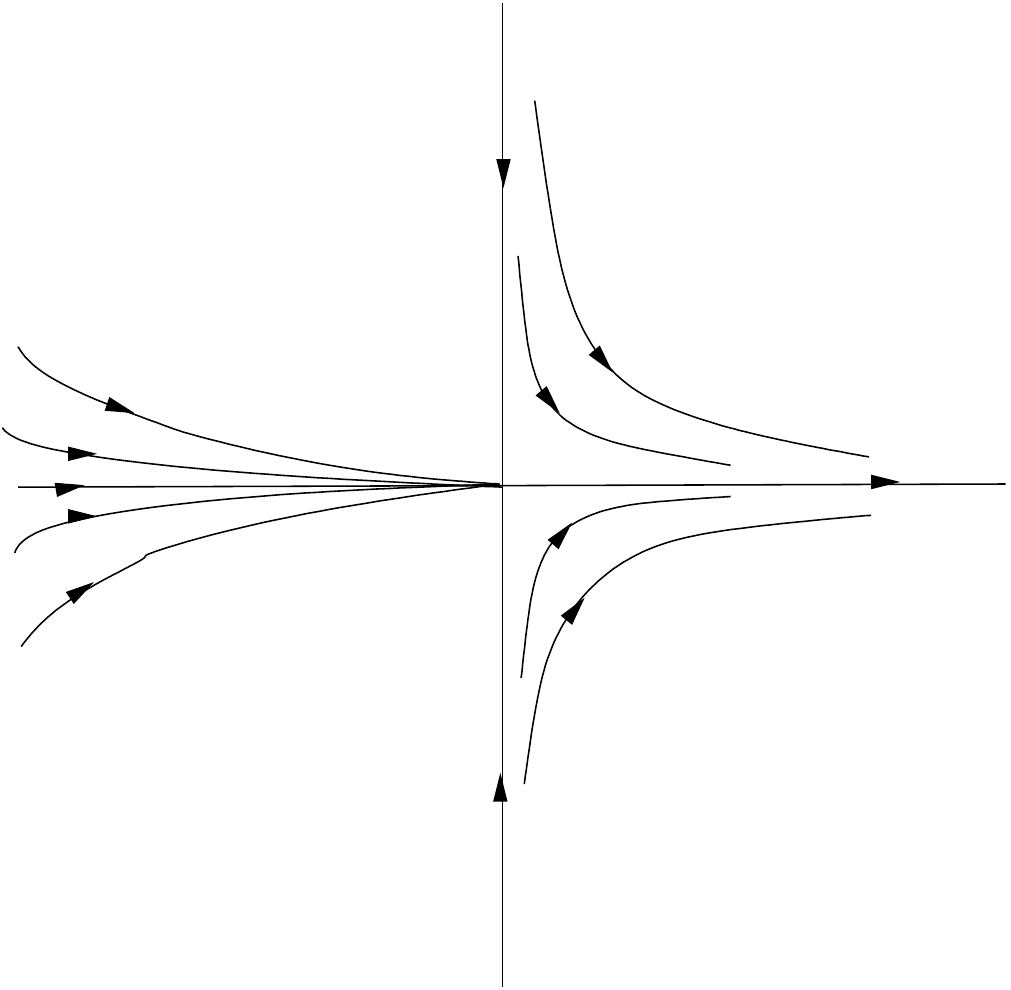}
\end{center}
\caption{\label{fig:sn} Flow near a saddle-node. The node part lies outside of
the domain of concern.}
\end{figure}
If ${\cW}^+$ and ${\cW}^-$ intersect, they must coincide, and the resulting orbit will connect the two saddle points, 
i.e. it will be the saddles connector we are after.
  Let us therefore assume that they are disjoint.
  The $\om$-limit of ${\cW}^-$ must then necessarily be $N_+$, and the $\al$-limit of ${\cW}^+$ must be $N_-$.  

On the other hand the assumptions we have made about the flow imply that there are also two orbits of the flow 
on the left boundary of the cylinder connecting $N_-$ with $S_-$, call them $(N_-S_-)_\pm$, with $+$ denoting 
the counterclockwise one (when viewed from a point on the cylinder's axis and to the left of the cylinder), and similarly two joining $S_+$ with $N_+$, called $(S_+N_+)_\pm$.
  Consider therefore the following collection of six heteroclinic orbits 
\beq
\cH := \{ (N_-S_-)_\pm, {\cW}^\pm, (S_+N_+)_\pm\}.
\eeq
The cylinder $\cC$ is divided into two invariant regions $\cK_1$ and $\cK_2$, called {\em corridors}, by these 
orbits:  $\cC = \cK_1 \cup \cH \cup \cK_2$.
  We would like to distinguish one of these two corridors.
  We do so as follows: Consider the lifting of the flow to the universal cover $\bar{\cC}$.
  Let $\widetilde{S}_-$ denote the unique copy of the node $S_-$ that lies in the fundamental 
domain $\widetilde{\cC}$, and let $\widetilde{\cW}^-$ be the unique orbit in $\bar{\cC}$ whose $\al$-limit point is $\widetilde{S}_-$.
  The $\om$-limit point of this orbit is thus some copy of the node $N_+$, call it$\bar{N}_+$, which has coordinates 
$(x_+^{},n_+^{}-2\pi k_+)$ for some $k_+\in \Zset$.
  Similarly, let $\widetilde{S}_+$ denote the unique point in the preimage of $S_+$ under the covering map that lies in the 
fundamental domain $\widetilde{\cC}$ and let $\widetilde{\cW}^+$ denote the unique orbit whose $\om$-limit point is this $\widetilde{S}_+$.
  Let $\bar{N}_- = (x_-^{},n_-^{} +2\pi k_-)$, $k_-\in \Zset$ be the $\al$-limit point of $\widetilde{\cW}^+$.
  By definition the corridor $\cK_1$ is the open domain in $\bar{\cC}$ whose boundary contains the two orbits $\widetilde{\cW}^-$ and $\widetilde{\cW}^+$.

We note that in $\bar{\cC}$ only one of the two corridors will have {\em both} of these orbits on its boundary, 
so this is the distinguishing feature of $\cK_1$.  

We orient the boundary of $\cK_1$ (which is a closed simple curve) in such a way that the orientation induced on $\widetilde{\cW}^-$ coincides with the direction of the flow on that orbit.
\begin{figure}[ht]
 \begin{center}
\font\thinlinefont=cmr5
\begingroup\makeatletter\ifx\SetFigFont\undefined%
\gdef\SetFigFont#1#2#3#4#5{%
  \reset@font\fontsize{#1}{#2pt}%
  \fontfamily{#3}\fontseries{#4}\fontshape{#5}%
  \selectfont}%
\fi\endgroup%
\mbox{\beginpicture
\setcoordinatesystem units <0.50000cm,0.50000cm>
\unitlength=0.50000cm
\linethickness=1pt
\setplotsymbol ({\makebox(0,0)[l]{\tencirc\symbol{'160}}})
\setshadesymbol ({\thinlinefont .})
\setlinear
\linethickness= 0.500pt
\setplotsymbol ({\thinlinefont .})
{\color[rgb]{0,0,0}\ellipticalarc axes ratio  0.953:2.540  360 degrees 
	from  6.032 16.510 center at  5.080 16.510
}%
\linethickness= 0.500pt
\setplotsymbol ({\thinlinefont .})
{\color[rgb]{0,0,0}\putrule from  5.080 13.970 to 12.700 13.970
}%
\linethickness= 0.500pt
\setplotsymbol ({\thinlinefont .})
{\color[rgb]{0,0,0}\putrule from  5.080 19.050 to 12.700 19.050
}%
\linethickness= 0.500pt
\setplotsymbol ({\thinlinefont .})
{\color[rgb]{0,0,0}\plot  5.397 14.287  5.715 13.970 /
}%
\linethickness= 0.500pt
\setplotsymbol ({\thinlinefont .})
{\color[rgb]{0,0,0}\plot  5.715 14.605  6.350 13.970 /
}%
\linethickness= 0.500pt
\setplotsymbol ({\thinlinefont .})
{\color[rgb]{0,0,0}\plot  7.010 13.966  5.804 15.045 /
}%
\linethickness= 0.500pt
\setplotsymbol ({\thinlinefont .})
{\color[rgb]{0,0,0}\plot  5.994 15.553  7.391 14.199 /
}%
\linethickness= 0.500pt
\setplotsymbol ({\thinlinefont .})
{\color[rgb]{0,0,0}\plot  6.058 16.104  7.942 14.389 /
}%
\linethickness= 0.500pt
\setplotsymbol ({\thinlinefont .})
{\color[rgb]{0,0,0}\plot  6.037 16.739  8.175 14.685 /
}%
\linethickness= 0.500pt
\setplotsymbol ({\thinlinefont .})
{\color[rgb]{0,0,0}\plot  7.645 15.828  8.492 15.024 /
}%
\linethickness= 0.500pt
\setplotsymbol ({\thinlinefont .})
{\color[rgb]{0,0,0}\plot  8.407 15.680  8.894 15.257 /
}%
\linethickness= 0.500pt
\setplotsymbol ({\thinlinefont .})
{\color[rgb]{0,0,0}\plot  8.894 15.828  9.318 15.405 /
}%
\linethickness= 0.500pt
\setplotsymbol ({\thinlinefont .})
{\color[rgb]{0,0,0}\plot  9.275 16.082  9.974 15.447 /
}%
\linethickness= 0.500pt
\setplotsymbol ({\thinlinefont .})
{\color[rgb]{0,0,0}\plot  9.529 16.463 12.133 13.923 /
}%
\linethickness= 0.500pt
\setplotsymbol ({\thinlinefont .})
{\color[rgb]{0,0,0}\plot 10.524 14.855 11.540 13.902 /
}%
\linethickness= 0.500pt
\setplotsymbol ({\thinlinefont .})
{\color[rgb]{0,0,0}\plot  9.847 16.739 12.725 13.987 /
}%
\linethickness= 0.500pt
\setplotsymbol ({\thinlinefont .})
{\color[rgb]{0,0,0}\plot 10.291 17.035 12.979 14.326 /
}%
\linethickness= 0.500pt
\setplotsymbol ({\thinlinefont .})
{\color[rgb]{0,0,0}\plot 10.736 17.204 13.233 14.707 /
}%
\linethickness= 0.500pt
\setplotsymbol ({\thinlinefont .})
{\color[rgb]{0,0,0}\plot 11.223 17.289 13.382 15.151 /
}%
\linethickness= 0.500pt
\setplotsymbol ({\thinlinefont .})
{\color[rgb]{0,0,0}\plot 12.048 17.120 13.318 15.828 /
}%
\linethickness= 0.500pt
\setplotsymbol ({\thinlinefont .})
{\color[rgb]{0,0,0}\plot  4.386 17.691  4.216 17.479 /
\putrule from  4.216 17.479 to  4.153 17.479
}%
\linethickness= 0.500pt
\setplotsymbol ({\thinlinefont .})
{\color[rgb]{0,0,0}\plot  4.534 17.141  4.153 16.760 /
}%
\linethickness= 0.500pt
\setplotsymbol ({\thinlinefont .})
{\color[rgb]{0,0,0}\plot  4.682 16.675  4.132 16.125 /
}%
\linethickness= 0.500pt
\setplotsymbol ({\thinlinefont .})
{\color[rgb]{0,0,0}\plot  4.873 16.188  4.153 15.617 /
}%
\linethickness= 0.500pt
\setplotsymbol ({\thinlinefont .})
{\color[rgb]{0,0,0}\plot  5.021 15.807  4.301 15.088 /
}%
\linethickness= 0.500pt
\setplotsymbol ({\thinlinefont .})
{\color[rgb]{0,0,0}\plot  5.254 15.405  4.492 14.601 /
}%
\linethickness= 0.500pt
\setplotsymbol ({\thinlinefont .})
{\color[rgb]{0,0,0}\plot  5.508 14.939  4.640 14.156 /
}%
\linethickness= 0.500pt
\setplotsymbol ({\thinlinefont .})
{\color[rgb]{0,0,0}\plot  5.719 14.601  5.063 13.966 /
}%
\linethickness= 0.500pt
\setplotsymbol ({\thinlinefont .})
{\color[rgb]{0,0,0}\plot  6.037 14.326  5.698 13.966 /
}%
\linethickness= 0.500pt
\setplotsymbol ({\thinlinefont .})
{\color[rgb]{0,0,0}\plot  6.460 14.114  6.354 13.966 /
}%
\linethickness= 0.500pt
\setplotsymbol ({\thinlinefont .})
{\color[rgb]{0,0,0}\plot 12.112 14.601 11.434 13.966 /
}%
\linethickness= 0.500pt
\setplotsymbol ({\thinlinefont .})
{\color[rgb]{0,0,0}\plot 12.344 14.220 12.112 13.987 /
}%
\linethickness= 0.500pt
\setplotsymbol ({\thinlinefont .})
{\color[rgb]{0,0,0}\plot 11.921 15.066 11.223 14.410 /
}%
\linethickness= 0.500pt
\setplotsymbol ({\thinlinefont .})
{\color[rgb]{0,0,0}\plot 11.836 15.596 11.434 15.257 /
}%
\linethickness= 0.500pt
\setplotsymbol ({\thinlinefont .})
{\color[rgb]{0,0,0}\plot 11.773 16.209 11.498 15.892 /
}%
\linethickness= 0.500pt
\setplotsymbol ({\thinlinefont .})
{\color[rgb]{0,0,0}\plot 11.773 16.781 11.604 16.590 /
}%
\linethickness= 0.500pt
\setplotsymbol ({\thinlinefont .})
{\color[rgb]{0,0,0}\plot 11.985 17.797 11.667 17.416 /
}%
\linethickness= 0.500pt
\setplotsymbol ({\thinlinefont .})
{\color[rgb]{0,0,0}\plot 12.090 18.411 11.709 17.987 /
}%
\linethickness= 0.500pt
\setplotsymbol ({\thinlinefont .})
{\color[rgb]{0,0,0}\plot 12.090 19.025 11.836 18.792 /
}%
\linethickness= 0.500pt
\setplotsymbol ({\thinlinefont .})
{\color[rgb]{0,0,0}\plot 12.493 19.046 13.022 18.601 /
}%
\linethickness= 0.500pt
\setplotsymbol ({\thinlinefont .})
{\color[rgb]{0,0,0}\plot 12.958 18.347 13.360 17.966 /
}%
\linethickness=1pt
\setplotsymbol ({\makebox(0,0)[l]{\tencirc\symbol{'160}}})
{\color[rgb]{0,0,0}\plot  5.715 18.415  5.717 18.411 /
\plot  5.717 18.411  5.721 18.400 /
\plot  5.721 18.400  5.730 18.379 /
\plot  5.730 18.379  5.743 18.349 /
\plot  5.743 18.349  5.762 18.309 /
\plot  5.762 18.309  5.783 18.260 /
\plot  5.783 18.260  5.808 18.201 /
\plot  5.808 18.201  5.836 18.138 /
\plot  5.836 18.138  5.865 18.070 /
\plot  5.865 18.070  5.897 18.000 /
\plot  5.897 18.000  5.927 17.930 /
\plot  5.927 17.930  5.956 17.863 /
\plot  5.956 17.863  5.986 17.799 /
\plot  5.986 17.799  6.013 17.738 /
\plot  6.013 17.738  6.039 17.678 /
\plot  6.039 17.678  6.064 17.623 /
\plot  6.064 17.623  6.088 17.573 /
\plot  6.088 17.573  6.111 17.524 /
\plot  6.111 17.524  6.134 17.477 /
\plot  6.134 17.477  6.155 17.433 /
\plot  6.155 17.433  6.176 17.388 /
\plot  6.176 17.388  6.198 17.346 /
\plot  6.198 17.346  6.219 17.304 /
\plot  6.219 17.304  6.240 17.261 /
\plot  6.240 17.261  6.261 17.219 /
\plot  6.261 17.219  6.284 17.175 /
\plot  6.284 17.175  6.308 17.130 /
\plot  6.308 17.130  6.331 17.086 /
\plot  6.331 17.086  6.356 17.041 /
\plot  6.356 17.041  6.382 16.995 /
\plot  6.382 16.995  6.409 16.948 /
\plot  6.409 16.948  6.439 16.902 /
\plot  6.439 16.902  6.469 16.853 /
\plot  6.469 16.853  6.498 16.806 /
\plot  6.498 16.806  6.530 16.760 /
\plot  6.530 16.760  6.564 16.713 /
\plot  6.564 16.713  6.598 16.667 /
\plot  6.598 16.667  6.632 16.620 /
\plot  6.632 16.620  6.665 16.578 /
\plot  6.665 16.578  6.699 16.533 /
\plot  6.699 16.533  6.735 16.493 /
\plot  6.735 16.493  6.771 16.453 /
\plot  6.771 16.453  6.807 16.413 /
\plot  6.807 16.413  6.845 16.377 /
\plot  6.845 16.377  6.881 16.341 /
\plot  6.881 16.341  6.919 16.305 /
\plot  6.919 16.305  6.960 16.271 /
\plot  6.960 16.271  6.998 16.239 /
\plot  6.998 16.239  7.040 16.207 /
\plot  7.040 16.207  7.082 16.173 /
\plot  7.082 16.173  7.129 16.142 /
\plot  7.129 16.142  7.176 16.112 /
\plot  7.176 16.112  7.224 16.080 /
\plot  7.224 16.080  7.275 16.051 /
\plot  7.275 16.051  7.326 16.019 /
\plot  7.326 16.019  7.379 15.989 /
\plot  7.379 15.989  7.434 15.962 /
\plot  7.434 15.962  7.489 15.934 /
\plot  7.489 15.934  7.544 15.909 /
\plot  7.544 15.909  7.601 15.883 /
\plot  7.601 15.883  7.656 15.860 /
\plot  7.656 15.860  7.711 15.837 /
\plot  7.711 15.837  7.766 15.818 /
\plot  7.766 15.818  7.819 15.799 /
\plot  7.819 15.799  7.870 15.782 /
\plot  7.870 15.782  7.921 15.767 /
\plot  7.921 15.767  7.969 15.752 /
\plot  7.969 15.752  8.016 15.742 /
\plot  8.016 15.742  8.062 15.731 /
\plot  8.062 15.731  8.107 15.723 /
\plot  8.107 15.723  8.149 15.716 /
\plot  8.149 15.716  8.200 15.710 /
\plot  8.200 15.710  8.249 15.706 /
\plot  8.249 15.706  8.297 15.704 /
\putrule from  8.297 15.704 to  8.348 15.704
\plot  8.348 15.704  8.397 15.708 /
\plot  8.397 15.708  8.446 15.712 /
\plot  8.446 15.712  8.494 15.718 /
\plot  8.494 15.718  8.543 15.729 /
\plot  8.543 15.729  8.592 15.742 /
\plot  8.592 15.742  8.640 15.756 /
\plot  8.640 15.756  8.689 15.773 /
\plot  8.689 15.773  8.735 15.792 /
\plot  8.735 15.792  8.782 15.814 /
\plot  8.782 15.814  8.827 15.837 /
\plot  8.827 15.837  8.871 15.862 /
\plot  8.871 15.862  8.913 15.888 /
\plot  8.913 15.888  8.954 15.915 /
\plot  8.954 15.915  8.996 15.945 /
\plot  8.996 15.945  9.036 15.974 /
\plot  9.036 15.974  9.076 16.006 /
\plot  9.076 16.006  9.108 16.036 /
\plot  9.108 16.036  9.142 16.066 /
\plot  9.142 16.066  9.178 16.095 /
\plot  9.178 16.095  9.214 16.127 /
\plot  9.214 16.127  9.250 16.161 /
\plot  9.250 16.161  9.288 16.197 /
\plot  9.288 16.197  9.326 16.235 /
\plot  9.326 16.235  9.366 16.273 /
\plot  9.366 16.273  9.409 16.311 /
\plot  9.409 16.311  9.451 16.351 /
\plot  9.451 16.351  9.493 16.394 /
\plot  9.493 16.394  9.536 16.434 /
\plot  9.536 16.434  9.580 16.476 /
\plot  9.580 16.476  9.624 16.516 /
\plot  9.624 16.516  9.669 16.559 /
\plot  9.669 16.559  9.711 16.599 /
\plot  9.711 16.599  9.756 16.639 /
\plot  9.756 16.639  9.800 16.677 /
\plot  9.800 16.677  9.842 16.713 /
\plot  9.842 16.713  9.885 16.749 /
\plot  9.885 16.749  9.927 16.785 /
\plot  9.927 16.785  9.970 16.817 /
\plot  9.970 16.817 10.012 16.849 /
\plot 10.012 16.849 10.054 16.880 /
\plot 10.054 16.880 10.097 16.910 /
\plot 10.097 16.910 10.141 16.940 /
\plot 10.141 16.940 10.185 16.967 /
\plot 10.185 16.967 10.230 16.995 /
\plot 10.230 16.995 10.279 17.022 /
\plot 10.279 17.022 10.327 17.050 /
\plot 10.327 17.050 10.376 17.075 /
\plot 10.376 17.075 10.427 17.101 /
\plot 10.427 17.101 10.480 17.124 /
\plot 10.480 17.124 10.533 17.147 /
\plot 10.533 17.147 10.585 17.170 /
\plot 10.585 17.170 10.640 17.192 /
\plot 10.640 17.192 10.693 17.211 /
\plot 10.693 17.211 10.748 17.230 /
\plot 10.748 17.230 10.801 17.247 /
\plot 10.801 17.247 10.854 17.261 /
\plot 10.854 17.261 10.907 17.276 /
\plot 10.907 17.276 10.958 17.287 /
\plot 10.958 17.287 11.009 17.300 /
\plot 11.009 17.300 11.057 17.308 /
\plot 11.057 17.308 11.106 17.314 /
\plot 11.106 17.314 11.153 17.321 /
\plot 11.153 17.321 11.199 17.327 /
\plot 11.199 17.327 11.246 17.329 /
\plot 11.246 17.329 11.295 17.333 /
\putrule from 11.295 17.333 to 11.343 17.333
\putrule from 11.343 17.333 to 11.394 17.333
\plot 11.394 17.333 11.445 17.331 /
\plot 11.445 17.331 11.496 17.329 /
\plot 11.496 17.329 11.549 17.323 /
\plot 11.549 17.323 11.601 17.316 /
\plot 11.601 17.316 11.654 17.308 /
\plot 11.654 17.308 11.709 17.300 /
\plot 11.709 17.300 11.762 17.287 /
\plot 11.762 17.287 11.815 17.274 /
\plot 11.815 17.274 11.868 17.259 /
\plot 11.868 17.259 11.921 17.244 /
\plot 11.921 17.244 11.972 17.228 /
\plot 11.972 17.228 12.023 17.211 /
\plot 12.023 17.211 12.071 17.192 /
\plot 12.071 17.192 12.118 17.173 /
\plot 12.118 17.173 12.162 17.153 /
\plot 12.162 17.153 12.207 17.132 /
\plot 12.207 17.132 12.249 17.111 /
\plot 12.249 17.111 12.289 17.088 /
\plot 12.289 17.088 12.330 17.065 /
\plot 12.330 17.065 12.370 17.041 /
\plot 12.370 17.041 12.408 17.018 /
\plot 12.408 17.018 12.448 16.990 /
\plot 12.448 16.990 12.486 16.965 /
\plot 12.486 16.965 12.526 16.935 /
\plot 12.526 16.935 12.565 16.906 /
\plot 12.565 16.906 12.605 16.874 /
\plot 12.605 16.874 12.645 16.842 /
\plot 12.645 16.842 12.683 16.808 /
\plot 12.683 16.808 12.723 16.772 /
\plot 12.723 16.772 12.761 16.736 /
\plot 12.761 16.736 12.797 16.701 /
\plot 12.797 16.701 12.835 16.665 /
\plot 12.835 16.665 12.869 16.629 /
\plot 12.869 16.629 12.903 16.593 /
\plot 12.903 16.593 12.935 16.557 /
\plot 12.935 16.557 12.967 16.521 /
\plot 12.967 16.521 12.994 16.487 /
\plot 12.994 16.487 13.022 16.451 /
\plot 13.022 16.451 13.049 16.417 /
\plot 13.049 16.417 13.073 16.385 /
\plot 13.073 16.385 13.098 16.351 /
\plot 13.098 16.351 13.121 16.315 /
\plot 13.121 16.315 13.147 16.277 /
\plot 13.147 16.277 13.170 16.241 /
\plot 13.170 16.241 13.193 16.201 /
\plot 13.193 16.201 13.216 16.159 /
\plot 13.216 16.159 13.240 16.114 /
\plot 13.240 16.114 13.263 16.068 /
\plot 13.263 16.068 13.288 16.017 /
\plot 13.288 16.017 13.316 15.962 /
\plot 13.316 15.962 13.341 15.903 /
\plot 13.341 15.903 13.369 15.843 /
\plot 13.369 15.843 13.394 15.784 /
\plot 13.394 15.784 13.420 15.729 /
\plot 13.420 15.729 13.443 15.676 /
\plot 13.443 15.676 13.462 15.634 /
\plot 13.462 15.634 13.477 15.600 /
\plot 13.477 15.600 13.485 15.577 /
\plot 13.485 15.577 13.492 15.564 /
\plot 13.492 15.564 13.494 15.558 /
}%
\linethickness=1pt
\setplotsymbol ({\makebox(0,0)[l]{\tencirc\symbol{'160}}})
{\color[rgb]{0,0,0}\plot  6.985 13.970  6.991 13.972 /
\plot  6.991 13.972  7.004 13.978 /
\plot  7.004 13.978  7.027 13.987 /
\plot  7.027 13.987  7.059 14.002 /
\plot  7.059 14.002  7.104 14.021 /
\plot  7.104 14.021  7.154 14.044 /
\plot  7.154 14.044  7.209 14.069 /
\plot  7.209 14.069  7.269 14.095 /
\plot  7.269 14.095  7.326 14.122 /
\plot  7.326 14.122  7.383 14.148 /
\plot  7.383 14.148  7.436 14.175 /
\plot  7.436 14.175  7.487 14.201 /
\plot  7.487 14.201  7.531 14.224 /
\plot  7.531 14.224  7.573 14.247 /
\plot  7.573 14.247  7.614 14.271 /
\plot  7.614 14.271  7.650 14.294 /
\plot  7.650 14.294  7.686 14.317 /
\plot  7.686 14.317  7.719 14.343 /
\plot  7.719 14.343  7.753 14.366 /
\plot  7.753 14.366  7.785 14.393 /
\plot  7.785 14.393  7.817 14.421 /
\plot  7.817 14.421  7.851 14.448 /
\plot  7.851 14.448  7.885 14.478 /
\plot  7.885 14.478  7.916 14.510 /
\plot  7.916 14.510  7.950 14.544 /
\plot  7.950 14.544  7.986 14.575 /
\plot  7.986 14.575  8.020 14.611 /
\plot  8.020 14.611  8.054 14.645 /
\plot  8.054 14.645  8.090 14.681 /
\plot  8.090 14.681  8.124 14.715 /
\plot  8.124 14.715  8.156 14.749 /
\plot  8.156 14.749  8.189 14.783 /
\plot  8.189 14.783  8.221 14.815 /
\plot  8.221 14.815  8.251 14.846 /
\plot  8.251 14.846  8.280 14.876 /
\plot  8.280 14.876  8.308 14.903 /
\plot  8.308 14.903  8.335 14.929 /
\plot  8.335 14.929  8.361 14.952 /
\plot  8.361 14.952  8.388 14.975 /
\plot  8.388 14.975  8.418 15.001 /
\plot  8.418 15.001  8.450 15.026 /
\plot  8.450 15.026  8.484 15.050 /
\plot  8.484 15.050  8.515 15.073 /
\plot  8.515 15.073  8.551 15.094 /
\plot  8.551 15.094  8.585 15.115 /
\plot  8.585 15.115  8.621 15.134 /
\plot  8.621 15.134  8.659 15.153 /
\plot  8.659 15.153  8.697 15.172 /
\plot  8.697 15.172  8.735 15.189 /
\plot  8.735 15.189  8.774 15.208 /
\plot  8.774 15.208  8.812 15.225 /
\plot  8.812 15.225  8.852 15.242 /
\plot  8.852 15.242  8.890 15.259 /
\plot  8.890 15.259  8.928 15.276 /
\plot  8.928 15.276  8.970 15.293 /
\plot  8.970 15.293  9.002 15.308 /
\plot  9.002 15.308  9.038 15.323 /
\plot  9.038 15.323  9.074 15.339 /
\plot  9.074 15.339  9.112 15.356 /
\plot  9.112 15.356  9.150 15.373 /
\plot  9.150 15.373  9.193 15.392 /
\plot  9.193 15.392  9.235 15.409 /
\plot  9.235 15.409  9.279 15.426 /
\plot  9.279 15.426  9.324 15.443 /
\plot  9.324 15.443  9.370 15.460 /
\plot  9.370 15.460  9.415 15.475 /
\plot  9.415 15.475  9.462 15.490 /
\plot  9.462 15.490  9.506 15.500 /
\plot  9.506 15.500  9.550 15.511 /
\plot  9.550 15.511  9.593 15.519 /
\plot  9.593 15.519  9.635 15.528 /
\plot  9.635 15.528  9.675 15.532 /
\plot  9.675 15.532  9.716 15.534 /
\putrule from  9.716 15.534 to  9.754 15.534
\plot  9.754 15.534  9.790 15.530 /
\plot  9.790 15.530  9.826 15.526 /
\plot  9.826 15.526  9.862 15.519 /
\plot  9.862 15.519  9.898 15.509 /
\plot  9.898 15.509  9.934 15.496 /
\plot  9.934 15.496  9.967 15.481 /
\plot  9.967 15.481 10.003 15.464 /
\plot 10.003 15.464 10.039 15.443 /
\plot 10.039 15.443 10.073 15.420 /
\plot 10.073 15.420 10.107 15.397 /
\plot 10.107 15.397 10.141 15.369 /
\plot 10.141 15.369 10.175 15.342 /
\plot 10.175 15.342 10.204 15.312 /
\plot 10.204 15.312 10.236 15.280 /
\plot 10.236 15.280 10.264 15.248 /
\plot 10.264 15.248 10.289 15.217 /
\plot 10.289 15.217 10.315 15.185 /
\plot 10.315 15.185 10.338 15.153 /
\plot 10.338 15.153 10.359 15.119 /
\plot 10.359 15.119 10.380 15.088 /
\plot 10.380 15.088 10.399 15.054 /
\plot 10.399 15.054 10.418 15.018 /
\plot 10.418 15.018 10.437 14.980 /
\plot 10.437 14.980 10.454 14.942 /
\plot 10.454 14.942 10.471 14.901 /
\plot 10.471 14.901 10.488 14.861 /
\plot 10.488 14.861 10.505 14.819 /
\plot 10.505 14.819 10.520 14.776 /
\plot 10.520 14.776 10.535 14.734 /
\plot 10.535 14.734 10.549 14.692 /
\plot 10.549 14.692 10.564 14.649 /
\plot 10.564 14.649 10.577 14.609 /
\plot 10.577 14.609 10.590 14.569 /
\plot 10.590 14.569 10.602 14.531 /
\plot 10.602 14.531 10.615 14.493 /
\plot 10.615 14.493 10.628 14.459 /
\plot 10.628 14.459 10.638 14.427 /
\plot 10.638 14.427 10.651 14.395 /
\plot 10.651 14.395 10.664 14.366 /
\plot 10.664 14.366 10.679 14.332 /
\plot 10.679 14.332 10.696 14.298 /
\plot 10.696 14.298 10.715 14.264 /
\plot 10.715 14.264 10.736 14.230 /
\plot 10.736 14.230 10.761 14.196 /
\plot 10.761 14.196 10.789 14.161 /
\plot 10.789 14.161 10.820 14.122 /
\plot 10.820 14.122 10.852 14.084 /
\plot 10.852 14.084 10.884 14.046 /
\plot 10.884 14.046 10.914 14.014 /
\plot 10.914 14.014 10.935 13.991 /
\plot 10.935 13.991 10.947 13.976 /
\plot 10.947 13.976 10.954 13.970 /
}%
\linethickness=1pt
\setplotsymbol ({\makebox(0,0)[l]{\tencirc\symbol{'160}}})
{\color[rgb]{0,0,0}\plot  4.286 18.098  4.288 18.093 /
\plot  4.288 18.093  4.290 18.083 /
\plot  4.290 18.083  4.295 18.066 /
\plot  4.295 18.066  4.301 18.036 /
\plot  4.301 18.036  4.312 17.998 /
\plot  4.312 17.998  4.326 17.947 /
\plot  4.326 17.947  4.341 17.886 /
\plot  4.341 17.886  4.360 17.814 /
\plot  4.360 17.814  4.381 17.733 /
\plot  4.381 17.733  4.405 17.649 /
\plot  4.405 17.649  4.430 17.558 /
\plot  4.430 17.558  4.456 17.465 /
\plot  4.456 17.465  4.481 17.369 /
\plot  4.481 17.369  4.508 17.276 /
\plot  4.508 17.276  4.534 17.185 /
\plot  4.534 17.185  4.559 17.096 /
\plot  4.559 17.096  4.583 17.012 /
\plot  4.583 17.012  4.606 16.931 /
\plot  4.606 16.931  4.629 16.855 /
\plot  4.629 16.855  4.652 16.781 /
\plot  4.652 16.781  4.674 16.711 /
\plot  4.674 16.711  4.695 16.645 /
\plot  4.695 16.645  4.716 16.582 /
\plot  4.716 16.582  4.737 16.523 /
\plot  4.737 16.523  4.758 16.466 /
\plot  4.758 16.466  4.777 16.408 /
\plot  4.777 16.408  4.798 16.353 /
\plot  4.798 16.353  4.820 16.298 /
\plot  4.820 16.298  4.843 16.245 /
\plot  4.843 16.245  4.866 16.188 /
\plot  4.866 16.188  4.889 16.131 /
\plot  4.889 16.131  4.915 16.074 /
\plot  4.915 16.074  4.942 16.017 /
\plot  4.942 16.017  4.970 15.958 /
\plot  4.970 15.958  5.000 15.896 /
\plot  5.000 15.896  5.031 15.833 /
\plot  5.031 15.833  5.063 15.767 /
\plot  5.063 15.767  5.099 15.699 /
\plot  5.099 15.699  5.137 15.627 /
\plot  5.137 15.627  5.177 15.551 /
\plot  5.177 15.551  5.222 15.473 /
\plot  5.222 15.473  5.266 15.390 /
\plot  5.266 15.390  5.313 15.306 /
\plot  5.313 15.306  5.362 15.219 /
\plot  5.362 15.219  5.410 15.132 /
\plot  5.410 15.132  5.459 15.047 /
\plot  5.459 15.047  5.508 14.965 /
\plot  5.508 14.965  5.550 14.889 /
\plot  5.550 14.889  5.592 14.819 /
\plot  5.592 14.819  5.626 14.757 /
\plot  5.626 14.757  5.656 14.707 /
\plot  5.656 14.707  5.679 14.666 /
\plot  5.679 14.666  5.696 14.639 /
\plot  5.696 14.639  5.707 14.620 /
\plot  5.707 14.620  5.713 14.609 /
\plot  5.713 14.609  5.715 14.605 /
}%
\linethickness= 0.500pt
\setplotsymbol ({\thinlinefont .})
{\color[rgb]{0,0,0}\plot  5.715 14.605  5.721 14.601 /
\plot  5.721 14.601  5.736 14.590 /
\plot  5.736 14.590  5.759 14.573 /
\plot  5.759 14.573  5.793 14.548 /
\plot  5.793 14.548  5.836 14.516 /
\plot  5.836 14.516  5.884 14.480 /
\plot  5.884 14.480  5.935 14.444 /
\plot  5.935 14.444  5.988 14.406 /
\plot  5.988 14.406  6.037 14.370 /
\plot  6.037 14.370  6.083 14.338 /
\plot  6.083 14.338  6.126 14.309 /
\plot  6.126 14.309  6.166 14.281 /
\plot  6.166 14.281  6.202 14.258 /
\plot  6.202 14.258  6.234 14.237 /
\plot  6.234 14.237  6.265 14.216 /
\plot  6.265 14.216  6.295 14.199 /
\plot  6.295 14.199  6.325 14.182 /
\plot  6.325 14.182  6.361 14.161 /
\plot  6.361 14.161  6.397 14.141 /
\plot  6.397 14.141  6.433 14.125 /
\plot  6.433 14.125  6.469 14.108 /
\plot  6.469 14.108  6.505 14.091 /
\plot  6.505 14.091  6.541 14.074 /
\plot  6.541 14.074  6.576 14.061 /
\plot  6.576 14.061  6.612 14.048 /
\plot  6.612 14.048  6.644 14.036 /
\plot  6.644 14.036  6.674 14.025 /
\plot  6.674 14.025  6.703 14.017 /
\plot  6.703 14.017  6.729 14.008 /
\plot  6.729 14.008  6.752 14.002 /
\plot  6.752 14.002  6.773 13.995 /
\plot  6.773 13.995  6.801 13.989 /
\plot  6.801 13.989  6.826 13.985 /
\plot  6.826 13.985  6.852 13.981 /
\plot  6.852 13.981  6.879 13.976 /
\plot  6.879 13.976  6.907 13.974 /
\plot  6.907 13.974  6.934 13.972 /
\putrule from  6.934 13.972 to  6.960 13.972
\plot  6.960 13.972  6.977 13.970 /
\putrule from  6.977 13.970 to  6.983 13.970
\putrule from  6.983 13.970 to  6.985 13.970
}%
%
% Fig POLYLINE object
%
\linethickness= 0.500pt
\setplotsymbol ({\thinlinefont .})
{\color[rgb]{0,0,0}\plot 11.113 13.970 11.115 13.976 /
\plot 11.115 13.976 11.119 13.989 /
\plot 11.119 13.989 11.125 14.010 /
\plot 11.125 14.010 11.136 14.044 /
\plot 11.136 14.044 11.148 14.089 /
\plot 11.148 14.089 11.165 14.144 /
\plot 11.165 14.144 11.184 14.207 /
\plot 11.184 14.207 11.206 14.275 /
\plot 11.206 14.275 11.225 14.345 /
\plot 11.225 14.345 11.246 14.417 /
\plot 11.246 14.417 11.267 14.489 /
\plot 11.267 14.489 11.286 14.556 /
\plot 11.286 14.556 11.303 14.624 /
\plot 11.303 14.624 11.318 14.688 /
\plot 11.318 14.688 11.335 14.749 /
\plot 11.335 14.749 11.347 14.810 /
\plot 11.347 14.810 11.360 14.867 /
\plot 11.360 14.867 11.373 14.927 /
\plot 11.373 14.927 11.383 14.986 /
\plot 11.383 14.986 11.394 15.045 /
\plot 11.394 15.045 11.405 15.107 /
\plot 11.405 15.107 11.411 15.157 /
\plot 11.411 15.157 11.419 15.208 /
\plot 11.419 15.208 11.426 15.263 /
\plot 11.426 15.263 11.434 15.318 /
\plot 11.434 15.318 11.441 15.375 /
\plot 11.441 15.375 11.449 15.433 /
\plot 11.449 15.433 11.455 15.494 /
\plot 11.455 15.494 11.464 15.558 /
\plot 11.464 15.558 11.470 15.623 /
\plot 11.470 15.623 11.477 15.689 /
\plot 11.477 15.689 11.485 15.756 /
\plot 11.485 15.756 11.491 15.824 /
\plot 11.491 15.824 11.498 15.894 /
\plot 11.498 15.894 11.506 15.966 /
\plot 11.506 15.966 11.513 16.036 /
\plot 11.513 16.036 11.519 16.108 /
\plot 11.519 16.108 11.525 16.178 /
\plot 11.525 16.178 11.532 16.248 /
\plot 11.532 16.248 11.538 16.317 /
\plot 11.538 16.317 11.544 16.385 /
\plot 11.544 16.385 11.551 16.451 /
\plot 11.551 16.451 11.557 16.516 /
\plot 11.557 16.516 11.561 16.580 /
\plot 11.561 16.580 11.568 16.643 /
\plot 11.568 16.643 11.574 16.705 /
\plot 11.574 16.705 11.578 16.764 /
\plot 11.578 16.764 11.585 16.823 /
\plot 11.585 16.823 11.589 16.880 /
\plot 11.589 16.880 11.595 16.946 /
\plot 11.595 16.946 11.601 17.012 /
\plot 11.601 17.012 11.608 17.077 /
\plot 11.608 17.077 11.614 17.143 /
\plot 11.614 17.143 11.620 17.209 /
\plot 11.620 17.209 11.627 17.274 /
\plot 11.627 17.274 11.633 17.342 /
\plot 11.633 17.342 11.640 17.407 /
\plot 11.640 17.407 11.646 17.473 /
\plot 11.646 17.473 11.654 17.537 /
\plot 11.654 17.537 11.661 17.600 /
\plot 11.661 17.600 11.669 17.664 /
\plot 11.669 17.664 11.676 17.725 /
\plot 11.676 17.725 11.682 17.784 /
\plot 11.682 17.784 11.690 17.844 /
\plot 11.690 17.844 11.697 17.899 /
\plot 11.697 17.899 11.703 17.951 /
\plot 11.703 17.951 11.709 18.002 /
\plot 11.709 18.002 11.716 18.051 /
\plot 11.716 18.051 11.722 18.095 /
\plot 11.722 18.095 11.728 18.140 /
\plot 11.728 18.140 11.735 18.180 /
\plot 11.735 18.180 11.741 18.218 /
\plot 11.741 18.218 11.748 18.256 /
\plot 11.748 18.256 11.756 18.309 /
\plot 11.756 18.309 11.767 18.360 /
\plot 11.767 18.360 11.775 18.409 /
\plot 11.775 18.409 11.786 18.455 /
\plot 11.786 18.455 11.794 18.500 /
\plot 11.794 18.500 11.805 18.544 /
\plot 11.805 18.544 11.813 18.584 /
\plot 11.813 18.584 11.824 18.625 /
\plot 11.824 18.625 11.832 18.661 /
\plot 11.832 18.661 11.841 18.694 /
\plot 11.841 18.694 11.849 18.724 /
\plot 11.849 18.724 11.858 18.752 /
\plot 11.858 18.752 11.864 18.777 /
\plot 11.864 18.777 11.870 18.798 /
\plot 11.870 18.798 11.874 18.819 /
\plot 11.874 18.819 11.881 18.838 /
\plot 11.881 18.838 11.887 18.866 /
\plot 11.887 18.866 11.891 18.891 /
\plot 11.891 18.891 11.896 18.917 /
\plot 11.896 18.917 11.900 18.944 /
\plot 11.900 18.944 11.902 18.972 /
\plot 11.902 18.972 11.904 18.999 /
\putrule from 11.904 18.999 to 11.904 19.025
\plot 11.904 19.025 11.906 19.042 /
\putrule from 11.906 19.042 to 11.906 19.048
\putrule from 11.906 19.048 to 11.906 19.050
}%
%
% Fig POLYLINE object
%
\linethickness= 0.500pt
\setplotsymbol ({\thinlinefont .})
{\color[rgb]{0,0,0}\putrule from 12.700 19.050 to 12.698 19.050
\putrule from 12.698 19.050 to 12.689 19.050
\plot 12.689 19.050 12.670 19.048 /
\plot 12.670 19.048 12.639 19.044 /
\plot 12.639 19.044 12.601 19.039 /
\plot 12.601 19.039 12.562 19.031 /
\plot 12.562 19.031 12.524 19.022 /
\plot 12.524 19.022 12.490 19.012 /
\plot 12.490 19.012 12.459 18.999 /
\plot 12.459 18.999 12.431 18.984 /
\plot 12.431 18.984 12.408 18.967 /
\plot 12.408 18.967 12.383 18.944 /
\plot 12.383 18.944 12.366 18.927 /
\plot 12.366 18.927 12.351 18.908 /
\plot 12.351 18.908 12.334 18.887 /
\plot 12.334 18.887 12.317 18.862 /
\plot 12.317 18.862 12.298 18.836 /
\plot 12.298 18.836 12.281 18.804 /
\plot 12.281 18.804 12.262 18.773 /
\plot 12.262 18.773 12.243 18.737 /
\plot 12.243 18.737 12.224 18.699 /
\plot 12.224 18.699 12.205 18.658 /
\plot 12.205 18.658 12.186 18.616 /
\plot 12.186 18.616 12.167 18.574 /
\plot 12.167 18.574 12.150 18.527 /
\plot 12.150 18.527 12.131 18.481 /
\plot 12.131 18.481 12.114 18.434 /
\plot 12.114 18.434 12.097 18.383 /
\plot 12.097 18.383 12.082 18.335 /
\plot 12.082 18.335 12.065 18.282 /
\plot 12.065 18.282 12.052 18.244 /
\plot 12.052 18.244 12.040 18.201 /
\plot 12.040 18.201 12.029 18.157 /
\plot 12.029 18.157 12.016 18.112 /
\plot 12.016 18.112 12.004 18.066 /
\plot 12.004 18.066 11.991 18.017 /
\plot 11.991 18.017 11.978 17.966 /
\plot 11.978 17.966 11.966 17.913 /
\plot 11.966 17.913 11.951 17.858 /
\plot 11.951 17.858 11.938 17.803 /
\plot 11.938 17.803 11.925 17.746 /
\plot 11.925 17.746 11.913 17.689 /
\plot 11.913 17.689 11.902 17.632 /
\plot 11.902 17.632 11.889 17.575 /
\plot 11.889 17.575 11.879 17.515 /
\plot 11.879 17.515 11.866 17.458 /
\plot 11.866 17.458 11.858 17.403 /
\plot 11.858 17.403 11.847 17.348 /
\plot 11.847 17.348 11.839 17.293 /
\plot 11.839 17.293 11.830 17.240 /
\plot 11.830 17.240 11.822 17.187 /
\plot 11.822 17.187 11.813 17.137 /
\plot 11.813 17.137 11.807 17.088 /
\plot 11.807 17.088 11.800 17.039 /
\plot 11.800 17.039 11.794 16.986 /
\plot 11.794 16.986 11.788 16.933 /
\plot 11.788 16.933 11.783 16.880 /
\plot 11.783 16.880 11.779 16.828 /
\plot 11.779 16.828 11.775 16.772 /
\plot 11.775 16.772 11.771 16.717 /
\plot 11.771 16.717 11.769 16.662 /
\plot 11.769 16.662 11.764 16.605 /
\putrule from 11.764 16.605 to 11.764 16.548
\plot 11.764 16.548 11.762 16.491 /
\putrule from 11.762 16.491 to 11.762 16.434
\putrule from 11.762 16.434 to 11.762 16.377
\plot 11.762 16.377 11.764 16.320 /
\putrule from 11.764 16.320 to 11.764 16.264
\plot 11.764 16.264 11.769 16.209 /
\plot 11.769 16.209 11.771 16.154 /
\plot 11.771 16.154 11.775 16.101 /
\plot 11.775 16.101 11.779 16.049 /
\plot 11.779 16.049 11.783 15.998 /
\plot 11.783 15.998 11.788 15.947 /
\plot 11.788 15.947 11.794 15.898 /
\plot 11.794 15.898 11.800 15.847 /
\plot 11.800 15.847 11.807 15.799 /
\plot 11.807 15.799 11.815 15.748 /
\plot 11.815 15.748 11.824 15.697 /
\plot 11.824 15.697 11.832 15.644 /
\plot 11.832 15.644 11.843 15.591 /
\plot 11.843 15.591 11.851 15.536 /
\plot 11.851 15.536 11.864 15.481 /
\plot 11.864 15.481 11.874 15.424 /
\plot 11.874 15.424 11.887 15.367 /
\plot 11.887 15.367 11.900 15.312 /
\plot 11.900 15.312 11.913 15.255 /
\plot 11.913 15.255 11.927 15.200 /
\plot 11.927 15.200 11.942 15.145 /
\plot 11.942 15.145 11.955 15.092 /
\plot 11.955 15.092 11.970 15.039 /
\plot 11.970 15.039 11.982 14.990 /
\plot 11.982 14.990 11.997 14.942 /
\plot 11.997 14.942 12.012 14.897 /
\plot 12.012 14.897 12.025 14.855 /
\plot 12.025 14.855 12.037 14.812 /
\plot 12.037 14.812 12.052 14.774 /
\plot 12.052 14.774 12.065 14.736 /
\plot 12.065 14.736 12.084 14.688 /
\plot 12.084 14.688 12.101 14.641 /
\plot 12.101 14.641 12.120 14.597 /
\plot 12.120 14.597 12.141 14.552 /
\plot 12.141 14.552 12.160 14.510 /
\plot 12.160 14.510 12.181 14.470 /
\plot 12.181 14.470 12.203 14.429 /
\plot 12.203 14.429 12.224 14.393 /
\plot 12.224 14.393 12.245 14.357 /
\plot 12.245 14.357 12.266 14.326 /
\plot 12.266 14.326 12.287 14.296 /
\plot 12.287 14.296 12.306 14.268 /
\plot 12.306 14.268 12.327 14.243 /
\plot 12.327 14.243 12.347 14.222 /
\plot 12.347 14.222 12.363 14.201 /
\plot 12.363 14.201 12.383 14.182 /
\plot 12.383 14.182 12.408 14.158 /
\plot 12.408 14.158 12.431 14.137 /
\plot 12.431 14.137 12.459 14.116 /
\plot 12.459 14.116 12.490 14.095 /
\plot 12.490 14.095 12.524 14.072 /
\plot 12.524 14.072 12.562 14.048 /
\plot 12.562 14.048 12.601 14.025 /
\plot 12.601 14.025 12.639 14.004 /
\plot 12.639 14.004 12.670 13.987 /
\plot 12.670 13.987 12.689 13.976 /
\plot 12.689 13.976 12.698 13.970 /
\putrule from 12.698 13.970 to 12.700 13.970
}%
%
% Fig POLYLINE object
%
\linethickness= 0.500pt
\setplotsymbol ({\thinlinefont .})
{\color[rgb]{0,0,0}\plot 12.700 19.050 12.704 19.044 /
\plot 12.704 19.044 12.711 19.033 /
\plot 12.711 19.033 12.725 19.012 /
\plot 12.725 19.012 12.747 18.980 /
\plot 12.747 18.980 12.774 18.938 /
\plot 12.774 18.938 12.806 18.887 /
\plot 12.806 18.887 12.844 18.830 /
\plot 12.844 18.830 12.884 18.766 /
\plot 12.884 18.766 12.926 18.703 /
\plot 12.926 18.703 12.967 18.639 /
\plot 12.967 18.639 13.007 18.578 /
\plot 13.007 18.578 13.045 18.519 /
\plot 13.045 18.519 13.081 18.464 /
\plot 13.081 18.464 13.113 18.411 /
\plot 13.113 18.411 13.142 18.364 /
\plot 13.142 18.364 13.170 18.318 /
\plot 13.170 18.318 13.195 18.275 /
\plot 13.195 18.275 13.219 18.235 /
\plot 13.219 18.235 13.242 18.197 /
\plot 13.242 18.197 13.261 18.161 /
\plot 13.261 18.161 13.282 18.123 /
\plot 13.282 18.123 13.303 18.085 /
\plot 13.303 18.085 13.324 18.045 /
\plot 13.324 18.045 13.346 18.004 /
\plot 13.346 18.004 13.365 17.962 /
\plot 13.365 17.962 13.386 17.920 /
\plot 13.386 17.920 13.405 17.875 /
\plot 13.405 17.875 13.424 17.831 /
\plot 13.424 17.831 13.443 17.786 /
\plot 13.443 17.786 13.462 17.740 /
\plot 13.462 17.740 13.479 17.693 /
\plot 13.479 17.693 13.496 17.647 /
\plot 13.496 17.647 13.511 17.598 /
\plot 13.511 17.598 13.526 17.551 /
\plot 13.526 17.551 13.540 17.505 /
\plot 13.540 17.505 13.553 17.456 /
\plot 13.553 17.456 13.564 17.410 /
\plot 13.564 17.410 13.574 17.363 /
\plot 13.574 17.363 13.583 17.319 /
\plot 13.583 17.319 13.591 17.272 /
\plot 13.591 17.272 13.600 17.223 /
\plot 13.600 17.223 13.606 17.181 /
\plot 13.606 17.181 13.612 17.137 /
\plot 13.612 17.137 13.617 17.090 /
\plot 13.617 17.090 13.621 17.041 /
\plot 13.621 17.041 13.625 16.993 /
\plot 13.625 16.993 13.629 16.940 /
\plot 13.629 16.940 13.631 16.885 /
\plot 13.631 16.885 13.636 16.830 /
\putrule from 13.636 16.830 to 13.636 16.772
\plot 13.636 16.772 13.638 16.713 /
\putrule from 13.638 16.713 to 13.638 16.654
\putrule from 13.638 16.654 to 13.638 16.595
\plot 13.638 16.595 13.636 16.533 /
\putrule from 13.636 16.533 to 13.636 16.472
\plot 13.636 16.472 13.631 16.413 /
\plot 13.631 16.413 13.629 16.351 /
\plot 13.629 16.351 13.625 16.292 /
\plot 13.625 16.292 13.621 16.235 /
\plot 13.621 16.235 13.617 16.178 /
\plot 13.617 16.178 13.612 16.121 /
\plot 13.612 16.121 13.606 16.063 /
\plot 13.606 16.063 13.600 16.006 /
\plot 13.600 16.006 13.593 15.955 /
\plot 13.593 15.955 13.587 15.903 /
\plot 13.587 15.903 13.578 15.850 /
\plot 13.578 15.850 13.570 15.797 /
\plot 13.570 15.797 13.561 15.742 /
\plot 13.561 15.742 13.553 15.684 /
\plot 13.553 15.684 13.542 15.627 /
\plot 13.542 15.627 13.530 15.570 /
\plot 13.530 15.570 13.519 15.511 /
\plot 13.519 15.511 13.506 15.452 /
\plot 13.506 15.452 13.494 15.392 /
\plot 13.494 15.392 13.479 15.333 /
\plot 13.479 15.333 13.464 15.276 /
\plot 13.464 15.276 13.449 15.219 /
\plot 13.449 15.219 13.434 15.162 /
\plot 13.434 15.162 13.418 15.109 /
\plot 13.418 15.109 13.403 15.056 /
\plot 13.403 15.056 13.386 15.003 /
\plot 13.386 15.003 13.369 14.954 /
\plot 13.369 14.954 13.352 14.908 /
\plot 13.352 14.908 13.335 14.863 /
\plot 13.335 14.863 13.318 14.819 /
\plot 13.318 14.819 13.299 14.776 /
\plot 13.299 14.776 13.282 14.736 /
\plot 13.282 14.736 13.259 14.692 /
\plot 13.259 14.692 13.238 14.645 /
\plot 13.238 14.645 13.212 14.601 /
\plot 13.212 14.601 13.185 14.558 /
\plot 13.185 14.558 13.157 14.514 /
\plot 13.157 14.514 13.125 14.470 /
\plot 13.125 14.470 13.089 14.421 /
\plot 13.089 14.421 13.051 14.372 /
\plot 13.051 14.372 13.011 14.321 /
\plot 13.011 14.321 12.967 14.271 /
\plot 12.967 14.271 12.922 14.216 /
\plot 12.922 14.216 12.876 14.165 /
\plot 12.876 14.165 12.833 14.116 /
\plot 12.833 14.116 12.793 14.072 /
\plot 12.793 14.072 12.759 14.034 /
\plot 12.759 14.034 12.734 14.006 /
\plot 12.734 14.006 12.715 13.985 /
\plot 12.715 13.985 12.704 13.974 /
\plot 12.704 13.974 12.700 13.970 /
}%
%
% Fig POLYLINE object
%
\linethickness=1pt
\setplotsymbol ({\makebox(0,0)[l]{\tencirc\symbol{'160}}})
{\color[rgb]{0,0,0}\plot 13.652 17.145 13.648 17.151 /
\plot 13.648 17.151 13.642 17.164 /
\plot 13.642 17.164 13.627 17.189 /
\plot 13.627 17.189 13.606 17.225 /
\plot 13.606 17.225 13.578 17.274 /
\plot 13.578 17.274 13.545 17.333 /
\plot 13.545 17.333 13.506 17.399 /
\plot 13.506 17.399 13.464 17.471 /
\plot 13.464 17.471 13.422 17.543 /
\plot 13.422 17.543 13.379 17.615 /
\plot 13.379 17.615 13.339 17.685 /
\plot 13.339 17.685 13.301 17.752 /
\plot 13.301 17.752 13.263 17.814 /
\plot 13.263 17.814 13.229 17.871 /
\plot 13.229 17.871 13.200 17.926 /
\plot 13.200 17.926 13.170 17.975 /
\plot 13.170 17.975 13.142 18.019 /
\plot 13.142 18.019 13.115 18.062 /
\plot 13.115 18.062 13.092 18.102 /
\plot 13.092 18.102 13.068 18.140 /
\plot 13.068 18.140 13.045 18.176 /
\plot 13.045 18.176 13.015 18.220 /
\plot 13.015 18.220 12.988 18.265 /
\plot 12.988 18.265 12.960 18.307 /
\plot 12.960 18.307 12.931 18.349 /
\plot 12.931 18.349 12.901 18.392 /
\plot 12.901 18.392 12.871 18.432 /
\plot 12.871 18.432 12.842 18.472 /
\plot 12.842 18.472 12.812 18.512 /
\plot 12.812 18.512 12.783 18.553 /
\plot 12.783 18.553 12.753 18.589 /
\plot 12.753 18.589 12.725 18.625 /
\plot 12.725 18.625 12.696 18.658 /
\plot 12.696 18.658 12.668 18.688 /
\plot 12.668 18.688 12.641 18.718 /
\plot 12.641 18.718 12.615 18.743 /
\plot 12.615 18.743 12.590 18.768 /
\plot 12.590 18.768 12.565 18.792 /
\plot 12.565 18.792 12.541 18.811 /
\plot 12.541 18.811 12.509 18.836 /
\plot 12.509 18.836 12.478 18.860 /
\plot 12.478 18.860 12.444 18.881 /
\plot 12.444 18.881 12.408 18.902 /
\plot 12.408 18.902 12.366 18.923 /
\plot 12.366 18.923 12.321 18.946 /
\plot 12.321 18.946 12.272 18.967 /
\plot 12.272 18.967 12.222 18.989 /
\plot 12.222 18.989 12.173 19.010 /
\plot 12.173 19.010 12.129 19.027 /
\plot 12.129 19.027 12.095 19.039 /
\plot 12.095 19.039 12.076 19.046 /
\plot 12.076 19.046 12.067 19.050 /
\putrule from 12.067 19.050 to 12.065 19.050
}%
%
% Fig TEXT object
%
\put{\SetFigFont{6}{7.2}{\rmdefault}{\mddefault}{\updefault}{\color[rgb]{0,0,0}$\cK_1$}%
} [lB] at 10.058 15.977
%
% Fig TEXT object
%
\put{\SetFigFont{6}{7.2}{\rmdefault}{\mddefault}{\updefault}{\color[rgb]{0,0,0}$S_+$}%
} [lB] at 13.678 15.426
%
% Fig TEXT object
%
\put{\SetFigFont{6}{7.2}{\rmdefault}{\mddefault}{\updefault}{\color[rgb]{0,0,0}$N_+$}%
} [lB] at 13.741 17.141
%
% Fig TEXT object
%
\put{\SetFigFont{6}{7.2}{\rmdefault}{\mddefault}{\updefault}{\color[rgb]{0,0,0}$S_-$}%
} [lB] at  3.243 18.051
\put{\SetFigFont{6}{7.2}{\rmdefault}{\mddefault}{\updefault}{\color[rgb]{0,0,0}$N_-$}%
} [lB] at  5.698 18.517
\linethickness=0pt
\putrectangle corners at  3.211 19.120 and 13.773 13.877
\endpicture}
\qquad
\font\thinlinefont=cmr5
\begingroup\makeatletter\ifx\SetFigFont\undefined%
\gdef\SetFigFont#1#2#3#4#5{%
  \reset@font\fontsize{#1}{#2pt}%
  \fontfamily{#3}\fontseries{#4}\fontshape{#5}%
  \selectfont}%
\fi\endgroup%
\mbox{\beginpicture
\setcoordinatesystem units <0.50000cm,0.50000cm>
\unitlength=0.50000cm
\linethickness=1pt
\setplotsymbol ({\makebox(0,0)[l]{\tencirc\symbol{'160}}})
\setshadesymbol ({\thinlinefont .})
\setlinear
%
% Fig POLYLINE object
%
\linethickness= 0.500pt
\setplotsymbol ({\thinlinefont .})
{\color[rgb]{0,0,0}\putrule from  5.080 16.510 to 15.240 16.510
}%
%
% Fig POLYLINE object
%
\linethickness= 0.500pt
\setplotsymbol ({\thinlinefont .})
{\color[rgb]{0,0,0}\putrule from  5.080 11.430 to 15.240 11.430
}%
%
% Fig POLYLINE object
%
\linethickness=1pt
\setplotsymbol ({\makebox(0,0)[l]{\tencirc\symbol{'160}}})
{\color[rgb]{0,0,0}\putrule from  5.080 21.590 to 15.240 21.590
}%
%
% Fig POLYLINE object
%
\linethickness=1pt
\setplotsymbol ({\makebox(0,0)[l]{\tencirc\symbol{'160}}})
{\color[rgb]{0,0,0}\putrule from 15.240 21.590 to 15.240  6.315
\putrule from 15.240  6.350 to  5.045  6.350
\putrule from  5.080  6.350 to  5.080 21.590
}%
%
% Fig POLYLINE object
%
\linethickness= 0.500pt
\setplotsymbol ({\thinlinefont .})
{\color[rgb]{0,0,0}\plot  6.985 15.081  7.938 15.716 /
%
% arrow head
%
\plot  7.761 15.523  7.938 15.716  7.691 15.628 /
}%
%
% Fig POLYLINE object
%
\linethickness= 0.500pt
\setplotsymbol ({\thinlinefont .})
{\color[rgb]{0,0,0}\plot 13.970 16.192 14.446 15.875 /
%
% arrow head
%
\plot 14.200 15.963 14.446 15.875 14.270 16.069 /
}%
%
% Fig POLYLINE object
%
\linethickness= 0.500pt
\setplotsymbol ({\thinlinefont .})
{\color[rgb]{0,0,0}\putrule from  5.080 17.462 to  5.080 16.034
%
% arrow head
%
\plot  5.017 16.288  5.080 16.034  5.143 16.288 /
}%
%
% Fig POLYLINE object
%
\linethickness= 0.500pt
\setplotsymbol ({\thinlinefont .})
{\color[rgb]{0,0,0}\putrule from 15.240 12.383 to 15.240 10.795
%
% arrow head
%
\plot 15.176 11.049 15.240 10.795 15.304 11.049 /
}%
%
% Fig POLYLINE object
%
\linethickness= 0.500pt
\setplotsymbol ({\thinlinefont .})
{\color[rgb]{0,0,0}\plot  5.080 12.700  5.082 12.706 /
\plot  5.082 12.706  5.088 12.717 /
\plot  5.088 12.717  5.101 12.740 /
\plot  5.101 12.740  5.118 12.774 /
\plot  5.118 12.774  5.141 12.819 /
\plot  5.141 12.819  5.171 12.876 /
\plot  5.171 12.876  5.207 12.941 /
\plot  5.207 12.941  5.245 13.013 /
\plot  5.245 13.013  5.285 13.089 /
\plot  5.285 13.089  5.328 13.168 /
\plot  5.328 13.168  5.370 13.244 /
\plot  5.370 13.244  5.412 13.320 /
\plot  5.412 13.320  5.453 13.392 /
\plot  5.453 13.392  5.491 13.460 /
\plot  5.491 13.460  5.529 13.523 /
\plot  5.529 13.523  5.565 13.583 /
\plot  5.565 13.583  5.599 13.638 /
\plot  5.599 13.638  5.632 13.691 /
\plot  5.632 13.691  5.664 13.739 /
\plot  5.664 13.739  5.696 13.786 /
\plot  5.696 13.786  5.730 13.830 /
\plot  5.730 13.830  5.762 13.875 /
\plot  5.762 13.875  5.795 13.917 /
\plot  5.795 13.917  5.825 13.955 /
\plot  5.825 13.955  5.857 13.995 /
\plot  5.857 13.995  5.891 14.034 /
\plot  5.891 14.034  5.925 14.074 /
\plot  5.925 14.074  5.961 14.112 /
\plot  5.961 14.112  5.997 14.152 /
\plot  5.997 14.152  6.035 14.192 /
\plot  6.035 14.192  6.073 14.232 /
\plot  6.073 14.232  6.113 14.275 /
\plot  6.113 14.275  6.155 14.315 /
\plot  6.155 14.315  6.198 14.357 /
\plot  6.198 14.357  6.242 14.398 /
\plot  6.242 14.398  6.284 14.440 /
\plot  6.284 14.440  6.329 14.480 /
\plot  6.329 14.480  6.375 14.520 /
\plot  6.375 14.520  6.420 14.561 /
\plot  6.420 14.561  6.464 14.601 /
\plot  6.464 14.601  6.509 14.639 /
\plot  6.509 14.639  6.553 14.677 /
\plot  6.553 14.677  6.598 14.713 /
\plot  6.598 14.713  6.642 14.751 /
\plot  6.642 14.751  6.684 14.785 /
\plot  6.684 14.785  6.727 14.821 /
\plot  6.727 14.821  6.769 14.855 /
\plot  6.769 14.855  6.811 14.889 /
\plot  6.811 14.889  6.854 14.922 /
\plot  6.854 14.922  6.892 14.954 /
\plot  6.892 14.954  6.932 14.986 /
\plot  6.932 14.986  6.972 15.018 /
\plot  6.972 15.018  7.013 15.050 /
\plot  7.013 15.050  7.055 15.081 /
\plot  7.055 15.081  7.097 15.115 /
\plot  7.097 15.115  7.142 15.149 /
\plot  7.142 15.149  7.188 15.183 /
\plot  7.188 15.183  7.235 15.219 /
\plot  7.235 15.219  7.283 15.255 /
\plot  7.283 15.255  7.332 15.291 /
\plot  7.332 15.291  7.383 15.329 /
\plot  7.383 15.329  7.434 15.365 /
\plot  7.434 15.365  7.487 15.403 /
\plot  7.487 15.403  7.540 15.439 /
\plot  7.540 15.439  7.595 15.477 /
\plot  7.595 15.477  7.648 15.515 /
\plot  7.648 15.515  7.703 15.551 /
\plot  7.703 15.551  7.758 15.589 /
\plot  7.758 15.589  7.813 15.625 /
\plot  7.813 15.625  7.868 15.661 /
\plot  7.868 15.661  7.923 15.697 /
\plot  7.923 15.697  7.978 15.731 /
\plot  7.978 15.731  8.033 15.765 /
\plot  8.033 15.765  8.088 15.801 /
\plot  8.088 15.801  8.143 15.835 /
\plot  8.143 15.835  8.198 15.869 /
\plot  8.198 15.869  8.255 15.900 /
\plot  8.255 15.900  8.306 15.932 /
\plot  8.306 15.932  8.357 15.962 /
\plot  8.357 15.962  8.410 15.991 /
\plot  8.410 15.991  8.465 16.021 /
\plot  8.465 16.021  8.520 16.053 /
\plot  8.520 16.053  8.577 16.085 /
\plot  8.577 16.085  8.636 16.116 /
\plot  8.636 16.116  8.697 16.148 /
\plot  8.697 16.148  8.759 16.180 /
\plot  8.759 16.180  8.822 16.212 /
\plot  8.822 16.212  8.888 16.245 /
\plot  8.888 16.245  8.956 16.277 /
\plot  8.956 16.277  9.023 16.309 /
\plot  9.023 16.309  9.091 16.341 /
\plot  9.091 16.341  9.163 16.375 /
\plot  9.163 16.375  9.233 16.404 /
\plot  9.233 16.404  9.305 16.436 /
\plot  9.305 16.436  9.377 16.466 /
\plot  9.377 16.466  9.449 16.495 /
\plot  9.449 16.495  9.521 16.525 /
\plot  9.521 16.525  9.593 16.552 /
\plot  9.593 16.552  9.665 16.578 /
\plot  9.665 16.578  9.735 16.603 /
\plot  9.735 16.603  9.807 16.626 /
\plot  9.807 16.626  9.874 16.650 /
\plot  9.874 16.650  9.944 16.671 /
\plot  9.944 16.671 10.012 16.692 /
\plot 10.012 16.692 10.080 16.711 /
\plot 10.080 16.711 10.147 16.728 /
\plot 10.147 16.728 10.213 16.745 /
\plot 10.213 16.745 10.279 16.760 /
\plot 10.279 16.760 10.346 16.775 /
\plot 10.346 16.775 10.412 16.787 /
\plot 10.412 16.787 10.478 16.800 /
\plot 10.478 16.800 10.545 16.811 /
\plot 10.545 16.811 10.613 16.821 /
\plot 10.613 16.821 10.681 16.830 /
\plot 10.681 16.830 10.751 16.836 /
\plot 10.751 16.836 10.823 16.842 /
\plot 10.823 16.842 10.894 16.849 /
\plot 10.894 16.849 10.966 16.851 /
\plot 10.966 16.851 11.041 16.855 /
\putrule from 11.041 16.855 to 11.115 16.855
\putrule from 11.115 16.855 to 11.191 16.855
\plot 11.191 16.855 11.267 16.853 /
\plot 11.267 16.853 11.343 16.849 /
\plot 11.343 16.849 11.419 16.844 /
\plot 11.419 16.844 11.496 16.838 /
\plot 11.496 16.838 11.570 16.830 /
\plot 11.570 16.830 11.646 16.819 /
\plot 11.646 16.819 11.720 16.808 /
\plot 11.720 16.808 11.794 16.796 /
\plot 11.794 16.796 11.866 16.781 /
\plot 11.866 16.781 11.938 16.766 /
\plot 11.938 16.766 12.008 16.749 /
\plot 12.008 16.749 12.076 16.730 /
\plot 12.076 16.730 12.141 16.711 /
\plot 12.141 16.711 12.207 16.690 /
\plot 12.207 16.690 12.270 16.667 /
\plot 12.270 16.667 12.332 16.643 /
\plot 12.332 16.643 12.393 16.618 /
\plot 12.393 16.618 12.452 16.593 /
\plot 12.452 16.593 12.509 16.565 /
\plot 12.509 16.565 12.569 16.535 /
\plot 12.569 16.535 12.624 16.506 /
\plot 12.624 16.506 12.681 16.474 /
\plot 12.681 16.474 12.736 16.440 /
\plot 12.736 16.440 12.791 16.404 /
\plot 12.791 16.404 12.848 16.366 /
\plot 12.848 16.366 12.903 16.328 /
\plot 12.903 16.328 12.958 16.286 /
\plot 12.958 16.286 13.013 16.241 /
\plot 13.013 16.241 13.070 16.195 /
\plot 13.070 16.195 13.125 16.148 /
\plot 13.125 16.148 13.180 16.097 /
\plot 13.180 16.097 13.236 16.044 /
\plot 13.236 16.044 13.291 15.991 /
\plot 13.291 15.991 13.343 15.936 /
\plot 13.343 15.936 13.396 15.879 /
\plot 13.396 15.879 13.449 15.820 /
\plot 13.449 15.820 13.502 15.761 /
\plot 13.502 15.761 13.551 15.699 /
\plot 13.551 15.699 13.602 15.638 /
\plot 13.602 15.638 13.648 15.574 /
\plot 13.648 15.574 13.695 15.513 /
\plot 13.695 15.513 13.739 15.450 /
\plot 13.739 15.450 13.784 15.386 /
\plot 13.784 15.386 13.824 15.323 /
\plot 13.824 15.323 13.864 15.259 /
\plot 13.864 15.259 13.902 15.196 /
\plot 13.902 15.196 13.940 15.132 /
\plot 13.940 15.132 13.974 15.069 /
\plot 13.974 15.069 14.008 15.007 /
\plot 14.008 15.007 14.042 14.944 /
\plot 14.042 14.944 14.072 14.880 /
\plot 14.072 14.880 14.103 14.817 /
\plot 14.103 14.817 14.129 14.760 /
\plot 14.129 14.760 14.154 14.702 /
\plot 14.154 14.702 14.177 14.643 /
\plot 14.177 14.643 14.203 14.584 /
\plot 14.203 14.584 14.226 14.522 /
\plot 14.226 14.522 14.249 14.461 /
\plot 14.249 14.461 14.273 14.398 /
\plot 14.273 14.398 14.294 14.332 /
\plot 14.294 14.332 14.317 14.264 /
\plot 14.317 14.264 14.338 14.196 /
\plot 14.338 14.196 14.359 14.127 /
\plot 14.359 14.127 14.381 14.055 /
\plot 14.381 14.055 14.402 13.981 /
\plot 14.402 13.981 14.423 13.904 /
\plot 14.423 13.904 14.442 13.826 /
\plot 14.442 13.826 14.461 13.748 /
\plot 14.461 13.748 14.482 13.667 /
\plot 14.482 13.667 14.499 13.587 /
\plot 14.499 13.587 14.518 13.504 /
\plot 14.518 13.504 14.535 13.420 /
\plot 14.535 13.420 14.554 13.335 /
\plot 14.554 13.335 14.571 13.250 /
\plot 14.571 13.250 14.586 13.164 /
\plot 14.586 13.164 14.603 13.079 /
\plot 14.603 13.079 14.618 12.992 /
\plot 14.618 12.992 14.633 12.903 /
\plot 14.633 12.903 14.647 12.816 /
\plot 14.647 12.816 14.662 12.730 /
\plot 14.662 12.730 14.675 12.641 /
\plot 14.675 12.641 14.690 12.552 /
\plot 14.690 12.552 14.702 12.463 /
\plot 14.702 12.463 14.715 12.374 /
\plot 14.715 12.374 14.728 12.285 /
\plot 14.728 12.285 14.738 12.196 /
\plot 14.738 12.196 14.751 12.105 /
\plot 14.751 12.105 14.764 12.012 /
\plot 14.764 12.012 14.774 11.936 /
\plot 14.774 11.936 14.783 11.858 /
\plot 14.783 11.858 14.793 11.779 /
\plot 14.793 11.779 14.804 11.697 /
\plot 14.804 11.697 14.815 11.614 /
\plot 14.815 11.614 14.825 11.529 /
\plot 14.825 11.529 14.834 11.441 /
\plot 14.834 11.441 14.846 11.350 /
\plot 14.846 11.350 14.857 11.256 /
\plot 14.857 11.256 14.867 11.157 /
\plot 14.867 11.157 14.878 11.055 /
\plot 14.878 11.055 14.891 10.950 /
\plot 14.891 10.950 14.903 10.839 /
\plot 14.903 10.839 14.916 10.723 /
\plot 14.916 10.723 14.929 10.604 /
\plot 14.929 10.604 14.942 10.480 /
\plot 14.942 10.480 14.956 10.348 /
\plot 14.956 10.348 14.971 10.215 /
\plot 14.971 10.215 14.986 10.075 /
\plot 14.986 10.075 15.001  9.931 /
\plot 15.001  9.931 15.016  9.783 /
\plot 15.016  9.783 15.033  9.631 /
\plot 15.033  9.631 15.050  9.478 /
\plot 15.050  9.478 15.064  9.322 /
\plot 15.064  9.322 15.081  9.165 /
\plot 15.081  9.165 15.098  9.011 /
\plot 15.098  9.011 15.113  8.856 /
\plot 15.113  8.856 15.128  8.706 /
\plot 15.128  8.706 15.143  8.560 /
\plot 15.143  8.560 15.157  8.422 /
\plot 15.157  8.422 15.172  8.293 /
\plot 15.172  8.293 15.183  8.172 /
\plot 15.183  8.172 15.196  8.064 /
\plot 15.196  8.064 15.204  7.967 /
\plot 15.204  7.967 15.212  7.882 /
\plot 15.212  7.882 15.221  7.811 /
\plot 15.221  7.811 15.227  7.751 /
\plot 15.227  7.751 15.232  7.705 /
\plot 15.232  7.705 15.236  7.671 /
\plot 15.236  7.671 15.238  7.648 /
\putrule from 15.238  7.648 to 15.238  7.631
\plot 15.238  7.631 15.240  7.624 /
\putrule from 15.240  7.624 to 15.240  7.620
}%
%
% Fig POLYLINE object
%
\linethickness= 0.500pt
\setplotsymbol ({\thinlinefont .})
{\color[rgb]{0,0,0}\plot 15.240 15.240 15.236 15.242 /
\plot 15.236 15.242 15.229 15.248 /
\plot 15.229 15.248 15.215 15.261 /
\plot 15.215 15.261 15.193 15.278 /
\plot 15.193 15.278 15.164 15.304 /
\plot 15.164 15.304 15.124 15.335 /
\plot 15.124 15.335 15.077 15.373 /
\plot 15.077 15.373 15.024 15.418 /
\plot 15.024 15.418 14.963 15.466 /
\plot 14.963 15.466 14.897 15.519 /
\plot 14.897 15.519 14.829 15.574 /
\plot 14.829 15.574 14.760 15.632 /
\plot 14.760 15.632 14.688 15.687 /
\plot 14.688 15.687 14.616 15.744 /
\plot 14.616 15.744 14.546 15.797 /
\plot 14.546 15.797 14.476 15.852 /
\plot 14.476 15.852 14.408 15.903 /
\plot 14.408 15.903 14.343 15.953 /
\plot 14.343 15.953 14.275 16.002 /
\plot 14.275 16.002 14.211 16.049 /
\plot 14.211 16.049 14.146 16.095 /
\plot 14.146 16.095 14.080 16.142 /
\plot 14.080 16.142 14.014 16.188 /
\plot 14.014 16.188 13.947 16.235 /
\plot 13.947 16.235 13.877 16.281 /
\plot 13.877 16.281 13.805 16.328 /
\plot 13.805 16.328 13.733 16.377 /
\plot 13.733 16.377 13.678 16.413 /
\plot 13.678 16.413 13.623 16.449 /
\plot 13.623 16.449 13.568 16.485 /
\plot 13.568 16.485 13.509 16.521 /
\plot 13.509 16.521 13.447 16.559 /
\plot 13.447 16.559 13.386 16.597 /
\plot 13.386 16.597 13.322 16.635 /
\plot 13.322 16.635 13.255 16.673 /
\plot 13.255 16.673 13.187 16.711 /
\plot 13.187 16.711 13.115 16.749 /
\plot 13.115 16.749 13.041 16.789 /
\plot 13.041 16.789 12.967 16.828 /
\plot 12.967 16.828 12.888 16.866 /
\plot 12.888 16.866 12.808 16.904 /
\plot 12.808 16.904 12.725 16.942 /
\plot 12.725 16.942 12.641 16.978 /
\plot 12.641 16.978 12.556 17.014 /
\plot 12.556 17.014 12.467 17.050 /
\plot 12.467 17.050 12.378 17.081 /
\plot 12.378 17.081 12.287 17.115 /
\plot 12.287 17.115 12.194 17.145 /
\plot 12.194 17.145 12.101 17.175 /
\plot 12.101 17.175 12.006 17.200 /
\plot 12.006 17.200 11.910 17.225 /
\plot 11.910 17.225 11.815 17.249 /
\plot 11.815 17.249 11.718 17.268 /
\plot 11.718 17.268 11.620 17.287 /
\plot 11.620 17.287 11.523 17.304 /
\plot 11.523 17.304 11.426 17.316 /
\plot 11.426 17.316 11.328 17.327 /
\plot 11.328 17.327 11.229 17.333 /
\plot 11.229 17.333 11.132 17.340 /
\plot 11.132 17.340 11.034 17.342 /
\putrule from 11.034 17.342 to 10.937 17.342
\plot 10.937 17.342 10.837 17.338 /
\plot 10.837 17.338 10.740 17.331 /
\plot 10.740 17.331 10.643 17.323 /
\plot 10.643 17.323 10.543 17.310 /
\plot 10.543 17.310 10.446 17.295 /
\plot 10.446 17.295 10.346 17.276 /
\plot 10.346 17.276 10.253 17.257 /
\plot 10.253 17.257 10.162 17.236 /
\plot 10.162 17.236 10.069 17.211 /
\plot 10.069 17.211  9.974 17.183 /
\plot  9.974 17.183  9.878 17.156 /
\plot  9.878 17.156  9.781 17.124 /
\plot  9.781 17.124  9.682 17.090 /
\plot  9.682 17.090  9.582 17.054 /
\plot  9.582 17.054  9.481 17.016 /
\plot  9.481 17.016  9.379 16.978 /
\plot  9.379 16.978  9.273 16.938 /
\plot  9.273 16.938  9.169 16.895 /
\plot  9.169 16.895  9.064 16.851 /
\plot  9.064 16.851  8.956 16.806 /
\plot  8.956 16.806  8.850 16.762 /
\plot  8.850 16.762  8.740 16.717 /
\plot  8.740 16.717  8.632 16.671 /
\plot  8.632 16.671  8.524 16.626 /
\plot  8.524 16.626  8.414 16.580 /
\plot  8.414 16.580  8.306 16.535 /
\plot  8.306 16.535  8.198 16.491 /
\plot  8.198 16.491  8.090 16.447 /
\plot  8.090 16.447  7.984 16.404 /
\plot  7.984 16.404  7.878 16.362 /
\plot  7.878 16.362  7.775 16.324 /
\plot  7.775 16.324  7.671 16.286 /
\plot  7.671 16.286  7.569 16.250 /
\plot  7.569 16.250  7.470 16.216 /
\plot  7.470 16.216  7.374 16.184 /
\plot  7.374 16.184  7.279 16.157 /
\plot  7.279 16.157  7.186 16.131 /
\plot  7.186 16.131  7.095 16.108 /
\plot  7.095 16.108  7.008 16.089 /
\plot  7.008 16.089  6.924 16.072 /
\plot  6.924 16.072  6.841 16.059 /
\plot  6.841 16.059  6.761 16.051 /
\plot  6.761 16.051  6.682 16.044 /
\plot  6.682 16.044  6.608 16.042 /
\plot  6.608 16.042  6.538 16.044 /
\plot  6.538 16.044  6.469 16.049 /
\plot  6.469 16.049  6.403 16.059 /
\plot  6.403 16.059  6.337 16.072 /
\plot  6.337 16.072  6.276 16.091 /
\plot  6.276 16.091  6.219 16.112 /
\plot  6.219 16.112  6.164 16.140 /
\plot  6.164 16.140  6.111 16.169 /
\plot  6.111 16.169  6.060 16.205 /
\plot  6.060 16.205  6.013 16.245 /
\plot  6.013 16.245  5.967 16.290 /
\plot  5.967 16.290  5.922 16.341 /
\plot  5.922 16.341  5.880 16.398 /
\plot  5.880 16.398  5.840 16.461 /
\plot  5.840 16.461  5.800 16.529 /
\plot  5.800 16.529  5.762 16.605 /
\plot  5.762 16.605  5.726 16.688 /
\plot  5.726 16.688  5.690 16.777 /
\plot  5.690 16.777  5.656 16.872 /
\plot  5.656 16.872  5.622 16.976 /
\plot  5.622 16.976  5.590 17.086 /
\plot  5.590 17.086  5.558 17.204 /
\plot  5.558 17.204  5.529 17.329 /
\plot  5.529 17.329  5.499 17.462 /
\plot  5.499 17.462  5.469 17.602 /
\plot  5.469 17.602  5.442 17.748 /
\plot  5.442 17.748  5.412 17.903 /
\plot  5.412 17.903  5.387 18.062 /
\plot  5.387 18.062  5.359 18.227 /
\plot  5.359 18.227  5.334 18.396 /
\plot  5.334 18.396  5.309 18.570 /
\plot  5.309 18.570  5.285 18.743 /
\plot  5.285 18.743  5.262 18.919 /
\plot  5.262 18.919  5.241 19.094 /
\plot  5.241 19.094  5.220 19.266 /
\plot  5.220 19.266  5.201 19.435 /
\plot  5.201 19.435  5.182 19.596 /
\plot  5.182 19.596  5.165 19.751 /
\plot  5.165 19.751  5.150 19.895 /
\plot  5.150 19.895  5.137 20.028 /
\plot  5.137 20.028  5.124 20.149 /
\plot  5.124 20.149  5.114 20.256 /
\plot  5.114 20.256  5.105 20.350 /
\plot  5.105 20.350  5.099 20.428 /
\plot  5.099 20.428  5.093 20.494 /
\plot  5.093 20.494  5.088 20.544 /
\plot  5.088 20.544  5.084 20.582 /
\plot  5.084 20.582  5.082 20.608 /
\plot  5.082 20.608  5.080 20.625 /
\putrule from  5.080 20.625 to  5.080 20.633
\putrule from  5.080 20.633 to  5.080 20.637
}%
%
% Fig POLYLINE object
%
\linethickness= 0.500pt
\setplotsymbol ({\thinlinefont .})
{\color[rgb]{0,0,0}\putrule from 15.240 17.780 to 15.240 17.784
\putrule from 15.240 17.784 to 15.240 17.795
\plot 15.240 17.795 15.238 17.814 /
\plot 15.238 17.814 15.236 17.844 /
\plot 15.236 17.844 15.234 17.884 /
\plot 15.234 17.884 15.229 17.937 /
\plot 15.229 17.937 15.225 18.002 /
\plot 15.225 18.002 15.219 18.078 /
\plot 15.219 18.078 15.215 18.165 /
\plot 15.215 18.165 15.208 18.260 /
\plot 15.208 18.260 15.200 18.362 /
\plot 15.200 18.362 15.193 18.468 /
\plot 15.193 18.468 15.185 18.576 /
\plot 15.185 18.576 15.179 18.684 /
\plot 15.179 18.684 15.170 18.790 /
\plot 15.170 18.790 15.164 18.895 /
\plot 15.164 18.895 15.157 18.995 /
\plot 15.157 18.995 15.149 19.092 /
\plot 15.149 19.092 15.143 19.185 /
\plot 15.143 19.185 15.136 19.274 /
\plot 15.136 19.274 15.130 19.357 /
\plot 15.130 19.357 15.126 19.435 /
\plot 15.126 19.435 15.119 19.509 /
\plot 15.119 19.509 15.115 19.581 /
\plot 15.115 19.581 15.109 19.649 /
\plot 15.109 19.649 15.105 19.713 /
\plot 15.105 19.713 15.100 19.774 /
\plot 15.100 19.774 15.094 19.833 /
\plot 15.094 19.833 15.090 19.890 /
\plot 15.090 19.890 15.085 19.947 /
\plot 15.085 19.947 15.081 20.003 /
\plot 15.081 20.003 15.075 20.068 /
\plot 15.075 20.068 15.071 20.134 /
\plot 15.071 20.134 15.064 20.199 /
\plot 15.064 20.199 15.058 20.265 /
\plot 15.058 20.265 15.052 20.331 /
\plot 15.052 20.331 15.045 20.398 /
\plot 15.045 20.398 15.039 20.466 /
\plot 15.039 20.466 15.033 20.538 /
\plot 15.033 20.538 15.026 20.612 /
\plot 15.026 20.612 15.018 20.690 /
\plot 15.018 20.690 15.009 20.771 /
\plot 15.009 20.771 15.001 20.853 /
\plot 15.001 20.853 14.992 20.940 /
\plot 14.992 20.940 14.982 21.027 /
\plot 14.982 21.027 14.973 21.114 /
\plot 14.973 21.114 14.965 21.201 /
\plot 14.965 21.201 14.956 21.281 /
\plot 14.956 21.281 14.948 21.355 /
\plot 14.948 21.355 14.942 21.421 /
\plot 14.942 21.421 14.935 21.476 /
\plot 14.935 21.476 14.931 21.520 /
\plot 14.931 21.520 14.927 21.552 /
\plot 14.927 21.552 14.925 21.573 /
\plot 14.925 21.573 14.922 21.586 /
\putrule from 14.922 21.586 to 14.922 21.590
}%
%
% Fig POLYLINE object
%
\linethickness= 0.500pt
\setplotsymbol ({\thinlinefont .})
{\color[rgb]{0,0,0}\putrule from  5.080 10.160 to  5.080 10.156
\plot  5.080 10.156  5.082 10.145 /
\plot  5.082 10.145  5.084 10.126 /
\plot  5.084 10.126  5.086 10.096 /
\plot  5.086 10.096  5.093 10.056 /
\plot  5.093 10.056  5.099 10.001 /
\plot  5.099 10.001  5.105  9.934 /
\plot  5.105  9.934  5.114  9.855 /
\plot  5.114  9.855  5.124  9.764 /
\plot  5.124  9.764  5.137  9.665 /
\plot  5.137  9.665  5.150  9.557 /
\plot  5.150  9.557  5.163  9.442 /
\plot  5.163  9.442  5.177  9.326 /
\plot  5.177  9.326  5.190  9.205 /
\plot  5.190  9.205  5.205  9.087 /
\plot  5.205  9.087  5.218  8.968 /
\plot  5.218  8.968  5.232  8.854 /
\plot  5.232  8.854  5.245  8.744 /
\plot  5.245  8.744  5.258  8.638 /
\plot  5.258  8.638  5.271  8.537 /
\plot  5.271  8.537  5.283  8.439 /
\plot  5.283  8.439  5.294  8.348 /
\plot  5.294  8.348  5.304  8.263 /
\plot  5.304  8.263  5.315  8.183 /
\plot  5.315  8.183  5.326  8.107 /
\plot  5.326  8.107  5.336  8.035 /
\plot  5.336  8.035  5.345  7.965 /
\plot  5.345  7.965  5.353  7.902 /
\plot  5.353  7.902  5.364  7.840 /
\plot  5.364  7.840  5.372  7.783 /
\plot  5.372  7.783  5.381  7.726 /
\plot  5.381  7.726  5.389  7.673 /
\plot  5.389  7.673  5.397  7.620 /
\plot  5.397  7.620  5.410  7.548 /
\plot  5.410  7.548  5.423  7.480 /
\plot  5.423  7.480  5.433  7.415 /
\plot  5.433  7.415  5.446  7.349 /
\plot  5.446  7.349  5.461  7.286 /
\plot  5.461  7.286  5.474  7.222 /
\plot  5.474  7.222  5.489  7.159 /
\plot  5.489  7.159  5.505  7.093 /
\plot  5.505  7.093  5.520  7.027 /
\plot  5.520  7.027  5.539  6.960 /
\plot  5.539  6.960  5.558  6.890 /
\plot  5.558  6.890  5.577  6.820 /
\plot  5.577  6.820  5.596  6.750 /
\plot  5.596  6.750  5.616  6.680 /
\plot  5.616  6.680  5.637  6.612 /
\plot  5.637  6.612  5.654  6.551 /
\plot  5.654  6.551  5.671  6.496 /
\plot  5.671  6.496  5.685  6.447 /
\plot  5.685  6.447  5.696  6.411 /
\plot  5.696  6.411  5.704  6.384 /
\plot  5.704  6.384  5.711  6.365 /
\plot  5.711  6.365  5.713  6.354 /
\plot  5.713  6.354  5.715  6.350 /
}%
%
% Fig POLYLINE object
%
\linethickness= 0.500pt
\setplotsymbol ({\thinlinefont .})
{\color[rgb]{0,0,0}\plot  8.255  6.350  8.259  6.352 /
\plot  8.259  6.352  8.270  6.354 /
\plot  8.270  6.354  8.289  6.361 /
\plot  8.289  6.361  8.316  6.369 /
\plot  8.316  6.369  8.357  6.380 /
\plot  8.357  6.380  8.407  6.394 /
\plot  8.407  6.394  8.467  6.411 /
\plot  8.467  6.411  8.537  6.433 /
\plot  8.537  6.433  8.613  6.456 /
\plot  8.613  6.456  8.695  6.479 /
\plot  8.695  6.479  8.780  6.505 /
\plot  8.780  6.505  8.867  6.528 /
\plot  8.867  6.528  8.951  6.553 /
\plot  8.951  6.553  9.036  6.576 /
\plot  9.036  6.576  9.116  6.600 /
\plot  9.116  6.600  9.195  6.623 /
\plot  9.195  6.623  9.269  6.642 /
\plot  9.269  6.642  9.339  6.663 /
\plot  9.339  6.663  9.406  6.680 /
\plot  9.406  6.680  9.470  6.699 /
\plot  9.470  6.699  9.533  6.714 /
\plot  9.533  6.714  9.591  6.731 /
\plot  9.591  6.731  9.650  6.746 /
\plot  9.650  6.746  9.705  6.761 /
\plot  9.705  6.761  9.760  6.773 /
\plot  9.760  6.773  9.815  6.786 /
\plot  9.815  6.786  9.870  6.799 /
\plot  9.870  6.799  9.923  6.814 /
\plot  9.923  6.814  9.978  6.826 /
\plot  9.978  6.826 10.035  6.839 /
\plot 10.035  6.839 10.092  6.852 /
\plot 10.092  6.852 10.149  6.862 /
\plot 10.149  6.862 10.209  6.875 /
\plot 10.209  6.875 10.268  6.888 /
\plot 10.268  6.888 10.329  6.900 /
\plot 10.329  6.900 10.393  6.911 /
\plot 10.393  6.911 10.456  6.924 /
\plot 10.456  6.924 10.520  6.934 /
\plot 10.520  6.934 10.585  6.947 /
\plot 10.585  6.947 10.651  6.957 /
\plot 10.651  6.957 10.717  6.966 /
\plot 10.717  6.966 10.780  6.977 /
\plot 10.780  6.977 10.846  6.985 /
\plot 10.846  6.985 10.911  6.993 /
\plot 10.911  6.993 10.975  7.002 /
\plot 10.975  7.002 11.038  7.008 /
\plot 11.038  7.008 11.102  7.015 /
\plot 11.102  7.015 11.163  7.021 /
\plot 11.163  7.021 11.223  7.025 /
\plot 11.223  7.025 11.282  7.029 /
\plot 11.282  7.029 11.339  7.032 /
\plot 11.339  7.032 11.396  7.034 /
\plot 11.396  7.034 11.453  7.036 /
\plot 11.453  7.036 11.508  7.038 /
\putrule from 11.508  7.038 to 11.563  7.038
\putrule from 11.563  7.038 to 11.620  7.038
\plot 11.620  7.038 11.680  7.036 /
\plot 11.680  7.036 11.739  7.034 /
\plot 11.739  7.034 11.798  7.032 /
\plot 11.798  7.032 11.860  7.027 /
\plot 11.860  7.027 11.921  7.023 /
\plot 11.921  7.023 11.982  7.017 /
\plot 11.982  7.017 12.046  7.010 /
\plot 12.046  7.010 12.107  7.004 /
\plot 12.107  7.004 12.171  6.996 /
\plot 12.171  6.996 12.234  6.987 /
\plot 12.234  6.987 12.298  6.977 /
\plot 12.298  6.977 12.361  6.966 /
\plot 12.361  6.966 12.423  6.955 /
\plot 12.423  6.955 12.484  6.945 /
\plot 12.484  6.945 12.543  6.932 /
\plot 12.543  6.932 12.601  6.919 /
\plot 12.601  6.919 12.658  6.907 /
\plot 12.658  6.907 12.713  6.894 /
\plot 12.713  6.894 12.766  6.881 /
\plot 12.766  6.881 12.816  6.869 /
\plot 12.816  6.869 12.865  6.854 /
\plot 12.865  6.854 12.912  6.841 /
\plot 12.912  6.841 12.958  6.828 /
\plot 12.958  6.828 13.001  6.814 /
\plot 13.001  6.814 13.045  6.799 /
\plot 13.045  6.799 13.094  6.784 /
\plot 13.094  6.784 13.140  6.765 /
\plot 13.140  6.765 13.187  6.748 /
\plot 13.187  6.748 13.236  6.729 /
\plot 13.236  6.729 13.282  6.710 /
\plot 13.282  6.710 13.331  6.687 /
\plot 13.331  6.687 13.379  6.665 /
\plot 13.379  6.665 13.432  6.640 /
\plot 13.432  6.640 13.487  6.612 /
\plot 13.487  6.612 13.542  6.583 /
\plot 13.542  6.583 13.602  6.553 /
\plot 13.602  6.553 13.661  6.521 /
\plot 13.661  6.521 13.718  6.490 /
\plot 13.718  6.490 13.775  6.460 /
\plot 13.775  6.460 13.826  6.430 /
\plot 13.826  6.430 13.873  6.405 /
\plot 13.873  6.405 13.909  6.384 /
\plot 13.909  6.384 13.936  6.369 /
\plot 13.936  6.369 13.955  6.358 /
\plot 13.955  6.358 13.966  6.352 /
\plot 13.966  6.352 13.970  6.350 /
}%
%
% Fig POLYLINE object
%
\linethickness= 0.500pt
\setplotsymbol ({\thinlinefont .})
{\color[rgb]{0,0,0}\plot  9.525  6.350  9.531  6.352 /
\plot  9.531  6.352  9.544  6.354 /
\plot  9.544  6.354  9.565  6.361 /
\plot  9.565  6.361  9.601  6.367 /
\plot  9.601  6.367  9.648  6.378 /
\plot  9.648  6.378  9.705  6.392 /
\plot  9.705  6.392  9.773  6.407 /
\plot  9.773  6.407  9.847  6.422 /
\plot  9.847  6.422  9.925  6.441 /
\plot  9.925  6.441 10.008  6.458 /
\plot 10.008  6.458 10.088  6.475 /
\plot 10.088  6.475 10.166  6.492 /
\plot 10.166  6.492 10.243  6.507 /
\plot 10.243  6.507 10.317  6.519 /
\plot 10.317  6.519 10.384  6.532 /
\plot 10.384  6.532 10.450  6.543 /
\plot 10.450  6.543 10.513  6.553 /
\plot 10.513  6.553 10.573  6.562 /
\plot 10.573  6.562 10.630  6.568 /
\plot 10.630  6.568 10.685  6.574 /
\plot 10.685  6.574 10.740  6.581 /
\plot 10.740  6.581 10.793  6.585 /
\plot 10.793  6.585 10.848  6.587 /
\plot 10.848  6.587 10.903  6.591 /
\plot 10.903  6.591 10.958  6.593 /
\putrule from 10.958  6.593 to 11.015  6.593
\plot 11.015  6.593 11.072  6.596 /
\putrule from 11.072  6.596 to 11.134  6.596
\plot 11.134  6.596 11.193  6.593 /
\putrule from 11.193  6.593 to 11.254  6.593
\plot 11.254  6.593 11.318  6.591 /
\plot 11.318  6.591 11.381  6.587 /
\plot 11.381  6.587 11.445  6.585 /
\plot 11.445  6.585 11.510  6.581 /
\plot 11.510  6.581 11.574  6.576 /
\plot 11.574  6.576 11.637  6.572 /
\plot 11.637  6.572 11.701  6.568 /
\plot 11.701  6.568 11.762  6.562 /
\plot 11.762  6.562 11.822  6.557 /
\plot 11.822  6.557 11.879  6.551 /
\plot 11.879  6.551 11.934  6.545 /
\plot 11.934  6.545 11.989  6.538 /
\plot 11.989  6.538 12.040  6.532 /
\plot 12.040  6.532 12.088  6.528 /
\plot 12.088  6.528 12.135  6.521 /
\plot 12.135  6.521 12.181  6.515 /
\plot 12.181  6.515 12.224  6.509 /
\plot 12.224  6.509 12.279  6.500 /
\plot 12.279  6.500 12.332  6.492 /
\plot 12.332  6.492 12.385  6.483 /
\plot 12.385  6.483 12.435  6.475 /
\plot 12.435  6.475 12.488  6.466 /
\plot 12.488  6.466 12.541  6.456 /
\plot 12.541  6.456 12.596  6.443 /
\plot 12.596  6.443 12.653  6.433 /
\plot 12.653  6.433 12.713  6.420 /
\plot 12.713  6.420 12.774  6.405 /
\plot 12.774  6.405 12.831  6.392 /
\plot 12.831  6.392 12.886  6.380 /
\plot 12.886  6.380 12.933  6.369 /
\plot 12.933  6.369 12.971  6.361 /
\plot 12.971  6.361 12.996  6.354 /
\plot 12.996  6.354 13.011  6.352 /
\plot 13.011  6.352 13.018  6.350 /
}%
%
% Fig POLYLINE object
%
\linethickness=0pt
\setplotsymbol ({\thinlinefont \ })
{\color[rgb]{0,0,0}\plot  5.397 17.939  5.408 17.867 /
\plot  5.408 17.867  5.421 17.795 /
\plot  5.421 17.795  5.431 17.725 /
\plot  5.431 17.725  5.444 17.653 /
\plot  5.444 17.653  5.455 17.583 /
\plot  5.455 17.583  5.467 17.513 /
\plot  5.467 17.513  5.480 17.443 /
\plot  5.480 17.443  5.493 17.374 /
\plot  5.493 17.374  5.508 17.306 /
\plot  5.508 17.306  5.520 17.238 /
\plot  5.520 17.238  5.535 17.170 /
\plot  5.535 17.170  5.550 17.105 /
\plot  5.550 17.105  5.563 17.041 /
\plot  5.563 17.041  5.580 16.980 /
\plot  5.580 16.980  5.594 16.919 /
\plot  5.594 16.919  5.609 16.861 /
\plot  5.609 16.861  5.624 16.806 /
\plot  5.624 16.806  5.641 16.753 /
\plot  5.641 16.753  5.656 16.705 /
\plot  5.656 16.705  5.671 16.656 /
\plot  5.671 16.656  5.687 16.612 /
\plot  5.687 16.612  5.702 16.571 /
\plot  5.702 16.571  5.719 16.533 /
\plot  5.719 16.533  5.734 16.497 /
\plot  5.734 16.497  5.751 16.463 /
\plot  5.751 16.463  5.768 16.430 /
\plot  5.768 16.430  5.791 16.391 /
\plot  5.791 16.391  5.812 16.355 /
\plot  5.812 16.355  5.838 16.324 /
\plot  5.838 16.324  5.863 16.292 /
\plot  5.863 16.292  5.891 16.262 /
\plot  5.891 16.262  5.920 16.237 /
\plot  5.920 16.237  5.950 16.212 /
\plot  5.950 16.212  5.982 16.188 /
\plot  5.982 16.188  6.016 16.169 /
\plot  6.016 16.169  6.052 16.150 /
\plot  6.052 16.150  6.088 16.133 /
\plot  6.088 16.133  6.124 16.118 /
\plot  6.124 16.118  6.159 16.108 /
\plot  6.159 16.108  6.198 16.097 /
\plot  6.198 16.097  6.236 16.087 /
\plot  6.236 16.087  6.274 16.080 /
\plot  6.274 16.080  6.312 16.074 /
\plot  6.312 16.074  6.350 16.068 /
\plot  6.350 16.068  6.390 16.063 /
\plot  6.390 16.063  6.430 16.059 /
\plot  6.430 16.059  6.462 16.057 /
\plot  6.462 16.057  6.498 16.055 /
\plot  6.498 16.055  6.534 16.053 /
\plot  6.534 16.053  6.572 16.051 /
\plot  6.572 16.051  6.612 16.049 /
\putrule from  6.612 16.049 to  6.655 16.049
\putrule from  6.655 16.049 to  6.699 16.049
\plot  6.699 16.049  6.746 16.051 /
\putrule from  6.746 16.051 to  6.794 16.051
\plot  6.794 16.051  6.845 16.055 /
\plot  6.845 16.055  6.898 16.059 /
\plot  6.898 16.059  6.953 16.063 /
\plot  6.953 16.063  7.010 16.070 /
\plot  7.010 16.070  7.070 16.076 /
\plot  7.070 16.076  7.129 16.085 /
\plot  7.129 16.085  7.192 16.095 /
\plot  7.192 16.095  7.254 16.106 /
\plot  7.254 16.106  7.317 16.118 /
\plot  7.317 16.118  7.383 16.131 /
\plot  7.383 16.131  7.449 16.146 /
\plot  7.449 16.146  7.516 16.163 /
\plot  7.516 16.163  7.584 16.180 /
\plot  7.584 16.180  7.654 16.199 /
\plot  7.654 16.199  7.726 16.218 /
\plot  7.726 16.218  7.781 16.235 /
\plot  7.781 16.235  7.838 16.252 /
\plot  7.838 16.252  7.897 16.271 /
\plot  7.897 16.271  7.957 16.290 /
\plot  7.957 16.290  8.020 16.311 /
\plot  8.020 16.311  8.084 16.332 /
\plot  8.084 16.332  8.149 16.355 /
\plot  8.149 16.355  8.217 16.379 /
\plot  8.217 16.379  8.287 16.404 /
\plot  8.287 16.404  8.359 16.430 /
\plot  8.359 16.430  8.431 16.455 /
\plot  8.431 16.455  8.505 16.480 /
\plot  8.505 16.480  8.579 16.508 /
\plot  8.579 16.508  8.655 16.538 /
\plot  8.655 16.538  8.731 16.565 /
\plot  8.731 16.565  8.807 16.593 /
\plot  8.807 16.593  8.884 16.622 /
\plot  8.884 16.622  8.960 16.650 /
\plot  8.960 16.650  9.034 16.677 /
\plot  9.034 16.677  9.108 16.707 /
\plot  9.108 16.707  9.182 16.734 /
\plot  9.182 16.734  9.254 16.760 /
\plot  9.254 16.760  9.324 16.787 /
\plot  9.324 16.787  9.392 16.813 /
\plot  9.392 16.813  9.459 16.838 /
\plot  9.459 16.838  9.523 16.861 /
\plot  9.523 16.861  9.586 16.885 /
\plot  9.586 16.885  9.646 16.906 /
\plot  9.646 16.906  9.705 16.927 /
\plot  9.705 16.927  9.762 16.948 /
\plot  9.762 16.948  9.815 16.967 /
\plot  9.815 16.967  9.870 16.986 /
\plot  9.870 16.986  9.938 17.010 /
\plot  9.938 17.010 10.003 17.033 /
\plot 10.003 17.033 10.069 17.054 /
\plot 10.069 17.054 10.132 17.073 /
\plot 10.132 17.073 10.194 17.092 /
\plot 10.194 17.092 10.255 17.109 /
\plot 10.255 17.109 10.317 17.126 /
\plot 10.317 17.126 10.376 17.141 /
\plot 10.376 17.141 10.435 17.156 /
\plot 10.435 17.156 10.494 17.168 /
\plot 10.494 17.168 10.552 17.181 /
\plot 10.552 17.181 10.607 17.192 /
\plot 10.607 17.192 10.662 17.200 /
\plot 10.662 17.200 10.715 17.209 /
\plot 10.715 17.209 10.767 17.215 /
\plot 10.767 17.215 10.818 17.221 /
\plot 10.818 17.221 10.869 17.225 /
\plot 10.869 17.225 10.918 17.228 /
\plot 10.918 17.228 10.964 17.230 /
\putrule from 10.964 17.230 to 11.011 17.230
\putrule from 11.011 17.230 to 11.057 17.230
\putrule from 11.057 17.230 to 11.102 17.230
\plot 11.102 17.230 11.146 17.228 /
\plot 11.146 17.228 11.193 17.223 /
\plot 11.193 17.223 11.235 17.221 /
\plot 11.235 17.221 11.278 17.217 /
\plot 11.278 17.217 11.322 17.213 /
\plot 11.322 17.213 11.369 17.209 /
\plot 11.369 17.209 11.415 17.202 /
\plot 11.415 17.202 11.464 17.194 /
\plot 11.464 17.194 11.515 17.185 /
\plot 11.515 17.185 11.568 17.177 /
\plot 11.568 17.177 11.620 17.166 /
\plot 11.620 17.166 11.676 17.156 /
\plot 11.676 17.156 11.733 17.143 /
\plot 11.733 17.143 11.790 17.130 /
\plot 11.790 17.130 11.849 17.115 /
\plot 11.849 17.115 11.908 17.101 /
\plot 11.908 17.101 11.968 17.084 /
\plot 11.968 17.084 12.029 17.067 /
\plot 12.029 17.067 12.088 17.050 /
\plot 12.088 17.050 12.150 17.031 /
\plot 12.150 17.031 12.209 17.010 /
\plot 12.209 17.010 12.268 16.990 /
\plot 12.268 16.990 12.327 16.969 /
\plot 12.327 16.969 12.387 16.948 /
\plot 12.387 16.948 12.446 16.925 /
\plot 12.446 16.925 12.503 16.902 /
\plot 12.503 16.902 12.562 16.878 /
\plot 12.562 16.878 12.622 16.853 /
\plot 12.622 16.853 12.675 16.830 /
\plot 12.675 16.830 12.732 16.804 /
\plot 12.732 16.804 12.789 16.779 /
\plot 12.789 16.779 12.846 16.751 /
\plot 12.846 16.751 12.905 16.724 /
\plot 12.905 16.724 12.965 16.694 /
\plot 12.965 16.694 13.026 16.662 /
\plot 13.026 16.662 13.089 16.631 /
\plot 13.089 16.631 13.153 16.599 /
\plot 13.153 16.599 13.216 16.565 /
\plot 13.216 16.565 13.282 16.529 /
\plot 13.282 16.529 13.348 16.493 /
\plot 13.348 16.493 13.413 16.457 /
\plot 13.413 16.457 13.479 16.419 /
\plot 13.479 16.419 13.545 16.383 /
\plot 13.545 16.383 13.608 16.345 /
\plot 13.608 16.345 13.672 16.307 /
\plot 13.672 16.307 13.735 16.271 /
\plot 13.735 16.271 13.794 16.233 /
\plot 13.794 16.233 13.854 16.197 /
\plot 13.854 16.197 13.911 16.161 /
\plot 13.911 16.161 13.968 16.125 /
\plot 13.968 16.125 14.021 16.091 /
\plot 14.021 16.091 14.072 16.057 /
\plot 14.072 16.057 14.122 16.023 /
\plot 14.122 16.023 14.169 15.991 /
\plot 14.169 15.991 14.216 15.960 /
\plot 14.216 15.960 14.262 15.928 /
\plot 14.262 15.928 14.302 15.898 /
\plot 14.302 15.898 14.343 15.871 /
\plot 14.343 15.871 14.381 15.841 /
\plot 14.381 15.841 14.419 15.811 /
\plot 14.419 15.811 14.457 15.780 /
\plot 14.457 15.780 14.493 15.748 /
\plot 14.493 15.748 14.531 15.714 /
\plot 14.531 15.714 14.567 15.678 /
\plot 14.567 15.678 14.601 15.642 /
\plot 14.601 15.642 14.637 15.602 /
\plot 14.637 15.602 14.671 15.562 /
\plot 14.671 15.562 14.702 15.517 /
\plot 14.702 15.517 14.734 15.473 /
\plot 14.734 15.473 14.766 15.424 /
\plot 14.766 15.424 14.796 15.371 /
\plot 14.796 15.371 14.823 15.318 /
\plot 14.823 15.318 14.851 15.261 /
\plot 14.851 15.261 14.878 15.202 /
\plot 14.878 15.202 14.901 15.138 /
\plot 14.901 15.138 14.925 15.075 /
\plot 14.925 15.075 14.946 15.007 /
\plot 14.946 15.007 14.967 14.935 /
\plot 14.967 14.935 14.986 14.863 /
\plot 14.986 14.863 15.003 14.787 /
\plot 15.003 14.787 15.020 14.707 /
\plot 15.020 14.707 15.035 14.626 /
\plot 15.035 14.626 15.047 14.539 /
\plot 15.047 14.539 15.060 14.450 /
\plot 15.060 14.450 15.071 14.357 /
\plot 15.071 14.357 15.081 14.260 /
\plot 15.081 14.260 15.088 14.186 /
\plot 15.088 14.186 15.094 14.105 /
\plot 15.094 14.105 15.100 14.025 /
\plot 15.100 14.025 15.107 13.940 /
\plot 15.107 13.940 15.111 13.854 /
\plot 15.111 13.854 15.115 13.763 /
\plot 15.115 13.763 15.119 13.672 /
\plot 15.119 13.672 15.124 13.576 /
\plot 15.124 13.576 15.128 13.477 /
\plot 15.128 13.477 15.130 13.377 /
\plot 15.130 13.377 15.132 13.276 /
\plot 15.132 13.276 15.134 13.172 /
\plot 15.134 13.172 15.136 13.066 /
\plot 15.136 13.066 15.138 12.960 /
\putrule from 15.138 12.960 to 15.138 12.852
\plot 15.138 12.852 15.141 12.744 /
\putrule from 15.141 12.744 to 15.141 12.636
\putrule from 15.141 12.636 to 15.141 12.529
\plot 15.141 12.529 15.138 12.423 /
\putrule from 15.138 12.423 to 15.138 12.315
\putrule from 15.138 12.315 to 15.138 12.211
\plot 15.138 12.211 15.136 12.107 /
\plot 15.136 12.107 15.134 12.006 /
\plot 15.134 12.006 15.132 11.906 /
\plot 15.132 11.906 15.130 11.811 /
\plot 15.130 11.811 15.128 11.716 /
\plot 15.128 11.716 15.126 11.627 /
\plot 15.126 11.627 15.121 11.540 /
\plot 15.121 11.540 15.119 11.458 /
\plot 15.119 11.458 15.117 11.379 /
\plot 15.117 11.379 15.113 11.305 /
\plot 15.113 11.305 15.109 11.235 /
\plot 15.109 11.235 15.107 11.172 /
\plot 15.107 11.172 15.102 11.110 /
\plot 15.102 11.110 15.100 11.055 /
\plot 15.100 11.055 15.096 11.005 /
\plot 15.096 11.005 15.092 10.958 /
\plot 15.092 10.958 15.088 10.916 /
\plot 15.088 10.916 15.085 10.880 /
\plot 15.085 10.880 15.081 10.848 /
\plot 15.081 10.848 15.075 10.810 /
\plot 15.075 10.810 15.071 10.780 /
\plot 15.071 10.780 15.064 10.763 /
\plot 15.064 10.763 15.058 10.759 /
\plot 15.058 10.759 15.052 10.763 /
\plot 15.052 10.763 15.045 10.778 /
\plot 15.045 10.778 15.039 10.801 /
\plot 15.039 10.801 15.033 10.837 /
\plot 15.033 10.837 15.026 10.880 /
\plot 15.026 10.880 15.018 10.933 /
\plot 15.018 10.933 15.011 10.994 /
\plot 15.011 10.994 15.003 11.064 /
\plot 15.003 11.064 14.994 11.140 /
\plot 14.994 11.140 14.986 11.223 /
\plot 14.986 11.223 14.978 11.311 /
\plot 14.978 11.311 14.969 11.405 /
\plot 14.969 11.405 14.961 11.502 /
\plot 14.961 11.502 14.950 11.599 /
\plot 14.950 11.599 14.942 11.701 /
\plot 14.942 11.701 14.931 11.803 /
\plot 14.931 11.803 14.922 11.906 /
\plot 14.922 11.906 14.912 12.008 /
\plot 14.912 12.008 14.901 12.109 /
\plot 14.901 12.109 14.891 12.211 /
\plot 14.891 12.211 14.880 12.311 /
\plot 14.880 12.311 14.870 12.408 /
\plot 14.870 12.408 14.861 12.488 /
\plot 14.861 12.488 14.851 12.567 /
\plot 14.851 12.567 14.840 12.645 /
\plot 14.840 12.645 14.829 12.725 /
\plot 14.829 12.725 14.819 12.804 /
\plot 14.819 12.804 14.808 12.884 /
\plot 14.808 12.884 14.796 12.962 /
\plot 14.796 12.962 14.783 13.043 /
\plot 14.783 13.043 14.770 13.123 /
\plot 14.770 13.123 14.757 13.204 /
\plot 14.757 13.204 14.743 13.284 /
\plot 14.743 13.284 14.728 13.367 /
\plot 14.728 13.367 14.713 13.447 /
\plot 14.713 13.447 14.696 13.526 /
\plot 14.696 13.526 14.679 13.606 /
\plot 14.679 13.606 14.662 13.684 /
\plot 14.662 13.684 14.645 13.763 /
\plot 14.645 13.763 14.626 13.839 /
\plot 14.626 13.839 14.607 13.913 /
\plot 14.607 13.913 14.590 13.985 /
\plot 14.590 13.985 14.569 14.057 /
\plot 14.569 14.057 14.550 14.127 /
\plot 14.550 14.127 14.531 14.194 /
\plot 14.531 14.194 14.510 14.260 /
\plot 14.510 14.260 14.491 14.321 /
\plot 14.491 14.321 14.470 14.383 /
\plot 14.470 14.383 14.448 14.442 /
\plot 14.448 14.442 14.427 14.499 /
\plot 14.427 14.499 14.406 14.554 /
\plot 14.406 14.554 14.385 14.609 /
\plot 14.385 14.609 14.364 14.660 /
\plot 14.364 14.660 14.340 14.711 /
\plot 14.340 14.711 14.313 14.772 /
\plot 14.313 14.772 14.283 14.831 /
\plot 14.283 14.831 14.252 14.891 /
\plot 14.252 14.891 14.220 14.948 /
\plot 14.220 14.948 14.186 15.005 /
\plot 14.186 15.005 14.152 15.062 /
\plot 14.152 15.062 14.116 15.117 /
\plot 14.116 15.117 14.080 15.174 /
\plot 14.080 15.174 14.040 15.229 /
\plot 14.040 15.229 14.002 15.282 /
\plot 14.002 15.282 13.959 15.337 /
\plot 13.959 15.337 13.919 15.390 /
\plot 13.919 15.390 13.877 15.441 /
\plot 13.877 15.441 13.832 15.492 /
\plot 13.832 15.492 13.790 15.541 /
\plot 13.790 15.541 13.748 15.589 /
\plot 13.748 15.589 13.703 15.634 /
\plot 13.703 15.634 13.661 15.678 /
\plot 13.661 15.678 13.619 15.723 /
\plot 13.619 15.723 13.576 15.763 /
\plot 13.576 15.763 13.534 15.803 /
\plot 13.534 15.803 13.494 15.841 /
\plot 13.494 15.841 13.454 15.877 /
\plot 13.454 15.877 13.413 15.913 /
\plot 13.413 15.913 13.375 15.947 /
\plot 13.375 15.947 13.335 15.981 /
\plot 13.335 15.981 13.293 16.017 /
\plot 13.293 16.017 13.250 16.053 /
\plot 13.250 16.053 13.206 16.089 /
\plot 13.206 16.089 13.161 16.125 /
\plot 13.161 16.125 13.117 16.161 /
\plot 13.117 16.161 13.070 16.197 /
\plot 13.070 16.197 13.022 16.233 /
\plot 13.022 16.233 12.975 16.269 /
\plot 12.975 16.269 12.924 16.305 /
\plot 12.924 16.305 12.874 16.339 /
\plot 12.874 16.339 12.823 16.375 /
\plot 12.823 16.375 12.772 16.408 /
\plot 12.772 16.408 12.721 16.442 /
\plot 12.721 16.442 12.668 16.474 /
\plot 12.668 16.474 12.617 16.506 /
\plot 12.617 16.506 12.565 16.535 /
\plot 12.565 16.535 12.514 16.563 /
\plot 12.514 16.563 12.463 16.590 /
\plot 12.463 16.590 12.414 16.616 /
\plot 12.414 16.616 12.366 16.639 /
\plot 12.366 16.639 12.315 16.662 /
\plot 12.315 16.662 12.268 16.684 /
\plot 12.268 16.684 12.220 16.703 /
\plot 12.220 16.703 12.171 16.722 /
\plot 12.171 16.722 12.122 16.739 /
\plot 12.122 16.739 12.071 16.756 /
\plot 12.071 16.756 12.021 16.772 /
\plot 12.021 16.772 11.970 16.787 /
\plot 11.970 16.787 11.915 16.802 /
\plot 11.915 16.802 11.860 16.815 /
\plot 11.860 16.815 11.805 16.828 /
\plot 11.805 16.828 11.745 16.840 /
\plot 11.745 16.840 11.686 16.851 /
\plot 11.686 16.851 11.625 16.861 /
\plot 11.625 16.861 11.563 16.870 /
\plot 11.563 16.870 11.502 16.878 /
\plot 11.502 16.878 11.441 16.885 /
\plot 11.441 16.885 11.377 16.889 /
\plot 11.377 16.889 11.316 16.893 /
\plot 11.316 16.893 11.254 16.897 /
\putrule from 11.254 16.897 to 11.193 16.897
\plot 11.193 16.897 11.134 16.899 /
\putrule from 11.134 16.899 to 11.074 16.899
\plot 11.074 16.899 11.017 16.897 /
\plot 11.017 16.897 10.960 16.895 /
\plot 10.960 16.895 10.905 16.891 /
\plot 10.905 16.891 10.850 16.887 /
\plot 10.850 16.887 10.795 16.880 /
\plot 10.795 16.880 10.744 16.874 /
\plot 10.744 16.874 10.693 16.868 /
\plot 10.693 16.868 10.643 16.859 /
\plot 10.643 16.859 10.590 16.851 /
\plot 10.590 16.851 10.537 16.840 /
\plot 10.537 16.840 10.484 16.830 /
\plot 10.484 16.830 10.429 16.819 /
\plot 10.429 16.819 10.372 16.806 /
\plot 10.372 16.806 10.315 16.794 /
\plot 10.315 16.794 10.257 16.779 /
\plot 10.257 16.779 10.198 16.762 /
\plot 10.198 16.762 10.139 16.747 /
\plot 10.139 16.747 10.080 16.730 /
\plot 10.080 16.730 10.022 16.713 /
\plot 10.022 16.713  9.963 16.694 /
\plot  9.963 16.694  9.904 16.675 /
\plot  9.904 16.675  9.847 16.658 /
\plot  9.847 16.658  9.790 16.639 /
\plot  9.790 16.639  9.732 16.618 /
\plot  9.732 16.618  9.677 16.599 /
\plot  9.677 16.599  9.624 16.580 /
\plot  9.624 16.580  9.572 16.561 /
\plot  9.572 16.561  9.519 16.542 /
\plot  9.519 16.542  9.468 16.523 /
\plot  9.468 16.523  9.417 16.504 /
\plot  9.417 16.504  9.366 16.482 /
\plot  9.366 16.482  9.315 16.463 /
\plot  9.315 16.463  9.265 16.444 /
\plot  9.265 16.444  9.214 16.423 /
\plot  9.214 16.423  9.161 16.402 /
\plot  9.161 16.402  9.108 16.381 /
\plot  9.108 16.381  9.053 16.358 /
\plot  9.053 16.358  8.998 16.334 /
\plot  8.998 16.334  8.943 16.309 /
\plot  8.943 16.309  8.886 16.284 /
\plot  8.886 16.284  8.827 16.258 /
\plot  8.827 16.258  8.767 16.231 /
\plot  8.767 16.231  8.708 16.203 /
\plot  8.708 16.203  8.649 16.173 /
\plot  8.649 16.173  8.587 16.146 /
\plot  8.587 16.146  8.528 16.116 /
\plot  8.528 16.116  8.469 16.085 /
\plot  8.469 16.085  8.410 16.055 /
\plot  8.410 16.055  8.350 16.023 /
\plot  8.350 16.023  8.293 15.994 /
\plot  8.293 15.994  8.236 15.962 /
\plot  8.236 15.962  8.181 15.930 /
\plot  8.181 15.930  8.126 15.898 /
\plot  8.126 15.898  8.071 15.867 /
\plot  8.071 15.867  8.018 15.835 /
\plot  8.018 15.835  7.965 15.803 /
\plot  7.965 15.803  7.912 15.769 /
\plot  7.912 15.769  7.861 15.737 /
\plot  7.861 15.737  7.813 15.706 /
\plot  7.813 15.706  7.762 15.672 /
\plot  7.762 15.672  7.709 15.638 /
\plot  7.709 15.638  7.658 15.602 /
\plot  7.658 15.602  7.605 15.564 /
\plot  7.605 15.564  7.550 15.526 /
\plot  7.550 15.526  7.495 15.486 /
\plot  7.495 15.486  7.440 15.445 /
\plot  7.440 15.445  7.383 15.405 /
\plot  7.383 15.405  7.326 15.361 /
\plot  7.326 15.361  7.269 15.318 /
\plot  7.269 15.318  7.211 15.274 /
\plot  7.211 15.274  7.154 15.229 /
\plot  7.154 15.229  7.097 15.185 /
\plot  7.097 15.185  7.042 15.138 /
\plot  7.042 15.138  6.985 15.094 /
\plot  6.985 15.094  6.932 15.050 /
\plot  6.932 15.050  6.877 15.005 /
\plot  6.877 15.005  6.826 14.961 /
\plot  6.826 14.961  6.775 14.916 /
\plot  6.775 14.916  6.725 14.874 /
\plot  6.725 14.874  6.676 14.831 /
\plot  6.676 14.831  6.629 14.791 /
\plot  6.629 14.791  6.585 14.751 /
\plot  6.585 14.751  6.541 14.711 /
\plot  6.541 14.711  6.498 14.671 /
\plot  6.498 14.671  6.456 14.630 /
\plot  6.456 14.630  6.411 14.590 /
\plot  6.411 14.590  6.367 14.548 /
\plot  6.367 14.548  6.325 14.506 /
\plot  6.325 14.506  6.280 14.463 /
\plot  6.280 14.463  6.238 14.419 /
\plot  6.238 14.419  6.195 14.376 /
\plot  6.195 14.376  6.151 14.332 /
\plot  6.151 14.332  6.109 14.287 /
\plot  6.109 14.287  6.066 14.243 /
\plot  6.066 14.243  6.026 14.196 /
\plot  6.026 14.196  5.984 14.152 /
\plot  5.984 14.152  5.944 14.105 /
\plot  5.944 14.105  5.903 14.061 /
\plot  5.903 14.061  5.863 14.014 /
\plot  5.863 14.014  5.827 13.972 /
\plot  5.827 13.972  5.789 13.928 /
\plot  5.789 13.928  5.755 13.885 /
\plot  5.755 13.885  5.721 13.843 /
\plot  5.721 13.843  5.690 13.803 /
\plot  5.690 13.803  5.658 13.763 /
\plot  5.658 13.763  5.628 13.724 /
\plot  5.628 13.724  5.601 13.686 /
\plot  5.601 13.686  5.575 13.650 /
\plot  5.575 13.650  5.550 13.614 /
\plot  5.550 13.614  5.527 13.581 /
\plot  5.527 13.581  5.503 13.547 /
\plot  5.503 13.547  5.491 13.528 /
\plot  5.491 13.528  5.478 13.509 /
\plot  5.478 13.509  5.467 13.490 /
\plot  5.467 13.490  5.455 13.473 /
\plot  5.455 13.473  5.442 13.454 /
\plot  5.442 13.454  5.431 13.437 /
\plot  5.431 13.437  5.421 13.420 /
\plot  5.421 13.420  5.408 13.403 /
\plot  5.408 13.403  5.397 13.388 /
\plot  5.397 13.388  5.387 13.373 /
\plot  5.387 13.373  5.376 13.358 /
\plot  5.376 13.358  5.366 13.346 /
\plot  5.366 13.346  5.355 13.333 /
\plot  5.355 13.333  5.345 13.322 /
\plot  5.345 13.322  5.334 13.314 /
\plot  5.334 13.314  5.326 13.305 /
\plot  5.326 13.305  5.315 13.299 /
\plot  5.315 13.299  5.306 13.293 /
\plot  5.306 13.293  5.296 13.291 /
\plot  5.296 13.291  5.287 13.288 /
\putrule from  5.287 13.288 to  5.279 13.288
\plot  5.279 13.288  5.268 13.291 /
\plot  5.268 13.291  5.260 13.295 /
\plot  5.260 13.295  5.251 13.301 /
\plot  5.251 13.301  5.245 13.310 /
\plot  5.245 13.310  5.237 13.320 /
\plot  5.237 13.320  5.228 13.335 /
\plot  5.228 13.335  5.222 13.350 /
\plot  5.222 13.350  5.215 13.367 /
\plot  5.215 13.367  5.209 13.388 /
\plot  5.209 13.388  5.203 13.411 /
\plot  5.203 13.411  5.196 13.437 /
\plot  5.196 13.437  5.190 13.464 /
\plot  5.190 13.464  5.184 13.496 /
\plot  5.184 13.496  5.177 13.528 /
\plot  5.177 13.528  5.173 13.564 /
\plot  5.173 13.564  5.169 13.604 /
\plot  5.169 13.604  5.163 13.644 /
\plot  5.163 13.644  5.158 13.688 /
\plot  5.158 13.688  5.154 13.735 /
\plot  5.154 13.735  5.150 13.784 /
\plot  5.150 13.784  5.148 13.837 /
\plot  5.148 13.837  5.143 13.892 /
\plot  5.143 13.892  5.139 13.949 /
\plot  5.139 13.949  5.137 14.010 /
\plot  5.137 14.010  5.133 14.076 /
\plot  5.133 14.076  5.131 14.144 /
\plot  5.131 14.144  5.127 14.216 /
\plot  5.127 14.216  5.124 14.292 /
\plot  5.124 14.292  5.120 14.370 /
\plot  5.120 14.370  5.118 14.453 /
\plot  5.118 14.453  5.116 14.537 /
\plot  5.116 14.537  5.114 14.628 /
\plot  5.114 14.628  5.112 14.721 /
\plot  5.112 14.721  5.110 14.817 /
\plot  5.110 14.817  5.108 14.918 /
\plot  5.108 14.918  5.105 15.022 /
\plot  5.105 15.022  5.103 15.128 /
\plot  5.103 15.128  5.101 15.238 /
\plot  5.101 15.238  5.099 15.350 /
\plot  5.099 15.350  5.097 15.464 /
\putrule from  5.097 15.464 to  5.097 15.583
\plot  5.097 15.583  5.095 15.701 /
\plot  5.095 15.701  5.093 15.824 /
\putrule from  5.093 15.824 to  5.093 15.947
\plot  5.093 15.947  5.091 16.072 /
\plot  5.091 16.072  5.088 16.199 /
\putrule from  5.088 16.199 to  5.088 16.326
\plot  5.088 16.326  5.086 16.453 /
\putrule from  5.086 16.453 to  5.086 16.580
\putrule from  5.086 16.580 to  5.086 16.709
\plot  5.086 16.709  5.084 16.836 /
\putrule from  5.084 16.836 to  5.084 16.961
\putrule from  5.084 16.961 to  5.084 17.088
\plot  5.084 17.088  5.082 17.213 /
\putrule from  5.082 17.213 to  5.082 17.335
\putrule from  5.082 17.335 to  5.082 17.456
\putrule from  5.082 17.456 to  5.082 17.575
\putrule from  5.082 17.575 to  5.082 17.691
\plot  5.082 17.691  5.080 17.805 /
\putrule from  5.080 17.805 to  5.080 17.918
\putrule from  5.080 17.918 to  5.080 18.026
\putrule from  5.080 18.026 to  5.080 18.131
\putrule from  5.080 18.131 to  5.080 18.235
\putrule from  5.080 18.235 to  5.080 18.335
\putrule from  5.080 18.335 to  5.080 18.432
\putrule from  5.080 18.432 to  5.080 18.525
\putrule from  5.080 18.525 to  5.080 18.616
\putrule from  5.080 18.616 to  5.080 18.703
\putrule from  5.080 18.703 to  5.080 18.785
\putrule from  5.080 18.785 to  5.080 18.866
\putrule from  5.080 18.866 to  5.080 18.944
\putrule from  5.080 18.944 to  5.080 19.003
\putrule from  5.080 19.003 to  5.080 19.061
\putrule from  5.080 19.061 to  5.080 19.116
\putrule from  5.080 19.116 to  5.080 19.171
\putrule from  5.080 19.171 to  5.080 19.224
\putrule from  5.080 19.224 to  5.080 19.272
\putrule from  5.080 19.272 to  5.080 19.323
\putrule from  5.080 19.323 to  5.080 19.370
\putrule from  5.080 19.370 to  5.080 19.416
\putrule from  5.080 19.416 to  5.080 19.461
\putrule from  5.080 19.461 to  5.080 19.503
\putrule from  5.080 19.503 to  5.080 19.545
\putrule from  5.080 19.545 to  5.080 19.583
\putrule from  5.080 19.583 to  5.080 19.622
\putrule from  5.080 19.622 to  5.080 19.660
\plot  5.080 19.660  5.082 19.693 /
\putrule from  5.082 19.693 to  5.082 19.727
\putrule from  5.082 19.727 to  5.082 19.761
\putrule from  5.082 19.761 to  5.082 19.791
\putrule from  5.082 19.791 to  5.082 19.820
\putrule from  5.082 19.820 to  5.082 19.848
\plot  5.082 19.848  5.084 19.873 /
\putrule from  5.084 19.873 to  5.084 19.897
\putrule from  5.084 19.897 to  5.084 19.920
\putrule from  5.084 19.920 to  5.084 19.941
\putrule from  5.084 19.941 to  5.084 19.960
\plot  5.084 19.960  5.086 19.979 /
\putrule from  5.086 19.979 to  5.086 19.994
\putrule from  5.086 19.994 to  5.086 20.009
\plot  5.086 20.009  5.088 20.024 /
\putrule from  5.088 20.024 to  5.088 20.034
\plot  5.088 20.034  5.091 20.045 /
\putrule from  5.091 20.045 to  5.091 20.053
\plot  5.091 20.053  5.093 20.060 /
\putrule from  5.093 20.060 to  5.093 20.066
\plot  5.093 20.066  5.095 20.070 /
\putrule from  5.095 20.070 to  5.095 20.072
\plot  5.095 20.072  5.097 20.074 /
\plot  5.097 20.074  5.099 20.072 /
\plot  5.099 20.072  5.101 20.070 /
\putrule from  5.101 20.070 to  5.101 20.066
\plot  5.101 20.066  5.103 20.062 /
\plot  5.103 20.062  5.105 20.055 /
\putrule from  5.105 20.055 to  5.105 20.049
\plot  5.105 20.049  5.108 20.041 /
\plot  5.108 20.041  5.110 20.030 /
\plot  5.110 20.030  5.112 20.019 /
\plot  5.112 20.019  5.114 20.009 /
\plot  5.114 20.009  5.116 19.996 /
\plot  5.116 19.996  5.118 19.983 /
\plot  5.118 19.983  5.120 19.971 /
\plot  5.120 19.971  5.122 19.956 /
\plot  5.122 19.956  5.124 19.939 /
\plot  5.124 19.939  5.127 19.924 /
\plot  5.127 19.924  5.129 19.907 /
\plot  5.129 19.907  5.131 19.888 /
\plot  5.131 19.888  5.133 19.871 /
\putrule from  5.133 19.871 to  5.133 19.869
\plot  5.133 19.869  5.137 19.833 /
\plot  5.137 19.833  5.143 19.793 /
\plot  5.143 19.793  5.150 19.748 /
\plot  5.150 19.748  5.154 19.702 /
\plot  5.154 19.702  5.163 19.653 /
\plot  5.163 19.653  5.169 19.600 /
\plot  5.169 19.600  5.175 19.545 /
\plot  5.175 19.545  5.184 19.486 /
\plot  5.184 19.486  5.192 19.425 /
\plot  5.192 19.425  5.199 19.361 /
\plot  5.199 19.361  5.209 19.296 /
\plot  5.209 19.296  5.218 19.226 /
\plot  5.218 19.226  5.226 19.156 /
\plot  5.226 19.156  5.237 19.084 /
\plot  5.237 19.084  5.245 19.010 /
\plot  5.245 19.010  5.256 18.936 /
\plot  5.256 18.936  5.266 18.860 /
\plot  5.266 18.860  5.277 18.783 /
\plot  5.277 18.783  5.287 18.709 /
\plot  5.287 18.709  5.298 18.633 /
\plot  5.298 18.633  5.306 18.559 /
\plot  5.306 18.559  5.317 18.485 /
\plot  5.317 18.485  5.328 18.411 /
\plot  5.328 18.411  5.338 18.339 /
\plot  5.338 18.339  5.349 18.269 /
\plot  5.349 18.269  5.359 18.201 /
\plot  5.359 18.201  5.368 18.133 /
\plot  5.368 18.133  5.378 18.068 /
\plot  5.378 18.068  5.387 18.002 /
\plot  5.387 18.002  5.397 17.939 /
}%
%
% Fig TEXT object
%
\put{\SetFigFont{6}{7.2}{\rmdefault}{\mddefault}{\updefault}{\color[rgb]{0,0,0}$\tilde{S}_-$}%
} [lB] at  4.0 12.700
%
% Fig TEXT object
%
\put{\SetFigFont{6}{7.2}{\rmdefault}{\mddefault}{\itdefault}{\color[rgb]{0,0,0}$\tilde{S}_+$}%
} [lB] at 15.558 15.240
%
% Fig TEXT object
%
\put{\SetFigFont{6}{7.2}{\rmdefault}{\mddefault}{\itdefault}{\color[rgb]{0,0,0}$\bar{N}_-$}%
} [lB] at  4.0 20.320
%
% Fig TEXT object
%
\put{\SetFigFont{6}{7.2}{\rmdefault}{\mddefault}{\itdefault}{\color[rgb]{0,0,0}$\bar{N}_+$}%
} [lB] at 15.558 17.939
%
% Fig TEXT object
%
\put{\SetFigFont{6}{7.2}{\rmdefault}{\mddefault}{\itdefault}{\color[rgb]{0,0,0}$\tilde{N}_+$}%
} [lB] at 15.558  8.096
%
% Fig TEXT object
%
\put{\SetFigFont{6}{7.2}{\rmdefault}{\mddefault}{\itdefault}{\color[rgb]{0,0,0}$\tilde{N}_-$}%
} [lB] at  4.0  9.842
%
% Fig TEXT object
%
\put{\SetFigFont{6}{7.2}{\rmdefault}{\mddefault}{\itdefault}{\color[rgb]{0,0,0}$0$}%
} [lB] at  4.604 16.351
%
% Fig TEXT object
%
\put{\SetFigFont{6}{7.2}{\rmdefault}{\mddefault}{\itdefault}{\color[rgb]{0,0,0}$-\pi$}%
} [lB] at  4.0 11.271
%
% Fig TEXT object
%
\put{\SetFigFont{6}{7.2}{\rmdefault}{\mddefault}{\itdefault}{\color[rgb]{0,0,0}$\pi$}%
} [lB] at  4.445 21.431
%
% Fig TEXT object
%
\put{\SetFigFont{6}{7.2}{\rmdefault}{\mddefault}{\itdefault}{\color[rgb]{0,0,0}$\cW^-$}%
} [lB] at  7.144 14.922
%
% Fig TEXT object
%
\put{\SetFigFont{6}{7.2}{\rmdefault}{\mddefault}{\itdefault}{\color[rgb]{0,0,0}$\cW^+$}%
} [lB] at 12.383 17.304
%
% Fig TEXT object
%
\put{\SetFigFont{6}{7.2}{\rmdefault}{\mddefault}{\itdefault}{\color[rgb]{0,0,0}$\cK_1$}%
} [lB] at  5.715 15.399
\linethickness=0pt
\putrectangle corners at  4.413 21.768 and 15.589  6.280
\endpicture}
 \end{center}
\caption{\label{fig:CORRIDORone} The corridor $\cK_1$ in $\cC$ and in $\bar{\cC}$.}
\end{figure}

Furthermore, in the universal cover $\bar{\cC}$, using the well-orderedness of $\RR$ it is possible to speak of $\widetilde{\cW}^+$ as being 
situated ``above'' or ``below" $\widetilde{\cW}^-$.
 It is  evident that the boundary of the corridor $\cK_1$ is oriented clockwise if $\widetilde{\cW}^+$ is below $\widetilde{\cW}^-$, 
and counterclockwise if it is the other way around.  see Figure~\ref{fig:CORRIDORone}.  
%%%%%%%%%%%%%%%%%%%%%%%%%%%%%%%%%%%%%%%%%%%%%%%%%%%%%%
\subsubsection{Parameter-dependent flows}
%%%%%%%%%%%%%%%%%%%%%%%%%%%%%%%%%%%%%%%%%%%%%%%%%%%%%%
Suppose that the function $g$ in \refeq{eq:flow} depends smoothly on a parameter $\mu \in I$ where $I$ is an open interval in $\RR$ 
(It's enough for the $\mu$-dependence to be $C^1$).
  Thus the flow is now
\beq\label{eq:flowmu}
\left\{\begin{array}{rcl} \dot{x} & = & f(x)\\ \dot{y} & = & g_\mu(x,y).\end{array}\right.
\eeq
  By the implicit function theorem, the locations of the equilibria $S_\pm,N_\pm$ also depend --in a $C^1$ fashion--  on $\mu$, so 
long as the non-degeneracy conditions \refeq{cond:hyp} are satisfied. 

We need to make certain assumptions about the $\mu$-dependence of the flow regarding its monotonicity, and the topology of its nullclines:

\subsubsubsection{Monotonicity}
We will only consider parameter-dependent flows for which the function $g_\mu(x,y)$ is {\em monotone non-increasing} in $\mu$.

{\bf Assumption (M)}:  For all $(x,y) \in \bar{\cC}$, we have
\beq\label{ass:mon}  \frac{\p}{\p\mu} g_{\mu}(x,y) \leq 0.
\eeq

This assumption in particular implies a corresponding monotonicity for the distinguished orbits of the flow \refeq{eq:flowmu}:
\begin{lem}\label{lem:mon}
 Let $\tilde{\cW}^\pm_\mu = \{(x(\tau),y^\pm_\mu(\tau)\}_{\tau\in\RR}$ be the distinguished orbits of \refeq{eq:flowmu}.  Then $y^\pm_\mu$ are monotone in $\mu$, i.e.
\beq\label{orbmon}
\mu_1<\mu_2 \implies y^-_{\mu_1}(\tau) \geq y^-_{\mu_2}(\tau),\quad
y^+_{\mu_1}(\tau) \leq y^+_{\mu_2}(\tau)\mbox{for all }\tau\in\RR.
\eeq
\end{lem}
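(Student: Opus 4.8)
The plan is to prove the monotonicity of the distinguished orbits $\widetilde{\cW}^\pm_\mu$ by a comparison (differential-inequality) argument applied to the lifted flow on $\bar{\cC}$. The key structural fact is that both distinguished orbits are solutions of the scalar non-autonomous ODE obtained by dividing the two equations in \refeq{eq:flowmu}: along any orbit with $x\in(x_-^{},x_+^{})$ we have $f(x)>0$, so we may use $x$ as the independent variable and write $\frac{dy}{dx} = \frac{g_\mu(x,y)}{f(x)} =: G_\mu(x,y)$. Assumption (M) gives $\partial_\mu G_\mu(x,y) = \frac{1}{f(x)}\partial_\mu g_\mu(x,y) \le 0$ for all $x\in(x_-^{},x_+^{})$, i.e.\ the right-hand side of this scalar ODE is monotone non-increasing in the parameter. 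The classical comparison theorem for scalar ODEs then says: if $y_{\mu_1}(x)$ and $y_{\mu_2}(x)$ solve $y' = G_{\mu_i}(x,y)$ with $\mu_1<\mu_2$ and they agree at one value of $x$, then $y_{\mu_1}(x)\ge y_{\mu_2}(x)$ for $x$ to the right of that value and $y_{\mu_1}(x)\le y_{\mu_2}(x)$ for $x$ to the left.

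First I would treat $\widetilde{\cW}^-_\mu$. This orbit is characterized by having $\widetilde{S}_-$ (the lift of the saddle $S_-$ in the fundamental domain) as its $\alpha$-limit point. I would need to know that the starting point $\widetilde{S}_-=(x_-^{},s_-^{})$ does not move with $\mu$; this is exactly where the non-degeneracy assumption \refeq{cond:hyp} and the implicit function theorem discussion preceding the lemma is used — but in fact $s_-^{}$ is defined as a zero of $g_\mu(x_-^{},\cdot)$, and here one must check (or assume, as the hypotheses of the section effectively arrange via the normalization \refeq{assump} and the fixed boundary behavior) that the \emph{relevant} saddle location on the left boundary is $\mu$-independent; I will address this by noting that in the concrete application \refeq{dynsysOm}–\refeq{dynsysTh} the equilibria on the boundary are fixed. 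Granting that $\widetilde{S}_-$ is a common point, one applies the comparison principle: parametrize $\widetilde{\cW}^-_{\mu_i}$ as graphs $y = y^-_{\mu_i}(x)$ emanating from $(x_-^{}, s_-^{})$ and conclude $y^-_{\mu_1}(x)\ge y^-_{\mu_2}(x)$ for all $x\in(x_-^{},x_+^{}]$, which rephrased in the time parametrization is the first inequality in \refeq{orbmon}. A subtlety is that at $x=x_-^{}$ the coefficient $f(x)$ vanishes (the saddle is on the boundary, possibly a degenerate saddle-node), so the graph representation degenerates there; I would handle this by running the comparison on $[x_-^{}+\epsilon, x_+^{}]$ for every $\epsilon>0$ and passing to the limit, using continuous dependence on initial data and the uniqueness of the unstable/center manifold of $\widetilde{S}_-$ established in the ``Connecting orbits and corridors'' subsection.

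For $\widetilde{\cW}^+_\mu$ the argument is the mirror image: this orbit is the one whose $\omega$-limit is $\widetilde{S}_+=(x_+^{},s_+^{})$, a fixed common point by the same reasoning, and running the scalar comparison \emph{backwards} from $x=x_+^{}$ (i.e.\ for decreasing $x$) with $\partial_\mu G_\mu\le 0$ flips the inequality to $y^+_{\mu_1}(x)\le y^+_{\mu_2}(x)$, which is the second assertion in \refeq{orbmon}. I expect the main obstacle to be purely technical rather than conceptual: carefully justifying the graph/comparison argument at the two degenerate endpoints $x_\pm$ where $f(x_\pm)=0$, so that one genuinely compares the \emph{distinguished} orbits (center manifolds on the ``saddle side'') and not some other orbit through a nearby point. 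Once the endpoint behavior is pinned down via the uniqueness statements already proved for $\widetilde{\cW}^\pm$, the interior comparison is routine and the lemma follows.
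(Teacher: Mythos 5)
Your route (reduce to the scalar ODE $dy/dx = g_\mu(x,y)/f(x)$ and invoke the comparison theorem in the parameter) is genuinely different from the paper's, which instead differentiates the orbit with respect to $\mu$, derives the linear ODE $\dot z = \p_y g_\mu\, z + \p_\mu g_\mu$ for $z=\p_\mu y^-_\mu$, and integrates it with an exponential weight from $\tau=-\infty$. However, your argument has a gap exactly at the point you flag and then wave away: the claim that the left saddle $\widetilde S_-=(x_-^{},s_-^{})$ is a \emph{common} point of $\widetilde\cW^-_{\mu_1}$ and $\widetilde\cW^-_{\mu_2}$. In the general framework of the lemma this is false --- the paper explicitly notes, just before the lemma, that the implicit function theorem makes $s_\pm$ and $n_\pm$ depend ($C^1$) on $\mu$ --- and it is also false in one of the two concrete applications: for the $\Om$ flow \refeq{dynsysOm} the parameter is $E$ and the boundary equilibria $s_-^{}=-\pi+\cos^{-1}E$, $s_+^{}=-\cos^{-1}E$ manifestly move with $E$. (Only for the $\Theta$ flow, where the equilibria sit at $0,\pm\pi$, is your ``fixed endpoints'' claim correct.) A comparison argument launched from a supposedly shared initial point therefore does not apply as stated.

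The gap is repairable, and the repair is essentially the ingredient the paper's proof uses as its ``initial condition.'' Under assumption (M) and the sign conditions \refeq{cond:hyp}, the implicit function theorem gives
$\frac{ds_-^{}}{d\mu} = -\,\p_\mu g_\mu(x_-^{},s_-^{})/\p_y g_\mu(x_-^{},s_-^{}) \le 0$
(numerator $\le 0$, denominator $<0$), and symmetrically $\frac{ds_+^{}}{d\mu}\ge 0$. So for $\mu_1<\mu_2$ the starting value of $y^-_{\mu_1}$ is $\ge$ that of $y^-_{\mu_2}$, and the terminal value of $y^+_{\mu_1}$ is $\le$ that of $y^+_{\mu_2}$; combined with $\p_\mu\bigl(g_\mu/f\bigr)\le 0$ the comparison theorem then yields \refeq{orbmon} in the interior, with your $\epsilon$-truncation handling the degeneracy $f(x_\pm)=0$. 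You should state and verify this endpoint monotonicity explicitly rather than asserting $\mu$-independence; note that the paper's own proof is built around precisely this quantity, taking $z(-\infty)=ds_-^{}/d\mu<0$ as the boundary datum for its Gronwall-type estimate.
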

\begin{proof}
Let 
$$ z(\tau) := \frac{\p}{\p \mu} y^-_\mu(\tau).$$
Then $z$ satisfies the ODE
\beq\label{odez}
\dot{z} = \frac{\p g_\mu}{\p y}(x,y) z + \frac{\p g_\mu}{\p \mu}(x,y).
\eeq
By the Implicit Function Theorem and the hypotheses above,
$$
z(-\infty) = \frac{ds^-(\mu)}{d\mu} = - \left.\frac{ \p g/\p y}{\p g /\p \mu}\right|_{x=x_-,y = s_-} < 0
$$
and 
\beq\label{odiz}
\dot{z} - \frac{\p g_\mu}{\p y}(x,y) z \leq 0.
\eeq
Let
$$
U(\tau_0,\tau) := \exp\left(-\int_{\tau_0}^\tau \frac{\p g_\mu}{\p y}(x(\tau'),y(\tau')) d\tau' \right) \geq 0.
$$
Integrating \refeq{odiz} on $[\tau_0,\tau]$ we then obtain
$$
U(\tau_0,\tau) z(\tau) \leq z(\tau_0),
$$
and taking the limit $\tau_0 \to -\infty$ we conclude $z(\tau)\leq 0$ for all $\tau \in \RR$. Therefore $y^-_\mu$ is monotone non-increasing in $\mu$.
The proof of monotonicity of $y^+_\mu$ is completely analogous. 
\end{proof}

\subsubsubsection{Topology of nullclines}
Consider the subset of the cylinder $\cC$ defined by
\beq
\Ga := \{(x,y)\in \cC\ | \ g_\mu(x,y) = 0\}.
\eeq
Thus $\Ga$ is the zero level-set of $g_\mu$.
  Since $g_\mu$ is smooth, $\Ga$ is a curve (or collection of curves) in $\cC$.
 These curves are referred to as the {\em $y$-nullclines} of the flow \refeq{eq:flow}.
  They have the property that any orbit of the flow that crosses them must have a horizontal tangent at the crossing point.
   Moreover, $\Ga$ divides $\cC$ into two regions, thus $\cC = \Ga\cup\cN\cup\cP$, where
\beq
\cN := \{(x,y)\in\cC\ |\ g_\mu(x,y) < 0\},\qquad \cP := \{(x,y)\ |\ g_\mu(x,y)>0\}.
\eeq
Thus the $y$ coordinate of (the lift to the universal cover of) any orbit must decrease in $\cN$ and must increase in $\cP$.  

 Evidently, $\Ga$ must include all the singular points of the flow: $S_\pm\in\Ga$, $N_\pm \in \Ga$.  

It may happen that as $\mu$ crosses a critical value $\mu^{}_c\in I$, the topology of the nullclines undergoes a dramatic change.
  This is indeed the case for the flows that we are studying in this paper.
  We introduce an assumption that amounts to having some control on this change in nullcline topology:
  
{\bf Assumption (A)}:  There exists a $\mu^{}_c \in I$ such that $\Ga$, the zero level-set of $g_{\mu^{}_c}$ has a {\em saddle point} 
at some interior point of $\cC$.
  In particular, for $\mu<\mu^{}_c$, the sets $\cN$ and $\cP$ are both connected, and $\Ga$ is the union of two disjoint curves 
$\Ga = \Ga_u \cup \Ga_d$, with $N_-,S_+ \in \overline{\Ga_d}$ and $S_-,N_+\in \overline{\Ga_u}$.
  On the other hand  for $\mu>\mu^{}_c$, $\cN$ is connected, while $\cP$ has two connected components $\cP_l$ and $\cP_r$, each 
being a convex subset of $\cC$.
  Moreover, $\Ga$ is the union of two disjoint curves $\Ga = \Ga_l \cup \Ga_r$, with $N_-,S_-\in\overline{\Ga_l}$
and $ S_+,N_+\in\overline{\Ga_r}$; see Figure~\ref{fig:topnul}.
\begin{figure}[ht]
\begin{center}
%\input{KTZ_zGKN_I_null2.tex}
%%%%%%%%%%%%%%%%
%Title: C:\Users\shadi\Documents\My Projects\null2.fig
%%Created by: fig2dev.exe Version 3.2 Patchlevel 5d
%%CreationDate: Mon Apr 07 23:51:03 2014
%%User: shadi@SHADI-PC (SHADI-PC)
%
%  Created by WinFIG version 5.01 
%  METADATA <version>1.0</version> 
%
\font\thinlinefont=cmr5
\begingroup\makeatletter\ifx\SetFigFont\undefined%
\gdef\SetFigFont#1#2#3#4#5{%
  \reset@font\fontsize{#1}{#2pt}%
  \fontfamily{#3}\fontseries{#4}\fontshape{#5}%
  \selectfont}%
\fi\endgroup%
\mbox{\beginpicture
\setcoordinatesystem units <0.50000cm,0.50000cm>
\unitlength=0.50000cm
\linethickness=1pt
\setplotsymbol ({\makebox(0,0)[l]{\tencirc\symbol{'160}}})
\setshadesymbol ({\thinlinefont .})
\setlinear
%  METADATA <id>1</id> 
%
% Fig POLYLINE object
%
\linethickness=1pt
\setplotsymbol ({\makebox(0,0)[l]{\tencirc\symbol{'160}}})
{\color[rgb]{0,0,0}\putrule from  5.080 21.590 to 15.240 21.590
}%
%  METADATA <id>2</id> 
%
% Fig POLYLINE object
%
\linethickness= 0.500pt
\setplotsymbol ({\thinlinefont .})
{\color[rgb]{0,0,0}\putrule from  5.080 16.510 to 15.240 16.510
}%
%  METADATA <id>3</id> 
%
% Fig POLYLINE object
%
\linethickness= 0.500pt
\setplotsymbol ({\thinlinefont .})
{\color[rgb]{0,0,0}\putrule from  5.080 11.430 to 15.240 11.430
}%
%  METADATA <id>4</id> 
%
% Fig POLYLINE object
%
\linethickness=1pt
\setplotsymbol ({\makebox(0,0)[l]{\tencirc\symbol{'160}}})
{\color[rgb]{0,0,0}\putrule from 15.240 21.590 to 15.240  6.315
\putrule from 15.240  6.350 to  5.045  6.350
\putrule from  5.080  6.350 to  5.080 21.590
}%
%  METADATA <id>5</id> 
%
% Fig POLYLINE object
%
\linethickness= 0.500pt
\setplotsymbol ({\thinlinefont .})
\setdashes < 0.1905cm>
{\color[rgb]{0,0,0}\plot 15.240 15.240 15.236 15.240 /
\plot 15.236 15.240 15.223 15.238 /
\plot 15.223 15.238 15.202 15.234 /
\plot 15.202 15.234 15.170 15.229 /
\plot 15.170 15.229 15.124 15.223 /
\plot 15.124 15.223 15.064 15.215 /
\plot 15.064 15.215 14.994 15.204 /
\plot 14.994 15.204 14.910 15.193 /
\plot 14.910 15.193 14.817 15.179 /
\plot 14.817 15.179 14.717 15.164 /
\plot 14.717 15.164 14.611 15.149 /
\plot 14.611 15.149 14.503 15.134 /
\plot 14.503 15.134 14.393 15.117 /
\plot 14.393 15.117 14.285 15.102 /
\plot 14.285 15.102 14.180 15.085 /
\plot 14.180 15.085 14.076 15.071 /
\plot 14.076 15.071 13.978 15.056 /
\plot 13.978 15.056 13.885 15.043 /
\plot 13.885 15.043 13.796 15.030 /
\plot 13.796 15.030 13.712 15.018 /
\plot 13.712 15.018 13.631 15.005 /
\plot 13.631 15.005 13.555 14.994 /
\plot 13.555 14.994 13.483 14.982 /
\plot 13.483 14.982 13.413 14.971 /
\plot 13.413 14.971 13.348 14.963 /
\plot 13.348 14.963 13.282 14.952 /
\plot 13.282 14.952 13.219 14.942 /
\plot 13.219 14.942 13.157 14.933 /
\plot 13.157 14.933 13.098 14.922 /
\plot 13.098 14.922 13.032 14.912 /
\plot 13.032 14.912 12.967 14.901 /
\plot 12.967 14.901 12.903 14.891 /
\plot 12.903 14.891 12.838 14.880 /
\plot 12.838 14.880 12.772 14.870 /
\plot 12.772 14.870 12.706 14.857 /
\plot 12.706 14.857 12.639 14.844 /
\plot 12.639 14.844 12.571 14.831 /
\plot 12.571 14.831 12.503 14.819 /
\plot 12.503 14.819 12.435 14.806 /
\plot 12.435 14.806 12.366 14.791 /
\plot 12.366 14.791 12.298 14.776 /
\plot 12.298 14.776 12.228 14.762 /
\plot 12.228 14.762 12.158 14.745 /
\plot 12.158 14.745 12.090 14.730 /
\plot 12.090 14.730 12.023 14.713 /
\plot 12.023 14.713 11.955 14.694 /
\plot 11.955 14.694 11.889 14.677 /
\plot 11.889 14.677 11.826 14.658 /
\plot 11.826 14.658 11.762 14.639 /
\plot 11.762 14.639 11.699 14.620 /
\plot 11.699 14.620 11.637 14.601 /
\plot 11.637 14.601 11.578 14.582 /
\plot 11.578 14.582 11.521 14.561 /
\plot 11.521 14.561 11.464 14.539 /
\plot 11.464 14.539 11.407 14.518 /
\plot 11.407 14.518 11.352 14.495 /
\plot 11.352 14.495 11.299 14.472 /
\plot 11.299 14.472 11.246 14.450 /
\plot 11.246 14.450 11.195 14.427 /
\plot 11.195 14.427 11.144 14.404 /
\plot 11.144 14.404 11.093 14.379 /
\plot 11.093 14.379 11.041 14.351 /
\plot 11.041 14.351 10.988 14.323 /
\plot 10.988 14.323 10.935 14.294 /
\plot 10.935 14.294 10.880 14.262 /
\plot 10.880 14.262 10.827 14.230 /
\plot 10.827 14.230 10.772 14.196 /
\plot 10.772 14.196 10.715 14.163 /
\plot 10.715 14.163 10.660 14.127 /
\plot 10.660 14.127 10.605 14.089 /
\plot 10.605 14.089 10.547 14.048 /
\plot 10.547 14.048 10.492 14.008 /
\plot 10.492 14.008 10.437 13.968 /
\plot 10.437 13.968 10.382 13.926 /
\plot 10.382 13.926 10.327 13.883 /
\plot 10.327 13.883 10.274 13.839 /
\plot 10.274 13.839 10.224 13.796 /
\plot 10.224 13.796 10.171 13.752 /
\plot 10.171 13.752 10.122 13.705 /
\plot 10.122 13.705 10.073 13.661 /
\plot 10.073 13.661 10.025 13.617 /
\plot 10.025 13.617  9.980 13.570 /
\plot  9.980 13.570  9.934 13.523 /
\plot  9.934 13.523  9.889 13.477 /
\plot  9.889 13.477  9.847 13.430 /
\plot  9.847 13.430  9.804 13.384 /
\plot  9.804 13.384  9.764 13.335 /
\plot  9.764 13.335  9.724 13.288 /
\plot  9.724 13.288  9.686 13.242 /
\plot  9.686 13.242  9.648 13.193 /
\plot  9.648 13.193  9.610 13.144 /
\plot  9.610 13.144  9.572 13.094 /
\plot  9.572 13.094  9.533 13.043 /
\plot  9.533 13.043  9.495 12.988 /
\plot  9.495 12.988  9.455 12.933 /
\plot  9.455 12.933  9.415 12.878 /
\plot  9.415 12.878  9.377 12.821 /
\plot  9.377 12.821  9.337 12.761 /
\plot  9.337 12.761  9.296 12.702 /
\plot  9.296 12.702  9.256 12.641 /
\plot  9.256 12.641  9.216 12.579 /
\plot  9.216 12.579  9.176 12.518 /
\plot  9.176 12.518  9.136 12.454 /
\plot  9.136 12.454  9.095 12.391 /
\plot  9.095 12.391  9.055 12.330 /
\plot  9.055 12.330  9.017 12.268 /
\plot  9.017 12.268  8.977 12.205 /
\plot  8.977 12.205  8.939 12.143 /
\plot  8.939 12.143  8.901 12.084 /
\plot  8.901 12.084  8.862 12.025 /
\plot  8.862 12.025  8.827 11.966 /
\plot  8.827 11.966  8.788 11.908 /
\plot  8.788 11.908  8.752 11.851 /
\plot  8.752 11.851  8.719 11.796 /
\plot  8.719 11.796  8.683 11.741 /
\plot  8.683 11.741  8.649 11.688 /
\plot  8.649 11.688  8.615 11.637 /
\plot  8.615 11.637  8.581 11.587 /
\plot  8.581 11.587  8.547 11.536 /
\plot  8.547 11.536  8.507 11.479 /
\plot  8.507 11.479  8.467 11.422 /
\plot  8.467 11.422  8.426 11.366 /
\plot  8.426 11.366  8.386 11.309 /
\plot  8.386 11.309  8.344 11.254 /
\plot  8.344 11.254  8.302 11.197 /
\plot  8.302 11.197  8.259 11.142 /
\plot  8.259 11.142  8.215 11.085 /
\plot  8.215 11.085  8.170 11.028 /
\plot  8.170 11.028  8.124 10.973 /
\plot  8.124 10.973  8.077 10.916 /
\plot  8.077 10.916  8.029 10.861 /
\plot  8.029 10.861  7.982 10.806 /
\plot  7.982 10.806  7.933 10.753 /
\plot  7.933 10.753  7.887 10.700 /
\plot  7.887 10.700  7.838 10.649 /
\plot  7.838 10.649  7.789 10.598 /
\plot  7.789 10.598  7.743 10.549 /
\plot  7.743 10.549  7.696 10.503 /
\plot  7.696 10.503  7.650 10.458 /
\plot  7.650 10.458  7.603 10.414 /
\plot  7.603 10.414  7.559 10.374 /
\plot  7.559 10.374  7.514 10.334 /
\plot  7.514 10.334  7.472 10.295 /
\plot  7.472 10.295  7.427 10.259 /
\plot  7.427 10.259  7.385 10.226 /
\plot  7.385 10.226  7.345 10.192 /
\plot  7.345 10.192  7.303 10.160 /
\plot  7.303 10.160  7.254 10.124 /
\plot  7.254 10.124  7.205 10.090 /
\plot  7.205 10.090  7.156 10.056 /
\plot  7.156 10.056  7.106 10.025 /
\plot  7.106 10.025  7.055  9.993 /
\plot  7.055  9.993  7.002  9.963 /
\plot  7.002  9.963  6.949  9.934 /
\plot  6.949  9.934  6.894  9.906 /
\plot  6.894  9.906  6.841  9.881 /
\plot  6.841  9.881  6.786  9.855 /
\plot  6.786  9.855  6.731  9.832 /
\plot  6.731  9.832  6.674  9.811 /
\plot  6.674  9.811  6.621  9.792 /
\plot  6.621  9.792  6.566  9.775 /
\plot  6.566  9.775  6.513  9.758 /
\plot  6.513  9.758  6.460  9.745 /
\plot  6.460  9.745  6.407  9.735 /
\plot  6.407  9.735  6.358  9.724 /
\plot  6.358  9.724  6.310  9.718 /
\plot  6.310  9.718  6.263  9.711 /
\plot  6.263  9.711  6.217  9.709 /
\plot  6.217  9.709  6.172  9.707 /
\plot  6.172  9.707  6.128  9.707 /
\plot  6.128  9.707  6.085  9.709 /
\plot  6.085  9.709  6.039  9.716 /
\plot  6.039  9.716  5.994  9.722 /
\plot  5.994  9.722  5.948  9.730 /
\plot  5.948  9.730  5.901  9.743 /
\plot  5.901  9.743  5.853  9.758 /
\plot  5.853  9.758  5.802  9.775 /
\plot  5.802  9.775  5.749  9.796 /
\plot  5.749  9.796  5.692  9.821 /
\plot  5.692  9.821  5.632  9.849 /
\plot  5.632  9.849  5.569  9.881 /
\plot  5.569  9.881  5.503  9.914 /
\plot  5.503  9.914  5.438  9.950 /
\plot  5.438  9.950  5.370  9.989 /
\plot  5.370  9.989  5.306 10.025 /
\plot  5.306 10.025  5.247 10.061 /
\plot  5.247 10.061  5.194 10.090 /
\plot  5.194 10.090  5.152 10.116 /
\plot  5.152 10.116  5.120 10.137 /
\plot  5.120 10.137  5.097 10.149 /
\plot  5.097 10.149  5.086 10.156 /
\plot  5.086 10.156  5.080 10.160 /
}%
%  METADATA <id>6</id> 
%
% Fig POLYLINE object
%
\linethickness= 0.500pt
\setplotsymbol ({\thinlinefont .})
{\color[rgb]{0,0,0}\plot 15.240 17.780 15.236 17.776 /
\plot 15.236 17.776 15.225 17.767 /
\plot 15.225 17.767 15.204 17.755 /
\plot 15.204 17.755 15.174 17.731 /
\plot 15.174 17.731 15.134 17.702 /
\plot 15.134 17.702 15.085 17.666 /
\plot 15.085 17.666 15.028 17.621 /
\plot 15.028 17.621 14.965 17.573 /
\plot 14.965 17.573 14.897 17.522 /
\plot 14.897 17.522 14.829 17.469 /
\plot 14.829 17.469 14.760 17.416 /
\plot 14.760 17.416 14.694 17.363 /
\plot 14.694 17.363 14.630 17.312 /
\plot 14.630 17.312 14.569 17.264 /
\plot 14.569 17.264 14.514 17.217 /
\plot 14.514 17.217 14.459 17.173 /
\plot 14.459 17.173 14.410 17.128 /
\plot 14.410 17.128 14.364 17.086 /
\plot 14.364 17.086 14.319 17.046 /
\plot 14.319 17.046 14.277 17.005 /
\plot 14.277 17.005 14.235 16.963 /
\plot 14.235 16.963 14.194 16.923 /
\plot 14.194 16.923 14.156 16.880 /
\plot 14.156 16.880 14.122 16.844 /
\plot 14.122 16.844 14.089 16.806 /
\plot 14.089 16.806 14.053 16.768 /
\plot 14.053 16.768 14.019 16.728 /
\plot 14.019 16.728 13.985 16.686 /
\plot 13.985 16.686 13.949 16.643 /
\plot 13.949 16.643 13.911 16.601 /
\plot 13.911 16.601 13.875 16.557 /
\plot 13.875 16.557 13.835 16.510 /
\plot 13.835 16.510 13.796 16.463 /
\plot 13.796 16.463 13.756 16.417 /
\plot 13.756 16.417 13.714 16.368 /
\plot 13.714 16.368 13.674 16.320 /
\plot 13.674 16.320 13.631 16.271 /
\plot 13.631 16.271 13.587 16.224 /
\plot 13.587 16.224 13.545 16.176 /
\plot 13.545 16.176 13.500 16.127 /
\plot 13.500 16.127 13.456 16.080 /
\plot 13.456 16.080 13.409 16.034 /
\plot 13.409 16.034 13.365 15.987 /
\plot 13.365 15.987 13.320 15.943 /
\plot 13.320 15.943 13.274 15.900 /
\plot 13.274 15.900 13.227 15.858 /
\plot 13.227 15.858 13.180 15.816 /
\plot 13.180 15.816 13.134 15.776 /
\plot 13.134 15.776 13.087 15.737 /
\plot 13.087 15.737 13.041 15.699 /
\plot 13.041 15.699 12.992 15.663 /
\plot 12.992 15.663 12.943 15.629 /
\plot 12.943 15.629 12.897 15.596 /
\plot 12.897 15.596 12.846 15.564 /
\plot 12.846 15.564 12.795 15.532 /
\plot 12.795 15.532 12.742 15.498 /
\plot 12.742 15.498 12.689 15.466 /
\plot 12.689 15.466 12.632 15.435 /
\plot 12.632 15.435 12.573 15.405 /
\plot 12.573 15.405 12.514 15.373 /
\plot 12.514 15.373 12.452 15.344 /
\plot 12.452 15.344 12.389 15.314 /
\plot 12.389 15.314 12.325 15.284 /
\plot 12.325 15.284 12.260 15.255 /
\plot 12.260 15.255 12.194 15.227 /
\plot 12.194 15.227 12.126 15.202 /
\plot 12.126 15.202 12.059 15.174 /
\plot 12.059 15.174 11.991 15.151 /
\plot 11.991 15.151 11.923 15.128 /
\plot 11.923 15.128 11.855 15.105 /
\plot 11.855 15.105 11.788 15.083 /
\plot 11.788 15.083 11.722 15.064 /
\plot 11.722 15.064 11.656 15.047 /
\plot 11.656 15.047 11.591 15.030 /
\plot 11.591 15.030 11.527 15.014 /
\plot 11.527 15.014 11.464 15.001 /
\plot 11.464 15.001 11.402 14.988 /
\plot 11.402 14.988 11.343 14.975 /
\plot 11.343 14.975 11.282 14.965 /
\plot 11.282 14.965 11.225 14.956 /
\plot 11.225 14.956 11.165 14.948 /
\plot 11.165 14.948 11.104 14.942 /
\plot 11.104 14.942 11.041 14.935 /
\plot 11.041 14.935 10.979 14.931 /
\plot 10.979 14.931 10.916 14.927 /
\plot 10.916 14.927 10.852 14.925 /
\plot 10.852 14.925 10.787 14.922 /
\plot 10.787 14.922 10.721 14.922 /
\plot 10.721 14.922 10.655 14.922 /
\plot 10.655 14.922 10.588 14.925 /
\plot 10.588 14.925 10.520 14.927 /
\plot 10.520 14.927 10.450 14.931 /
\plot 10.450 14.931 10.382 14.933 /
\plot 10.382 14.933 10.312 14.939 /
\plot 10.312 14.939 10.243 14.944 /
\plot 10.243 14.944 10.173 14.950 /
\plot 10.173 14.950 10.105 14.958 /
\plot 10.105 14.958 10.035 14.965 /
\plot 10.035 14.965  9.967 14.973 /
\plot  9.967 14.973  9.902 14.980 /
\plot  9.902 14.980  9.834 14.988 /
\plot  9.834 14.988  9.771 14.997 /
\plot  9.771 14.997  9.705 15.005 /
\plot  9.705 15.005  9.644 15.014 /
\plot  9.644 15.014  9.580 15.022 /
\plot  9.580 15.022  9.521 15.030 /
\plot  9.521 15.030  9.459 15.039 /
\plot  9.459 15.039  9.400 15.047 /
\plot  9.400 15.047  9.341 15.054 /
\plot  9.341 15.054  9.279 15.062 /
\plot  9.279 15.062  9.220 15.071 /
\plot  9.220 15.071  9.159 15.077 /
\plot  9.159 15.077  9.095 15.083 /
\plot  9.095 15.083  9.032 15.090 /
\plot  9.032 15.090  8.968 15.096 /
\plot  8.968 15.096  8.903 15.102 /
\plot  8.903 15.102  8.835 15.107 /
\plot  8.835 15.107  8.767 15.113 /
\plot  8.767 15.113  8.700 15.115 /
\plot  8.700 15.115  8.630 15.117 /
\plot  8.630 15.117  8.560 15.119 /
\plot  8.560 15.119  8.488 15.121 /
\plot  8.488 15.121  8.418 15.119 /
\plot  8.418 15.119  8.346 15.119 /
\plot  8.346 15.119  8.276 15.115 /
\plot  8.276 15.115  8.206 15.111 /
\plot  8.206 15.111  8.136 15.107 /
\plot  8.136 15.107  8.069 15.100 /
\plot  8.069 15.100  8.003 15.092 /
\plot  8.003 15.092  7.938 15.081 /
\plot  7.938 15.081  7.874 15.071 /
\plot  7.874 15.071  7.811 15.060 /
\plot  7.811 15.060  7.749 15.045 /
\plot  7.749 15.045  7.688 15.030 /
\plot  7.688 15.030  7.631 15.014 /
\plot  7.631 15.014  7.571 14.994 /
\plot  7.571 14.994  7.514 14.975 /
\plot  7.514 14.975  7.457 14.954 /
\plot  7.457 14.954  7.400 14.931 /
\plot  7.400 14.931  7.343 14.906 /
\plot  7.343 14.906  7.286 14.876 /
\plot  7.286 14.876  7.228 14.846 /
\plot  7.228 14.846  7.169 14.817 /
\plot  7.169 14.817  7.110 14.783 /
\plot  7.110 14.783  7.051 14.747 /
\plot  7.051 14.747  6.991 14.709 /
\plot  6.991 14.709  6.932 14.669 /
\plot  6.932 14.669  6.871 14.628 /
\plot  6.871 14.628  6.811 14.586 /
\plot  6.811 14.586  6.752 14.541 /
\plot  6.752 14.541  6.693 14.497 /
\plot  6.693 14.497  6.636 14.450 /
\plot  6.636 14.450  6.579 14.404 /
\plot  6.579 14.404  6.521 14.357 /
\plot  6.521 14.357  6.466 14.311 /
\plot  6.466 14.311  6.414 14.262 /
\plot  6.414 14.262  6.361 14.216 /
\plot  6.361 14.216  6.312 14.171 /
\plot  6.312 14.171  6.263 14.125 /
\plot  6.263 14.125  6.217 14.080 /
\plot  6.217 14.080  6.172 14.036 /
\plot  6.172 14.036  6.128 13.991 /
\plot  6.128 13.991  6.085 13.949 /
\plot  6.085 13.949  6.045 13.906 /
\plot  6.045 13.906  6.007 13.864 /
\plot  6.007 13.864  5.965 13.820 /
\plot  5.965 13.820  5.925 13.775 /
\plot  5.925 13.775  5.884 13.731 /
\plot  5.884 13.731  5.844 13.686 /
\plot  5.844 13.686  5.804 13.640 /
\plot  5.804 13.640  5.764 13.591 /
\plot  5.764 13.591  5.723 13.542 /
\plot  5.723 13.542  5.681 13.492 /
\plot  5.681 13.492  5.639 13.437 /
\plot  5.639 13.437  5.592 13.379 /
\plot  5.592 13.379  5.546 13.320 /
\plot  5.546 13.320  5.497 13.257 /
\plot  5.497 13.257  5.448 13.193 /
\plot  5.448 13.193  5.400 13.128 /
\plot  5.400 13.128  5.349 13.062 /
\plot  5.349 13.062  5.300 12.998 /
\plot  5.300 12.998  5.256 12.937 /
\plot  5.256 12.937  5.213 12.880 /
\plot  5.213 12.880  5.175 12.829 /
\plot  5.175 12.829  5.143 12.787 /
\plot  5.143 12.787  5.118 12.753 /
\plot  5.118 12.753  5.101 12.730 /
\plot  5.101 12.730  5.088 12.713 /
\plot  5.088 12.713  5.082 12.704 /
\plot  5.082 12.704  5.080 12.700 /
}%
%  METADATA <id>7</id> 
%
% Fig POLYLINE object
%
\linethickness= 0.500pt
\setplotsymbol ({\thinlinefont .})
\setsolid
{\color[rgb]{0,0,0}\plot  5.080 12.700  5.082 12.702 /
\plot  5.082 12.702  5.091 12.708 /
\plot  5.091 12.708  5.103 12.721 /
\plot  5.103 12.721  5.122 12.738 /
\plot  5.122 12.738  5.150 12.761 /
\plot  5.150 12.761  5.186 12.791 /
\plot  5.186 12.791  5.228 12.827 /
\plot  5.228 12.827  5.277 12.867 /
\plot  5.277 12.867  5.330 12.912 /
\plot  5.330 12.912  5.387 12.958 /
\plot  5.387 12.958  5.446 13.005 /
\plot  5.446 13.005  5.508 13.051 /
\plot  5.508 13.051  5.569 13.096 /
\plot  5.569 13.096  5.630 13.140 /
\plot  5.630 13.140  5.692 13.183 /
\plot  5.692 13.183  5.753 13.223 /
\plot  5.753 13.223  5.814 13.261 /
\plot  5.814 13.261  5.876 13.297 /
\plot  5.876 13.297  5.937 13.331 /
\plot  5.937 13.331  6.001 13.365 /
\plot  6.001 13.365  6.064 13.396 /
\plot  6.064 13.396  6.132 13.428 /
\plot  6.132 13.428  6.200 13.460 /
\plot  6.200 13.460  6.274 13.490 /
\plot  6.274 13.490  6.350 13.519 /
\plot  6.350 13.519  6.401 13.540 /
\plot  6.401 13.540  6.456 13.561 /
\plot  6.456 13.561  6.511 13.581 /
\plot  6.511 13.581  6.568 13.602 /
\plot  6.568 13.602  6.627 13.623 /
\plot  6.627 13.623  6.689 13.644 /
\plot  6.689 13.644  6.752 13.665 /
\plot  6.752 13.665  6.820 13.686 /
\plot  6.820 13.686  6.888 13.710 /
\plot  6.888 13.710  6.957 13.733 /
\plot  6.957 13.733  7.032 13.754 /
\plot  7.032 13.754  7.106 13.777 /
\plot  7.106 13.777  7.184 13.803 /
\plot  7.184 13.803  7.264 13.826 /
\plot  7.264 13.826  7.345 13.849 /
\plot  7.345 13.849  7.430 13.875 /
\plot  7.430 13.875  7.514 13.898 /
\plot  7.514 13.898  7.601 13.923 /
\plot  7.601 13.923  7.690 13.949 /
\plot  7.690 13.949  7.781 13.974 /
\plot  7.781 13.974  7.872 14.000 /
\plot  7.872 14.000  7.963 14.023 /
\plot  7.963 14.023  8.056 14.048 /
\plot  8.056 14.048  8.149 14.074 /
\plot  8.149 14.074  8.242 14.099 /
\plot  8.242 14.099  8.338 14.125 /
\plot  8.338 14.125  8.431 14.148 /
\plot  8.431 14.148  8.526 14.173 /
\plot  8.526 14.173  8.619 14.196 /
\plot  8.619 14.196  8.714 14.220 /
\plot  8.714 14.220  8.807 14.243 /
\plot  8.807 14.243  8.901 14.268 /
\plot  8.901 14.268  8.992 14.290 /
\plot  8.992 14.290  9.085 14.313 /
\plot  9.085 14.313  9.176 14.336 /
\plot  9.176 14.336  9.267 14.357 /
\plot  9.267 14.357  9.358 14.381 /
\plot  9.358 14.381  9.449 14.402 /
\plot  9.449 14.402  9.540 14.425 /
\plot  9.540 14.425  9.631 14.446 /
\plot  9.631 14.446  9.718 14.467 /
\plot  9.718 14.467  9.804 14.489 /
\plot  9.804 14.489  9.893 14.510 /
\plot  9.893 14.510  9.982 14.531 /
\plot  9.982 14.531 10.071 14.552 /
\plot 10.071 14.552 10.162 14.573 /
\plot 10.162 14.573 10.253 14.594 /
\plot 10.253 14.594 10.346 14.618 /
\plot 10.346 14.618 10.439 14.639 /
\plot 10.439 14.639 10.535 14.662 /
\plot 10.535 14.662 10.630 14.685 /
\plot 10.630 14.685 10.727 14.709 /
\plot 10.727 14.709 10.825 14.734 /
\plot 10.825 14.734 10.922 14.757 /
\plot 10.922 14.757 11.021 14.783 /
\plot 11.021 14.783 11.119 14.808 /
\plot 11.119 14.808 11.218 14.834 /
\plot 11.218 14.834 11.318 14.859 /
\plot 11.318 14.859 11.417 14.884 /
\plot 11.417 14.884 11.515 14.910 /
\plot 11.515 14.910 11.614 14.937 /
\plot 11.614 14.937 11.712 14.963 /
\plot 11.712 14.963 11.807 14.990 /
\plot 11.807 14.990 11.902 15.018 /
\plot 11.902 15.018 11.997 15.045 /
\plot 11.997 15.045 12.088 15.071 /
\plot 12.088 15.071 12.181 15.098 /
\plot 12.181 15.098 12.270 15.126 /
\plot 12.270 15.126 12.357 15.153 /
\plot 12.357 15.153 12.444 15.181 /
\plot 12.444 15.181 12.526 15.206 /
\plot 12.526 15.206 12.609 15.234 /
\plot 12.609 15.234 12.687 15.261 /
\plot 12.687 15.261 12.766 15.289 /
\plot 12.766 15.289 12.840 15.314 /
\plot 12.840 15.314 12.914 15.342 /
\plot 12.914 15.342 12.984 15.369 /
\plot 12.984 15.369 13.051 15.395 /
\plot 13.051 15.395 13.119 15.422 /
\plot 13.119 15.422 13.185 15.450 /
\plot 13.185 15.450 13.246 15.477 /
\plot 13.246 15.477 13.310 15.505 /
\plot 13.310 15.505 13.392 15.543 /
\plot 13.392 15.543 13.473 15.583 /
\plot 13.473 15.583 13.551 15.625 /
\plot 13.551 15.625 13.629 15.665 /
\plot 13.629 15.665 13.703 15.710 /
\plot 13.703 15.710 13.775 15.754 /
\plot 13.775 15.754 13.847 15.799 /
\plot 13.847 15.799 13.917 15.845 /
\plot 13.917 15.845 13.985 15.892 /
\plot 13.985 15.892 14.050 15.941 /
\plot 14.050 15.941 14.116 15.989 /
\plot 14.116 15.989 14.177 16.040 /
\plot 14.177 16.040 14.237 16.091 /
\plot 14.237 16.091 14.296 16.142 /
\plot 14.296 16.142 14.351 16.192 /
\plot 14.351 16.192 14.404 16.243 /
\plot 14.404 16.243 14.455 16.294 /
\plot 14.455 16.294 14.503 16.345 /
\plot 14.503 16.345 14.548 16.396 /
\plot 14.548 16.396 14.590 16.444 /
\plot 14.590 16.444 14.633 16.493 /
\plot 14.633 16.493 14.669 16.540 /
\plot 14.669 16.540 14.704 16.586 /
\plot 14.704 16.586 14.738 16.631 /
\plot 14.738 16.631 14.768 16.675 /
\plot 14.768 16.675 14.798 16.720 /
\plot 14.798 16.720 14.825 16.760 /
\plot 14.825 16.760 14.851 16.802 /
\plot 14.851 16.802 14.874 16.842 /
\plot 14.874 16.842 14.897 16.880 /
\plot 14.897 16.880 14.925 16.933 /
\plot 14.925 16.933 14.952 16.984 /
\plot 14.952 16.984 14.975 17.035 /
\plot 14.975 17.035 14.999 17.088 /
\plot 14.999 17.088 15.022 17.141 /
\plot 15.022 17.141 15.043 17.194 /
\plot 15.043 17.194 15.062 17.251 /
\plot 15.062 17.251 15.083 17.310 /
\plot 15.083 17.310 15.102 17.374 /
\plot 15.102 17.374 15.121 17.437 /
\plot 15.121 17.437 15.138 17.505 /
\plot 15.138 17.505 15.157 17.575 /
\plot 15.157 17.575 15.174 17.642 /
\plot 15.174 17.642 15.189 17.708 /
\plot 15.189 17.708 15.204 17.769 /
\plot 15.204 17.769 15.215 17.822 /
\plot 15.215 17.822 15.225 17.867 /
\plot 15.225 17.867 15.232 17.899 /
\plot 15.232 17.899 15.236 17.922 /
\plot 15.236 17.922 15.238 17.932 /
\plot 15.238 17.932 15.240 17.939 /
}%
%  METADATA <id>8</id> 
%
% Fig POLYLINE object
%
\linethickness= 0.500pt
\setplotsymbol ({\thinlinefont .})
{\color[rgb]{0,0,0}\putrule from 15.240 15.240 to 15.236 15.240
\putrule from 15.236 15.240 to 15.227 15.240
\plot 15.227 15.240 15.208 15.238 /
\plot 15.208 15.238 15.183 15.236 /
\plot 15.183 15.236 15.145 15.234 /
\plot 15.145 15.234 15.094 15.229 /
\plot 15.094 15.229 15.033 15.225 /
\plot 15.033 15.225 14.958 15.219 /
\plot 14.958 15.219 14.874 15.212 /
\plot 14.874 15.212 14.781 15.206 /
\plot 14.781 15.206 14.679 15.198 /
\plot 14.679 15.198 14.571 15.189 /
\plot 14.571 15.189 14.461 15.181 /
\plot 14.461 15.181 14.347 15.172 /
\plot 14.347 15.172 14.232 15.162 /
\plot 14.232 15.162 14.120 15.153 /
\plot 14.120 15.153 14.008 15.143 /
\plot 14.008 15.143 13.898 15.134 /
\plot 13.898 15.134 13.792 15.124 /
\plot 13.792 15.124 13.691 15.113 /
\plot 13.691 15.113 13.593 15.105 /
\plot 13.593 15.105 13.498 15.094 /
\plot 13.498 15.094 13.407 15.085 /
\plot 13.407 15.085 13.320 15.075 /
\plot 13.320 15.075 13.236 15.064 /
\plot 13.236 15.064 13.153 15.054 /
\plot 13.153 15.054 13.075 15.045 /
\plot 13.075 15.045 12.996 15.035 /
\plot 12.996 15.035 12.920 15.022 /
\plot 12.920 15.022 12.844 15.011 /
\plot 12.844 15.011 12.770 15.001 /
\plot 12.770 15.001 12.696 14.988 /
\plot 12.696 14.988 12.622 14.975 /
\plot 12.622 14.975 12.554 14.963 /
\plot 12.554 14.963 12.486 14.952 /
\plot 12.486 14.952 12.416 14.939 /
\plot 12.416 14.939 12.349 14.927 /
\plot 12.349 14.927 12.279 14.912 /
\plot 12.279 14.912 12.207 14.897 /
\plot 12.207 14.897 12.135 14.882 /
\plot 12.135 14.882 12.061 14.867 /
\plot 12.061 14.867 11.985 14.853 /
\plot 11.985 14.853 11.908 14.836 /
\plot 11.908 14.836 11.832 14.819 /
\plot 11.832 14.819 11.754 14.800 /
\plot 11.754 14.800 11.673 14.781 /
\plot 11.673 14.781 11.593 14.762 /
\plot 11.593 14.762 11.510 14.743 /
\plot 11.510 14.743 11.428 14.721 /
\plot 11.428 14.721 11.345 14.702 /
\plot 11.345 14.702 11.261 14.681 /
\plot 11.261 14.681 11.178 14.658 /
\plot 11.178 14.658 11.093 14.637 /
\plot 11.093 14.637 11.009 14.613 /
\plot 11.009 14.613 10.924 14.592 /
\plot 10.924 14.592 10.839 14.569 /
\plot 10.839 14.569 10.757 14.546 /
\plot 10.757 14.546 10.672 14.522 /
\plot 10.672 14.522 10.590 14.497 /
\plot 10.590 14.497 10.509 14.474 /
\plot 10.509 14.474 10.427 14.450 /
\plot 10.427 14.450 10.346 14.427 /
\plot 10.346 14.427 10.268 14.402 /
\plot 10.268 14.402 10.190 14.379 /
\plot 10.190 14.379 10.111 14.355 /
\plot 10.111 14.355 10.033 14.330 /
\plot 10.033 14.330  9.957 14.307 /
\plot  9.957 14.307  9.883 14.281 /
\plot  9.883 14.281  9.807 14.258 /
\plot  9.807 14.258  9.732 14.232 /
\plot  9.732 14.232  9.658 14.207 /
\plot  9.658 14.207  9.578 14.182 /
\plot  9.578 14.182  9.500 14.154 /
\plot  9.500 14.154  9.419 14.127 /
\plot  9.419 14.127  9.339 14.099 /
\plot  9.339 14.099  9.258 14.072 /
\plot  9.258 14.072  9.176 14.042 /
\plot  9.176 14.042  9.093 14.012 /
\plot  9.093 14.012  9.009 13.983 /
\plot  9.009 13.983  8.924 13.951 /
\plot  8.924 13.951  8.837 13.919 /
\plot  8.837 13.919  8.752 13.887 /
\plot  8.752 13.887  8.664 13.856 /
\plot  8.664 13.856  8.577 13.822 /
\plot  8.577 13.822  8.490 13.788 /
\plot  8.490 13.788  8.401 13.754 /
\plot  8.401 13.754  8.314 13.720 /
\plot  8.314 13.720  8.227 13.686 /
\plot  8.227 13.686  8.141 13.653 /
\plot  8.141 13.653  8.054 13.619 /
\plot  8.054 13.619  7.969 13.585 /
\plot  7.969 13.585  7.887 13.551 /
\plot  7.887 13.551  7.804 13.517 /
\plot  7.804 13.517  7.722 13.483 /
\plot  7.722 13.483  7.643 13.449 /
\plot  7.643 13.449  7.567 13.418 /
\plot  7.567 13.418  7.491 13.386 /
\plot  7.491 13.386  7.419 13.354 /
\plot  7.419 13.354  7.347 13.322 /
\plot  7.347 13.322  7.277 13.293 /
\plot  7.277 13.293  7.211 13.263 /
\plot  7.211 13.263  7.146 13.233 /
\plot  7.146 13.233  7.084 13.206 /
\plot  7.084 13.206  7.023 13.178 /
\plot  7.023 13.178  6.966 13.151 /
\plot  6.966 13.151  6.909 13.123 /
\plot  6.909 13.123  6.854 13.096 /
\plot  6.854 13.096  6.778 13.060 /
\plot  6.778 13.060  6.706 13.026 /
\plot  6.706 13.026  6.636 12.990 /
\plot  6.636 12.990  6.568 12.954 /
\plot  6.568 12.954  6.500 12.918 /
\plot  6.500 12.918  6.435 12.882 /
\plot  6.435 12.882  6.371 12.846 /
\plot  6.371 12.846  6.308 12.810 /
\plot  6.308 12.810  6.248 12.774 /
\plot  6.248 12.774  6.189 12.738 /
\plot  6.189 12.738  6.132 12.700 /
\plot  6.132 12.700  6.075 12.664 /
\plot  6.075 12.664  6.022 12.628 /
\plot  6.022 12.628  5.971 12.592 /
\plot  5.971 12.592  5.925 12.556 /
\plot  5.925 12.556  5.878 12.520 /
\plot  5.878 12.520  5.836 12.486 /
\plot  5.836 12.486  5.795 12.452 /
\plot  5.795 12.452  5.757 12.418 /
\plot  5.757 12.418  5.721 12.385 /
\plot  5.721 12.385  5.690 12.353 /
\plot  5.690 12.353  5.658 12.321 /
\plot  5.658 12.321  5.630 12.289 /
\plot  5.630 12.289  5.605 12.260 /
\plot  5.605 12.260  5.580 12.228 /
\plot  5.580 12.228  5.556 12.196 /
\plot  5.556 12.196  5.529 12.156 /
\plot  5.529 12.156  5.501 12.116 /
\plot  5.501 12.116  5.478 12.073 /
\plot  5.478 12.073  5.455 12.031 /
\plot  5.455 12.031  5.431 11.985 /
\plot  5.431 11.985  5.410 11.938 /
\plot  5.410 11.938  5.391 11.891 /
\plot  5.391 11.891  5.372 11.841 /
\plot  5.372 11.841  5.355 11.792 /
\plot  5.355 11.792  5.340 11.741 /
\plot  5.340 11.741  5.326 11.688 /
\plot  5.326 11.688  5.313 11.637 /
\plot  5.313 11.637  5.300 11.587 /
\plot  5.300 11.587  5.290 11.536 /
\plot  5.290 11.536  5.279 11.485 /
\plot  5.279 11.485  5.271 11.436 /
\plot  5.271 11.436  5.262 11.388 /
\plot  5.262 11.388  5.254 11.339 /
\plot  5.254 11.339  5.245 11.292 /
\plot  5.245 11.292  5.239 11.244 /
\plot  5.239 11.244  5.232 11.201 /
\plot  5.232 11.201  5.226 11.157 /
\plot  5.226 11.157  5.218 11.113 /
\plot  5.218 11.113  5.211 11.066 /
\plot  5.211 11.066  5.205 11.015 /
\plot  5.205 11.015  5.196 10.962 /
\plot  5.196 10.962  5.188 10.907 /
\plot  5.188 10.907  5.179 10.846 /
\plot  5.179 10.846  5.169 10.780 /
\plot  5.169 10.780  5.160 10.712 /
\plot  5.160 10.712  5.150 10.640 /
\plot  5.150 10.640  5.139 10.564 /
\plot  5.139 10.564  5.127 10.490 /
\plot  5.127 10.490  5.118 10.418 /
\plot  5.118 10.418  5.108 10.351 /
\plot  5.108 10.351  5.099 10.291 /
\plot  5.099 10.291  5.093 10.243 /
\plot  5.093 10.243  5.086 10.204 /
\plot  5.086 10.204  5.082 10.181 /
\plot  5.082 10.181  5.080 10.166 /
\putrule from  5.080 10.166 to  5.080 10.160
}%
%  METADATA <id>18</id> 
%
% Fig POLYLINE object
%
\linethickness= 0.500pt
\setplotsymbol ({\thinlinefont .})
{\color[rgb]{0,0,0}\plot  5.112 11.430  6.636  9.874 /
}%
%  METADATA <id>19</id> 
%
% Fig POLYLINE object
%
\linethickness= 0.500pt
\setplotsymbol ({\thinlinefont .})
{\color[rgb]{0,0,0}\plot  5.112 12.732  7.398 10.382 /
}%
%  METADATA <id>20</id> 
%
% Fig POLYLINE object
%
\linethickness= 0.500pt
\setplotsymbol ({\thinlinefont .})
{\color[rgb]{0,0,0}\plot  5.620 13.399  8.064 11.081 /
}%
%  METADATA <id>21</id> 
%
% Fig POLYLINE object
%
\linethickness= 0.500pt
\setplotsymbol ({\thinlinefont .})
{\color[rgb]{0,0,0}\plot  6.223 14.097  8.604 11.811 /
}%
%  METADATA <id>22</id> 
%
% Fig POLYLINE object
%
\linethickness= 0.500pt
\setplotsymbol ({\thinlinefont .})
{\color[rgb]{0,0,0}\plot  6.890 14.637  9.049 12.541 /
}%
%  METADATA <id>23</id> 
%
% Fig POLYLINE object
%
\linethickness= 0.500pt
\setplotsymbol ({\thinlinefont .})
{\color[rgb]{0,0,0}\plot  7.715 15.081  9.652 13.272 /
}%
%  METADATA <id>28</id> 
%
% Fig POLYLINE object
%
\linethickness= 0.500pt
\setplotsymbol ({\thinlinefont .})
{\color[rgb]{0,0,0}\plot 12.522 15.369 12.975 14.956 /
}%
%  METADATA <id>29</id> 
%
% Fig POLYLINE object
%
\linethickness= 0.500pt
\setplotsymbol ({\thinlinefont .})
{\color[rgb]{0,0,0}\plot 13.193 15.903 13.942 15.153 /
}%
%  METADATA <id>30</id> 
%
% Fig POLYLINE object
%
\linethickness= 0.500pt
\setplotsymbol ({\thinlinefont .})
{\color[rgb]{0,0,0}\plot 13.843 16.612 15.145 15.291 /
}%
%  METADATA <id>31</id> 
%
% Fig POLYLINE object
%
\linethickness= 0.500pt
\setplotsymbol ({\thinlinefont .})
{\color[rgb]{0,0,0}\plot 14.417 17.185 15.204 16.514 /
}%
%  METADATA <id>32</id> 
%
% Fig POLYLINE object
%
\linethickness= 0.500pt
\setplotsymbol ({\thinlinefont .})
{\color[rgb]{0,0,0}\plot  9.131 15.035 10.234 13.851 /
}%
%  METADATA <id>33</id> 
%
% Fig POLYLINE object
%
\linethickness= 0.500pt
\setplotsymbol ({\thinlinefont .})
{\color[rgb]{0,0,0}\plot 10.433 14.916 11.024 14.385 /
}%
%  METADATA <id>34</id> 
%
% Fig POLYLINE object
%
\linethickness= 0.500pt
\setplotsymbol ({\thinlinefont .})
{\color[rgb]{0,0,0}\plot 11.557 15.035 11.891 14.719 /
}%
%  METADATA <id>9</id> 
%
% Fig TEXT object
%
\put{\SetFigFont{6}{7.2}{\rmdefault}{\mddefault}{\updefault}{\color[rgb]{0,0,0}$\Ga_u$}%
} [lB] at  8.350 15.240
%  METADATA <id>10</id> 
%
% Fig TEXT object
%
\put{\SetFigFont{6}{7.2}{\rmdefault}{\mddefault}{\updefault}{\color[rgb]{0,0,0}$\Ga_d$}%
} [lB] at  9.652 12.764
%  METADATA <id>11</id> 
%
% Fig TEXT object
%
\put{\SetFigFont{6}{7.2}{\rmdefault}{\mddefault}{\updefault}{\color[rgb]{0,0,0}$\tilde{S}_-$}%
} [lB] at  3.842 12.637
%  METADATA <id>12</id> 
%
% Fig TEXT object
%
\put{\SetFigFont{6}{7.2}{\rmdefault}{\mddefault}{\updefault}{\color[rgb]{0,0,0}$\bar{N}_-$}%
} [lB] at  3.873 10.160
%  METADATA <id>13</id> 
%
% Fig TEXT object
%
\put{\SetFigFont{6}{7.2}{\rmdefault}{\mddefault}{\updefault}{\color[rgb]{0,0,0}$\bar{N}_+$}%
} [lB] at 15.335 17.780
%  METADATA <id>14</id> 
%
% Fig TEXT object
%
\put{\SetFigFont{6}{7.2}{\rmdefault}{\mddefault}{\updefault}{\color[rgb]{0,0,0}$\tilde{S}_+$}%
} [lB] at 15.526 15.304
%  METADATA <id>15</id> 
%
% Fig TEXT object
%
\put{\SetFigFont{6}{7.2}{\rmdefault}{\mddefault}{\updefault}{\color[rgb]{0,0,0}$\cN$}%
} [lB] at  9.620 18.605
%  METADATA <id>16</id> 
%
% Fig TEXT object
%
\put{\SetFigFont{6}{7.2}{\rmdefault}{\mddefault}{\updefault}{\color[rgb]{0,0,0}$\cP$}%
} [lB] at  7.271 12.478
%  METADATA <id>17</id> 
%
% Fig TEXT object
%
\put{\SetFigFont{6}{7.2}{\rmdefault}{\mddefault}{\updefault}{\color[rgb]{0,0,0}$\cN$}%
} [lB] at 12.668 12.827
\linethickness=0pt
\putrectangle corners at  3.810 21.660 and 15.558  6.280
\endpicture}
%%%%%%%%%%%%%%%%
\qquad
%\input{KTZ_zGKN_I_null1.tex}
%%%%%%%%%%%%%%%%
%Title: C:\Users\shadi\Documents\My Projects\null1.fig
%%Created by: fig2dev.exe Version 3.2 Patchlevel 5d
%%CreationDate: Mon Apr 07 23:40:19 2014
%%User: shadi@SHADI-PC (SHADI-PC)
%
%  Created by WinFIG version 5.01 
%  METADATA <version>1.0</version> 
%
\font\thinlinefont=cmr5
\begingroup\makeatletter\ifx\SetFigFont\undefined%
\gdef\SetFigFont#1#2#3#4#5{%
  \reset@font\fontsize{#1}{#2pt}%
  \fontfamily{#3}\fontseries{#4}\fontshape{#5}%
  \selectfont}%
\fi\endgroup%
\mbox{\beginpicture
\setcoordinatesystem units <0.50000cm,0.50000cm>
\unitlength=0.50000cm
\linethickness=1pt
\setplotsymbol ({\makebox(0,0)[l]{\tencirc\symbol{'160}}})
\setshadesymbol ({\thinlinefont .})
\setlinear
%  METADATA <id>1</id> 
%
% Fig POLYLINE object
%
\linethickness= 0.500pt
\setplotsymbol ({\thinlinefont .})
{\color[rgb]{0,0,0}\putrule from  5.080 16.510 to 15.240 16.510
}%
%  METADATA <id>2</id> 
%
% Fig POLYLINE object
%
\linethickness= 0.500pt
\setplotsymbol ({\thinlinefont .})
{\color[rgb]{0,0,0}\putrule from  5.080 11.430 to 15.240 11.430
}%
%  METADATA <id>3</id> 
%
% Fig POLYLINE object
%
\linethickness=1pt
\setplotsymbol ({\makebox(0,0)[l]{\tencirc\symbol{'160}}})
{\color[rgb]{0,0,0}\putrule from  5.080 21.590 to 15.240 21.590
}%
%  METADATA <id>4</id> 
%
% Fig POLYLINE object
%
\linethickness=1pt
\setplotsymbol ({\makebox(0,0)[l]{\tencirc\symbol{'160}}})
{\color[rgb]{0,0,0}\putrule from 15.240 21.590 to 15.240  6.315
\putrule from 15.240  6.350 to  5.045  6.350
\putrule from  5.080  6.350 to  5.080 21.590
}%
%  METADATA <id>5</id> 
%
% Fig POLYLINE object
%
\linethickness= 0.500pt
\setplotsymbol ({\thinlinefont .})
{\color[rgb]{0,0,0}\plot 13.970 16.192 14.446 15.875 /
%
% arrow head
%
\plot 14.200 15.963 14.446 15.875 14.270 16.069 /
}%
%  METADATA <id>6</id> 
%
% Fig POLYLINE object
%
\linethickness= 0.500pt
\setplotsymbol ({\thinlinefont .})
{\color[rgb]{0,0,0}\putrule from  5.080 17.462 to  5.080 16.034
%
% arrow head
%
\plot  5.017 16.288  5.080 16.034  5.143 16.288 /
}%
%  METADATA <id>7</id> 
%
% Fig POLYLINE object
%
\linethickness= 0.500pt
\setplotsymbol ({\thinlinefont .})
{\color[rgb]{0,0,0}\putrule from 15.240 12.383 to 15.240 10.795
%
% arrow head
%
\plot 15.176 11.049 15.240 10.795 15.304 11.049 /
}%
%  METADATA <id>12</id> 
%
% Fig POLYLINE object
%
\linethickness= 0.500pt
\setplotsymbol ({\thinlinefont .})
\setdashes < 0.1905cm>
{\color[rgb]{0,0,0}\plot  5.080 12.700  5.084 12.706 /
\plot  5.084 12.706  5.091 12.719 /
\plot  5.091 12.719  5.105 12.742 /
\plot  5.105 12.742  5.124 12.776 /
\plot  5.124 12.776  5.150 12.821 /
\plot  5.150 12.821  5.179 12.874 /
\plot  5.179 12.874  5.213 12.933 /
\plot  5.213 12.933  5.249 12.994 /
\plot  5.249 12.994  5.283 13.056 /
\plot  5.283 13.056  5.317 13.115 /
\plot  5.317 13.115  5.349 13.172 /
\plot  5.349 13.172  5.376 13.227 /
\plot  5.376 13.227  5.404 13.278 /
\plot  5.404 13.278  5.429 13.327 /
\plot  5.429 13.327  5.450 13.373 /
\plot  5.450 13.373  5.472 13.418 /
\plot  5.472 13.418  5.493 13.460 /
\plot  5.493 13.460  5.512 13.504 /
\plot  5.512 13.504  5.531 13.547 /
\plot  5.531 13.547  5.546 13.587 /
\plot  5.546 13.587  5.563 13.627 /
\plot  5.563 13.627  5.580 13.669 /
\plot  5.580 13.669  5.594 13.714 /
\plot  5.594 13.714  5.611 13.758 /
\plot  5.611 13.758  5.628 13.807 /
\plot  5.628 13.807  5.645 13.856 /
\plot  5.645 13.856  5.662 13.904 /
\plot  5.662 13.904  5.679 13.957 /
\plot  5.679 13.957  5.698 14.010 /
\plot  5.698 14.010  5.715 14.063 /
\plot  5.715 14.063  5.732 14.118 /
\plot  5.732 14.118  5.751 14.173 /
\plot  5.751 14.173  5.768 14.228 /
\plot  5.768 14.228  5.785 14.283 /
\plot  5.785 14.283  5.802 14.338 /
\plot  5.802 14.338  5.819 14.393 /
\plot  5.819 14.393  5.836 14.446 /
\plot  5.836 14.446  5.850 14.499 /
\plot  5.850 14.499  5.867 14.552 /
\plot  5.867 14.552  5.884 14.605 /
\plot  5.884 14.605  5.901 14.658 /
\plot  5.901 14.658  5.916 14.707 /
\plot  5.916 14.707  5.931 14.755 /
\plot  5.931 14.755  5.948 14.806 /
\plot  5.948 14.806  5.965 14.859 /
\plot  5.965 14.859  5.982 14.912 /
\plot  5.982 14.912  5.999 14.967 /
\plot  5.999 14.967  6.016 15.022 /
\plot  6.016 15.022  6.035 15.077 /
\plot  6.035 15.077  6.054 15.134 /
\plot  6.054 15.134  6.073 15.193 /
\plot  6.073 15.193  6.092 15.251 /
\plot  6.092 15.251  6.113 15.308 /
\plot  6.113 15.308  6.132 15.365 /
\plot  6.132 15.365  6.151 15.422 /
\plot  6.151 15.422  6.170 15.477 /
\plot  6.170 15.477  6.189 15.532 /
\plot  6.189 15.532  6.208 15.585 /
\plot  6.208 15.585  6.225 15.636 /
\plot  6.225 15.636  6.242 15.685 /
\plot  6.242 15.685  6.259 15.731 /
\plot  6.259 15.731  6.276 15.776 /
\plot  6.276 15.776  6.293 15.820 /
\plot  6.293 15.820  6.308 15.862 /
\plot  6.308 15.862  6.325 15.900 /
\plot  6.325 15.900  6.342 15.947 /
\plot  6.342 15.947  6.361 15.994 /
\plot  6.361 15.994  6.378 16.038 /
\plot  6.378 16.038  6.397 16.082 /
\plot  6.397 16.082  6.416 16.125 /
\plot  6.416 16.125  6.435 16.169 /
\plot  6.435 16.169  6.456 16.212 /
\plot  6.456 16.212  6.477 16.256 /
\plot  6.477 16.256  6.500 16.298 /
\plot  6.500 16.298  6.524 16.341 /
\plot  6.524 16.341  6.547 16.381 /
\plot  6.547 16.381  6.572 16.423 /
\plot  6.572 16.423  6.598 16.463 /
\plot  6.598 16.463  6.625 16.502 /
\plot  6.625 16.502  6.651 16.540 /
\plot  6.651 16.540  6.680 16.578 /
\plot  6.680 16.578  6.708 16.614 /
\plot  6.708 16.614  6.737 16.650 /
\plot  6.737 16.650  6.767 16.686 /
\plot  6.767 16.686  6.801 16.722 /
\plot  6.801 16.722  6.828 16.751 /
\plot  6.828 16.751  6.858 16.783 /
\plot  6.858 16.783  6.888 16.815 /
\plot  6.888 16.815  6.919 16.849 /
\plot  6.919 16.849  6.955 16.883 /
\plot  6.955 16.883  6.991 16.919 /
\plot  6.991 16.919  7.027 16.954 /
\plot  7.027 16.954  7.068 16.990 /
\plot  7.068 16.990  7.108 17.029 /
\plot  7.108 17.029  7.148 17.067 /
\plot  7.148 17.067  7.190 17.107 /
\plot  7.190 17.107  7.235 17.145 /
\plot  7.235 17.145  7.277 17.183 /
\plot  7.277 17.183  7.322 17.223 /
\plot  7.322 17.223  7.366 17.261 /
\plot  7.366 17.261  7.408 17.300 /
\plot  7.408 17.300  7.451 17.336 /
\plot  7.451 17.336  7.493 17.371 /
\plot  7.493 17.371  7.533 17.407 /
\plot  7.533 17.407  7.573 17.441 /
\plot  7.573 17.441  7.614 17.475 /
\plot  7.614 17.475  7.652 17.507 /
\plot  7.652 17.507  7.690 17.539 /
\plot  7.690 17.539  7.726 17.568 /
\plot  7.726 17.568  7.766 17.602 /
\plot  7.766 17.602  7.804 17.634 /
\plot  7.804 17.634  7.844 17.666 /
\plot  7.844 17.666  7.885 17.697 /
\plot  7.885 17.697  7.925 17.731 /
\plot  7.925 17.731  7.965 17.763 /
\plot  7.965 17.763  8.007 17.795 /
\plot  8.007 17.795  8.050 17.827 /
\plot  8.050 17.827  8.094 17.856 /
\plot  8.094 17.856  8.139 17.888 /
\plot  8.139 17.888  8.183 17.918 /
\plot  8.183 17.918  8.227 17.945 /
\plot  8.227 17.945  8.274 17.973 /
\plot  8.274 17.973  8.319 17.998 /
\plot  8.319 17.998  8.363 18.023 /
\plot  8.363 18.023  8.410 18.047 /
\plot  8.410 18.047  8.454 18.068 /
\plot  8.454 18.068  8.498 18.087 /
\plot  8.498 18.087  8.543 18.104 /
\plot  8.543 18.104  8.587 18.121 /
\plot  8.587 18.121  8.632 18.136 /
\plot  8.632 18.136  8.678 18.150 /
\plot  8.678 18.150  8.725 18.163 /
\plot  8.725 18.163  8.774 18.174 /
\plot  8.774 18.174  8.824 18.184 /
\plot  8.824 18.184  8.875 18.195 /
\plot  8.875 18.195  8.930 18.203 /
\plot  8.930 18.203  8.985 18.210 /
\plot  8.985 18.210  9.042 18.216 /
\plot  9.042 18.216  9.100 18.222 /
\plot  9.100 18.222  9.159 18.227 /
\plot  9.159 18.227  9.220 18.229 /
\plot  9.220 18.229  9.279 18.231 /
\plot  9.279 18.231  9.339 18.231 /
\plot  9.339 18.231  9.398 18.231 /
\plot  9.398 18.231  9.455 18.229 /
\plot  9.455 18.229  9.512 18.224 /
\plot  9.512 18.224  9.565 18.220 /
\plot  9.565 18.220  9.618 18.216 /
\plot  9.618 18.216  9.667 18.210 /
\plot  9.667 18.210  9.713 18.203 /
\plot  9.713 18.203  9.760 18.195 /
\plot  9.760 18.195  9.802 18.186 /
\plot  9.802 18.186  9.842 18.176 /
\plot  9.842 18.176  9.889 18.163 /
\plot  9.889 18.163  9.936 18.150 /
\plot  9.936 18.150  9.980 18.133 /
\plot  9.980 18.133 10.022 18.117 /
\plot 10.022 18.117 10.065 18.095 /
\plot 10.065 18.095 10.105 18.074 /
\plot 10.105 18.074 10.147 18.051 /
\plot 10.147 18.051 10.185 18.026 /
\plot 10.185 18.026 10.226 18.000 /
\plot 10.226 18.000 10.262 17.973 /
\plot 10.262 17.973 10.300 17.943 /
\plot 10.300 17.943 10.334 17.913 /
\plot 10.334 17.913 10.370 17.884 /
\plot 10.370 17.884 10.401 17.852 /
\plot 10.401 17.852 10.435 17.822 /
\plot 10.435 17.822 10.467 17.791 /
\plot 10.467 17.791 10.499 17.759 /
\plot 10.499 17.759 10.530 17.727 /
\plot 10.530 17.727 10.560 17.697 /
\plot 10.560 17.697 10.590 17.668 /
\plot 10.590 17.668 10.621 17.636 /
\plot 10.621 17.636 10.655 17.602 /
\plot 10.655 17.602 10.689 17.568 /
\plot 10.689 17.568 10.725 17.532 /
\plot 10.725 17.532 10.761 17.496 /
\plot 10.761 17.496 10.799 17.458 /
\plot 10.799 17.458 10.835 17.418 /
\plot 10.835 17.418 10.875 17.380 /
\plot 10.875 17.380 10.914 17.340 /
\plot 10.914 17.340 10.950 17.302 /
\plot 10.950 17.302 10.988 17.261 /
\plot 10.988 17.261 11.024 17.223 /
\plot 11.024 17.223 11.060 17.185 /
\plot 11.060 17.185 11.093 17.149 /
\plot 11.093 17.149 11.127 17.113 /
\plot 11.127 17.113 11.159 17.079 /
\plot 11.159 17.079 11.189 17.046 /
\plot 11.189 17.046 11.218 17.012 /
\plot 11.218 17.012 11.244 16.982 /
\plot 11.244 16.982 11.271 16.952 /
\plot 11.271 16.952 11.297 16.923 /
\plot 11.297 16.923 11.322 16.891 /
\plot 11.322 16.891 11.347 16.859 /
\plot 11.347 16.859 11.373 16.828 /
\plot 11.373 16.828 11.398 16.792 /
\plot 11.398 16.792 11.424 16.756 /
\plot 11.424 16.756 11.449 16.720 /
\plot 11.449 16.720 11.474 16.679 /
\plot 11.474 16.679 11.498 16.639 /
\plot 11.498 16.639 11.523 16.597 /
\plot 11.523 16.597 11.546 16.554 /
\plot 11.546 16.554 11.568 16.510 /
\plot 11.568 16.510 11.591 16.466 /
\plot 11.591 16.466 11.612 16.419 /
\plot 11.612 16.419 11.631 16.370 /
\plot 11.631 16.370 11.650 16.322 /
\plot 11.650 16.322 11.669 16.273 /
\plot 11.669 16.273 11.686 16.220 /
\plot 11.686 16.220 11.705 16.167 /
\plot 11.705 16.167 11.722 16.112 /
\plot 11.722 16.112 11.733 16.072 /
\plot 11.733 16.072 11.745 16.027 /
\plot 11.745 16.027 11.756 15.983 /
\plot 11.756 15.983 11.769 15.934 /
\plot 11.769 15.934 11.779 15.886 /
\plot 11.779 15.886 11.792 15.833 /
\plot 11.792 15.833 11.803 15.778 /
\plot 11.803 15.778 11.813 15.720 /
\plot 11.813 15.720 11.824 15.661 /
\plot 11.824 15.661 11.834 15.598 /
\plot 11.834 15.598 11.843 15.534 /
\plot 11.843 15.534 11.853 15.466 /
\plot 11.853 15.466 11.862 15.399 /
\plot 11.862 15.399 11.870 15.327 /
\plot 11.870 15.327 11.877 15.253 /
\plot 11.877 15.253 11.883 15.179 /
\plot 11.883 15.179 11.889 15.102 /
\plot 11.889 15.102 11.894 15.024 /
\plot 11.894 15.024 11.898 14.946 /
\plot 11.898 14.946 11.902 14.867 /
\plot 11.902 14.867 11.904 14.787 /
\plot 11.904 14.787 11.904 14.707 /
\plot 11.904 14.707 11.906 14.626 /
\plot 11.906 14.626 11.904 14.544 /
\plot 11.904 14.544 11.904 14.463 /
\plot 11.904 14.463 11.900 14.381 /
\plot 11.900 14.381 11.898 14.300 /
\plot 11.898 14.300 11.891 14.218 /
\plot 11.891 14.218 11.887 14.133 /
\plot 11.887 14.133 11.881 14.048 /
\plot 11.881 14.048 11.874 13.983 /
\plot 11.874 13.983 11.866 13.913 /
\plot 11.866 13.913 11.860 13.843 /
\plot 11.860 13.843 11.851 13.773 /
\plot 11.851 13.773 11.843 13.699 /
\plot 11.843 13.699 11.832 13.625 /
\plot 11.832 13.625 11.822 13.551 /
\plot 11.822 13.551 11.811 13.473 /
\plot 11.811 13.473 11.798 13.394 /
\plot 11.798 13.394 11.786 13.314 /
\plot 11.786 13.314 11.771 13.231 /
\plot 11.771 13.231 11.756 13.149 /
\plot 11.756 13.149 11.741 13.064 /
\plot 11.741 13.064 11.724 12.979 /
\plot 11.724 12.979 11.707 12.893 /
\plot 11.707 12.893 11.688 12.806 /
\plot 11.688 12.806 11.671 12.717 /
\plot 11.671 12.717 11.652 12.630 /
\plot 11.652 12.630 11.631 12.541 /
\plot 11.631 12.541 11.612 12.452 /
\plot 11.612 12.452 11.591 12.366 /
\plot 11.591 12.366 11.568 12.277 /
\plot 11.568 12.277 11.546 12.190 /
\plot 11.546 12.190 11.525 12.103 /
\plot 11.525 12.103 11.502 12.018 /
\plot 11.502 12.018 11.479 11.934 /
\plot 11.479 11.934 11.455 11.851 /
\plot 11.455 11.851 11.432 11.769 /
\plot 11.432 11.769 11.409 11.688 /
\plot 11.409 11.688 11.386 11.610 /
\plot 11.386 11.610 11.362 11.532 /
\plot 11.362 11.532 11.337 11.458 /
\plot 11.337 11.458 11.314 11.383 /
\plot 11.314 11.383 11.290 11.309 /
\plot 11.290 11.309 11.265 11.240 /
\plot 11.265 11.240 11.242 11.170 /
\plot 11.242 11.170 11.216 11.100 /
\plot 11.216 11.100 11.193 11.032 /
\plot 11.193 11.032 11.165 10.962 /
\plot 11.165 10.962 11.138 10.892 /
\plot 11.138 10.892 11.110 10.825 /
\plot 11.110 10.825 11.083 10.757 /
\plot 11.083 10.757 11.053 10.689 /
\plot 11.053 10.689 11.024 10.621 /
\plot 11.024 10.621 10.992 10.554 /
\plot 10.992 10.554 10.958 10.488 /
\plot 10.958 10.488 10.924 10.422 /
\plot 10.924 10.422 10.888 10.357 /
\plot 10.888 10.357 10.850 10.291 /
\plot 10.850 10.291 10.812 10.228 /
\plot 10.812 10.228 10.770 10.164 /
\plot 10.770 10.164 10.727 10.103 /
\plot 10.727 10.103 10.683 10.041 /
\plot 10.683 10.041 10.638  9.982 /
\plot 10.638  9.982 10.590  9.925 /
\plot 10.590  9.925 10.541  9.870 /
\plot 10.541  9.870 10.490  9.815 /
\plot 10.490  9.815 10.437  9.764 /
\plot 10.437  9.764 10.382  9.713 /
\plot 10.382  9.713 10.327  9.667 /
\plot 10.327  9.667 10.270  9.620 /
\plot 10.270  9.620 10.211  9.578 /
\plot 10.211  9.578 10.152  9.538 /
\plot 10.152  9.538 10.090  9.500 /
\plot 10.090  9.500 10.027  9.466 /
\plot 10.027  9.466  9.961  9.432 /
\plot  9.961  9.432  9.895  9.402 /
\plot  9.895  9.402  9.830  9.375 /
\plot  9.830  9.375  9.760  9.349 /
\plot  9.760  9.349  9.690  9.326 /
\plot  9.690  9.326  9.618  9.307 /
\plot  9.618  9.307  9.546  9.290 /
\plot  9.546  9.290  9.470  9.273 /
\plot  9.470  9.273  9.394  9.260 /
\plot  9.394  9.260  9.328  9.252 /
\plot  9.328  9.252  9.260  9.243 /
\plot  9.260  9.243  9.191  9.237 /
\plot  9.191  9.237  9.119  9.233 /
\plot  9.119  9.233  9.045  9.231 /
\plot  9.045  9.231  8.966  9.229 /
\plot  8.966  9.229  8.886  9.231 /
\plot  8.886  9.231  8.801  9.233 /
\plot  8.801  9.233  8.712  9.237 /
\plot  8.712  9.237  8.621  9.241 /
\plot  8.621  9.241  8.524  9.250 /
\plot  8.524  9.250  8.422  9.258 /
\plot  8.422  9.258  8.314  9.271 /
\plot  8.314  9.271  8.204  9.284 /
\plot  8.204  9.284  8.086  9.299 /
\plot  8.086  9.299  7.963  9.315 /
\plot  7.963  9.315  7.836  9.332 /
\plot  7.836  9.332  7.703  9.354 /
\plot  7.703  9.354  7.563  9.375 /
\plot  7.563  9.375  7.419  9.398 /
\plot  7.419  9.398  7.271  9.423 /
\plot  7.271  9.423  7.120  9.449 /
\plot  7.120  9.449  6.966  9.476 /
\plot  6.966  9.476  6.807  9.506 /
\plot  6.807  9.506  6.651  9.533 /
\plot  6.651  9.533  6.494  9.563 /
\plot  6.494  9.563  6.337  9.593 /
\plot  6.337  9.593  6.185  9.622 /
\plot  6.185  9.622  6.039  9.650 /
\plot  6.039  9.650  5.897  9.677 /
\plot  5.897  9.677  5.766  9.703 /
\plot  5.766  9.703  5.643  9.728 /
\plot  5.643  9.728  5.533  9.749 /
\plot  5.533  9.749  5.433  9.771 /
\plot  5.433  9.771  5.347  9.787 /
\plot  5.347  9.787  5.275  9.802 /
\plot  5.275  9.802  5.215  9.815 /
\plot  5.215  9.815  5.167  9.823 /
\plot  5.167  9.823  5.133  9.832 /
\plot  5.133  9.832  5.108  9.836 /
\plot  5.108  9.836  5.093  9.840 /
\plot  5.093  9.840  5.084  9.842 /
\plot  5.084  9.842  5.080  9.842 /
}%
%  METADATA <id>13</id> 
%
% Fig POLYLINE object
%
\linethickness= 0.500pt
\setplotsymbol ({\thinlinefont .})
{\color[rgb]{0,0,0}\plot 15.240 15.240 15.234 15.244 /
\plot 15.234 15.244 15.221 15.255 /
\plot 15.221 15.255 15.196 15.274 /
\plot 15.196 15.274 15.164 15.301 /
\plot 15.164 15.301 15.121 15.335 /
\plot 15.121 15.335 15.073 15.373 /
\plot 15.073 15.373 15.022 15.416 /
\plot 15.022 15.416 14.973 15.456 /
\plot 14.973 15.456 14.925 15.498 /
\plot 14.925 15.498 14.878 15.536 /
\plot 14.878 15.536 14.838 15.574 /
\plot 14.838 15.574 14.800 15.608 /
\plot 14.800 15.608 14.768 15.642 /
\plot 14.768 15.642 14.736 15.674 /
\plot 14.736 15.674 14.709 15.706 /
\plot 14.709 15.706 14.683 15.737 /
\plot 14.683 15.737 14.658 15.769 /
\plot 14.658 15.769 14.635 15.803 /
\plot 14.635 15.803 14.611 15.837 /
\plot 14.611 15.837 14.588 15.873 /
\plot 14.588 15.873 14.565 15.909 /
\plot 14.565 15.909 14.544 15.949 /
\plot 14.544 15.949 14.520 15.989 /
\plot 14.520 15.989 14.499 16.032 /
\plot 14.499 16.032 14.478 16.076 /
\plot 14.478 16.076 14.457 16.121 /
\plot 14.457 16.121 14.438 16.165 /
\plot 14.438 16.165 14.419 16.212 /
\plot 14.419 16.212 14.400 16.256 /
\plot 14.400 16.256 14.383 16.300 /
\plot 14.383 16.300 14.368 16.345 /
\plot 14.368 16.345 14.353 16.387 /
\plot 14.353 16.387 14.340 16.430 /
\plot 14.340 16.430 14.326 16.470 /
\plot 14.326 16.470 14.315 16.510 /
\plot 14.315 16.510 14.302 16.550 /
\plot 14.302 16.550 14.290 16.593 /
\plot 14.290 16.593 14.279 16.633 /
\plot 14.279 16.633 14.268 16.677 /
\plot 14.268 16.677 14.258 16.720 /
\plot 14.258 16.720 14.247 16.766 /
\plot 14.247 16.766 14.237 16.811 /
\plot 14.237 16.811 14.228 16.857 /
\plot 14.228 16.857 14.220 16.904 /
\plot 14.220 16.904 14.211 16.950 /
\plot 14.211 16.950 14.205 16.995 /
\plot 14.205 16.995 14.199 17.039 /
\plot 14.199 17.039 14.192 17.081 /
\plot 14.192 17.081 14.188 17.124 /
\plot 14.188 17.124 14.186 17.164 /
\plot 14.186 17.164 14.184 17.202 /
\plot 14.184 17.202 14.182 17.240 /
\plot 14.182 17.240 14.182 17.276 /
\plot 14.182 17.276 14.182 17.314 /
\plot 14.182 17.314 14.184 17.350 /
\plot 14.184 17.350 14.186 17.386 /
\plot 14.186 17.386 14.190 17.422 /
\plot 14.190 17.422 14.196 17.460 /
\plot 14.196 17.460 14.203 17.496 /
\plot 14.203 17.496 14.211 17.534 /
\plot 14.211 17.534 14.222 17.570 /
\plot 14.222 17.570 14.235 17.606 /
\plot 14.235 17.606 14.247 17.642 /
\plot 14.247 17.642 14.264 17.674 /
\plot 14.264 17.674 14.281 17.708 /
\plot 14.281 17.708 14.300 17.738 /
\plot 14.300 17.738 14.321 17.765 /
\plot 14.321 17.765 14.343 17.793 /
\plot 14.343 17.793 14.368 17.816 /
\plot 14.368 17.816 14.393 17.839 /
\plot 14.393 17.839 14.421 17.858 /
\plot 14.421 17.858 14.446 17.877 /
\plot 14.446 17.877 14.476 17.894 /
\plot 14.476 17.894 14.506 17.909 /
\plot 14.506 17.909 14.541 17.924 /
\plot 14.541 17.924 14.580 17.939 /
\plot 14.580 17.939 14.624 17.954 /
\plot 14.624 17.954 14.673 17.968 /
\plot 14.673 17.968 14.726 17.983 /
\plot 14.726 17.983 14.785 18.000 /
\plot 14.785 18.000 14.848 18.015 /
\plot 14.848 18.015 14.914 18.030 /
\plot 14.914 18.030 14.980 18.045 /
\plot 14.980 18.045 15.045 18.057 /
\plot 15.045 18.057 15.102 18.070 /
\plot 15.102 18.070 15.153 18.081 /
\plot 15.153 18.081 15.191 18.089 /
\plot 15.191 18.089 15.219 18.093 /
\plot 15.219 18.093 15.234 18.095 /
\plot 15.234 18.095 15.240 18.098 /
}%
%  METADATA <id>14</id> 
%
% Fig POLYLINE object
%
\linethickness= 0.500pt
\setplotsymbol ({\thinlinefont .})
\setsolid
{\color[rgb]{0,0,0}\plot  5.080 12.700  5.082 12.706 /
\plot  5.082 12.706  5.088 12.717 /
\plot  5.088 12.717  5.101 12.740 /
\plot  5.101 12.740  5.118 12.774 /
\plot  5.118 12.774  5.141 12.819 /
\plot  5.141 12.819  5.171 12.876 /
\plot  5.171 12.876  5.207 12.941 /
\plot  5.207 12.941  5.245 13.013 /
\plot  5.245 13.013  5.285 13.089 /
\plot  5.285 13.089  5.328 13.168 /
\plot  5.328 13.168  5.370 13.244 /
\plot  5.370 13.244  5.412 13.320 /
\plot  5.412 13.320  5.453 13.392 /
\plot  5.453 13.392  5.491 13.460 /
\plot  5.491 13.460  5.529 13.523 /
\plot  5.529 13.523  5.565 13.583 /
\plot  5.565 13.583  5.599 13.638 /
\plot  5.599 13.638  5.632 13.691 /
\plot  5.632 13.691  5.664 13.739 /
\plot  5.664 13.739  5.696 13.786 /
\plot  5.696 13.786  5.730 13.830 /
\plot  5.730 13.830  5.762 13.875 /
\plot  5.762 13.875  5.795 13.917 /
\plot  5.795 13.917  5.825 13.955 /
\plot  5.825 13.955  5.857 13.995 /
\plot  5.857 13.995  5.891 14.034 /
\plot  5.891 14.034  5.925 14.074 /
\plot  5.925 14.074  5.961 14.112 /
\plot  5.961 14.112  5.997 14.152 /
\plot  5.997 14.152  6.035 14.192 /
\plot  6.035 14.192  6.073 14.232 /
\plot  6.073 14.232  6.113 14.275 /
\plot  6.113 14.275  6.155 14.315 /
\plot  6.155 14.315  6.198 14.357 /
\plot  6.198 14.357  6.242 14.398 /
\plot  6.242 14.398  6.284 14.440 /
\plot  6.284 14.440  6.329 14.480 /
\plot  6.329 14.480  6.375 14.520 /
\plot  6.375 14.520  6.420 14.561 /
\plot  6.420 14.561  6.464 14.601 /
\plot  6.464 14.601  6.509 14.639 /
\plot  6.509 14.639  6.553 14.677 /
\plot  6.553 14.677  6.598 14.713 /
\plot  6.598 14.713  6.642 14.751 /
\plot  6.642 14.751  6.684 14.785 /
\plot  6.684 14.785  6.727 14.821 /
\plot  6.727 14.821  6.769 14.855 /
\plot  6.769 14.855  6.811 14.889 /
\plot  6.811 14.889  6.854 14.922 /
\plot  6.854 14.922  6.892 14.954 /
\plot  6.892 14.954  6.932 14.986 /
\plot  6.932 14.986  6.972 15.018 /
\plot  6.972 15.018  7.013 15.050 /
\plot  7.013 15.050  7.055 15.081 /
\plot  7.055 15.081  7.097 15.115 /
\plot  7.097 15.115  7.142 15.149 /
\plot  7.142 15.149  7.188 15.183 /
\plot  7.188 15.183  7.235 15.219 /
\plot  7.235 15.219  7.283 15.255 /
\plot  7.283 15.255  7.332 15.291 /
\plot  7.332 15.291  7.383 15.329 /
\plot  7.383 15.329  7.434 15.365 /
\plot  7.434 15.365  7.487 15.403 /
\plot  7.487 15.403  7.540 15.439 /
\plot  7.540 15.439  7.595 15.477 /
\plot  7.595 15.477  7.648 15.515 /
\plot  7.648 15.515  7.703 15.551 /
\plot  7.703 15.551  7.758 15.589 /
\plot  7.758 15.589  7.813 15.625 /
\plot  7.813 15.625  7.868 15.661 /
\plot  7.868 15.661  7.923 15.697 /
\plot  7.923 15.697  7.978 15.731 /
\plot  7.978 15.731  8.033 15.765 /
\plot  8.033 15.765  8.088 15.801 /
\plot  8.088 15.801  8.143 15.835 /
\plot  8.143 15.835  8.198 15.869 /
\plot  8.198 15.869  8.255 15.900 /
\plot  8.255 15.900  8.306 15.932 /
\plot  8.306 15.932  8.357 15.962 /
\plot  8.357 15.962  8.410 15.991 /
\plot  8.410 15.991  8.465 16.021 /
\plot  8.465 16.021  8.520 16.053 /
\plot  8.520 16.053  8.577 16.085 /
\plot  8.577 16.085  8.636 16.116 /
\plot  8.636 16.116  8.697 16.148 /
\plot  8.697 16.148  8.759 16.180 /
\plot  8.759 16.180  8.822 16.212 /
\plot  8.822 16.212  8.888 16.245 /
\plot  8.888 16.245  8.956 16.277 /
\plot  8.956 16.277  9.023 16.309 /
\plot  9.023 16.309  9.091 16.341 /
\plot  9.091 16.341  9.163 16.375 /
\plot  9.163 16.375  9.233 16.404 /
\plot  9.233 16.404  9.305 16.436 /
\plot  9.305 16.436  9.377 16.466 /
\plot  9.377 16.466  9.449 16.495 /
\plot  9.449 16.495  9.521 16.525 /
\plot  9.521 16.525  9.593 16.552 /
\plot  9.593 16.552  9.665 16.578 /
\plot  9.665 16.578  9.735 16.603 /
\plot  9.735 16.603  9.807 16.626 /
\plot  9.807 16.626  9.874 16.650 /
\plot  9.874 16.650  9.944 16.671 /
\plot  9.944 16.671 10.012 16.692 /
\plot 10.012 16.692 10.080 16.711 /
\plot 10.080 16.711 10.147 16.728 /
\plot 10.147 16.728 10.213 16.745 /
\plot 10.213 16.745 10.279 16.760 /
\plot 10.279 16.760 10.346 16.775 /
\plot 10.346 16.775 10.412 16.787 /
\plot 10.412 16.787 10.478 16.800 /
\plot 10.478 16.800 10.545 16.811 /
\plot 10.545 16.811 10.613 16.821 /
\plot 10.613 16.821 10.681 16.830 /
\plot 10.681 16.830 10.751 16.836 /
\plot 10.751 16.836 10.823 16.842 /
\plot 10.823 16.842 10.894 16.849 /
\plot 10.894 16.849 10.966 16.851 /
\plot 10.966 16.851 11.041 16.855 /
\putrule from 11.041 16.855 to 11.115 16.855
\putrule from 11.115 16.855 to 11.191 16.855
\plot 11.191 16.855 11.267 16.853 /
\plot 11.267 16.853 11.343 16.849 /
\plot 11.343 16.849 11.419 16.844 /
\plot 11.419 16.844 11.496 16.838 /
\plot 11.496 16.838 11.570 16.830 /
\plot 11.570 16.830 11.646 16.819 /
\plot 11.646 16.819 11.720 16.808 /
\plot 11.720 16.808 11.794 16.796 /
\plot 11.794 16.796 11.866 16.781 /
\plot 11.866 16.781 11.938 16.766 /
\plot 11.938 16.766 12.008 16.749 /
\plot 12.008 16.749 12.076 16.730 /
\plot 12.076 16.730 12.141 16.711 /
\plot 12.141 16.711 12.207 16.690 /
\plot 12.207 16.690 12.270 16.667 /
\plot 12.270 16.667 12.332 16.643 /
\plot 12.332 16.643 12.393 16.618 /
\plot 12.393 16.618 12.452 16.593 /
\plot 12.452 16.593 12.509 16.565 /
\plot 12.509 16.565 12.569 16.535 /
\plot 12.569 16.535 12.624 16.506 /
\plot 12.624 16.506 12.681 16.474 /
\plot 12.681 16.474 12.736 16.440 /
\plot 12.736 16.440 12.791 16.404 /
\plot 12.791 16.404 12.848 16.366 /
\plot 12.848 16.366 12.903 16.328 /
\plot 12.903 16.328 12.958 16.286 /
\plot 12.958 16.286 13.013 16.241 /
\plot 13.013 16.241 13.070 16.195 /
\plot 13.070 16.195 13.125 16.148 /
\plot 13.125 16.148 13.180 16.097 /
\plot 13.180 16.097 13.236 16.044 /
\plot 13.236 16.044 13.291 15.991 /
\plot 13.291 15.991 13.343 15.936 /
\plot 13.343 15.936 13.396 15.879 /
\plot 13.396 15.879 13.449 15.820 /
\plot 13.449 15.820 13.502 15.761 /
\plot 13.502 15.761 13.551 15.699 /
\plot 13.551 15.699 13.602 15.638 /
\plot 13.602 15.638 13.648 15.574 /
\plot 13.648 15.574 13.695 15.513 /
\plot 13.695 15.513 13.739 15.450 /
\plot 13.739 15.450 13.784 15.386 /
\plot 13.784 15.386 13.824 15.323 /
\plot 13.824 15.323 13.864 15.259 /
\plot 13.864 15.259 13.902 15.196 /
\plot 13.902 15.196 13.940 15.132 /
\plot 13.940 15.132 13.974 15.069 /
\plot 13.974 15.069 14.008 15.007 /
\plot 14.008 15.007 14.042 14.944 /
\plot 14.042 14.944 14.072 14.880 /
\plot 14.072 14.880 14.103 14.817 /
\plot 14.103 14.817 14.129 14.760 /
\plot 14.129 14.760 14.154 14.702 /
\plot 14.154 14.702 14.177 14.643 /
\plot 14.177 14.643 14.203 14.584 /
\plot 14.203 14.584 14.226 14.522 /
\plot 14.226 14.522 14.249 14.461 /
\plot 14.249 14.461 14.273 14.398 /
\plot 14.273 14.398 14.294 14.332 /
\plot 14.294 14.332 14.317 14.264 /
\plot 14.317 14.264 14.338 14.196 /
\plot 14.338 14.196 14.359 14.127 /
\plot 14.359 14.127 14.381 14.055 /
\plot 14.381 14.055 14.402 13.981 /
\plot 14.402 13.981 14.423 13.904 /
\plot 14.423 13.904 14.442 13.826 /
\plot 14.442 13.826 14.461 13.748 /
\plot 14.461 13.748 14.482 13.667 /
\plot 14.482 13.667 14.499 13.587 /
\plot 14.499 13.587 14.518 13.504 /
\plot 14.518 13.504 14.535 13.420 /
\plot 14.535 13.420 14.554 13.335 /
\plot 14.554 13.335 14.571 13.250 /
\plot 14.571 13.250 14.586 13.164 /
\plot 14.586 13.164 14.603 13.079 /
\plot 14.603 13.079 14.618 12.992 /
\plot 14.618 12.992 14.633 12.903 /
\plot 14.633 12.903 14.647 12.816 /
\plot 14.647 12.816 14.662 12.730 /
\plot 14.662 12.730 14.675 12.641 /
\plot 14.675 12.641 14.690 12.552 /
\plot 14.690 12.552 14.702 12.463 /
\plot 14.702 12.463 14.715 12.374 /
\plot 14.715 12.374 14.728 12.285 /
\plot 14.728 12.285 14.738 12.196 /
\plot 14.738 12.196 14.751 12.105 /
\plot 14.751 12.105 14.764 12.012 /
\plot 14.764 12.012 14.774 11.936 /
\plot 14.774 11.936 14.783 11.858 /
\plot 14.783 11.858 14.793 11.779 /
\plot 14.793 11.779 14.804 11.697 /
\plot 14.804 11.697 14.815 11.614 /
\plot 14.815 11.614 14.825 11.529 /
\plot 14.825 11.529 14.834 11.441 /
\plot 14.834 11.441 14.846 11.350 /
\plot 14.846 11.350 14.857 11.256 /
\plot 14.857 11.256 14.867 11.157 /
\plot 14.867 11.157 14.878 11.055 /
\plot 14.878 11.055 14.891 10.950 /
\plot 14.891 10.950 14.903 10.839 /
\plot 14.903 10.839 14.916 10.723 /
\plot 14.916 10.723 14.929 10.605 /
\plot 14.929 10.605 14.942 10.480 /
\plot 14.942 10.480 14.956 10.348 /
\plot 14.956 10.348 14.971 10.215 /
\plot 14.971 10.215 14.986 10.075 /
\plot 14.986 10.075 15.001  9.931 /
\plot 15.001  9.931 15.016  9.783 /
\plot 15.016  9.783 15.033  9.631 /
\plot 15.033  9.631 15.050  9.478 /
\plot 15.050  9.478 15.064  9.322 /
\plot 15.064  9.322 15.081  9.165 /
\plot 15.081  9.165 15.098  9.011 /
\plot 15.098  9.011 15.113  8.856 /
\plot 15.113  8.856 15.128  8.706 /
\plot 15.128  8.706 15.143  8.560 /
\plot 15.143  8.560 15.157  8.422 /
\plot 15.157  8.422 15.172  8.293 /
\plot 15.172  8.293 15.183  8.172 /
\plot 15.183  8.172 15.196  8.064 /
\plot 15.196  8.064 15.204  7.967 /
\plot 15.204  7.967 15.212  7.882 /
\plot 15.212  7.882 15.221  7.811 /
\plot 15.221  7.811 15.227  7.751 /
\plot 15.227  7.751 15.232  7.705 /
\plot 15.232  7.705 15.236  7.671 /
\plot 15.236  7.671 15.238  7.648 /
\putrule from 15.238  7.648 to 15.238  7.631
\plot 15.238  7.631 15.240  7.624 /
\putrule from 15.240  7.624 to 15.240  7.620
}%
%  METADATA <id>15</id> 
%
% Fig POLYLINE object
%
\linethickness= 0.500pt
\setplotsymbol ({\thinlinefont .})
{\color[rgb]{0,0,0}\plot 15.240 15.240 15.236 15.242 /
\plot 15.236 15.242 15.229 15.248 /
\plot 15.229 15.248 15.215 15.261 /
\plot 15.215 15.261 15.193 15.278 /
\plot 15.193 15.278 15.164 15.304 /
\plot 15.164 15.304 15.124 15.335 /
\plot 15.124 15.335 15.077 15.373 /
\plot 15.077 15.373 15.024 15.418 /
\plot 15.024 15.418 14.963 15.466 /
\plot 14.963 15.466 14.897 15.519 /
\plot 14.897 15.519 14.829 15.574 /
\plot 14.829 15.574 14.760 15.632 /
\plot 14.760 15.632 14.688 15.687 /
\plot 14.688 15.687 14.616 15.744 /
\plot 14.616 15.744 14.546 15.797 /
\plot 14.546 15.797 14.476 15.852 /
\plot 14.476 15.852 14.408 15.903 /
\plot 14.408 15.903 14.343 15.953 /
\plot 14.343 15.953 14.275 16.002 /
\plot 14.275 16.002 14.211 16.049 /
\plot 14.211 16.049 14.146 16.095 /
\plot 14.146 16.095 14.080 16.142 /
\plot 14.080 16.142 14.014 16.188 /
\plot 14.014 16.188 13.947 16.235 /
\plot 13.947 16.235 13.877 16.281 /
\plot 13.877 16.281 13.805 16.328 /
\plot 13.805 16.328 13.733 16.377 /
\plot 13.733 16.377 13.678 16.413 /
\plot 13.678 16.413 13.623 16.449 /
\plot 13.623 16.449 13.568 16.485 /
\plot 13.568 16.485 13.509 16.521 /
\plot 13.509 16.521 13.447 16.559 /
\plot 13.447 16.559 13.386 16.597 /
\plot 13.386 16.597 13.322 16.635 /
\plot 13.322 16.635 13.255 16.673 /
\plot 13.255 16.673 13.187 16.711 /
\plot 13.187 16.711 13.115 16.749 /
\plot 13.115 16.749 13.041 16.789 /
\plot 13.041 16.789 12.967 16.828 /
\plot 12.967 16.828 12.888 16.866 /
\plot 12.888 16.866 12.808 16.904 /
\plot 12.808 16.904 12.725 16.942 /
\plot 12.725 16.942 12.641 16.978 /
\plot 12.641 16.978 12.556 17.014 /
\plot 12.556 17.014 12.467 17.050 /
\plot 12.467 17.050 12.378 17.081 /
\plot 12.378 17.081 12.287 17.115 /
\plot 12.287 17.115 12.194 17.145 /
\plot 12.194 17.145 12.101 17.175 /
\plot 12.101 17.175 12.006 17.200 /
\plot 12.006 17.200 11.910 17.225 /
\plot 11.910 17.225 11.815 17.249 /
\plot 11.815 17.249 11.718 17.268 /
\plot 11.718 17.268 11.621 17.287 /
\plot 11.621 17.287 11.523 17.304 /
\plot 11.523 17.304 11.426 17.316 /
\plot 11.426 17.316 11.328 17.327 /
\plot 11.328 17.327 11.229 17.333 /
\plot 11.229 17.333 11.132 17.340 /
\plot 11.132 17.340 11.034 17.342 /
\putrule from 11.034 17.342 to 10.937 17.342
\plot 10.937 17.342 10.837 17.338 /
\plot 10.837 17.338 10.740 17.331 /
\plot 10.740 17.331 10.643 17.323 /
\plot 10.643 17.323 10.543 17.310 /
\plot 10.543 17.310 10.446 17.295 /
\plot 10.446 17.295 10.346 17.276 /
\plot 10.346 17.276 10.253 17.257 /
\plot 10.253 17.257 10.162 17.236 /
\plot 10.162 17.236 10.069 17.211 /
\plot 10.069 17.211  9.974 17.183 /
\plot  9.974 17.183  9.878 17.156 /
\plot  9.878 17.156  9.781 17.124 /
\plot  9.781 17.124  9.682 17.090 /
\plot  9.682 17.090  9.582 17.054 /
\plot  9.582 17.054  9.481 17.016 /
\plot  9.481 17.016  9.379 16.978 /
\plot  9.379 16.978  9.273 16.938 /
\plot  9.273 16.938  9.169 16.895 /
\plot  9.169 16.895  9.064 16.851 /
\plot  9.064 16.851  8.956 16.806 /
\plot  8.956 16.806  8.850 16.762 /
\plot  8.850 16.762  8.740 16.717 /
\plot  8.740 16.717  8.632 16.671 /
\plot  8.632 16.671  8.524 16.626 /
\plot  8.524 16.626  8.414 16.580 /
\plot  8.414 16.580  8.306 16.535 /
\plot  8.306 16.535  8.198 16.491 /
\plot  8.198 16.491  8.090 16.447 /
\plot  8.090 16.447  7.984 16.404 /
\plot  7.984 16.404  7.878 16.362 /
\plot  7.878 16.362  7.775 16.324 /
\plot  7.775 16.324  7.671 16.286 /
\plot  7.671 16.286  7.569 16.250 /
\plot  7.569 16.250  7.470 16.216 /
\plot  7.470 16.216  7.374 16.184 /
\plot  7.374 16.184  7.279 16.157 /
\plot  7.279 16.157  7.186 16.131 /
\plot  7.186 16.131  7.095 16.108 /
\plot  7.095 16.108  7.008 16.089 /
\plot  7.008 16.089  6.924 16.072 /
\plot  6.924 16.072  6.841 16.059 /
\plot  6.841 16.059  6.761 16.051 /
\plot  6.761 16.051  6.682 16.044 /
\plot  6.682 16.044  6.608 16.042 /
\plot  6.608 16.042  6.538 16.044 /
\plot  6.538 16.044  6.469 16.049 /
\plot  6.469 16.049  6.403 16.059 /
\plot  6.403 16.059  6.337 16.072 /
\plot  6.337 16.072  6.276 16.091 /
\plot  6.276 16.091  6.219 16.112 /
\plot  6.219 16.112  6.164 16.140 /
\plot  6.164 16.140  6.111 16.169 /
\plot  6.111 16.169  6.060 16.205 /
\plot  6.060 16.205  6.013 16.245 /
\plot  6.013 16.245  5.967 16.290 /
\plot  5.967 16.290  5.922 16.341 /
\plot  5.922 16.341  5.880 16.398 /
\plot  5.880 16.398  5.840 16.461 /
\plot  5.840 16.461  5.800 16.529 /
\plot  5.800 16.529  5.762 16.605 /
\plot  5.762 16.605  5.726 16.688 /
\plot  5.726 16.688  5.690 16.777 /
\plot  5.690 16.777  5.656 16.872 /
\plot  5.656 16.872  5.622 16.976 /
\plot  5.622 16.976  5.590 17.086 /
\plot  5.590 17.086  5.558 17.204 /
\plot  5.558 17.204  5.529 17.329 /
\plot  5.529 17.329  5.499 17.462 /
\plot  5.499 17.462  5.469 17.602 /
\plot  5.469 17.602  5.442 17.748 /
\plot  5.442 17.748  5.412 17.903 /
\plot  5.412 17.903  5.387 18.062 /
\plot  5.387 18.062  5.359 18.227 /
\plot  5.359 18.227  5.334 18.396 /
\plot  5.334 18.396  5.309 18.570 /
\plot  5.309 18.570  5.285 18.743 /
\plot  5.285 18.743  5.262 18.919 /
\plot  5.262 18.919  5.241 19.094 /
\plot  5.241 19.094  5.220 19.266 /
\plot  5.220 19.266  5.201 19.435 /
\plot  5.201 19.435  5.182 19.596 /
\plot  5.182 19.596  5.165 19.751 /
\plot  5.165 19.751  5.150 19.895 /
\plot  5.150 19.895  5.137 20.028 /
\plot  5.137 20.028  5.124 20.149 /
\plot  5.124 20.149  5.114 20.256 /
\plot  5.114 20.256  5.105 20.350 /
\plot  5.105 20.350  5.099 20.428 /
\plot  5.099 20.428  5.093 20.494 /
\plot  5.093 20.494  5.088 20.544 /
\plot  5.088 20.544  5.084 20.582 /
\plot  5.084 20.582  5.082 20.608 /
\plot  5.082 20.608  5.080 20.625 /
\putrule from  5.080 20.625 to  5.080 20.633
\putrule from  5.080 20.633 to  5.080 20.637
}%
%  METADATA <id>21</id> 
%
% Fig POLYLINE object
%
\linethickness= 0.500pt
\setplotsymbol ({\thinlinefont .})
{\color[rgb]{0,0,0}\plot  5.175 11.430  7.017  9.525 /
}%
%  METADATA <id>22</id> 
%
% Fig POLYLINE object
%
\linethickness= 0.500pt
\setplotsymbol ({\thinlinefont .})
{\color[rgb]{0,0,0}\plot  5.175 12.668  8.509  9.271 /
}%
%  METADATA <id>23</id> 
%
% Fig POLYLINE object
%
\linethickness= 0.500pt
\setplotsymbol ({\thinlinefont .})
{\color[rgb]{0,0,0}\plot  5.524 13.462  9.716  9.398 /
}%
%  METADATA <id>24</id> 
%
% Fig POLYLINE object
%
\linethickness= 0.500pt
\setplotsymbol ({\thinlinefont .})
{\color[rgb]{0,0,0}\plot  5.874 14.415 10.446  9.874 /
}%
%  METADATA <id>25</id> 
%
% Fig POLYLINE object
%
\linethickness= 0.500pt
\setplotsymbol ({\thinlinefont .})
{\color[rgb]{0,0,0}\plot  6.159 15.367 10.986 10.700 /
}%
%  METADATA <id>26</id> 
%
% Fig POLYLINE object
%
\linethickness= 0.500pt
\setplotsymbol ({\thinlinefont .})
{\color[rgb]{0,0,0}\plot  6.445 16.320 11.335 11.621 /
}%
%  METADATA <id>27</id> 
%
% Fig POLYLINE object
%
\linethickness= 0.500pt
\setplotsymbol ({\thinlinefont .})
{\color[rgb]{0,0,0}\plot  7.017 16.986 11.557 12.573 /
}%
%  METADATA <id>28</id> 
%
% Fig POLYLINE object
%
\linethickness= 0.500pt
\setplotsymbol ({\thinlinefont .})
{\color[rgb]{0,0,0}\plot  7.747 17.526 11.716 13.589 /
}%
%  METADATA <id>29</id> 
%
% Fig POLYLINE object
%
\linethickness= 0.500pt
\setplotsymbol ({\thinlinefont .})
{\color[rgb]{0,0,0}\plot  8.319 18.129 11.811 14.796 /
}%
%  METADATA <id>30</id> 
%
% Fig POLYLINE object
%
\linethickness= 0.500pt
\setplotsymbol ({\thinlinefont .})
{\color[rgb]{0,0,0}\plot  9.493 18.193 11.462 16.510 /
}%
%  METADATA <id>31</id> 
%
% Fig POLYLINE object
%
\linethickness= 0.500pt
\setplotsymbol ({\thinlinefont .})
{\color[rgb]{0,0,0}\plot  5.143 10.192  5.461  9.811 /
}%
%  METADATA <id>32</id> 
%
% Fig POLYLINE object
%
\linethickness= 0.500pt
\setplotsymbol ({\thinlinefont .})
{\color[rgb]{0,0,0}\plot 14.383 16.224 15.145 15.589 /
}%
%  METADATA <id>33</id> 
%
% Fig POLYLINE object
%
\linethickness= 0.500pt
\setplotsymbol ({\thinlinefont .})
{\color[rgb]{0,0,0}\plot 14.129 17.558 15.113 16.637 /
}%
%  METADATA <id>34</id> 
%
% Fig POLYLINE object
%
\linethickness= 0.500pt
\setplotsymbol ({\thinlinefont .})
{\color[rgb]{0,0,0}\plot 14.859 18.034 15.145 17.780 /
}%
%  METADATA <id>16</id> 
%
% Fig TEXT object
%
\put{\SetFigFont{6}{7.2}{\rmdefault}{\mddefault}{\updefault}{\color[rgb]{0,0,0}$\Ga_l$}%
} [lB] at 11.906 13.653
%  METADATA <id>17</id> 
%
% Fig TEXT object
%
\put{\SetFigFont{6}{7.2}{\rmdefault}{\mddefault}{\updefault}{\color[rgb]{0,0,0}$\Ga_r$}%
} [lB] at 13.494 17.462
%  METADATA <id>18</id> 
%
% Fig TEXT object
%
\put{\SetFigFont{6}{7.2}{\rmdefault}{\mddefault}{\updefault}{\color[rgb]{0,0,0}$\cP_l$}%
} [lB] at 14.446 16.669
%  METADATA <id>19</id> 
%
% Fig TEXT object
%
\put{\SetFigFont{6}{7.2}{\rmdefault}{\mddefault}{\updefault}{\color[rgb]{0,0,0}$\cP_r$}%
} [lB] at  6.985 12.859
%  METADATA <id>20</id> 
%
% Fig TEXT object
%
\put{\SetFigFont{6}{7.2}{\rmdefault}{\mddefault}{\updefault}{\color[rgb]{0,0,0}$\cN$}%
} [lB] at 10.160 19.685
%  METADATA <id>35</id> 
%
% Fig TEXT object
%
\put{\SetFigFont{6}{7.2}{\rmdefault}{\mddefault}{\updefault}{\color[rgb]{0,0,0}$\tilde{S}_-$}%
} [lB] at  3.842 12.668
%  METADATA <id>36</id> 
%
% Fig TEXT object
%
\put{\SetFigFont{6}{7.2}{\rmdefault}{\mddefault}{\updefault}{\color[rgb]{0,0,0}
$\tilde{S}_+$}%
} [lB] at 15.367 15.208
%  METADATA <id>37</id> 
%
% Fig TEXT object
%
\put{\SetFigFont{6}{7.2}{\rmdefault}{\mddefault}{\updefault}{\color[rgb]{0,0,0}$\tilde{N}_+$}%
} [lB] at 15.367 18.002
%  METADATA <id>38</id> 
%
% Fig TEXT object
%
\put{\SetFigFont{6}{7.2}{\rmdefault}{\mddefault}{\updefault}{\color[rgb]{0,0,0}$\bar{N}_+$}%
} [lB] at 15.399  7.588
%  METADATA <id>39</id> 
%
% Fig TEXT object
%
\put{\SetFigFont{6}{7.2}{\rmdefault}{\mddefault}{\updefault}{\color[rgb]{0,0,0}$\bar{N}_-$}%
} [lB] at  3.873  9.716
%  METADATA <id>40</id> 
%
% Fig TEXT object
%
\put{\SetFigFont{6}{7.2}{\rmdefault}{\mddefault}{\updefault}{\color[rgb]{0,0,0}$\tilde{N}_-$}%
} [lB] at  3.842 20.225
%  METADATA <id>41</id> 
%
% Fig TEXT object
%
\put{\SetFigFont{6}{7.2}{\rmdefault}{\mddefault}{\updefault}{\color[rgb]{0,0,0}$\cN$}%
} [lB] at 12.732 14.732
\linethickness=0pt
\putrectangle corners at  3.810 21.660 and 15.431  6.280
\endpicture}

%%%%%%%%%%%%%%%%
\end{center}
\caption{Change in the topology of nullclines: $\mu<\mu_c$ (left) and $\mu>\mu_c$ (right).\label{fig:topnul}}
\end{figure}

%%%%%%%%%%%%%%%%%%%%%%%%%%%%%%%%%%%%%%%%%%%%%%%%%%%%%%
\subsubsection{Winding number of orbits and corridors}
%%%%%%%%%%%%%%%%%%%%%%%%%%%%%%%%%%%%%%%%%%%%%%%%%%%%%%
Assuming \refeq{assump}, let
\beq w_0 := \frac{s_+^{}- n_-^{}}{2\pi}= \frac{n_+^{}-s_-^{}}{2\pi}.\eeq
Given any orbit $\cO = \{(x_o(\tau),y_o(\tau))\ | \ \tau\in\RR\}$ of the flow \refeq{eq:flow}, the following quantity is well-defined:
\beq\label{def:wn}
w(\cO) = w_0 - \frac{1}{2\pi} \left(y_o(\infty) - y_o(-\infty)\right) 
= w_0-\frac{1}{2\pi}\int_{-\infty}^\infty g_\mu(x_o(\tau),y_o(\tau)) d\tau
\eeq
 In particular for the two distinguished orbits $\widetilde{\cW}^\pm$ this is easily calculated to be
\beq
w(\widetilde{\cW}^-) = w_0 - \frac{1}{2\pi}(n_+^{} - 2\pi k_+ - s_-^{})=  k_+ \qquad
w(\widetilde{\cW}^+) = w_0-\frac{1}{2\pi}\left(s_+^{} - ( n_-^{} + 2\pi k_-)\right)=  k_-
\eeq
and is therefore an integer. 
 We call these the {\em winding numbers}, of $\cW^-$ and $\cW^+$, respectively.

On the other hand it is easy to see that these two must in fact be equal.
  The reason is that, since $\widetilde{\cW}^-$ goes from $\widetilde{S}^- = (x_-^{},s_-^{})$ to $\bar{N}^+=(x_+^{},n_+^{}-2\pi k_+)$, 
there are two copies of it in the universal cover that go from $(x_-^{},s_-^{} + 2\pi k_+)$ to $(x_+^{},n_+^{})$ and from 
$(x_-^{},s_-^{}+2\pi(k_+-1))$ to $(x_+^{},n_+^{}-2\pi)$.  Since $n_+^{}-2\pi<s_+^{}<n_+^{}$, these two copies of 
$\widetilde{\cW}^-$ must sandwich any orbit that goes into $\widetilde{S}_+$, in particular $\widetilde{\cW}^+$.  
Thus the $\al$-limit point of $\widetilde{\cW}^+$ has to be $(x_-^{},n_-^{}+2\pi k_+)$, and therefore $k_+=k_-$.
\begin{defn}
 The {\em winding number of the corridor} $\cK_1$ is  the common value of the winding numbers of $\widetilde{\cW}^\pm$.
\end{defn}

Figure~\ref{fig:topnul} shows a corridor of winding number zero when $\mu<\mu_c$ (left) and one of winding number equal to one for
 $\mu>\mu_c$ (right).
%%%%%%%%%%%%%%%%%%%%%%%%%%%%%%%%%%%%%%%%%%%%%%%%%%%%%%
\subsubsection{Continuity argument for existence of saddles connectors}
%%%%%%%%%%%%%%%%%%%%%%%%%%%%%%%%%%%%%%%%%%%%%%%%%%%%%%
As the parameter $\mu$ varies, the two
distinguished orbits $\widetilde{\cW}^\pm$, and hence the corridor $\cK_1$ that they form will also vary.
  Let $\cK_1(\mu)$ denote the corresponding corridor for parameter value $\mu$ (if it exists).
Since the winding number $w(\cK_1(\mu))$ is integer-valued, if  $\cK_1$ varies continuously with respect to $\mu$, its winding number would have to remain constant. 
 It is however possible that for some value of $\mu$ the two orbits $\widetilde{\cW}^\pm$ coincide and the corridor
 $\cK_1(\mu)$ disappears, leaving a saddles connector behind (for which the winding number will not be an integer). 
 We would like to show that this is the only way for the winding number of $\cK_1(\mu)$ to be different for two 
different values of the parameter $\mu$.
  In other words:

\begin{prop} \label{prop:ssexists}
Suppose there exist two values $\mu^{}_0<\mu^{}_1$ in the interval $I$ for each of which a non-empty corridor $\cK_1$ of finite winding number
exists, and such that
\beq
w(\cK_1(\mu^{}_0))\leq 0\qquad\mbox{and}\qquad w(\cK_1(\mu^{}_1))\geq 1.
\eeq
Then there exists $\mu^{}_s \in (\mu^{}_0,\mu^{}_1)$ such that the flow \refeq{eq:flowmu} with $\mu=\mu^{}_s$ has a 
saddles connector $\cS(\mu^{}_s)$, whose lift to the universal cover $\bar{\cC}$ 
 connects the saddle-node $\widetilde{S}_-$ to the saddle-node $\widetilde{S}_+$.
\end{prop}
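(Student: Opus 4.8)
The plan is to run a continuity/degree argument in the parameter~$\mu$, with the integer winding number $w(\cK_1(\mu))$ as the discrete invariant. I would show that on the (open) set of parameters where no saddles connector exists the winding number is locally constant; since by hypothesis it takes a value $\leq 0$ at $\mu_0^{}$ and a value $\geq 1$ at $\mu_1^{}$, this set cannot exhaust $[\mu_0^{},\mu_1^{}]$, and the missing parameter yields the connector. The monotonicity Lemma~\ref{lem:mon} is then used a second time to show that the connector so produced is the ``first'' one, so that its lift joins precisely $\widetilde S_-$ to $\widetilde S_+$. Throughout I would exploit that $\dot x=f(x)$ is $\mu$-independent, using the common reparametrization $x=x(\tau)$ (the solution of $\dot x=f(x)$, $x(0)=x_0^{}$ for a fixed $x_0^{}\in(x_-^{},x_+^{})$), and writing $\widetilde{\cW}^\pm_\mu=\{(x(\tau),y^\pm_\mu(\tau))\}_{\tau\in\RR}$ with $y^-_\mu(-\infty)=s_-^{}$ (the fundamental-domain copy) and $y^+_\mu(+\infty)=s_+^{}$.

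\textbf{Step 1: continuous dependence of the distinguished orbits.} First I would record that the equilibria $S_\pm(\mu),N_\pm(\mu)$ move $C^1$-ly in~$\mu$ by the implicit function theorem and \refeq{cond:hyp}, and that $\widetilde{\cW}^\pm_\mu$ depend continuously on~$\mu$, uniformly on compact $x$-intervals $[x_-^{}+\epsilon,x_+^{}-\epsilon]$, equivalently $y^\pm_\mu(\tau)\to y^\pm_{\mu^*}(\tau)$ uniformly on compact $\tau$-sets as $\mu\to\mu^*$. In the hyperbolic case this is the stable/unstable manifold theorem with parameters. In the degenerate saddle-node case $f'(x_\pm)=0$ one instead uses that the center manifold on the relevant (``saddle'') side is \emph{unique} — it is exactly the set of points whose backward (resp.\ forward) orbit converges to $S_-$ (resp.\ $S_+$), cf.\ Figure~\ref{fig:sn} — so invariant-manifold theory for parametrized families still gives continuous dependence even though the approach to the equilibrium is only polynomial.

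\textbf{Step 2: the no-connector set is open and $w$ is locally constant on it.} Let $B\subset[\mu_0^{},\mu_1^{}]$ be the set of~$\mu$ for which \refeq{eq:flowmu} has no saddles connector; equivalently, $\widetilde{\cW}^-_\mu$ has $\omega$-limit a copy $\bar N_+(\mu)=(x_+^{},n_+^{}(\mu)-2\pi k(\mu))$ of the node $N_+$, and then $k(\mu)=w(\cK_1(\mu))\in\Zset$. Fix $\mu^*\in B$. Since $\bar N_+(\mu^*)$ is an attracting equilibrium (attracting in~$y$ because $D_yg<0$, and in~$x$ because $f$ is $\mu$-independent and $x\to x_+^{}$ along the flow), it has an open trapping neighborhood~$\cV$ that persists, together with its local basin, for nearby~$\mu$; pick~$T$ with $(x(T),y^-_{\mu^*}(T))\in\cV$. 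By Step~1, $(x(T),y^-_\mu(T))\in\cV$ for~$\mu$ near~$\mu^*$, so $\widetilde{\cW}^-_\mu$ converges to the copy $\bar N_+(\mu)$ near $\bar N_+(\mu^*)$; hence $\mu\in B$ and, as $n_+^{}(\mu)$ is close to $n_+^{}(\mu^*)$, the integer $k(\mu)$ equals $k(\mu^*)$. Thus $B$ is open and $k$ is locally constant on~$B$. By Lemma~\ref{lem:mon}, $\mu\mapsto y^-_\mu(\tau)$ is non-increasing, so (letting $\tau\to+\infty$) $n_+^{}(\mu)-2\pi k(\mu)$ is non-increasing, and since $n_+^{}(\mu)$ stays within one fundamental period, $k$ is \emph{non-decreasing} on~$B$. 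Because $\mu_0^{},\mu_1^{}\in B$ with $k(\mu_0^{})\leq 0<1\leq k(\mu_1^{})$, if $B=[\mu_0^{},\mu_1^{}]$ then $k$ would be constant on a connected set — a contradiction; hence there is $\mu_s^{}\in(\mu_0^{},\mu_1^{})\setminus B$, i.e.\ $\cS(\mu_s^{})=\cW^-_{\mu_s^{}}=\cW^+_{\mu_s^{}}$ is a saddles connector.

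\textbf{Step 3: identifying the correct lift, and the main obstacle.} To obtain a lift connecting $\widetilde S_-$ to $\widetilde S_+$, I would take the transition across the gap between winding numbers $0$ and~$1$: set $\mu_s^{}:=\inf\{\mu: \mu\notin B\ \text{or}\ (\mu\in B,\ k(\mu)\geq 1)\}$. Since $k$ is non-decreasing, integer-valued, and locally constant on the open set~$B$, one checks $\mu_s^{}\notin B$ and $k\equiv 0$ on some left-neighborhood $(\mu_s^{}-\delta,\mu_s^{})$ (a negative value there is impossible, as $k$ would then have to jump by at least~$2$ while remaining in~$B$). For $\mu\uparrow\mu_s^{}$ the orbit $\widetilde{\cW}^-_\mu$ has $\omega$-limit the fundamental-domain copy $\widetilde N_+(\mu)$ and decreases monotonically in~$\mu$ (Lemma~\ref{lem:mon}); hence its limit $\widetilde{\cW}^-_{\mu_s^{}}$ is the lower boundary of the basin of $\widetilde N_+(\mu_s^{})$ near the right boundary circle, which is the stable manifold of the copy of $S_+$ lying immediately below $\widetilde N_+$ — and that copy is $\widetilde S_+$, because $s_+^{}<n_+^{}$ within one period and $D_yg(x_+^{},s_+^{})>0$. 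Therefore $\widetilde{\cW}^-_{\mu_s^{}}=\widetilde{\cW}^+_{\mu_s^{}}$ is a single orbit from $\widetilde S_-$ to $\widetilde S_+$, whose projection is the desired saddles connector. The main difficulty is Step~1 in the degenerate saddle-node regime — the continuous dependence on~$\mu$ of the unique one-sided center manifolds $\widetilde{\cW}^\pm_\mu$ — together with the bookkeeping in Step~3 that ensures the connector one produces is the first one, so that its lift joins exactly $\widetilde S_-$ to $\widetilde S_+$ rather than some other pair of lifts.
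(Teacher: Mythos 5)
Your Steps 1--2 take a genuinely different route from the paper: you treat the winding number as a locally constant integer invariant on the open set $B$ of connector-free parameters and conclude by connectedness, whereas the paper introduces the signed area $\ba(\mu)=\int_{x_-^{}}^{x_+^{}}(y^+_\mu-y^-_\mu)\,dx$ of the corridor, proves it is continuous (using the same two ingredients you rely on, namely parametric continuity of the distinguished orbits and Lemma~\ref{lem:mon}), and applies the Intermediate Value Theorem. Up to the existence of \emph{some} $\mu_s\notin B$ your argument is sound, modulo the continuity in $\mu$ of the one-sided center manifolds in the degenerate case, which you correctly flag and which the paper also treats briskly.

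The genuine gap is in Step 3, where you must show that the connector's lift runs from $\widetilde{S}_-$ to $\widetilde{S}_+$ rather than from $\widetilde{S}_-$ to a translate $(x_+^{},s_+^{}-2\pi j)$ with $j\ne 0$. Two things go wrong. First, the claim that $k\equiv 0$ on a left neighborhood of $\mu_s$ is unsupported: since $[\mu_0^{},\mu_s^{})\subset B$ is connected and $k$ is locally constant there, $k\equiv k(\mu_0^{})$ on all of $[\mu_0^{},\mu_s^{})$, and the hypothesis only gives $k(\mu_0^{})\le 0$; there is no ``jump by $2$'' to exclude, because $k$ does not jump at all before $\mu_s$. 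Second, and more fundamentally, your $\mu_s=\inf A$ is the \emph{first} connector parameter, and nothing prevents the first connector from having nonzero lift-offset: the endpoint map $\mu\mapsto\lim_{\tau\to\infty}y^-_\mu(\tau)$ is discontinuous precisely at connector parameters, so the decreasing limit of orbits ending at $n_+^{}-2\pi k(\mu_0^{})$ may land on any copy of $S_+$ in the window $k(\mu_0^{})\le j\le 0$ (the squeeze from below by $y^+_{\mu}$ gives only $j\le 0$, the squeeze from above only $j\ge k(\mu_0^{})$); connectors of nonzero winding do occur in this family (cf.\ the authors' remark following Problem 1), so such ``early'' connectors cannot be excluded. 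The clean repair is exactly the paper's device: the quantity $y^+_\mu-y^-_\mu$, equivalently $\ba(\mu)$, is continuous and by Lemma~\ref{lem:mon} monotone in $\mu$; it is negative at $\mu_0^{}$, positive at $\mu_1^{}$, and its vanishing is \emph{equivalent} to the coincidence of the two distinguished lifts $\widetilde{\cW}^\pm$, which is precisely the statement that the connector joins $\widetilde{S}_-$ to $\widetilde{S}_+$.
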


\begin{proof}

Let $\ba(\mu)$ denote the {\em signed area} of $\cK_1(\mu)$, defined via Green's theorem:
\beq\label{def:area}
\ba(\mu) = \oint_{\p \cK_1(\mu)} (-y)dx = \int_{x_-}^{x_+} \left(y_\mu^+ - y_\mu^-\right) dx.
\eeq
Here $y_\mu^\pm$ are the $y$-components of the orbits $\widetilde{\cW}^\pm$, thought of as functions of $x$ (which is always possible 
since $x(\tau)$ is a monotone increasing function of flow parameter $\tau$.)
The following facts about $\ba(\mu)$ are easily verified:
\begin{enumerate}
\item $\ba(\mu)>0$ if and only if $w(\cK_1(\mu))\geq 1$.

{\em Proof:} Since $\widetilde{\cW}^+$ and $\widetilde{\cW}^-$ cannot intersect without coinciding, it is clear from \refeq{def:area} 
that $\ba(\mu)>0$ if and only if $\widetilde{\cW}^+$ is above $\widetilde{\cW}^-$, which is equivalent to $\widetilde{S}_+ = (x_+,s_+)$ 
being above $\bar{N}_+ = (x_+,n_+-2\pi k)$ where $k=w(\cK_1(\mu))$.  Since $n_+>s_+$ we must have $k\geq 1$.
\item $\ba(\mu)<0$ if and only if $w(\cK_1(\mu)) \leq 0$.

{\em Proof:} Similar to above.
\item $\ba(\mu) = 0$ if and only if $\cK_1(\mu) = \emptyset$.

{\em Proof:} From the definition \refeq{def:area} it is clear that the only way for $\ba(\mu)$ to be zero is for the two orbits 
$\widetilde{\cW}^+$ and $\widetilde{\cW}^-$ to coincide.
\item $\ba(\mu)$ is a continuous function of $\mu$ for all $\mu\in [\mu_0,\mu_1]$.

{\em Proof:} Let $\ep>0$ be given.  For $\mu,\mu'\in [\mu_0,\mu_1]$ and $\xi>0$  small enough, we have
\bea
|\ba(\mu') - \ba(\mu)| & = &\left| \int_{x_-}^{x_+} \left(y_{\mu'}^+ - y_{\mu}^+\right) - \left( y_{\mu'}^- - y_{\mu}^-\right) dx \right| \\
& \leq & \int_{x_-}^{x_-+\xi} |y_{\mu'}^+| + |y_{\mu}^+| dx + \int_{x_-+\xi}^{x_+-\xi} \left|y_{\mu'}^+ - y_{\mu}^+\right|  dx 
+ \int_{x_+-\xi}^{x_+} |y_{\mu'}^+| + |y_{\mu}^+| dx \\
& & \mbox{} + \int_{x_-}^{x_-+\xi} |y_{\mu'}^-| + |y_{\mu}^-| dx + \int_{x_-+\xi}^{x_+ - \xi} \left|y_{\mu'}^- - y_{\mu}^-\right| dx 
+  \int_{x_+-\xi}^{x_+} |y_{\mu'}^-| + |y_{\mu}^-| dx \\
& = &\mbox{I}+\mbox{II}+\mbox{III}+\mbox{IV}+\mbox{V}+\mbox{VI}.
\eea
By Lemma~\ref{lem:mon} the functions $y_\mu^\pm$ are monotone in $\mu$, thus for all $\tau \in \RR$ we have
$$
|y_\mu^\pm(\tau)| \leq \max\{ |y_{\mu_0}^\pm(\tau)|, |y_{\mu_1}^\pm(\tau)|\}
$$
Thus is particular, using the continuity of orbits of \refeq{eq:flowmu} in the flow parameter $\tau$, 
for $x$ near the boundary points $x_\pm$ and all $\mu\in[\mu_0,\mu_1]$ we have $|y_\mu^\pm(x)| \leq C$,
where $C>0$ is a constant depending only on $y^\pm_{\mu_0}(x_\pm)$ and $y^\pm_{\mu_1}(x_\pm)$, 
or in other words, on the finite winding numbers of the corridors $\cK_1(\mu_0)$ and $\cK_1(\mu_1)$.  
Thus, given $\ep$, we may choose $\xi$ small enough (depending on $\ep$) such that $\mbox{I},\mbox{III},\mbox{IV}$ and $\mbox{VI}$ are all less than $\ep/6$.

Fixing $\xi$ in this way, 
by continuous dependence of orbits of the flow \refeq{eq:flowmu} on the parameter $\mu$ (and since we are on a compact interval in the flow parameter $\tau$), 
it is possible for  $\delta>0$ to be chosen small enough (depending on $\ep$ and $\xi$,) 
such that $|\mu' - \mu|<\de$ implies $|y_{\mu'}^+(x) - y_{\mu}^+(x)| < (x_+ - x_-)^{-1}\ep /6$ for all $x \in [x_-+\xi,x_+-\xi]$.  
Therefore $\mbox{II} < \ep/6$.  
Similarly, $\mbox{V}<\ep/6$ as well, and we are done.
\end{enumerate} 
 
Having established the above properties for the signed area $\ba(\mu)$, the standard continuity argument can now be applied:  By assumption we have $\ba(\mu_0)<0$ and $\ba(\mu_1)>0$.  Thus by the Intermediate Value Theorem there exists $\mu_s\in(\mu_0,\mu_1)$ such that $\ba(\mu_s)=0$, which is equivalent to the existence of a connector. 
 Since the corridors $\cK_1(\mu)$ by definition always have $\widetilde{S}_\pm$ on their boundary, the saddles connector also has to go 
from $\widetilde{S}_-$ to $\widetilde{S}_+$.
\end{proof}
We will use this proposition to establish the existence of  saddles connectors for both the $\Theta$ \refeq{dynsysTh} and the $\Omega$ 
\refeq{dynsysOm} equations.
 In each case we need to show that the flow satisfies the assumptions we have made in this subsection about the flow on a cylinder 
\refeq{eq:flowmu}.

%%%%%%%%%%%%%%%%%%%%%%%%%%%%%%%%%%%%%%%%%%%%%%%%%%%%%%
\subsubsection{Existence of saddles connectors for the $\Theta$ equation}
%%%%%%%%%%%%%%%%%%%%%%%%%%%%%%%%%%%%%%%%%%%%%%%%%%%%%%
For simplicity we are only going to consider the case $\ka = \half$.
To see that \refeq{dynsysTh} is a flow on a cylinder of the type we have considered in the above, we make the following 
identifications: $x=\theta$, $y=\Theta$, $x_-^{} = 0$, $x_+^{} = \pi$.
  We have
$$
f(\theta) = \sin\theta,
\qquad 
g(\theta,\Theta) = -2a\sin\theta\cos\theta\cos\Theta+\left(2aE\sin^2\theta - 1\right) \sin\Theta + 2\la\sin\theta.
$$
Therefore, the equilibria are at
$$
S_- = (0,0),\quad N_- = (0,\pi)\qquad S_+ = (\pi,-\pi),\quad N_+ = (\pi,0);
$$
all of these are hyperbolic.
Condition \refeq{assump} is clearly satisfied: $s_-^{} - n_-^{} = -\pi$ and $n_+^{} - s_+^{} = \pi$.

Here we make a note of the fact, easily verified, that the $\Theta$ flow \refeq{dynsysTh} possesses a discrete symmetry:  
Let $\cO = (\theta(\tau),\Theta(\tau))$ be any orbit of the flow \refeq{dynsysTh}.
  Then $\cO' = (\pi - \theta(-t), \pi-\Theta(-t))$ is also an orbit of \refeq{dynsysTh}.
  This is simply due to the fact that 
$$
f(\theta) = f(\pi-\theta),\qquad g(\theta,\Theta) = g(\pi-\theta,\pi-\Theta).
$$
This symmetry will prove useful in constructing corridors of given winding number for the flow.
 Next we check that the assumptions we made in the previous subsection about the topology of nullclines 
hold in the case of \refeq{dynsysTh}.
%%%%%%%%%%%%%%%%%%%%%%%%%%%%%%%%%%%%%%%%%%%%%%%%%%%%%%
\subsubsection{Topology of the nullclines}
%%%%%%%%%%%%%%%%%%%%%%%%%%%%%%%%%%%%%%%%%%%%%%%%%%%%%%

Let $T := \tan(\Theta/2)$.  Then,
$$
g_{E,\la}(\theta,\Theta) = \frac{1}{1+T^2}\left( 2\sin\theta(\la - a\cos\theta)T^2 +2(2aE\sin^2\theta - 1)T + 
2\sin\theta(\la+ a\cos\theta)\right) =: \frac{q(T)}{1+T^2}
$$
where $q$ is a quadratic polynomial in $T$, whose discriminant we calculate to be
$$
\Delta_q := (2aE\sin^2\theta - 1)^2 - 4\la^2\sin^2\theta + 4 a^2 \sin^2\theta\cos^2\theta.
$$
Setting $\tau :=- \cot \theta$ we thus obtain
$$
\Delta_q = \frac{\delta_q(\tau)}{(1+\tau^2)^2},\qquad \delta_q(\tau) = \tau^4 + 2(1-2aE + 2a^2 - 2\la^2)\tau^2+(1-2aE)^2 - 4\la^2
$$
Therefore $\de_q$ is a quartic polynomial which is quadratic in $s=\tau^2$.
  The discriminant of $\de_q(s)$ is 
$$
\Delta_{\delta_q} = \la^4 + 2a(E-a)\la^2 + a^2(a^2-2aE+1) = (\la^2+a(E-a))^2 + a^2(1-E^2)
$$
which is always positive.  
Thus $\de_q(s) = 0$ will always have two roots, which will be of opposite signs if $|\la|\geq\half - aE$.
  In that case there exists $s_1 = \tau_1^2>0$ such that $\Delta_q(\pm \tau_1^{})=0$, and $\Delta_q(\tau)<0$ for $\tau\in(-\tau_1^{},\tau_1^{})$.
  It follows that the quadratic equation $q(T)=0$ will have two roots so long as $|t|>\tau_1^{}$, it will have repeated roots 
when $t=\pm \tau_1^{}$, and no roots when $|t|<\tau_1^{}$.  

If on the other hand $|\la|<\half-aE\leq \half$ then $s_1$ and $s_2$, the roots of $\de_q(s)=0$, will be of the same sign.
  Since the sum of these roots will be $4(\la^2 - a^2 - (\half-aE))$ it is easy to see that $s_1$ and $s_2$ will both be 
negative, thus they will not correspond to real values of $\tau = \pm\sqrt{s}$.
  Therefore $\Delta_q>0$ and hence $q(T)=0$ will always have two roots.
  Thus the critical value of the parameter $\la$ (thinking of the other parameters $a$ and $E$ as given and fixed) is
$$
\la_c := -\half + aE.
$$

Now any zero of $q$ will be a zero of $g$, and thus will give us a point on the nullcline $\Ga$.
  In addition, $g$ may also have a zero at $\Theta = 0$ or $\pm\pi$, where $T=\pm\infty$.
  For $g$ to be zero there we need the coefficient of $T^2$ in $q$ to vanish, i.e. either $\sin\theta = 0$, 
which will give us the equilibrium points, or $\cos\theta = \la/a$ which is impossible so long as 
$$
\la < -a.
$$
Under this condition therefore, the nullclines have the topology we assumed in the previous subsection, with 
$\la$ playing the role of the parameter $\mu$, i.e., given $a\in(0,\half)$, $E\in[0,1]$, $\la\in(-\infty,-a)$ 
and $\la_c$ as in the above, the topology of $\Theta$-nullclines for the flow \refeq{dynsysTh} changes across 
$\la=\la_c$ in the manner described in assumption ({\bf A}).  
Figure~\ref{fig:thetanull} shows Maple plots of the $\Theta$-nullclines for values of $\la$ below and above the critical value.
\begin{figure}[ht]
\begin{center}
\includegraphics[scale=0.3]{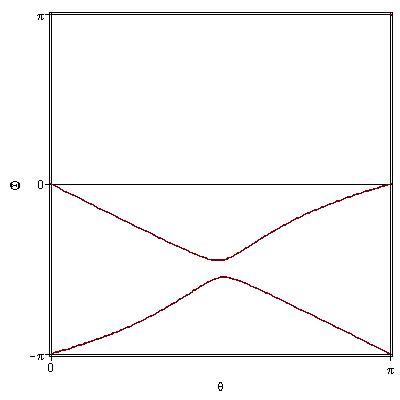}
\includegraphics[scale=0.3]{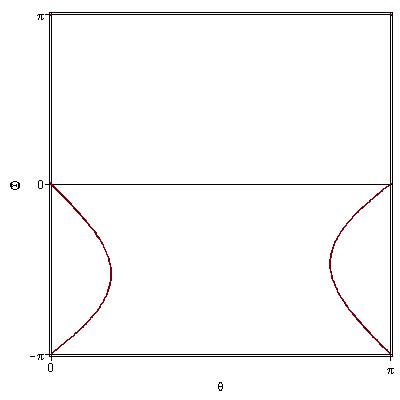}
\end{center}
\caption{$\Theta$ nullclines for $\la=-0.4$ (left) and $\la=-0.9$ (right), with $a=0.1$, $E=0.95$.\label{fig:thetanull}}
\end{figure}

%%%%%%%%%%%%%%%%%%%%%%%%%%%%%%%%%%%%%%%%%%%%%%%%%%%%%
\subsubsection{Explicit solutions of the $\Theta$ equation}\label{explicit}
%%%%%%%%%%%%%%%%%%%%%%%%%%%%%%%%%%%%%%%%%%%%%%%%%%%%%
One easily verifies that given $a\in[0,\half)$, there is an explicit solution of \refeq{eq:Theta} (with $\ka = \half$), for $E=1$ and $\la = -1+a$, namely
$$
\Theta_0(\theta) = - \theta.
$$
This furthermore generates a saddles connector for \refeq{dynsysTh}:  $\cS_0 := (\theta_0(\tau),\Theta_0(\theta_0(\tau))$ where $\theta_0(\tau)$ is 
the unique solution to the ODE $\dot{\theta} = \sin\theta$ with $\theta(-\infty) = 0$ and $\theta(\infty) = \pi$.

This solution will help us get the iteration started.
%%%%%%%%%%%%%%%%%%%%%%%%%%%%%%%%%%%%%%%%%%%%%%%%%%%%%%
\subsubsection{Existence of corridors with unequal winding number}
%%%%%%%%%%%%%%%%%%%%%%%%%%%%%%%%%%%%%%%%%%%%%%%%%%%%%%
Throughout this section, $a$ will be a fixed number in $(0,\half)$.
The following two propositions will start things off:

\begin{prop}\label{prop:lacorlow}
Given $E\in[0,1]$ and $\la \leq \la_l := -1-a$  the flow \refeq{dynsysTh} has a corridor $\cK_1(E,\la)$ with $w(\cK_1)\geq 1$.
\end{prop}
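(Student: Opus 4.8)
The plan is to show that for $\la$ sufficiently negative, the two distinguished orbits $\widetilde{\cW}^-$ and $\widetilde{\cW}^+$ of the $\Theta$-flow \refeq{dynsysTh} lie far down in the universal cover, so that the corridor $\cK_1$ they bound has strictly positive winding number. First I would identify $\la$ with the monotonicity parameter $\mu$ of the previous subsection. Recalling
$$
g_{E,\la}(\theta,\Theta) = -2a\sin\theta\cos\theta\cos\Theta + (2aE\sin^2\theta - 1)\sin\Theta + 2\la\sin\theta,
$$
we see $\frac{\p}{\p\la} g_{E,\la}(\theta,\Theta) = 2\sin\theta \geq 0$ on $\cC_1 = [0,\pi]\times\Sset^1$, so $g$ is in fact monotone \emph{non-decreasing} in $\la$; equivalently it is monotone non-increasing in $\mu := -\la$, which is the form assumed in Assumption (M). By Lemma~\ref{lem:mon}, then, $y^-_\mu(\tau)$ is non-increasing and $y^+_\mu(\tau)$ is non-decreasing in $\mu = -\la$; i.e.\ as $\la$ decreases, $\widetilde{\cW}^-$ moves down and $\widetilde{\cW}^+$ moves up in $\bar\cC$. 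Since these two orbits cannot cross without coinciding, the key point will be to locate them precisely enough for one specific value $\la_l = -1-a$ to conclude $w(\cK_1) \geq 1$, and then invoke monotonicity for all $\la \leq \la_l$.

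The main step is a barrier/comparison argument producing an invariant region that traps $\widetilde{\cW}^-$ below the line $\Theta = -\pi$ (in the cover, i.e.\ below $\bar N_+$ at level $n_+ - 2\pi$) once $\la \leq -1 - a$. Concretely I would estimate $g_{E,\la}$ along a horizontal segment, say $\Theta = -\pi + \varepsilon$ for small $\varepsilon$, near the right boundary: there $\sin\Theta = -\sin\varepsilon < 0$, $\cos\Theta$ is close to $-1$, and the dominant term $2\la\sin\theta$ is large and negative for $\la \leq -1-a$, so that $\dot\Theta < 0$ on that segment, pushing the flow downward. Combined with the fact that near $\theta = 0$ the unstable eigendirection of the saddle $S_- = (0,0)$ has slope $\delta = (\la - a)/(\ka + \tfrac12) = \la - a$ (from the proof of Proposition~\ref{prop:bndryvals}, with $\ka = \tfrac12$), which is $\leq -1 - 2a < 0$, the orbit $\widetilde{\cW}^-$ leaves $\widetilde S_-$ heading steeply down and, by the barrier, must limit onto a copy of $N_+$ strictly below level $n_+$. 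A symmetric estimate — using the discrete symmetry $(\theta,\Theta)\mapsto(\pi-\theta,\pi-\Theta)$ noted after \refeq{dynsysTh} — forces $\widetilde{\cW}^+$ to limit onto a copy of $N_-$ above $\widetilde S_+$, i.e.\ $\bar N_- = (0, n_- + 2\pi k)$ with $k \geq 1$. Hence $k_+ = k_- \geq 1$, which by the definition of winding number is exactly $w(\cK_1(E,\la)) \geq 1$.

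I would organize it as: (i) verify Assumption (M) holds with $\mu = -\la$ (one-line computation above); (ii) at $\la = \la_l = -1-a$, construct the horizontal barrier near $\Theta = -\pi$ and check the sign of $g_{E,\la_l}$ there uniformly in $E\in[0,1]$, using $|{-2a\sin\theta\cos\theta\cos\Theta}| + |2aE\sin^2\theta - 1| \leq 1 + 2a < 2|\la_l|$; (iii) conclude $\widetilde{\cW}^-$ cannot reach $\bar N_+$ at level $n_+$ and must wind at least once, then get the same for $\widetilde{\cW}^+$ by the symmetry, yielding $w(\cK_1(E,\la_l)) \geq 1$; (iv) for general $\la \leq \la_l$ apply Lemma~\ref{lem:mon}: decreasing $\la$ only pushes $\widetilde{\cW}^-$ further down and $\widetilde{\cW}^+$ further up, so the winding number cannot decrease, hence stays $\geq 1$ (and the corridor remains non-empty as long as the two orbits stay disjoint, which I would check is the case here since they are on opposite sides of the relevant barrier). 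The hard part will be step (ii)–(iii): making the barrier estimate clean and $E$-uniform near the degenerate corner behavior at $\theta\to 0,\pi$, and carefully tracking exactly which copy of $N_+$ the orbit $\widetilde{\cW}^-$ limits to so that the winding-number bookkeeping gives $k \geq 1$ rather than merely $k \geq 0$; this is where the explicit solution $\Theta_0(\theta) = -\theta$ at $(E,\la) = (1, -1+a)$ from \S\ref{explicit}, which has winding number $0$, serves as the natural comparison anchor on the other side.
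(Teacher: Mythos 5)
Your overall strategy (trap $\widetilde{\cW}^-$ below a barrier so that its $\omega$-limit is forced down by one full turn) is the same in spirit as the paper's, and your bookkeeping with $k_+=k_-$ and the monotonicity in $\mu=-\la$ is fine. But the specific barrier you propose does not work, and the place where it fails is exactly the place you flagged as ``the hard part.'' There are two problems with the horizontal segment $\Theta=-\pi+\varepsilon$. First, it is not anchored at $S_-$: knowing only that $\cW^-$ leaves $S_-$ with slope $\la-a<-1$ does not prevent the orbit from turning around somewhere in $(0,\pi)$ and terminating at $\widetilde N_+=(\pi,0)$ without ever descending to the level $-\pi+\varepsilon$; so even a valid barrier there would not by itself give $w\ge1$. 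Second, the sign estimate fails near the right boundary where you need it: at $\Theta=-\pi+\varepsilon$ one has $\sin\Theta=-\sin\varepsilon$, so
\[
g_{E,\la}(\theta,-\pi+\varepsilon)\;=\;2a\sin\theta\cos\theta\cos\varepsilon-\bigl(2aE\sin^2\theta-1\bigr)\sin\varepsilon+2\la\sin\theta\;\longrightarrow\;\sin\varepsilon>0
\]
as $\theta\to\pi$ (or $\theta\to 0$), since the terms proportional to $\sin\theta$ vanish while the $-\sin\Theta$ term survives. The flow therefore crosses your proposed segment \emph{upward} near $\theta=\pi$ (consistent with the boundary flow on $\theta=\pi$ running from $S_+$ up to $N_+$), so it is not a barrier there, and no choice of $\varepsilon$ repairs this.

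The paper's fix is not a refinement of the horizontal barrier but a differently shaped one: the diagonal segment $\Theta=-\theta$ joining $S_-=(0,0)$ to $S_+=(\pi,-\pi)$. Comparing the slope of orbits crossing it with the slope $-1$ of the segment gives
\[
\frac{g_{E,\la}(\theta,-\theta)}{f(\theta)}-(-1)\;=\;-2a\sin\theta(\cos\theta+E\sin\theta)+2\la+2\;\le\;0\qquad\text{for }\la\le-1-a,
\]
uniformly in $E\in[0,1]$ and $\theta\in(0,\pi)$, so orbits can only cross the diagonal downward. Since $\cW^-$ leaves $S_-$ with slope $\la-a<-1$, it starts strictly below the diagonal and stays there, so its $\omega$-limit on $\theta=\pi$ lies at $\Theta\le-\pi$; the first admissible terminal node is $(\pi,-2\pi)$, i.e.\ $k_+\ge1$. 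Because the barrier terminates exactly at the two saddles, there is no corner problem, and the whole argument works directly for every $\la\le-1-a$, so your step (iv) (monotonicity in $\la$) and the symmetry argument for $\cW^+$ become unnecessary. I recommend you replace your step (ii)--(iii) with this diagonal comparison.
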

\begin{proof}
The linearization of the flow at $S_-$ gives us eigenvalues $\la_1= 1$, $\la_2=-1$ and a corresponding set of eigenvectors is 
$v_1 = \left(\begin{array}{c} 1\\ \la-a\end{array}\right)$ and $v_2 = \left(\begin{array}{c} 0\\ 1\end{array}\right)$.
  The orbit $\cW^-$, being the unstable manifold of $S_-$ is tangent to the unstable direction $v_1$, therefore the slope 
of $\cW^-$ at $S_-$ is $\frac{d\Theta}{d\theta}|_{\cW^-}(S_-) = \la - a < 0$.
  The slope of $\cW^+$ at $S_+$ is similarly calculated, and it turns out to be the same 
$\frac{d\Theta}{d\theta}|_{\cW^+}(S_+) = \la - a$.
\begin{figure}[ht]
 \begin{center}
%\input{KTZ_zGKN_I_thcor1.tex}
%%%%%%%%%%%%%%%%%
%Title: /tmp/xfig-fig007472
%%Created by: fig2dev Version 3.2 Patchlevel 5
%%CreationDate: Wed Feb 18 13:00:43 1970
%%User: shadi@attar (Tahvildar-Zadeh A. Shadi)
\font\thinlinefont=cmr5
\begingroup\makeatletter\ifx\SetFigFont\undefined%
\gdef\SetFigFont#1#2#3#4#5{%
  \reset@font\fontsize{#1}{#2pt}%
  \fontfamily{#3}\fontseries{#4}\fontshape{#5}%
  \selectfont}%
\fi\endgroup%
\mbox{\beginpicture
\setcoordinatesystem units <0.50000cm,0.50000cm>
\unitlength=0.50000cm
\linethickness=1pt
\setplotsymbol ({\makebox(0,0)[l]{\tencirc\symbol{'160}}})
\setshadesymbol ({\thinlinefont .})
\setlinear
%
% Fig POLYLINE object
%
\linethickness= 0.500pt
\setplotsymbol ({\thinlinefont .})
{\color[rgb]{0,0,0}\putrule from  5.080 16.510 to 15.240 16.510
}%
%
% Fig POLYLINE object
%
\linethickness=1pt
\setplotsymbol ({\makebox(0,0)[l]{\tencirc\symbol{'160}}})
{\color[rgb]{0,0,0}\putrule from  5.080 21.590 to 15.240 21.590
}%
%
% Fig POLYLINE object
%
\linethickness=1pt
\setplotsymbol ({\makebox(0,0)[l]{\tencirc\symbol{'160}}})
{\color[rgb]{0,0,0}\putrule from 15.240 21.590 to 15.240  6.315
\putrule from 15.240  6.350 to  5.045  6.350
\putrule from  5.080  6.350 to  5.080 21.590
}%
%
% Fig POLYLINE object
%
\linethickness= 0.500pt
\setplotsymbol ({\thinlinefont .})
{\color[rgb]{0,0,0}\putrule from  5.080 11.430 to 15.240 11.430
}%
%
% Fig POLYLINE object
%
\linethickness= 0.500pt
\setplotsymbol ({\thinlinefont .})
{\color[rgb]{0,0,0}\plot  5.080 16.510 15.240 11.430 /
}%
%
% Fig POLYLINE object
%
\linethickness= 0.500pt
\setplotsymbol ({\thinlinefont .})
{\color[rgb]{0,0,0}\plot  8.096 15.240  8.572 14.446 /
%
% arrow head
%
\plot  8.387 14.631  8.572 14.446  8.496 14.697 /
}%
%
% Fig POLYLINE object
%
\linethickness= 0.500pt
\setplotsymbol ({\thinlinefont .})
{\color[rgb]{0,0,0}\plot 11.748 13.335 12.224 12.700 /
%
% arrow head
%
\plot 12.021 12.865 12.224 12.700 12.122 12.941 /
}%
%
% Fig POLYLINE object
%
\linethickness= 0.500pt
\setplotsymbol ({\thinlinefont .})
{\color[rgb]{0,0,0}\plot  5.080 16.510  5.082 16.506 /
\plot  5.082 16.506  5.086 16.493 /
\plot  5.086 16.493  5.093 16.474 /
\plot  5.093 16.474  5.103 16.442 /
\plot  5.103 16.442  5.118 16.398 /
\plot  5.118 16.398  5.137 16.343 /
\plot  5.137 16.343  5.160 16.277 /
\plot  5.160 16.277  5.186 16.201 /
\plot  5.186 16.201  5.215 16.116 /
\plot  5.215 16.116  5.245 16.025 /
\plot  5.245 16.025  5.277 15.932 /
\plot  5.277 15.932  5.309 15.839 /
\plot  5.309 15.839  5.338 15.744 /
\plot  5.338 15.744  5.368 15.653 /
\plot  5.368 15.653  5.395 15.564 /
\plot  5.395 15.564  5.423 15.477 /
\plot  5.423 15.477  5.448 15.397 /
\plot  5.448 15.397  5.472 15.318 /
\plot  5.472 15.318  5.493 15.244 /
\plot  5.493 15.244  5.512 15.174 /
\plot  5.512 15.174  5.529 15.107 /
\plot  5.529 15.107  5.546 15.041 /
\plot  5.546 15.041  5.560 14.978 /
\plot  5.560 14.978  5.573 14.916 /
\plot  5.573 14.916  5.586 14.857 /
\plot  5.586 14.857  5.599 14.798 /
\plot  5.599 14.798  5.609 14.736 /
\plot  5.609 14.736  5.620 14.677 /
\plot  5.620 14.677  5.628 14.618 /
\plot  5.628 14.618  5.637 14.556 /
\plot  5.637 14.556  5.645 14.493 /
\plot  5.645 14.493  5.654 14.431 /
\plot  5.654 14.431  5.660 14.366 /
\plot  5.660 14.366  5.666 14.300 /
\plot  5.666 14.300  5.671 14.235 /
\plot  5.671 14.235  5.675 14.167 /
\plot  5.675 14.167  5.679 14.097 /
\plot  5.679 14.097  5.681 14.029 /
\plot  5.681 14.029  5.683 13.959 /
\plot  5.683 13.959  5.685 13.890 /
\putrule from  5.685 13.890 to  5.685 13.818
\putrule from  5.685 13.818 to  5.685 13.748
\plot  5.685 13.748  5.683 13.680 /
\plot  5.683 13.680  5.681 13.610 /
\plot  5.681 13.610  5.679 13.542 /
\plot  5.679 13.542  5.675 13.477 /
\plot  5.675 13.477  5.671 13.411 /
\plot  5.671 13.411  5.666 13.346 /
\plot  5.666 13.346  5.660 13.284 /
\plot  5.660 13.284  5.654 13.223 /
\plot  5.654 13.223  5.645 13.164 /
\plot  5.645 13.164  5.637 13.106 /
\plot  5.637 13.106  5.628 13.049 /
\plot  5.628 13.049  5.620 12.992 /
\plot  5.620 12.992  5.609 12.937 /
\plot  5.609 12.937  5.599 12.880 /
\plot  5.599 12.880  5.586 12.821 /
\plot  5.586 12.821  5.571 12.764 /
\plot  5.571 12.764  5.556 12.704 /
\plot  5.556 12.704  5.539 12.645 /
\plot  5.539 12.645  5.520 12.581 /
\plot  5.520 12.581  5.501 12.518 /
\plot  5.501 12.518  5.480 12.450 /
\plot  5.480 12.450  5.455 12.380 /
\plot  5.455 12.380  5.429 12.306 /
\plot  5.429 12.306  5.400 12.230 /
\plot  5.400 12.230  5.370 12.150 /
\plot  5.370 12.150  5.338 12.067 /
\plot  5.338 12.067  5.304 11.982 /
\plot  5.304 11.982  5.273 11.898 /
\plot  5.273 11.898  5.239 11.813 /
\plot  5.239 11.813  5.207 11.735 /
\plot  5.207 11.735  5.175 11.661 /
\plot  5.175 11.661  5.150 11.597 /
\plot  5.150 11.597  5.127 11.542 /
\plot  5.127 11.542  5.110 11.498 /
\plot  5.110 11.498  5.095 11.468 /
\plot  5.095 11.468  5.086 11.447 /
\plot  5.086 11.447  5.082 11.434 /
\plot  5.082 11.434  5.080 11.430 /
}%
%
% Fig POLYLINE object
%
\linethickness= 0.500pt
\setplotsymbol ({\thinlinefont .})
{\color[rgb]{0,0,0}\plot 15.240 16.510 15.238 16.506 /
\plot 15.238 16.506 15.234 16.493 /
\plot 15.234 16.493 15.227 16.474 /
\plot 15.227 16.474 15.217 16.442 /
\plot 15.217 16.442 15.202 16.398 /
\plot 15.202 16.398 15.183 16.343 /
\plot 15.183 16.343 15.160 16.275 /
\plot 15.160 16.275 15.134 16.199 /
\plot 15.134 16.199 15.105 16.114 /
\plot 15.105 16.114 15.075 16.023 /
\plot 15.075 16.023 15.043 15.930 /
\plot 15.043 15.930 15.011 15.835 /
\plot 15.011 15.835 14.982 15.740 /
\plot 14.982 15.740 14.952 15.646 /
\plot 14.952 15.646 14.925 15.558 /
\plot 14.925 15.558 14.897 15.471 /
\plot 14.897 15.471 14.872 15.388 /
\plot 14.872 15.388 14.848 15.308 /
\plot 14.848 15.308 14.827 15.234 /
\plot 14.827 15.234 14.808 15.162 /
\plot 14.808 15.162 14.791 15.092 /
\plot 14.791 15.092 14.774 15.026 /
\plot 14.774 15.026 14.760 14.961 /
\plot 14.760 14.961 14.747 14.897 /
\plot 14.747 14.897 14.734 14.836 /
\plot 14.734 14.836 14.721 14.772 /
\plot 14.721 14.772 14.711 14.711 /
\plot 14.711 14.711 14.700 14.647 /
\plot 14.700 14.647 14.692 14.586 /
\plot 14.692 14.586 14.683 14.520 /
\plot 14.683 14.520 14.675 14.455 /
\plot 14.675 14.455 14.666 14.389 /
\plot 14.666 14.389 14.660 14.321 /
\plot 14.660 14.321 14.654 14.252 /
\plot 14.654 14.252 14.649 14.182 /
\plot 14.649 14.182 14.645 14.110 /
\plot 14.645 14.110 14.641 14.038 /
\plot 14.641 14.038 14.639 13.964 /
\plot 14.639 13.964 14.637 13.892 /
\plot 14.637 13.892 14.635 13.818 /
\putrule from 14.635 13.818 to 14.635 13.744
\putrule from 14.635 13.744 to 14.635 13.669
\plot 14.635 13.669 14.637 13.595 /
\plot 14.637 13.595 14.639 13.523 /
\plot 14.639 13.523 14.641 13.454 /
\plot 14.641 13.454 14.645 13.384 /
\plot 14.645 13.384 14.649 13.316 /
\plot 14.649 13.316 14.654 13.248 /
\plot 14.654 13.248 14.660 13.185 /
\plot 14.660 13.185 14.666 13.121 /
\plot 14.666 13.121 14.675 13.060 /
\plot 14.675 13.060 14.683 13.001 /
\plot 14.683 13.001 14.692 12.943 /
\plot 14.692 12.943 14.700 12.888 /
\plot 14.700 12.888 14.711 12.831 /
\plot 14.711 12.831 14.721 12.774 /
\plot 14.721 12.774 14.734 12.717 /
\plot 14.734 12.717 14.749 12.660 /
\plot 14.749 12.660 14.764 12.603 /
\plot 14.764 12.603 14.781 12.543 /
\plot 14.781 12.543 14.800 12.486 /
\plot 14.800 12.486 14.819 12.425 /
\plot 14.819 12.425 14.840 12.361 /
\plot 14.840 12.361 14.865 12.296 /
\plot 14.865 12.296 14.891 12.228 /
\plot 14.891 12.228 14.920 12.156 /
\plot 14.920 12.156 14.950 12.082 /
\plot 14.950 12.082 14.982 12.006 /
\plot 14.982 12.006 15.016 11.930 /
\plot 15.016 11.930 15.047 11.851 /
\plot 15.047 11.851 15.081 11.777 /
\plot 15.081 11.777 15.113 11.705 /
\plot 15.113 11.705 15.145 11.637 /
\plot 15.145 11.637 15.170 11.580 /
\plot 15.170 11.580 15.193 11.532 /
\plot 15.193 11.532 15.210 11.491 /
\plot 15.210 11.491 15.225 11.464 /
\plot 15.225 11.464 15.234 11.445 /
\plot 15.234 11.445 15.238 11.434 /
\plot 15.238 11.434 15.240 11.430 /
}%
%
% Fig POLYLINE object
%
\linethickness= 0.500pt
\setplotsymbol ({\thinlinefont .})
{\color[rgb]{0,0,0}\plot  5.080 16.510  5.082 16.508 /
\plot  5.082 16.508  5.084 16.504 /
\plot  5.084 16.504  5.091 16.497 /
\plot  5.091 16.497  5.101 16.485 /
\plot  5.101 16.485  5.116 16.466 /
\plot  5.116 16.466  5.135 16.440 /
\plot  5.135 16.440  5.160 16.408 /
\plot  5.160 16.408  5.192 16.368 /
\plot  5.192 16.368  5.230 16.317 /
\plot  5.230 16.317  5.277 16.260 /
\plot  5.277 16.260  5.330 16.190 /
\plot  5.330 16.190  5.391 16.114 /
\plot  5.391 16.114  5.461 16.025 /
\plot  5.461 16.025  5.537 15.928 /
\plot  5.537 15.928  5.624 15.820 /
\plot  5.624 15.820  5.715 15.701 /
\plot  5.715 15.701  5.817 15.574 /
\plot  5.817 15.574  5.922 15.439 /
\plot  5.922 15.439  6.037 15.293 /
\plot  6.037 15.293  6.157 15.141 /
\plot  6.157 15.141  6.282 14.982 /
\plot  6.282 14.982  6.413 14.815 /
\plot  6.413 14.815  6.549 14.643 /
\plot  6.549 14.643  6.687 14.467 /
\plot  6.687 14.467  6.830 14.285 /
\plot  6.830 14.285  6.974 14.101 /
\plot  6.974 14.101  7.123 13.915 /
\plot  7.123 13.915  7.271 13.727 /
\plot  7.271 13.727  7.421 13.536 /
\plot  7.421 13.536  7.569 13.348 /
\plot  7.569 13.348  7.719 13.157 /
\plot  7.719 13.157  7.870 12.969 /
\plot  7.870 12.969  8.016 12.780 /
\plot  8.016 12.780  8.164 12.596 /
\plot  8.164 12.596  8.308 12.412 /
\plot  8.308 12.412  8.450 12.232 /
\plot  8.450 12.232  8.589 12.057 /
\plot  8.589 12.057  8.727 11.883 /
\plot  8.727 11.883  8.862 11.714 /
\plot  8.862 11.714  8.994 11.546 /
\plot  8.994 11.546  9.121 11.386 /
\plot  9.121 11.386  9.248 11.227 /
\plot  9.248 11.227  9.368 11.074 /
\plot  9.368 11.074  9.487 10.924 /
\plot  9.487 10.924  9.603 10.780 /
\plot  9.603 10.780  9.716 10.638 /
\plot  9.716 10.638  9.823 10.503 /
\plot  9.823 10.503  9.929 10.370 /
\plot  9.929 10.370 10.031 10.240 /
\plot 10.031 10.240 10.130 10.118 /
\plot 10.130 10.118 10.228  9.997 /
\plot 10.228  9.997 10.321  9.881 /
\plot 10.321  9.881 10.412  9.768 /
\plot 10.412  9.768 10.499  9.658 /
\plot 10.499  9.658 10.583  9.553 /
\plot 10.583  9.553 10.668  9.451 /
\plot 10.668  9.451 10.746  9.351 /
\plot 10.746  9.351 10.825  9.256 /
\plot 10.825  9.256 10.901  9.163 /
\plot 10.901  9.163 10.975  9.072 /
\plot 10.975  9.072 11.047  8.983 /
\plot 11.047  8.983 11.117  8.898 /
\plot 11.117  8.898 11.184  8.816 /
\plot 11.184  8.816 11.250  8.735 /
\plot 11.250  8.735 11.316  8.657 /
\plot 11.316  8.657 11.379  8.581 /
\plot 11.379  8.581 11.441  8.505 /
\plot 11.441  8.505 11.502  8.433 /
\plot 11.502  8.433 11.563  8.361 /
\plot 11.563  8.361 11.665  8.240 /
\plot 11.665  8.240 11.764  8.122 /
\plot 11.764  8.122 11.860  8.009 /
\plot 11.860  8.009 11.955  7.899 /
\plot 11.955  7.899 12.046  7.794 /
\plot 12.046  7.794 12.135  7.690 /
\plot 12.135  7.690 12.224  7.590 /
\plot 12.224  7.590 12.308  7.493 /
\plot 12.308  7.493 12.391  7.400 /
\plot 12.391  7.400 12.474  7.309 /
\plot 12.474  7.309 12.554  7.222 /
\plot 12.554  7.222 12.630  7.137 /
\plot 12.630  7.137 12.706  7.055 /
\plot 12.706  7.055 12.780  6.977 /
\plot 12.780  6.977 12.852  6.900 /
\plot 12.852  6.900 12.922  6.828 /
\plot 12.922  6.828 12.990  6.759 /
\plot 12.990  6.759 13.056  6.693 /
\plot 13.056  6.693 13.119  6.629 /
\plot 13.119  6.629 13.180  6.570 /
\plot 13.180  6.570 13.240  6.513 /
\plot 13.240  6.513 13.297  6.460 /
\plot 13.297  6.460 13.352  6.409 /
\plot 13.352  6.409 13.405  6.361 /
\plot 13.405  6.361 13.456  6.316 /
\plot 13.456  6.316 13.502  6.276 /
\plot 13.502  6.276 13.549  6.236 /
\plot 13.549  6.236 13.593  6.200 /
\plot 13.593  6.200 13.636  6.166 /
\plot 13.636  6.166 13.676  6.136 /
\plot 13.676  6.136 13.714  6.107 /
\plot 13.714  6.107 13.750  6.079 /
\plot 13.750  6.079 13.786  6.056 /
\plot 13.786  6.056 13.820  6.032 /
\plot 13.820  6.032 13.851  6.011 /
\plot 13.851  6.011 13.881  5.992 /
\plot 13.881  5.992 13.913  5.973 /
\plot 13.913  5.973 13.940  5.958 /
\plot 13.940  5.958 13.968  5.941 /
\plot 13.968  5.941 13.998  5.927 /
\plot 13.998  5.927 14.046  5.901 /
\plot 14.046  5.901 14.093  5.878 /
\plot 14.093  5.878 14.141  5.857 /
\plot 14.141  5.857 14.188  5.838 /
\plot 14.188  5.838 14.232  5.821 /
\plot 14.232  5.821 14.279  5.806 /
\plot 14.279  5.806 14.323  5.791 /
\plot 14.323  5.791 14.368  5.781 /
\plot 14.368  5.781 14.412  5.772 /
\plot 14.412  5.772 14.455  5.764 /
\plot 14.455  5.764 14.497  5.759 /
\plot 14.497  5.759 14.537  5.755 /
\putrule from 14.537  5.755 to 14.575  5.755
\putrule from 14.575  5.755 to 14.611  5.755
\plot 14.611  5.755 14.647  5.759 /
\plot 14.647  5.759 14.679  5.764 /
\plot 14.679  5.764 14.711  5.770 /
\plot 14.711  5.770 14.740  5.776 /
\plot 14.740  5.776 14.768  5.787 /
\plot 14.768  5.787 14.796  5.798 /
\plot 14.796  5.798 14.819  5.808 /
\plot 14.819  5.808 14.844  5.821 /
\plot 14.844  5.821 14.874  5.840 /
\plot 14.874  5.840 14.903  5.863 /
\plot 14.903  5.863 14.933  5.889 /
\plot 14.933  5.889 14.963  5.918 /
\plot 14.963  5.918 14.992  5.954 /
\plot 14.992  5.954 15.022  5.994 /
\plot 15.022  5.994 15.054  6.041 /
\plot 15.054  6.041 15.088  6.092 /
\plot 15.088  6.092 15.121  6.145 /
\plot 15.121  6.145 15.155  6.200 /
\plot 15.155  6.200 15.183  6.251 /
\plot 15.183  6.251 15.208  6.293 /
\plot 15.208  6.293 15.225  6.322 /
\plot 15.225  6.322 15.236  6.342 /
\plot 15.236  6.342 15.240  6.348 /
\putrule from 15.240  6.348 to 15.240  6.350
}%
%
% Fig POLYLINE object
%
\linethickness= 0.500pt
\setplotsymbol ({\thinlinefont .})
{\color[rgb]{0,0,0}\plot 15.240 11.430 15.238 11.432 /
\plot 15.238 11.432 15.236 11.436 /
\plot 15.236 11.436 15.229 11.445 /
\plot 15.229 11.445 15.219 11.458 /
\plot 15.219 11.458 15.206 11.479 /
\plot 15.206 11.479 15.187 11.504 /
\plot 15.187 11.504 15.162 11.540 /
\plot 15.162 11.540 15.130 11.582 /
\plot 15.130 11.582 15.094 11.635 /
\plot 15.094 11.635 15.050 11.697 /
\plot 15.050 11.697 14.997 11.769 /
\plot 14.997 11.769 14.937 11.853 /
\plot 14.937 11.853 14.872 11.946 /
\plot 14.872 11.946 14.796 12.050 /
\plot 14.796 12.050 14.715 12.164 /
\plot 14.715 12.164 14.626 12.287 /
\plot 14.626 12.287 14.531 12.421 /
\plot 14.531 12.421 14.427 12.565 /
\plot 14.427 12.565 14.319 12.715 /
\plot 14.319 12.715 14.205 12.874 /
\plot 14.205 12.874 14.086 13.039 /
\plot 14.086 13.039 13.964 13.212 /
\plot 13.964 13.212 13.835 13.390 /
\plot 13.835 13.390 13.703 13.572 /
\plot 13.703 13.572 13.570 13.758 /
\plot 13.570 13.758 13.434 13.947 /
\plot 13.434 13.947 13.297 14.137 /
\plot 13.297 14.137 13.159 14.330 /
\plot 13.159 14.330 13.020 14.522 /
\plot 13.020 14.522 12.882 14.715 /
\plot 12.882 14.715 12.744 14.906 /
\plot 12.744 14.906 12.607 15.096 /
\plot 12.607 15.096 12.471 15.284 /
\plot 12.471 15.284 12.338 15.469 /
\plot 12.338 15.469 12.205 15.653 /
\plot 12.205 15.653 12.076 15.831 /
\plot 12.076 15.831 11.949 16.006 /
\plot 11.949 16.006 11.824 16.180 /
\plot 11.824 16.180 11.701 16.347 /
\plot 11.701 16.347 11.582 16.510 /
\plot 11.582 16.510 11.466 16.669 /
\plot 11.466 16.669 11.354 16.825 /
\plot 11.354 16.825 11.244 16.976 /
\plot 11.244 16.976 11.138 17.122 /
\plot 11.138 17.122 11.034 17.264 /
\plot 11.034 17.264 10.933 17.399 /
\plot 10.933 17.399 10.835 17.532 /
\plot 10.835 17.532 10.740 17.661 /
\plot 10.740 17.661 10.649 17.786 /
\plot 10.649 17.786 10.560 17.907 /
\plot 10.560 17.907 10.473 18.023 /
\plot 10.473 18.023 10.389 18.136 /
\plot 10.389 18.136 10.308 18.246 /
\plot 10.308 18.246 10.230 18.351 /
\plot 10.230 18.351 10.152 18.453 /
\plot 10.152 18.453 10.077 18.553 /
\plot 10.077 18.553 10.005 18.648 /
\plot 10.005 18.648  9.934 18.741 /
\plot  9.934 18.741  9.866 18.832 /
\plot  9.866 18.832  9.798 18.919 /
\plot  9.798 18.919  9.732 19.006 /
\plot  9.732 19.006  9.669 19.088 /
\plot  9.669 19.088  9.605 19.169 /
\plot  9.605 19.169  9.544 19.247 /
\plot  9.544 19.247  9.485 19.325 /
\plot  9.485 19.325  9.426 19.399 /
\plot  9.426 19.399  9.366 19.473 /
\plot  9.366 19.473  9.269 19.594 /
\plot  9.269 19.594  9.174 19.713 /
\plot  9.174 19.713  9.083 19.827 /
\plot  9.083 19.827  8.992 19.937 /
\plot  8.992 19.937  8.903 20.043 /
\plot  8.903 20.043  8.816 20.146 /
\plot  8.816 20.146  8.729 20.248 /
\plot  8.729 20.248  8.644 20.345 /
\plot  8.644 20.345  8.562 20.441 /
\plot  8.562 20.441  8.481 20.534 /
\plot  8.481 20.534  8.401 20.623 /
\plot  8.401 20.623  8.323 20.709 /
\plot  8.323 20.709  8.244 20.792 /
\plot  8.244 20.792  8.168 20.872 /
\plot  8.168 20.872  8.094 20.951 /
\plot  8.094 20.951  8.020 21.025 /
\plot  8.020 21.025  7.948 21.097 /
\plot  7.948 21.097  7.878 21.167 /
\plot  7.878 21.167  7.811 21.232 /
\plot  7.811 21.232  7.745 21.294 /
\plot  7.745 21.294  7.679 21.353 /
\plot  7.679 21.353  7.618 21.410 /
\plot  7.618 21.410  7.557 21.463 /
\plot  7.557 21.463  7.497 21.514 /
\plot  7.497 21.514  7.440 21.560 /
\plot  7.440 21.560  7.385 21.605 /
\plot  7.385 21.605  7.330 21.647 /
\plot  7.330 21.647  7.279 21.685 /
\plot  7.279 21.685  7.228 21.721 /
\plot  7.228 21.721  7.182 21.755 /
\plot  7.182 21.755  7.135 21.787 /
\plot  7.135 21.787  7.089 21.816 /
\plot  7.089 21.816  7.046 21.844 /
\plot  7.046 21.844  7.004 21.867 /
\plot  7.004 21.867  6.964 21.891 /
\plot  6.964 21.891  6.924 21.914 /
\plot  6.924 21.914  6.886 21.933 /
\plot  6.886 21.933  6.847 21.952 /
\plot  6.847 21.952  6.809 21.971 /
\plot  6.809 21.971  6.773 21.986 /
\plot  6.773 21.986  6.714 22.013 /
\plot  6.714 22.013  6.653 22.037 /
\plot  6.653 22.037  6.593 22.060 /
\plot  6.593 22.060  6.534 22.081 /
\plot  6.534 22.081  6.475 22.098 /
\plot  6.475 22.098  6.416 22.115 /
\plot  6.416 22.115  6.356 22.130 /
\plot  6.356 22.130  6.299 22.145 /
\plot  6.299 22.145  6.240 22.155 /
\plot  6.240 22.155  6.183 22.164 /
\plot  6.183 22.164  6.126 22.172 /
\plot  6.126 22.172  6.068 22.176 /
\plot  6.068 22.176  6.016 22.181 /
\plot  6.016 22.181  5.963 22.183 /
\putrule from  5.963 22.183 to  5.912 22.183
\plot  5.912 22.183  5.863 22.181 /
\plot  5.863 22.181  5.817 22.178 /
\plot  5.817 22.178  5.772 22.174 /
\plot  5.772 22.174  5.732 22.168 /
\plot  5.732 22.168  5.692 22.159 /
\plot  5.692 22.159  5.656 22.151 /
\plot  5.656 22.151  5.620 22.142 /
\plot  5.620 22.142  5.588 22.130 /
\plot  5.588 22.130  5.556 22.119 /
\plot  5.556 22.119  5.518 22.102 /
\plot  5.518 22.102  5.480 22.083 /
\plot  5.480 22.083  5.446 22.060 /
\plot  5.446 22.060  5.412 22.035 /
\plot  5.412 22.035  5.378 22.007 /
\plot  5.378 22.007  5.347 21.973 /
\plot  5.347 21.973  5.313 21.935 /
\plot  5.313 21.935  5.279 21.895 /
\plot  5.279 21.895  5.245 21.848 /
\plot  5.245 21.848  5.211 21.800 /
\plot  5.211 21.800  5.179 21.753 /
\plot  5.179 21.753  5.150 21.706 /
\plot  5.150 21.706  5.124 21.666 /
\plot  5.124 21.666  5.105 21.632 /
\plot  5.105 21.632  5.091 21.609 /
\plot  5.091 21.609  5.084 21.596 /
\plot  5.084 21.596  5.080 21.590 /
}%
%
% Fig TEXT object
%
\put{\SetFigFont{6}{7.2}{\rmdefault}{\mddefault}{\updefault}{\color[rgb]{0,0,0}$\tilde{S}_-$}%
} [lB] at  3.810 16.510
%
% Fig TEXT object
%
\put{\SetFigFont{6}{7.2}{\rmdefault}{\mddefault}{\updefault}{\color[rgb]{0,0,0}$\bar{N}_-$}%
} [lB] at  3.810 11.430
%
% Fig TEXT object
%
\put{\SetFigFont{6}{7.2}{\rmdefault}{\mddefault}{\updefault}{\color[rgb]{0,0,0}$\tilde{N}_-$}%
} [lB] at  3.810 21.590
%
% Fig TEXT object
%
\put{\SetFigFont{6}{7.2}{\rmdefault}{\mddefault}{\updefault}{\color[rgb]{0,0,0}$\tilde{S}_+$}%
} [lB] at 15.558 11.430
%
% Fig TEXT object
%
\put{\SetFigFont{6}{7.2}{\rmdefault}{\mddefault}{\updefault}{\color[rgb]{0,0,0}$\tilde{N}_+$}%
} [lB] at 15.558 16.510
%
% Fig TEXT object
%
\put{\SetFigFont{6}{7.2}{\rmdefault}{\mddefault}{\updefault}{\color[rgb]{0,0,0}$\bar{N}_+$}%
} [lB] at 15.558  6.350
%
% Fig TEXT object
%
\put{\SetFigFont{6}{7.2}{\rmdefault}{\mddefault}{\updefault}{\color[rgb]{0,0,0}$\cW^-$}%
} [lB] at 10.478  9.842
%
% Fig TEXT object
%
\put{\SetFigFont{6}{7.2}{\rmdefault}{\mddefault}{\updefault}{\color[rgb]{0,0,0}$\cW^+$}%
} [lB] at  9.842 19.050
%
% Fig TEXT object
%
\put{\SetFigFont{6}{7.2}{\rmdefault}{\mddefault}{\updefault}{\color[rgb]{0,0,0}$\Ga_l$}%
} [lB] at  5.715 12.700
%
% Fig TEXT object
%
\put{\SetFigFont{6}{7.2}{\rmdefault}{\mddefault}{\updefault}{\color[rgb]{0,0,0}$\Ga_r$}%
} [lB] at 14.287 14.605
\linethickness=0pt
\putrectangle corners at  3.778 22.229 and 15.589  5.709
\endpicture}

%%%%%%%%%%%%%%%%%
 \end{center}
\caption{Existence of corridor with $w(\cK_1)\geq 1$}
\end{figure}
On the other hand, since $\la_l<\la_c$, we have $\Ga = \Ga_l \cup \Ga_r$.
  $S_-$ is a terminal point of the curve $\Ga_l$, on which $g_{E,\la}(\theta,\Theta)=0$, hence the slope of 
$\Ga_l$ at $S_-$ can be calculated from implicit function theorem to be 
$$
\left.\frac{d\Theta}{d\theta}\right|_{\Ga_l}(S_-) = -\frac{\p_\theta g_{E,\la}}{\p_\Theta g_{E,\la}} (S_-) = 2(\la - a) < \la -a.
$$
Same is true for the slope of $\Ga_r$ at $S_+$, as can be easily verified.
 Thus $\cW^-$ starts off inside $\cN$, which is connected, and its initial slope is $\la - a < -1$ for $\la<-1+a$.
  Consider the diagonal line segment $S_-S_+$, on which $\Theta = -\theta$.
  Let us compute the slope of orbits that cross this line, and compare it to the slope of the line:
$$
\frac{g_{E,\la}(\theta,-\theta)}{f(\theta)}  - (-1) =
 -2a\sin\theta( \cos\theta+E\sin\theta) + 2\la +2 \leq 0 \quad\mbox{ for } \la \leq -1-a.
$$
Thus $S_-S_+$ acts as a ``barrier", not allowing $\cW^-$ to cross it from below to above.
  Hence the $\om$-limit of $\cW^-$ cannot be $(\pi,0)$.
  The first possible terminal point is then $(\pi,-2\pi)$, hence $w(W^-) \geq 1$.  
\end{proof}
\begin{prop}\label{prop:thlowercor}
Suppose that for some $E\in(0,1]$ and some $\la\leq \la_0 := -1+a$, the flow \refeq{dynsysTh} has a saddles connector $\sS^\Theta(E,\la)$ 
whose lift to the universal cover of the cylinder goes from $\widetilde{S}_- = (0,0)$ to $\widetilde{S}_+ = (\pi,-\pi)$.
 Then, for all $E'\in [0,E)$, there exists a corridor $\cK_1(E',\la)$ of winding number $w(\cK_1) = 0$ for \refeq{dynsysTh}.
\end{prop}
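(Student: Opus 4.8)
The plan is to continue in the parameter $E$, starting from the given saddles connector at parameter $E$ and decreasing to $E'$, and to show that the two distinguished orbits $\widetilde{\cW}^{\pm}$ of \refeq{dynsysTh} stay inside the strip $\Theta\in(-\pi,0)$ over all finite flow‑time, which pins the corridor they bound to have winding number $0$. Two structural facts drive this. First, with $\ka=\half$ the function
$g_{E,\la}(\theta,\Theta)=-2a\sin\theta\cos\theta\cos\Theta+(2aE\sin^2\theta-1)\sin\Theta+2\la\sin\theta$
is \emph{affine in $E$}: $g_{E,\la}=g_{0,\la}+2aE\sin^2\theta\sin\Theta$. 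Second, for $a\in(0,\half)$ and $\la\le-1+a$ the lines $\Theta=0$ and $\Theta=-\pi$ are \emph{barriers}: on $\{\Theta=0\}$ one has $\dot\Theta=2\sin\theta(\la-a\cos\theta)<0$ and on $\{\Theta=-\pi\}$ one has $\dot\Theta=2\sin\theta(a\cos\theta+\la)<0$ for $\theta\in(0,\pi)$, since $\la+a\le 2a-1<0$; note neither barrier depends on $E$.

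First I would record the geometry at parameter $E$. The connector $\widetilde{\cW}^{-}_{E}$ leaves $\widetilde{S}_-=(0,0)$ with slope $\la-a\le-1<0$ (the unstable eigendirection of the saddle, independent of $E$), hence enters $\Theta<0$; by the $\Theta=0$ barrier it stays in $\Theta<0$ for all finite $\tau$, and by the $\Theta=-\pi$ barrier it cannot drop below $-\pi$ and return. Since it terminates at $\widetilde{S}_+=(\pi,-\pi)$, this forces $y^-_E(\tau)\in(-\pi,0)$ for every finite $\tau$, reaching $-\pi$ only as $\tau\to+\infty$; in particular $\sin\big(y^-_E(\tau)\big)<0$ at every finite $\tau$, and because at parameter $E$ the connector \emph{is} both $\widetilde{\cW}^{-}_{E}$ and $\widetilde{\cW}^{+}_{E}$, the same holds with $y^-_E$ replaced by $y^+_E=y^-_E$.

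Next comes the comparison step. Fix $E'\in[0,E)$ and set $w(\tau):=y^-_{E'}(\tau)-y^-_E(\tau)$; both orbits issue from $\widetilde{S}_-=(0,0)$, so $w(-\infty)=0$. Using affineness in $E$ and the mean value theorem in $\Theta$,
\[
\dot w=\partial_\Theta g_{E',\la}\big(\theta,\eta(\tau)\big)\,w+(E'-E)\,2a\sin^2\theta\,\sin\!\big(y^-_E(\tau)\big),
\]
with $\eta(\tau)$ between $y^-_{E'}(\tau)$ and $y^-_E(\tau)$; the source term is $\ge 0$ because $E'-E<0$ and $\sin(y^-_E)<0$, and $>0$ on the open $\tau$‑set where $\theta\in(0,\pi)$. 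Multiplying by the positive integrating factor $\Phi(\tau)=\exp\!\big(-\int_{\tau_0}^{\tau}\partial_\Theta g_{E',\la}(\theta,\eta)\,d\sigma\big)$ and using $\Phi(\tau)w(\tau)\to 0$ as $\tau\to-\infty$ (both factors decay there, with $\partial_\Theta g_{E',\la}\to-1$ controlling the rates), I get that $\Phi w$ is nondecreasing, hence $w(\tau)>0$ for all finite $\tau$. Thus $y^-_{E'}(\tau)>y^-_E(\tau)>-\pi$, and with the $\Theta=0$ barrier, $y^-_{E'}(\tau)\in(-\pi,0)$ for all finite $\tau$; so $\widetilde{\cW}^{-}_{E'}$ runs from $\widetilde{S}_-$ to the copy of $N_+$ at $\Theta=0$, i.e. has winding number $0$. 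The identical conclusion for $\widetilde{\cW}^{+}_{E'}$ follows either by repeating the argument (now integrating from $\tau=+\infty$, using $v:=y^+_{E'}-y^+_E$ with $v(+\infty)=0$ and $\sin(y^+_E)<0$) or, more cheaply, via the discrete symmetry $(\theta(\tau),\Theta(\tau))\mapsto(\pi-\theta(-\tau),\pi-\Theta(-\tau))$ of \refeq{dynsysTh}, which interchanges $S_-\leftrightarrow S_+$ with time reversed and therefore carries $\widetilde{\cW}^{-}$ to $\widetilde{\cW}^{+}$; either way $y^+_{E'}\le y^+_E$, strictly somewhere. Hence $y^+_{E'}<y^+_E=y^-_E<y^-_{E'}$ on a common open $\tau$‑interval, so the signed area satisfies $\ba(E')<\ba(E)=0$; by the facts established in the proof of Proposition~\ref{prop:ssexists} this makes $\cK_1(E',\la)$ a genuine (nonempty) corridor, and $\widetilde{S}_\pm$ together with the $\alpha$- and $\omega$-limit nodes all sit in the fundamental strip, so $w(\cK_1(E',\la))=0$.

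I expect the main obstacle to be the comparison step at $\tau\to-\infty$: since $\widetilde{\cW}^{-}_{E'}$ and $\widetilde{\cW}^{-}_{E}$ issue from the same equilibrium along the same eigendirection, one must check that the integrating factor $\Phi(\tau)$ does not blow up faster than $w(\tau)\to 0$ — handled by $\partial_\Theta g_{E',\la}\to-1$ at $S_-$ — and one must verify that the barrier argument at parameter $E$ really confines the connector to the \emph{open} strip $(-\pi,0)$ and not merely to $[-\pi,0]$, since the sign $\sin(y^-_E)<0$ (the engine of the whole comparison) requires strictness. Existence of $y^-_{E'}$ for all $\tau$ is automatic, the flow living on the compact cylinder $\cC$.
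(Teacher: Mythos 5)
Your proof is correct and rests on exactly the same mechanism as the paper's: along the $(E,\la)$ connector one has $\sin\Theta_{E,\la}\le 0$, so $g_{E',\la}(\theta,\Theta_{E,\la})=g_{E,\la}(\theta,\Theta_{E,\la})+2a(E'-E)\sin^2\theta\,\sin\Theta_{E,\la}\ge\dot\Theta_{E,\la}$ for $E'<E$, which confines $\cW^-_{E'}$ to the strip between the old connector and the line $\Theta=0$ and forces its $\om$-limit to be $\widetilde N_+=(\pi,0)$. The paper packages this as a trapped-region argument (the old connector as a one-way barrier, with the horizontal segment $N_+S_-$, i.e.\ $\Theta=0$, as the other wall) rather than your Gronwall/integrating-factor comparison, but the two are the same argument in different clothing, and your extra treatment of $\cW^+_{E'}$ is not needed since the paper has already shown $w(\widetilde\cW^+)=w(\widetilde\cW^-)$.
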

\begin{proof}
Let $\cS(E,\la)= (\theta(\tau),\Theta_{E,\la}(\tau))$. Since $\cS^\Theta = \cW^- = \cW^+$, by the calculation done in the proof the 
previous Proposition, the graph of $\Theta_{E,\la}$ is entirely contained in $\cN$, thus it has to be monotone decreasing, and 
thus $\sin\Theta_{E,\la}(\tau)\leq 0$ for all $\tau$.  
\begin{figure}[ht]
 \begin{center}
%  \input{KTZ_zGKN_I_thcor2.tex}
%%%%%%%%%%%%%%%%%
%Title: /tmp/xfig-fig008064
%%Created by: fig2dev Version 3.2 Patchlevel 5
%%CreationDate: Wed Feb 18 13:00:43 1970
%%User: shadi@attar (Tahvildar-Zadeh A. Shadi)
\font\thinlinefont=cmr5
\begingroup\makeatletter\ifx\SetFigFont\undefined%
\gdef\SetFigFont#1#2#3#4#5{%
  \reset@font\fontsize{#1}{#2pt}%
  \fontfamily{#3}\fontseries{#4}\fontshape{#5}%
  \selectfont}%
\fi\endgroup%
\mbox{\beginpicture
\setcoordinatesystem units <0.50000cm,0.50000cm>
\unitlength=0.50000cm
\linethickness=1pt
\setplotsymbol ({\makebox(0,0)[l]{\tencirc\symbol{'160}}})
\setshadesymbol ({\thinlinefont .})
\setlinear
%
% Fig POLYLINE object
%
\linethickness= 0.500pt
\setplotsymbol ({\thinlinefont .})
{\color[rgb]{0,0,0}\putrule from  5.080 16.510 to 15.240 16.510
}%
%
% Fig POLYLINE object
%
\linethickness=1pt
\setplotsymbol ({\makebox(0,0)[l]{\tencirc\symbol{'160}}})
{\color[rgb]{0,0,0}\putrule from  5.080 21.590 to 15.240 21.590
}%
%
% Fig POLYLINE object
%
\linethickness=1pt
\setplotsymbol ({\makebox(0,0)[l]{\tencirc\symbol{'160}}})
{\color[rgb]{0,0,0}\putrule from 15.240 21.590 to 15.240  6.315
\putrule from 15.240  6.350 to  5.045  6.350
\putrule from  5.080  6.350 to  5.080 21.590
}%
%
% Fig POLYLINE object
%
\linethickness= 0.500pt
\setplotsymbol ({\thinlinefont .})
{\color[rgb]{0,0,0}\putrule from  5.080 11.430 to 15.240 11.430
}%
%
% Fig POLYLINE object
%
\linethickness= 0.500pt
\setplotsymbol ({\thinlinefont .})
{\color[rgb]{0,0,0}\plot  7.938 14.287  9.684 13.970 /
%
% arrow head
%
\plot  9.422 13.953  9.684 13.970  9.445 14.078 /
}%
%
% Fig POLYLINE object
%
\linethickness= 0.500pt
\setplotsymbol ({\thinlinefont .})
{\color[rgb]{0,0,0}\plot 11.906 13.018 13.335 12.859 /
%
% arrow head
%
\plot 13.076 12.824 13.335 12.859 13.090 12.950 /
}%
%
% Fig POLYLINE object
%
\linethickness= 0.500pt
\setplotsymbol ({\thinlinefont .})
{\color[rgb]{0,0,0}\plot  7.620 16.828  9.366 16.192 /
%
% arrow head
%
\plot  9.106 16.220  9.366 16.192  9.149 16.339 /
}%
%
% Fig POLYLINE object
%
\linethickness= 0.500pt
\setplotsymbol ({\thinlinefont .})
{\color[rgb]{0,0,0}\plot 12.224 16.828 13.652 16.192 /
%
% arrow head
%
\plot 13.395 16.238 13.652 16.192 13.446 16.354 /
}%
%
% Fig POLYLINE object
%
\linethickness= 0.500pt
\setplotsymbol ({\thinlinefont .})
{\color[rgb]{0,0,0}\putrule from 15.240 12.700 to 15.240 13.970
%
% arrow head
%
\plot 15.304 13.716 15.240 13.970 15.176 13.716 /
}%
%
% Fig POLYLINE object
%
\linethickness= 0.500pt
\setplotsymbol ({\thinlinefont .})
{\color[rgb]{0,0,0}\plot  5.715 16.510  7.620 14.605 /
}%
%
% Fig POLYLINE object
%
\linethickness= 0.500pt
\setplotsymbol ({\thinlinefont .})
{\color[rgb]{0,0,0}\plot  6.985 16.510  9.684 13.811 /
}%
%
% Fig POLYLINE object
%
\linethickness= 0.500pt
\setplotsymbol ({\thinlinefont .})
{\color[rgb]{0,0,0}\plot  8.255 16.510 11.430 13.335 /
}%
%
% Fig POLYLINE object
%
\linethickness= 0.500pt
\setplotsymbol ({\thinlinefont .})
{\color[rgb]{0,0,0}\plot  9.525 16.510 13.494 12.541 /
}%
%
% Fig POLYLINE object
%
\linethickness= 0.500pt
\setplotsymbol ({\thinlinefont .})
{\color[rgb]{0,0,0}\plot 10.795 16.510 15.240 12.065 /
}%
%
% Fig POLYLINE object
%
\linethickness= 0.500pt
\setplotsymbol ({\thinlinefont .})
{\color[rgb]{0,0,0}\plot 12.065 16.510 15.240 13.335 /
}%
%
% Fig POLYLINE object
%
\linethickness= 0.500pt
\setplotsymbol ({\thinlinefont .})
{\color[rgb]{0,0,0}\plot 13.335 16.510 15.240 14.605 /
}%
%
% Fig POLYLINE object
%
\linethickness= 0.500pt
\setplotsymbol ({\thinlinefont .})
{\color[rgb]{0,0,0}\plot 14.605 16.510 15.240 15.875 /
}%
%
% Fig POLYLINE object
%
\linethickness= 0.500pt
\setplotsymbol ({\thinlinefont .})
{\color[rgb]{0,0,0}\plot  5.080 16.510  5.082 16.506 /
\plot  5.082 16.506  5.086 16.493 /
\plot  5.086 16.493  5.093 16.474 /
\plot  5.093 16.474  5.103 16.442 /
\plot  5.103 16.442  5.118 16.398 /
\plot  5.118 16.398  5.137 16.343 /
\plot  5.137 16.343  5.160 16.277 /
\plot  5.160 16.277  5.186 16.201 /
\plot  5.186 16.201  5.215 16.116 /
\plot  5.215 16.116  5.245 16.025 /
\plot  5.245 16.025  5.277 15.932 /
\plot  5.277 15.932  5.309 15.839 /
\plot  5.309 15.839  5.338 15.744 /
\plot  5.338 15.744  5.368 15.653 /
\plot  5.368 15.653  5.395 15.564 /
\plot  5.395 15.564  5.423 15.477 /
\plot  5.423 15.477  5.448 15.397 /
\plot  5.448 15.397  5.472 15.318 /
\plot  5.472 15.318  5.493 15.244 /
\plot  5.493 15.244  5.512 15.174 /
\plot  5.512 15.174  5.529 15.107 /
\plot  5.529 15.107  5.546 15.041 /
\plot  5.546 15.041  5.560 14.978 /
\plot  5.560 14.978  5.573 14.916 /
\plot  5.573 14.916  5.586 14.857 /
\plot  5.586 14.857  5.599 14.798 /
\plot  5.599 14.798  5.609 14.736 /
\plot  5.609 14.736  5.620 14.677 /
\plot  5.620 14.677  5.628 14.618 /
\plot  5.628 14.618  5.637 14.556 /
\plot  5.637 14.556  5.645 14.493 /
\plot  5.645 14.493  5.654 14.431 /
\plot  5.654 14.431  5.660 14.366 /
\plot  5.660 14.366  5.666 14.300 /
\plot  5.666 14.300  5.671 14.235 /
\plot  5.671 14.235  5.675 14.167 /
\plot  5.675 14.167  5.679 14.097 /
\plot  5.679 14.097  5.681 14.029 /
\plot  5.681 14.029  5.683 13.959 /
\plot  5.683 13.959  5.685 13.890 /
\putrule from  5.685 13.890 to  5.685 13.818
\putrule from  5.685 13.818 to  5.685 13.748
\plot  5.685 13.748  5.683 13.680 /
\plot  5.683 13.680  5.681 13.610 /
\plot  5.681 13.610  5.679 13.542 /
\plot  5.679 13.542  5.675 13.477 /
\plot  5.675 13.477  5.671 13.411 /
\plot  5.671 13.411  5.666 13.346 /
\plot  5.666 13.346  5.660 13.284 /
\plot  5.660 13.284  5.654 13.223 /
\plot  5.654 13.223  5.645 13.164 /
\plot  5.645 13.164  5.637 13.106 /
\plot  5.637 13.106  5.628 13.049 /
\plot  5.628 13.049  5.620 12.992 /
\plot  5.620 12.992  5.609 12.937 /
\plot  5.609 12.937  5.599 12.880 /
\plot  5.599 12.880  5.586 12.821 /
\plot  5.586 12.821  5.571 12.764 /
\plot  5.571 12.764  5.556 12.704 /
\plot  5.556 12.704  5.539 12.645 /
\plot  5.539 12.645  5.520 12.581 /
\plot  5.520 12.581  5.501 12.518 /
\plot  5.501 12.518  5.480 12.450 /
\plot  5.480 12.450  5.455 12.380 /
\plot  5.455 12.380  5.429 12.306 /
\plot  5.429 12.306  5.400 12.230 /
\plot  5.400 12.230  5.370 12.150 /
\plot  5.370 12.150  5.338 12.067 /
\plot  5.338 12.067  5.304 11.982 /
\plot  5.304 11.982  5.273 11.898 /
\plot  5.273 11.898  5.239 11.813 /
\plot  5.239 11.813  5.207 11.735 /
\plot  5.207 11.735  5.175 11.661 /
\plot  5.175 11.661  5.150 11.597 /
\plot  5.150 11.597  5.127 11.542 /
\plot  5.127 11.542  5.110 11.498 /
\plot  5.110 11.498  5.095 11.468 /
\plot  5.095 11.468  5.086 11.447 /
\plot  5.086 11.447  5.082 11.434 /
\plot  5.082 11.434  5.080 11.430 /
}%
%
% Fig POLYLINE object
%
\linethickness= 0.500pt
\setplotsymbol ({\thinlinefont .})
{\color[rgb]{0,0,0}\plot 15.240 16.510 15.238 16.506 /
\plot 15.238 16.506 15.234 16.493 /
\plot 15.234 16.493 15.227 16.474 /
\plot 15.227 16.474 15.217 16.442 /
\plot 15.217 16.442 15.202 16.398 /
\plot 15.202 16.398 15.183 16.343 /
\plot 15.183 16.343 15.160 16.275 /
\plot 15.160 16.275 15.134 16.199 /
\plot 15.134 16.199 15.105 16.114 /
\plot 15.105 16.114 15.075 16.023 /
\plot 15.075 16.023 15.043 15.930 /
\plot 15.043 15.930 15.011 15.835 /
\plot 15.011 15.835 14.982 15.740 /
\plot 14.982 15.740 14.952 15.646 /
\plot 14.952 15.646 14.925 15.558 /
\plot 14.925 15.558 14.897 15.471 /
\plot 14.897 15.471 14.872 15.388 /
\plot 14.872 15.388 14.848 15.308 /
\plot 14.848 15.308 14.827 15.234 /
\plot 14.827 15.234 14.808 15.162 /
\plot 14.808 15.162 14.791 15.092 /
\plot 14.791 15.092 14.774 15.026 /
\plot 14.774 15.026 14.760 14.961 /
\plot 14.760 14.961 14.747 14.897 /
\plot 14.747 14.897 14.734 14.836 /
\plot 14.734 14.836 14.721 14.772 /
\plot 14.721 14.772 14.711 14.711 /
\plot 14.711 14.711 14.700 14.647 /
\plot 14.700 14.647 14.692 14.586 /
\plot 14.692 14.586 14.683 14.520 /
\plot 14.683 14.520 14.675 14.455 /
\plot 14.675 14.455 14.666 14.389 /
\plot 14.666 14.389 14.660 14.321 /
\plot 14.660 14.321 14.654 14.252 /
\plot 14.654 14.252 14.649 14.182 /
\plot 14.649 14.182 14.645 14.110 /
\plot 14.645 14.110 14.641 14.038 /
\plot 14.641 14.038 14.639 13.964 /
\plot 14.639 13.964 14.637 13.892 /
\plot 14.637 13.892 14.635 13.818 /
\putrule from 14.635 13.818 to 14.635 13.744
\putrule from 14.635 13.744 to 14.635 13.669
\plot 14.635 13.669 14.637 13.595 /
\plot 14.637 13.595 14.639 13.523 /
\plot 14.639 13.523 14.641 13.454 /
\plot 14.641 13.454 14.645 13.384 /
\plot 14.645 13.384 14.649 13.316 /
\plot 14.649 13.316 14.654 13.248 /
\plot 14.654 13.248 14.660 13.185 /
\plot 14.660 13.185 14.666 13.121 /
\plot 14.666 13.121 14.675 13.060 /
\plot 14.675 13.060 14.683 13.001 /
\plot 14.683 13.001 14.692 12.943 /
\plot 14.692 12.943 14.700 12.888 /
\plot 14.700 12.888 14.711 12.831 /
\plot 14.711 12.831 14.721 12.774 /
\plot 14.721 12.774 14.734 12.717 /
\plot 14.734 12.717 14.749 12.660 /
\plot 14.749 12.660 14.764 12.603 /
\plot 14.764 12.603 14.781 12.543 /
\plot 14.781 12.543 14.800 12.486 /
\plot 14.800 12.486 14.819 12.425 /
\plot 14.819 12.425 14.840 12.361 /
\plot 14.840 12.361 14.865 12.296 /
\plot 14.865 12.296 14.891 12.228 /
\plot 14.891 12.228 14.920 12.156 /
\plot 14.920 12.156 14.950 12.082 /
\plot 14.950 12.082 14.982 12.006 /
\plot 14.982 12.006 15.016 11.930 /
\plot 15.016 11.930 15.047 11.851 /
\plot 15.047 11.851 15.081 11.777 /
\plot 15.081 11.777 15.113 11.705 /
\plot 15.113 11.705 15.145 11.637 /
\plot 15.145 11.637 15.170 11.580 /
\plot 15.170 11.580 15.193 11.532 /
\plot 15.193 11.532 15.210 11.491 /
\plot 15.210 11.491 15.225 11.464 /
\plot 15.225 11.464 15.234 11.445 /
\plot 15.234 11.445 15.238 11.434 /
\plot 15.238 11.434 15.240 11.430 /
}%
%
% Fig POLYLINE object
%
\linethickness= 0.500pt
\setplotsymbol ({\thinlinefont .})
{\color[rgb]{0,0,0}\plot  5.080 16.510  5.082 16.508 /
\plot  5.082 16.508  5.086 16.504 /
\plot  5.086 16.504  5.095 16.493 /
\plot  5.095 16.493  5.108 16.478 /
\plot  5.108 16.478  5.127 16.457 /
\plot  5.127 16.457  5.150 16.430 /
\plot  5.150 16.430  5.182 16.396 /
\plot  5.182 16.396  5.220 16.353 /
\plot  5.220 16.353  5.264 16.305 /
\plot  5.264 16.305  5.315 16.248 /
\plot  5.315 16.248  5.372 16.188 /
\plot  5.372 16.188  5.433 16.121 /
\plot  5.433 16.121  5.499 16.053 /
\plot  5.499 16.053  5.571 15.979 /
\plot  5.571 15.979  5.643 15.905 /
\plot  5.643 15.905  5.719 15.831 /
\plot  5.719 15.831  5.798 15.754 /
\plot  5.798 15.754  5.876 15.680 /
\plot  5.876 15.680  5.956 15.606 /
\plot  5.956 15.606  6.035 15.534 /
\plot  6.035 15.534  6.115 15.464 /
\plot  6.115 15.464  6.195 15.397 /
\plot  6.195 15.397  6.274 15.331 /
\plot  6.274 15.331  6.354 15.268 /
\plot  6.354 15.268  6.435 15.208 /
\plot  6.435 15.208  6.515 15.149 /
\plot  6.515 15.149  6.596 15.094 /
\plot  6.596 15.094  6.678 15.039 /
\plot  6.678 15.039  6.763 14.986 /
\plot  6.763 14.986  6.847 14.935 /
\plot  6.847 14.935  6.934 14.887 /
\plot  6.934 14.887  7.023 14.838 /
\plot  7.023 14.838  7.114 14.791 /
\plot  7.114 14.791  7.209 14.743 /
\plot  7.209 14.743  7.307 14.696 /
\plot  7.307 14.696  7.408 14.652 /
\plot  7.408 14.652  7.514 14.605 /
\plot  7.514 14.605  7.597 14.569 /
\plot  7.597 14.569  7.681 14.535 /
\plot  7.681 14.535  7.768 14.499 /
\plot  7.768 14.499  7.859 14.465 /
\plot  7.859 14.465  7.950 14.429 /
\plot  7.950 14.429  8.045 14.393 /
\plot  8.045 14.393  8.141 14.357 /
\plot  8.141 14.357  8.240 14.321 /
\plot  8.240 14.321  8.342 14.285 /
\plot  8.342 14.285  8.448 14.247 /
\plot  8.448 14.247  8.553 14.211 /
\plot  8.553 14.211  8.664 14.173 /
\plot  8.664 14.173  8.774 14.135 /
\plot  8.774 14.135  8.888 14.097 /
\plot  8.888 14.097  9.002 14.059 /
\plot  9.002 14.059  9.121 14.021 /
\plot  9.121 14.021  9.239 13.981 /
\plot  9.239 13.981  9.360 13.942 /
\plot  9.360 13.942  9.483 13.904 /
\plot  9.483 13.904  9.605 13.864 /
\plot  9.605 13.864  9.730 13.824 /
\plot  9.730 13.824  9.857 13.786 /
\plot  9.857 13.786  9.982 13.746 /
\plot  9.982 13.746 10.109 13.705 /
\plot 10.109 13.705 10.236 13.667 /
\plot 10.236 13.667 10.363 13.627 /
\plot 10.363 13.627 10.488 13.589 /
\plot 10.488 13.589 10.615 13.551 /
\plot 10.615 13.551 10.740 13.513 /
\plot 10.740 13.513 10.865 13.473 /
\plot 10.865 13.473 10.988 13.437 /
\plot 10.988 13.437 11.108 13.399 /
\plot 11.108 13.399 11.229 13.360 /
\plot 11.229 13.360 11.347 13.324 /
\plot 11.347 13.324 11.466 13.288 /
\plot 11.466 13.288 11.580 13.252 /
\plot 11.580 13.252 11.692 13.219 /
\plot 11.692 13.219 11.803 13.183 /
\plot 11.803 13.183 11.910 13.149 /
\plot 11.910 13.149 12.016 13.115 /
\plot 12.016 13.115 12.120 13.083 /
\plot 12.120 13.083 12.222 13.049 /
\plot 12.222 13.049 12.319 13.018 /
\plot 12.319 13.018 12.416 12.986 /
\plot 12.416 12.986 12.509 12.956 /
\plot 12.509 12.956 12.601 12.924 /
\plot 12.601 12.924 12.687 12.895 /
\plot 12.687 12.895 12.774 12.865 /
\plot 12.774 12.865 12.857 12.835 /
\plot 12.857 12.835 12.937 12.806 /
\putrule from 12.937 12.806 to 12.939 12.806
\plot 12.939 12.806 13.053 12.764 /
\plot 13.053 12.764 13.164 12.721 /
\plot 13.164 12.721 13.269 12.679 /
\plot 13.269 12.679 13.371 12.636 /
\plot 13.371 12.636 13.470 12.594 /
\plot 13.470 12.594 13.566 12.552 /
\plot 13.566 12.552 13.657 12.509 /
\plot 13.657 12.509 13.746 12.465 /
\plot 13.746 12.465 13.835 12.421 /
\plot 13.835 12.421 13.921 12.374 /
\plot 13.921 12.374 14.006 12.325 /
\plot 14.006 12.325 14.091 12.275 /
\plot 14.091 12.275 14.175 12.224 /
\plot 14.175 12.224 14.260 12.171 /
\plot 14.260 12.171 14.343 12.116 /
\plot 14.343 12.116 14.425 12.059 /
\plot 14.425 12.059 14.510 12.002 /
\plot 14.510 12.002 14.590 11.944 /
\plot 14.590 11.944 14.671 11.885 /
\plot 14.671 11.885 14.747 11.826 /
\plot 14.747 11.826 14.823 11.769 /
\plot 14.823 11.769 14.893 11.716 /
\plot 14.893 11.716 14.958 11.663 /
\plot 14.958 11.663 15.018 11.616 /
\plot 15.018 11.616 15.071 11.572 /
\plot 15.071 11.572 15.115 11.536 /
\plot 15.115 11.536 15.153 11.504 /
\plot 15.153 11.504 15.183 11.479 /
\plot 15.183 11.479 15.206 11.460 /
\plot 15.206 11.460 15.223 11.445 /
\plot 15.223 11.445 15.232 11.436 /
\plot 15.232 11.436 15.238 11.432 /
\plot 15.238 11.432 15.240 11.430 /
}%
%
% Fig POLYLINE object
%
\linethickness= 0.500pt
\setplotsymbol ({\thinlinefont .})
{\color[rgb]{0,0,0}\plot  5.080 16.510  5.084 16.506 /
\plot  5.084 16.506  5.091 16.499 /
\plot  5.091 16.499  5.105 16.487 /
\plot  5.105 16.487  5.127 16.466 /
\plot  5.127 16.466  5.156 16.438 /
\plot  5.156 16.438  5.194 16.402 /
\plot  5.194 16.402  5.241 16.360 /
\plot  5.241 16.360  5.292 16.313 /
\plot  5.292 16.313  5.347 16.260 /
\plot  5.347 16.260  5.406 16.207 /
\plot  5.406 16.207  5.465 16.154 /
\plot  5.465 16.154  5.527 16.099 /
\plot  5.527 16.099  5.588 16.046 /
\plot  5.588 16.046  5.645 15.996 /
\plot  5.645 15.996  5.702 15.949 /
\plot  5.702 15.949  5.759 15.903 /
\plot  5.759 15.903  5.812 15.860 /
\plot  5.812 15.860  5.863 15.820 /
\plot  5.863 15.820  5.914 15.782 /
\plot  5.914 15.782  5.965 15.746 /
\plot  5.965 15.746  6.013 15.712 /
\plot  6.013 15.712  6.064 15.678 /
\plot  6.064 15.678  6.115 15.646 /
\plot  6.115 15.646  6.166 15.615 /
\plot  6.166 15.615  6.219 15.583 /
\plot  6.219 15.583  6.259 15.560 /
\plot  6.259 15.560  6.301 15.536 /
\plot  6.301 15.536  6.344 15.513 /
\plot  6.344 15.513  6.390 15.490 /
\plot  6.390 15.490  6.437 15.466 /
\plot  6.437 15.466  6.485 15.443 /
\plot  6.485 15.443  6.536 15.418 /
\plot  6.536 15.418  6.591 15.395 /
\plot  6.591 15.395  6.646 15.369 /
\plot  6.646 15.369  6.703 15.344 /
\plot  6.703 15.344  6.765 15.318 /
\plot  6.765 15.318  6.828 15.295 /
\plot  6.828 15.295  6.894 15.270 /
\plot  6.894 15.270  6.964 15.244 /
\plot  6.964 15.244  7.036 15.219 /
\plot  7.036 15.219  7.110 15.193 /
\plot  7.110 15.193  7.186 15.168 /
\plot  7.186 15.168  7.264 15.143 /
\plot  7.264 15.143  7.345 15.117 /
\plot  7.345 15.117  7.429 15.094 /
\plot  7.429 15.094  7.514 15.069 /
\plot  7.514 15.069  7.601 15.045 /
\plot  7.601 15.045  7.690 15.022 /
\plot  7.690 15.022  7.781 14.999 /
\plot  7.781 14.999  7.874 14.975 /
\plot  7.874 14.975  7.967 14.954 /
\plot  7.967 14.954  8.062 14.931 /
\plot  8.062 14.931  8.160 14.910 /
\plot  8.160 14.910  8.259 14.889 /
\plot  8.259 14.889  8.361 14.870 /
\plot  8.361 14.870  8.462 14.848 /
\plot  8.462 14.848  8.568 14.829 /
\plot  8.568 14.829  8.674 14.810 /
\plot  8.674 14.810  8.784 14.789 /
\plot  8.784 14.789  8.867 14.776 /
\plot  8.867 14.776  8.951 14.762 /
\plot  8.951 14.762  9.038 14.747 /
\plot  9.038 14.747  9.129 14.734 /
\plot  9.129 14.734  9.220 14.719 /
\plot  9.220 14.719  9.313 14.704 /
\plot  9.313 14.704  9.409 14.692 /
\plot  9.409 14.692  9.506 14.677 /
\plot  9.506 14.677  9.605 14.662 /
\plot  9.605 14.662  9.707 14.647 /
\plot  9.707 14.647  9.811 14.635 /
\plot  9.811 14.635  9.914 14.620 /
\plot  9.914 14.620 10.022 14.605 /
\plot 10.022 14.605 10.132 14.590 /
\plot 10.132 14.590 10.243 14.577 /
\plot 10.243 14.577 10.355 14.563 /
\plot 10.355 14.563 10.467 14.548 /
\plot 10.467 14.548 10.581 14.533 /
\plot 10.581 14.533 10.696 14.520 /
\plot 10.696 14.520 10.812 14.506 /
\plot 10.812 14.506 10.926 14.493 /
\plot 10.926 14.493 11.043 14.478 /
\plot 11.043 14.478 11.159 14.465 /
\plot 11.159 14.465 11.273 14.450 /
\plot 11.273 14.450 11.390 14.438 /
\plot 11.390 14.438 11.504 14.425 /
\plot 11.504 14.425 11.616 14.412 /
\plot 11.616 14.412 11.728 14.400 /
\plot 11.728 14.400 11.839 14.389 /
\plot 11.839 14.389 11.946 14.376 /
\plot 11.946 14.376 12.052 14.366 /
\plot 12.052 14.366 12.158 14.353 /
\plot 12.158 14.353 12.260 14.343 /
\plot 12.260 14.343 12.361 14.332 /
\plot 12.361 14.332 12.459 14.323 /
\plot 12.459 14.323 12.554 14.313 /
\plot 12.554 14.313 12.647 14.304 /
\plot 12.647 14.304 12.736 14.294 /
\plot 12.736 14.294 12.823 14.285 /
\plot 12.823 14.285 12.907 14.277 /
\plot 12.907 14.277 12.990 14.271 /
\plot 12.990 14.271 13.070 14.262 /
\plot 13.070 14.262 13.147 14.256 /
\plot 13.147 14.256 13.221 14.247 /
\plot 13.221 14.247 13.293 14.241 /
\plot 13.293 14.241 13.363 14.235 /
\plot 13.363 14.235 13.464 14.226 /
\plot 13.464 14.226 13.559 14.216 /
\plot 13.559 14.216 13.652 14.209 /
\plot 13.652 14.209 13.739 14.201 /
\plot 13.739 14.201 13.824 14.194 /
\plot 13.824 14.194 13.904 14.190 /
\plot 13.904 14.190 13.981 14.186 /
\plot 13.981 14.186 14.055 14.182 /
\plot 14.055 14.182 14.125 14.180 /
\putrule from 14.125 14.180 to 14.190 14.180
\putrule from 14.190 14.180 to 14.254 14.180
\plot 14.254 14.180 14.313 14.182 /
\plot 14.313 14.182 14.368 14.186 /
\plot 14.368 14.186 14.421 14.192 /
\plot 14.421 14.192 14.470 14.199 /
\plot 14.470 14.199 14.514 14.207 /
\plot 14.514 14.207 14.556 14.218 /
\plot 14.556 14.218 14.594 14.230 /
\plot 14.594 14.230 14.630 14.245 /
\plot 14.630 14.245 14.662 14.262 /
\plot 14.662 14.262 14.692 14.279 /
\plot 14.692 14.279 14.719 14.300 /
\plot 14.719 14.300 14.745 14.321 /
\plot 14.745 14.321 14.766 14.345 /
\plot 14.766 14.345 14.787 14.370 /
\plot 14.787 14.370 14.806 14.398 /
\plot 14.806 14.398 14.823 14.427 /
\plot 14.823 14.427 14.840 14.457 /
\plot 14.840 14.457 14.855 14.491 /
\plot 14.855 14.491 14.870 14.525 /
\plot 14.870 14.525 14.884 14.563 /
\plot 14.884 14.563 14.897 14.603 /
\plot 14.897 14.603 14.912 14.647 /
\plot 14.912 14.647 14.927 14.694 /
\plot 14.927 14.694 14.939 14.745 /
\plot 14.939 14.745 14.954 14.800 /
\plot 14.954 14.800 14.967 14.861 /
\plot 14.967 14.861 14.982 14.925 /
\plot 14.982 14.925 14.997 14.997 /
\plot 14.997 14.997 15.011 15.073 /
\plot 15.011 15.073 15.026 15.155 /
\plot 15.026 15.155 15.043 15.242 /
\plot 15.043 15.242 15.058 15.337 /
\plot 15.058 15.337 15.075 15.437 /
\plot 15.075 15.437 15.092 15.541 /
\plot 15.092 15.541 15.109 15.649 /
\plot 15.109 15.649 15.128 15.756 /
\plot 15.128 15.756 15.145 15.867 /
\plot 15.145 15.867 15.160 15.972 /
\plot 15.160 15.972 15.176 16.076 /
\plot 15.176 16.076 15.189 16.169 /
\plot 15.189 16.169 15.202 16.256 /
\plot 15.202 16.256 15.215 16.328 /
\plot 15.215 16.328 15.223 16.389 /
\plot 15.223 16.389 15.229 16.436 /
\plot 15.229 16.436 15.234 16.470 /
\plot 15.234 16.470 15.238 16.493 /
\plot 15.238 16.493 15.240 16.504 /
\putrule from 15.240 16.504 to 15.240 16.510
}%
%
% Fig POLYLINE object
%
\linethickness= 0.500pt
\setplotsymbol ({\thinlinefont .})
{\color[rgb]{0,0,0}\plot 15.240 11.430 15.238 11.432 /
\plot 15.238 11.432 15.232 11.434 /
\plot 15.232 11.434 15.221 11.441 /
\plot 15.221 11.441 15.206 11.451 /
\plot 15.206 11.451 15.183 11.466 /
\plot 15.183 11.466 15.153 11.483 /
\plot 15.153 11.483 15.115 11.506 /
\plot 15.115 11.506 15.069 11.532 /
\plot 15.069 11.532 15.018 11.561 /
\plot 15.018 11.561 14.961 11.593 /
\plot 14.961 11.593 14.897 11.627 /
\plot 14.897 11.627 14.829 11.663 /
\plot 14.829 11.663 14.757 11.699 /
\plot 14.757 11.699 14.683 11.735 /
\plot 14.683 11.735 14.609 11.771 /
\plot 14.609 11.771 14.531 11.805 /
\plot 14.531 11.805 14.450 11.841 /
\plot 14.450 11.841 14.368 11.874 /
\plot 14.368 11.874 14.285 11.906 /
\plot 14.285 11.906 14.199 11.938 /
\plot 14.199 11.938 14.110 11.970 /
\plot 14.110 11.970 14.019 12.002 /
\plot 14.019 12.002 13.921 12.033 /
\plot 13.921 12.033 13.822 12.063 /
\plot 13.822 12.063 13.716 12.095 /
\plot 13.716 12.095 13.604 12.126 /
\plot 13.604 12.126 13.485 12.158 /
\plot 13.485 12.158 13.360 12.190 /
\plot 13.360 12.190 13.229 12.224 /
\plot 13.229 12.224 13.149 12.243 /
\plot 13.149 12.243 13.068 12.264 /
\plot 13.068 12.264 12.984 12.283 /
\plot 12.984 12.283 12.897 12.304 /
\plot 12.897 12.304 12.806 12.325 /
\plot 12.806 12.325 12.713 12.349 /
\plot 12.713 12.349 12.617 12.370 /
\plot 12.617 12.370 12.520 12.391 /
\plot 12.520 12.391 12.421 12.414 /
\plot 12.421 12.414 12.317 12.438 /
\plot 12.317 12.438 12.211 12.461 /
\plot 12.211 12.461 12.103 12.486 /
\plot 12.103 12.486 11.993 12.509 /
\plot 11.993 12.509 11.881 12.535 /
\plot 11.881 12.535 11.764 12.558 /
\plot 11.764 12.558 11.648 12.584 /
\plot 11.648 12.584 11.527 12.609 /
\plot 11.527 12.609 11.407 12.634 /
\plot 11.407 12.634 11.284 12.662 /
\plot 11.284 12.662 11.161 12.687 /
\plot 11.161 12.687 11.034 12.713 /
\plot 11.034 12.713 10.907 12.740 /
\plot 10.907 12.740 10.780 12.766 /
\plot 10.780 12.766 10.653 12.791 /
\plot 10.653 12.791 10.524 12.819 /
\plot 10.524 12.819 10.395 12.844 /
\plot 10.395 12.844 10.266 12.869 /
\plot 10.266 12.869 10.137 12.897 /
\plot 10.137 12.897 10.008 12.922 /
\plot 10.008 12.922  9.881 12.948 /
\plot  9.881 12.948  9.754 12.973 /
\plot  9.754 12.973  9.627 12.996 /
\plot  9.627 12.996  9.504 13.022 /
\plot  9.504 13.022  9.379 13.045 /
\plot  9.379 13.045  9.258 13.068 /
\plot  9.258 13.068  9.138 13.092 /
\plot  9.138 13.092  9.021 13.115 /
\plot  9.021 13.115  8.905 13.136 /
\plot  8.905 13.136  8.791 13.159 /
\plot  8.791 13.159  8.680 13.180 /
\plot  8.680 13.180  8.570 13.200 /
\plot  8.570 13.200  8.465 13.221 /
\plot  8.465 13.221  8.361 13.240 /
\plot  8.361 13.240  8.259 13.259 /
\plot  8.259 13.259  8.160 13.276 /
\plot  8.160 13.276  8.064 13.295 /
\plot  8.064 13.295  7.969 13.312 /
\plot  7.969 13.312  7.878 13.329 /
\plot  7.878 13.329  7.791 13.343 /
\plot  7.791 13.343  7.705 13.358 /
\plot  7.705 13.358  7.622 13.373 /
\plot  7.622 13.373  7.542 13.388 /
\plot  7.542 13.388  7.415 13.409 /
\plot  7.415 13.409  7.294 13.430 /
\plot  7.294 13.430  7.178 13.449 /
\plot  7.178 13.449  7.065 13.468 /
\plot  7.065 13.468  6.960 13.485 /
\plot  6.960 13.485  6.856 13.500 /
\plot  6.856 13.500  6.759 13.515 /
\plot  6.759 13.515  6.663 13.528 /
\plot  6.663 13.528  6.574 13.538 /
\plot  6.574 13.538  6.488 13.547 /
\plot  6.488 13.547  6.405 13.555 /
\plot  6.405 13.555  6.327 13.559 /
\plot  6.327 13.559  6.253 13.564 /
\plot  6.253 13.564  6.183 13.566 /
\putrule from  6.183 13.566 to  6.115 13.566
\plot  6.115 13.566  6.054 13.564 /
\plot  6.054 13.564  5.994 13.561 /
\plot  5.994 13.561  5.941 13.555 /
\plot  5.941 13.555  5.891 13.547 /
\plot  5.891 13.547  5.844 13.538 /
\plot  5.844 13.538  5.802 13.526 /
\plot  5.802 13.526  5.762 13.513 /
\plot  5.762 13.513  5.726 13.498 /
\plot  5.726 13.498  5.692 13.481 /
\plot  5.692 13.481  5.660 13.462 /
\plot  5.660 13.462  5.632 13.441 /
\plot  5.632 13.441  5.607 13.420 /
\plot  5.607 13.420  5.584 13.394 /
\plot  5.584 13.394  5.563 13.369 /
\plot  5.563 13.369  5.541 13.341 /
\plot  5.541 13.341  5.522 13.314 /
\plot  5.522 13.314  5.503 13.282 /
\plot  5.503 13.282  5.482 13.244 /
\plot  5.482 13.244  5.463 13.204 /
\plot  5.463 13.204  5.444 13.159 /
\plot  5.444 13.159  5.425 13.113 /
\plot  5.425 13.113  5.406 13.060 /
\plot  5.406 13.060  5.389 13.003 /
\plot  5.389 13.003  5.370 12.939 /
\plot  5.370 12.939  5.353 12.871 /
\plot  5.353 12.871  5.334 12.797 /
\plot  5.334 12.797  5.315 12.717 /
\plot  5.315 12.717  5.296 12.630 /
\plot  5.296 12.630  5.277 12.537 /
\plot  5.277 12.537  5.256 12.438 /
\plot  5.256 12.438  5.237 12.334 /
\plot  5.237 12.334  5.215 12.228 /
\plot  5.215 12.228  5.196 12.118 /
\plot  5.196 12.118  5.177 12.010 /
\plot  5.177 12.010  5.158 11.904 /
\plot  5.158 11.904  5.141 11.803 /
\plot  5.141 11.803  5.127 11.712 /
\plot  5.127 11.712  5.112 11.631 /
\plot  5.112 11.631  5.101 11.565 /
\plot  5.101 11.565  5.093 11.513 /
\plot  5.093 11.513  5.086 11.474 /
\plot  5.086 11.474  5.082 11.449 /
\plot  5.082 11.449  5.080 11.436 /
\putrule from  5.080 11.436 to  5.080 11.430
}%
%
% Fig TEXT object
%
\put{\SetFigFont{6}{7.2}{\rmdefault}{\mddefault}{\updefault}{\color[rgb]{0,0,0}$\tilde{S}_-$}%
} [lB] at  3.810 16.510
%
% Fig TEXT object
%
\put{\SetFigFont{6}{7.2}{\rmdefault}{\mddefault}{\updefault}{\color[rgb]{0,0,0}$\bar{N}_-$}%
} [lB] at  3.810 11.430
%
% Fig TEXT object
%
\put{\SetFigFont{6}{7.2}{\rmdefault}{\mddefault}{\updefault}{\color[rgb]{0,0,0}$\tilde{N}_-$}%
} [lB] at  3.810 21.590
%
% Fig TEXT object
%
\put{\SetFigFont{6}{7.2}{\rmdefault}{\mddefault}{\updefault}{\color[rgb]{0,0,0}$\tilde{S}_+$}%
} [lB] at 15.558 11.430
%
% Fig TEXT object
%
\put{\SetFigFont{6}{7.2}{\rmdefault}{\mddefault}{\updefault}{\color[rgb]{0,0,0}$\tilde{N}_+$}%
} [lB] at 15.558 16.510
%
% Fig TEXT object
%
\put{\SetFigFont{6}{7.2}{\rmdefault}{\mddefault}{\updefault}{\color[rgb]{0,0,0}$\bar{N}_+$}%
} [lB] at 15.558  6.350
%
% Fig TEXT object
%
\put{\SetFigFont{6}{7.2}{\rmdefault}{\mddefault}{\updefault}{\color[rgb]{0,0,0}$\Ga_l$}%
} [lB] at  5.715 12.700
%
% Fig TEXT object
%
\put{\SetFigFont{6}{7.2}{\rmdefault}{\mddefault}{\updefault}{\color[rgb]{0,0,0}$\Ga_r$}%
} [lB] at 14.287 14.605
%
% Fig TEXT object
%
\put{\SetFigFont{6}{7.2}{\rmdefault}{\mddefault}{\updefault}{\color[rgb]{0,0,0}$\cS^\Theta(E,\la)$}%
} [lB] at  9.525 13.494
%
% Fig TEXT object
%
\put{\SetFigFont{6}{7.2}{\rmdefault}{\mddefault}{\updefault}{\color[rgb]{0,0,0}$\cR$}%
} [lB] at 11.748 15.558
%
% Fig TEXT object
%
\put{\SetFigFont{6}{7.2}{\rmdefault}{\mddefault}{\updefault}{\color[rgb]{0,0,0}$\cW^-(E',\la)$}%
} [lB] at  7.620 15.081
\linethickness=0pt
\putrectangle corners at  3.778 21.901 and 15.589  6.215
\endpicture}

%%%%%%%%%%%%%%%%%
 \end{center}
\caption{Trapped region $\cR$ and the corridor with zero winding number.}
\end{figure}

 Consider the region $\cR$ in $\bar{\cC}$ whose boundary consists of the following three curves: (i) From $S_-$ to $S_+$ along 
$\Theta_{E,\la}$.
  (ii) From $S_+$ to $N_+$ along the right boundary, and (iii) from $N_+$ to $S_-$ along a horizontal line segment.
  We claim that  $\cR$ is a trapped region for \refeq{dynsysTh} at parameter values $(E',\la)$ provided $E'<E$.
  Since (ii) is always an orbit, and (iii) is entirely contained in $\cN$, this only needs to be checked for (i).
  We have
$$
g_{E',\la}(\theta,\Theta_{E,\la}) = g_{E,\la}(\theta,\Theta_{E,\la}) + 2a(E'-E)\sin^2\theta \sin\Theta_{E,\la} 
\geq \dot{\Theta}_{E,\la}
$$
which shows that the new flow crosses the old solution from left to right.
  Thus $\cR$ is trapped and since $\cW^-$ starts in $\cR$, it must terminate in $\widetilde{N}_+ = (\pi,0)$, so that $w(\cW^-) = 0$.
\end{proof}
Setting $E_0 =1$ and $\la_0  = -1+a$, let $\cS_0 := \cS(E_0,\la_0)$ denote the explicit solution found in \ref{explicit}.
  For all $E_1\in (0,1)$, the above two propositions, together with the following immediate corollary of 
Proposition~\ref{prop:ssexists}, establish the existence of a saddles connector $\cS_1 := \cS(E_1,\la_1)$ for \refeq{dynsysTh}, 
for some $\la_1\in (\la_l,\la_0)$:
\begin{cor}\label{cor:thconex}
Let $E\in[0,1]$ be fixed. Suppose that there exists $\la_1< \la_2<0$ such that the flow \refeq{dynsysTh} has corridors 
$\cK_1(E,\la_1)$ and $\cK_1(E,\la_2)$ with $w(\cK_1(E,\la_2))=0$ and $w(\cK_1(E,\la_1))\geq 1$.
  Then there is a $\la\in(\la_1,\la_2)$ such that \refeq{dynsysTh} has a saddles connector $\cS^\Theta(E,\la)$ going from $(0,0)$ 
to $(\pi,-\pi)$.
\end{cor}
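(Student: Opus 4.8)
The plan is to recognize Corollary~\ref{cor:thconex} as a direct specialization of Proposition~\ref{prop:ssexists}, once the abstract parameter is chosen with the correct sign. First I would recall that in the preceding discussion of the $\Theta$ flow we have already checked that \refeq{dynsysTh} (with $\ka=\half$) is a flow on the finite cylinder $\cC_1 = [0,\pi]\times\Sset^1$ of exactly the abstract form \refeq{eq:flow}, under the identifications $x=\theta$, $y=\Theta$, $x_-^{}=0$, $x_+^{}=\pi$, $f(\theta)=\sin\theta$, $g=g_{E,\la}$; that the four equilibria are $S_-=(0,0)$, $N_-=(0,\pi)$, $S_+=(\pi,-\pi)$, $N_+=(\pi,0)$, all hyperbolic (since $f'(0)=1>0$, $f'(\pi)=-1<0$ and the transverse eigenvalues were computed to be $\pm1$, $\pm2\ka$); and that condition \refeq{assump} holds because $s_-^{}-n_-^{}=-\pi\equiv n_+^{}-s_+^{}=\pi\pmod{2\pi}$. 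In particular $\widetilde{S}_-=(0,0)$ and $\widetilde{S}_+=(\pi,-\pi)$ are the copies of the two saddles sitting in the fundamental domain $\widetilde{\cC}_1=[0,\pi]\times[-\pi,\pi)$.

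The only point requiring a moment's care is the direction of monotonicity. Since
\[
g_{E,\la}(\theta,\Theta) = -2a\sin\theta\cos\theta\cos\Theta + \bigl(2aE\sin^2\theta-1\bigr)\sin\Theta + 2\la\sin\theta,
\]
one has $\partial_\la g_{E,\la} = 2\sin\theta\geq 0$ on $[0,\pi]$, i.e.\ $g$ is monotone \emph{non-decreasing} in $\la$. I would therefore take the abstract parameter of Proposition~\ref{prop:ssexists} to be $\mu:=-\la$, so that $g_\mu:=g_{E,-\mu}$ obeys $\partial_\mu g_\mu = -2\sin\theta\leq 0$, which is precisely Assumption~(M). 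Setting $\mu^{}_0:=-\la_2$ and $\mu^{}_1:=-\la_1$ we have $\mu^{}_0<\mu^{}_1$ (as $\la_1<\la_2$), and the hypothesis of the corollary becomes exactly the hypothesis of Proposition~\ref{prop:ssexists}: for each of $\mu^{}_0,\mu^{}_1$ a non-empty corridor $\cK_1$ of finite winding number exists, with $w\bigl(\cK_1(\mu^{}_0)\bigr)=w\bigl(\cK_1(E,\la_2)\bigr)=0\leq 0$ and $w\bigl(\cK_1(\mu^{}_1)\bigr)=w\bigl(\cK_1(E,\la_1)\bigr)\geq 1$.

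With these identifications in place I would invoke Proposition~\ref{prop:ssexists} directly: it produces $\mu^{}_s\in(\mu^{}_0,\mu^{}_1)$ for which the flow \refeq{eq:flowmu} (that is, \refeq{dynsysTh} with $\la=-\mu^{}_s$) has a saddles connector whose lift to $\bar{\cC}_1$ runs from $\widetilde{S}_-$ to $\widetilde{S}_+$. Undoing the reparametrization, $\la:=-\mu^{}_s\in(\la_1,\la_2)$, and by the first paragraph the connector runs from $(0,0)$ to $(\pi,-\pi)$; this is the desired $\cS^\Theta(E,\la)$. I do not anticipate a genuine obstacle here — all of the substantive work lives in Proposition~\ref{prop:ssexists} and in the earlier verification that \refeq{dynsysTh} is a flow of cylinder type satisfying the degeneracy and placement conditions — so the proof is essentially bookkeeping, the one thing to get right being the sign flip $\mu=-\la$ needed to make Assumption~(M) hold in the correct direction.
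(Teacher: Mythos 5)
Your proposal is correct and follows exactly the route the paper takes: the paper's own proof is the one-line remark that Proposition~\ref{prop:ssexists} applies with $-\la$ playing the role of $\mu$, and your write-up simply spells out the bookkeeping (verification of the cylinder-flow hypotheses for \refeq{dynsysTh}, the sign flip needed for Assumption~(M), and the matching of winding-number hypotheses). No gaps.
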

\begin{proof}
Proposition~\ref{prop:ssexists} applies, with $-\la$ playing the role of the parameter $\mu$. 
\end{proof}
Proceeding iteratively, suppose that given $E_n\in[0,1]$ a saddles connector $\cS^\Theta_n=\cS^\Theta(E_n,\la_n)$ has been found 
for \refeq{dynsysTh}, for some $\la_n <\la_0$.
 In the next subsection we shall see how the newly-found $\la_n$ can be used to prove the existence of a saddles connector for the 
$\Omega$ flow \refeq{dynsysOm}, namely $\cS^\Om_n:= \cS^\Om_n(E_{n+1},\la_n)$ for some $E_{n+1} \in (0,1)$.
  Coming back to the $\Theta$ flow then, a new saddles connector $\cS_{n+1}^\Theta$ needs to be found with 
the updated energy $E_{n+1}$, {\em given that a saddles connector $\cS^\Theta_n(E_n,\la_n)$ already exists.}
 Since $E_{n+1}$ can be on either side of $E_n$, in addition to Prop.~\ref{prop:thlowercor} we also need the following:
%\begin{prop}\label{prop:thuppercor}
%Suppose that the flow \refeq{dynsysTh} has a saddles connector $\sS^\Theta(E,\la)$ for $E\in[0,1)$ and $\la<\la_0$ whose lift 
%to the universal cover of the cylinder goes from $\widetilde{S}_- = (0,0)$ to $\widetilde{S}_+ = (\pi,-\pi)$.
% Then, for all $E'\in (E,1)$, there exists a corridor $\cK_1(E',\la)$ of winding number $w(\cK_1) \geq 1$ for \refeq{dynsysTh}.
%\end{prop}
%\begin{proof}
%Once again we compute the slope of orbits of the new flow that cross the old solution:
%$$
%g_{E',\la}(\theta,\Theta_{E,\la}) = g_{E,\la}(\theta,\Theta_{E,\la}) + 2a(E'-E)\sin^2\theta \sin\Theta_{E,\la} \leq \dot{\Theta}_{E,\la}
%$$
%So this time $\cW^-(E',\la)$ cannot cross $\Theta_{E,\la}$ left to right, it instead has to plummet down, resulting in a non-zero winding number.
%\end{proof}
%In fact more generally we have
\begin{thm}\label{thm:La}
Given any $a\in(0,\half)$ and $E \in [0,1]$, there exists a unique 
$$
\la = \Lambda(E) \in [-1-a,-1+a]
$$ 
such that \refeq{dynsysTh} has a saddles connector $\cS(E,\la)$ going from $(0,0)$ to $(\pi,-\pi)$.
 Moreover, $\La$ is an increasing $C^1$ function, and $\frac{\p \La}{\p E} < a$. 
\end{thm}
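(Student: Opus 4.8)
The plan is to get existence from the winding-number machinery already assembled in this section, uniqueness from the monotonicity Lemma~\ref{lem:mon}, and the $C^1$-regularity together with the two derivative bounds by applying the implicit function theorem to the signed-area function $\ba(E,\lambda)$ of \refeq{def:area} (recall $\ka=\half$ throughout). Write $g_{E,\lambda}$ for the $\dot\Theta$-component of the flow \refeq{dynsysTh}; one has the elementary identities $\partial_\lambda g_{E,\lambda}=2\sin\theta$ and $\partial_E g_{E,\lambda}=2a\sin^2\theta\,\sin\Theta$, on which everything quantitative will hinge. For \emph{existence}, fix $a\in(0,\half)$. At $E=1$ the explicit solution $\Theta_0(\theta)=-\theta$ of Section~\ref{explicit} is a saddles connector at $\lambda=-1+a=:\lambda_0$, so set $\Lambda(1):=-1+a$. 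For $E\in[0,1)$, feed that connector into Proposition~\ref{prop:thlowercor} to obtain a corridor $\cK_1(E,\lambda_0)$ with $w(\cK_1)=0$, and invoke Proposition~\ref{prop:lacorlow} for a corridor $\cK_1(E,-1-a)$ with $w(\cK_1)\ge1$; Corollary~\ref{cor:thconex} (with $\lambda_1=-1-a<\lambda_2=\lambda_0$) then yields a saddles connector $\cS(E,\lambda)$ from $(0,0)$ to $(\pi,-\pi)$ for some $\lambda=:\Lambda(E)\in(-1-a,-1+a)$. In every case $\Lambda(E)\le\lambda_0$, so by the argument in the proof of Proposition~\ref{prop:thlowercor} the connector lies entirely inside $\cN$: thus $\Theta$ decreases strictly along it from $0$ to $-\pi$, so $\sin\Theta\le0$ on it ($<0$ in the interior) --- a fact used repeatedly below.

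For \emph{uniqueness}, note $\partial_\lambda g=2\sin\theta\ge0$, so Assumption~(M) holds with $\mu=-\lambda$ and Lemma~\ref{lem:mon} applies: as $\lambda$ increases the $y$-component of $\widetilde{\cW}^-$ does not decrease and that of $\widetilde{\cW}^+$ does not increase, pointwise in the common flow-time (the $\theta$-component being parameter-free). If $\lambda<\lambda'$ both yield the $(0,0)$-to-$(\pi,-\pi)$ connector at the same $E$, then the connectors $c_\lambda=\widetilde{\cW}^-_\lambda=\widetilde{\cW}^+_\lambda$ and $c_{\lambda'}=\widetilde{\cW}^-_{\lambda'}=\widetilde{\cW}^+_{\lambda'}$ satisfy $c_{\lambda'}\ge c_\lambda$ (comparing the $\cW^-$'s) and $c_{\lambda'}\le c_\lambda$ (comparing the $\cW^+$'s), hence coincide as curves; but then $g_{E,\lambda}-g_{E,\lambda'}=2(\lambda-\lambda')\sin\theta$ vanishes along that curve, forcing $\lambda=\lambda'$ since $\sin\theta>0$ on $(0,\pi)$. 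Hence $\Lambda(E)$ is unique.

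For \emph{regularity and monotonicity}, observe that $\widetilde{\cW}^\pm(\,\cdot\,;E,\lambda)$ are the lifted unstable/stable manifolds of the hyperbolic saddles $\widetilde{S}_\mp$, whose locations $(0,0)$, $(\pi,-\pi)$ do not depend on $(E,\lambda)$; hence $\widetilde{\cW}^\pm$, and with them $\ba(E,\lambda)=\int_0^\pi(y^+-y^-)\,dx$, depend in $C^1$ (indeed analytic) fashion on $(E,\lambda)$, differentiation under the integral sign being justified by exponential convergence of the orbits and their variations to the fixed equilibria; and $\ba(E,\Lambda(E))=0$. The variations $\partial_\lambda y^\mp$ solve $\dot z=(\partial_\Theta g)z+2\sin\theta$ with $z(\mp\infty)=0$, whence $\partial_\lambda y^->0$, $\partial_\lambda y^+<0$ on $(0,\pi)$ and $\partial_\lambda\ba<0$; likewise $\partial_E y^\mp$ solve $\dot w=(\partial_\Theta g)w+2a\sin^2\theta\sin\Theta$, whose driving term is $\le0$ (indeed $<0$ in the interior) along the connector, so $\partial_E y^-\le0\le\partial_E y^+$ and $\partial_E\ba>0$. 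The implicit function theorem then gives $\Lambda\in C^1([0,1])$ with $\Lambda'(E)=-\partial_E\ba/\partial_\lambda\ba>0$, so $\Lambda$ is increasing.

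Finally, the bound $\Lambda'(E)<a$ is equivalent, since $\partial_\lambda\ba<0$, to $\partial_E\ba+a\,\partial_\lambda\ba<0$. The crucial input is the algebraic identity
\[
\partial_E g_{E,\lambda}+a\,\partial_\lambda g_{E,\lambda}=2a\sin^2\theta\sin\Theta+2a\sin\theta=2a\sin\theta\,(1+\sin\theta\sin\Theta)\ \ge\ 0,
\]
with equality only at the isolated point $(\theta,\Theta)=(\half\pi,-\half\pi)$. Consequently $W^\mp:=(\partial_E+a\partial_\lambda)y^\mp$ solve $\dot W=(\partial_\Theta g)W+q$ with $q=2a\sin\theta(1+\sin\theta\sin\Theta)\ge0$ and $W^-(-\infty)=W^+(+\infty)=0$, so $W^-\ge0\ge W^+$, strictly on sets of positive measure, and therefore $\partial_E\ba+a\,\partial_\lambda\ba=\int_0^\pi(W^+-W^-)\,dx<0$, as required. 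I expect the main obstacle to be making the $C^1$-dependence of $\ba(E,\lambda)$, and the differentiation under the integral sign, fully rigorous at the singular endpoints $\theta=0,\pi$ (where the flow-time runs to $\pm\infty$) and pinning down the signs $\partial_\lambda\ba<0$, $\partial_E\ba>0$ from the variational equations; once this is done, the sharp bound $\Lambda'<a$ reduces to the clean inequality $1+\sin\theta\sin\Theta\ge0$ above, which is really the heart of the matter for the convergence of the subsequent $(\Theta,\Omega)$-iteration.
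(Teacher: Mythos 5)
Your proposal is correct, and it reorganizes the argument in two genuinely different ways, while the quantitative core of the bound $\La'<a$ is the same as the paper's. For existence you follow the paper exactly (explicit connector at $E=1$, Propositions~\ref{prop:thlowercor} and \ref{prop:lacorlow}, Corollary~\ref{cor:thconex}). For uniqueness the paper does \emph{not} use Lemma~\ref{lem:mon}: it compares the tangent directions of the two putative connectors at $S_\pm$ (both have slope $\la-a$, resp.\ $\la'-a$) with the fact that orbits of the $\la'$-flow can cross the $\la$-connector only from below to above, and derives a contradiction; your sandwich $c_{\la'}\ge c_\la$ and $c_{\la'}\le c_\la$ via the monotonicity lemma (with $\mu=-\la$, and noting that the equilibria are fixed so $z(\mp\infty)=0$ rather than $<0$) is a clean and valid alternative that reuses machinery already proved. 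For regularity the paper does not prove $C^1$ in-house at all — it cites \cite{WinklmeierPHD,WINKLMEIERa} and only establishes $0\le\La'<a$ by differentiating the ODE \emph{along} the connector: the solvability condition $\int_\RR U(0,\tau)Q(\tau)\,d\tau=0$ with $Q=2a\sin^2\theta\sin\Theta_E+2\La'(E)\sin\theta$ gives $\La'$ as a ratio of integrals, and the bound follows from $\sin^2\theta\,(-\sin\Theta_E)\le\sin\theta$, which is precisely your inequality $1+\sin\theta\sin\Theta\ge0$. Your route — implicit function theorem applied to the signed area $\ba(E,\la)$, with the signs of $\p_\la\ba<0$, $\p_E\ba>0$ and $(\p_E+a\,\p_\la)\ba<0$ read off from the forced variational equations for $y^\pm$ — buys a self-contained proof of differentiability, at the cost of the endpoint/differentiation-under-the-integral issue you flag. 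That gap is real but standard (hyperbolic equilibria, exponential convergence), and you can avoid it entirely by replacing $\ba$ with the pointwise gap function $\cF(E,\la):=y^+_{E,\la}(0)-y^-_{E,\la}(0)$ evaluated at a single interior flow time, exactly as the paper does for the $\Omega$ flow in the proof of Theorem~\ref{thm:E}; the same sign computations for $\p_\la\cF$, $\p_E\cF$ and $(\p_E+a\,\p_\la)\cF$ then go through with $\int_{-\infty}^0$ and $\int_0^{\infty}$ in place of $\int_0^\pi dx$, and no boundary subtleties arise.
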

\begin{proof}
If $E=1$ then $\la = -1+a$ works and $\cS = \cS_0 = (\theta,-\theta)$.
  For $E<1$ existence is guaranteed by Prop.~\ref{prop:thlowercor}, Prop.~\ref{prop:lacorlow}, and Corollary~\ref{cor:thconex}.  
To prove uniqueness, suppose that for a given $E$, there are two saddles connectors $\cS(E,\la)$ and $\cS'(E,\la')$, with $\la'>\la$.  
Let $\Theta_{E,\la}$ and $\Theta_{E,\la'}$ denote the corresponding $\Theta$-components of $\cS$ and $\cS'$, respectively.  
For $\theta\in(0,\pi)$,
$$
g_{E,\la'}(\theta,\Theta_{E,\la}) = g_{E,\la}(\theta,\Theta_{E,\la}) + 2(\la'-\la)\sin\theta > 
\dot{\Theta}_{E,\la}.
$$
Thus orbits of the $(E,\la')$ flow can only cross $\cS$ from below to above.
On the other hand, since $\cS'$ is a saddles connector, near $S_-$ it coincides with $\cW^-(E,\la')$, and near $S_+$ it coincides 
with $\cW^+$.  
Thus from the linearizartion of the flow at $S_\pm$,
$$
\left.\frac{d\Theta_{E,\la'}}{d\theta}\right|_{S_\pm} = \la'-a>\la-a.
$$
Therefore $\cS'$ must be above $\cS$ near $S_-$ and below it near $S_+$, so $\cS'$ would have to cross $\cS$ from above to 
below, which is a contradiction, unless they coincide.

Given $E$ then, let $\La(E)$ denote the unique value of $\la$ for which a saddles connector $\cS(E,\la)$ exists. 
% Prop.~\ref{prop:thuppercor} and Prop.~\ref{prop:thlowercor} establish that $E<E'$ implies $\La(E)<\La(E')$, i.e. $\La$ is monotone increasing.   
The fact that $\La$ is continuously differentiable (in fact analytic), and the bound on the derivative, have already been 
shown in \cite{WinklmeierPHD} and \cite{WINKLMEIERa} using analytic perturbation theory. 
 Here we give a simple proof of monotonicity of $\La$ which also establishes the bound on the derivative:

Given $E\in[0,1]$ let $\la=\La(E)$ and let $(\theta(\tau),\Theta_E(\tau))$ denote the unique (modulo translations in $\tau$) saddles 
connector for \refeq{dynsysTh} whose existence we have established.
Let $u: = \frac{\p \Theta_E}{\p E}$. 
By differentiating the $\Theta$ equation in \refeq{dynsysTh} with respect to $E$ we obtain an equation for $u$:
\beq\label{eq:uu}
\frac{du}{d\tau} = P(\tau) u + Q(\tau),\qquad \left\{\begin{array}{rcl} P & := & 2a\sin\theta\cos\theta \sin\Theta_E + 
(2aE\sin^2\theta - 1)\cos\Theta_E \\
Q & := & 2a\sin^2\theta\sin\Theta_E + 2 \frac{d\La}{dE} \sin\theta.\end{array}\right.
\eeq
Let 
$$
U(\tau_1^{},\tau_2^{}) := e^{-\int_{\tau_1^{}}^{\tau_2^{}} P(\tau) d\tau}.
$$
Thus we have 
$$
U(\tau_2^{},\tau_1^{}) = \frac{1}{U(\tau_1^{},\tau_2^{})},\qquad U(\tau_1^{},\tau_2^{})U(\tau_2^{},\tau_3^{}) = U(\tau_1^{},\tau_3^{}).
$$
Solving the first-order linear ODE \refeq{eq:uu} for $u$ we obtain
\beq\label{UQ}
u(\tau) = U(\tau,\tau_1^{}) u(\tau_1^{}) + \int_{\tau_1^{}}^\tau U(\tau,\tau') Q(\tau') d\tau'.
\eeq
Note that $P$ and $Q$ are bounded functions of $\tau$ and 
$$
\lim_{\tau\to -\infty} P(\tau) = -1,\qquad \lim_{\tau\to\infty} P(\tau) = 1.
$$
Therefore, for any fixed $\tau$,
$$
U(\tau,\tau_1^{}) \to 0\mbox{ as }\tau_1^{}\to -\infty,\qquad U(\tau,\tau_1^{}) \to 0\mbox{ as } \tau_1^{}\to \infty.
$$
 Moreover $u(\pm\infty) = 0$.
 For any finite $\tau$ from \refeq{UQ} we thus obtain two equivalent expressions for $u(\tau)$.
  For example, setting $\tau=0$,
$$
u(0) = \int_{-\infty}^0 U(0,\tau') Q(\tau') d\tau' =-  \int_0^\infty U(0,\tau') d\tau'.
$$
Thus in particular
$$
0 
= 
\int_{-\infty}^\infty U(0,\tau') Q(\tau') d\tau' = 2a\int_\RR U(0,\tau')\sin^2\theta(\tau')\sin\Theta_E(\tau') d\tau' +
 2 \frac{d\La}{d E} \int_\RR U(0,\tau') \sin\theta(\tau') d\tau'
$$
Therefore
$$
\frac{d\La}{dE} 
=a \frac{\int_\RR U(0,\tau')\sin^2\theta(\tau')\left(-\sin\Theta_E(\tau')\right) d\tau'}{\int_\RR U(0,\tau') \sin\theta(\tau') d\tau'} \geq 0.
$$
We have already shown that $\Theta_E(\tau)\in(-\pi,0)$ and $\theta(\tau)\in(0,\pi)$ for all $\tau$.
  Thus the numerator in the above fraction is strictly less than the denominator, hence
\beq\label{est:dlade}
0\leq \frac{\p\La}{\p E} < a
\eeq
\end{proof}
%%%%%%%%%%%%%%%%%%%%%%%%%%%%%%%%%%%%%%%%%%%%%%%%%%%%%%
\subsubsection{Existence of saddles connectors for the $\Om$ equation}
%%%%%%%%%%%%%%%%%%%%%%%%%%%%%%%%%%%%%%%%%%%%%%%%%%%%%%
We now show that the flow \refeq{dynsysOm} also satisfies all the hypotheses we had made about flows on a cylinder.
 Once again, for simplicity we are only going to consider the case $\ka =\half$.
  The situation is somewhat more complicated than what we have done in the above for the $\Theta$ equation, due to 
the presence of an extra parameter, namely $\gamma$, which breaks the symmetry that was present for the $\Theta$ flow, 
as well as the fact that the equilibria of the $\Omega$ flow are degenerate (non-hyperbolic). 

 Let us make the identifications $x = \xi$ and $y=\Om$.  
Thus $x_-^{} = -\pi/2$, $x_+^{} = \pi/2$, and we now have 
$$
f(\xi) = \cos^2 \xi,
\qquad 
g_{E,\la}(\xi,\Om) = 2a\sin\xi\cos\Om+2\la\cos\xi\sin\Om + 2\ga\sin\xi\cos\xi+\cos^2\xi - 2aE
$$
Therefore, for $E\in[0,1)$,
$$
s_-^{} = -\pi+\cos^{-1}(E),\quad n_-^{} = \pi - \cos^{-1}(E),\qquad s_+^{} = -\cos^{-1}(E),\quad n_+^{} = \cos^{-1}(E),
$$
where by $\cos^{-1}$ we mean the principal branch of the arccosine, $0\leq \cos^{-1}x\leq \pi$, and
$$
S_\pm = (\pm{\textstyle\frac{\pi}{2}},s_\pm),\qquad N_\pm = (\pm{\textstyle\frac{\pi}{2}},n_\pm)
$$
as before.
  We note that this time, all the equilibria are non-hyperbolic, since $f'(\pm\pi/2) = 0$, and that for $\ga= 0$,
 there is a discrete symmetry: $f(-\xi) = f(\xi)$ and $g_{|_{\ga=0}}(-\xi,\pi-\Om) = g_{|_{\ga=0}}(\xi,\Om)$, which 
is broken when $\ga$ is turned on.

Also in the case $E=1$ there is a further degeneracy: the two equilibria on each side coalesce into one singular point 
with both eigenvalues equal to zero.
  For these type of singular points center manifolds can be non-unique, so that the distinguished orbits $\cW^\pm$ and 
the index theory we have developed for the corridor they form, are not directly relevent to this case.  

We now check the hypotheses about the topology of the nullclines.
%%%%%%%%%%%%%%%%%%%%%%%%%%%%%%%%%%%%%%%%%%%%%%%%%%%%%%
\subsubsection{Topology of the null-clines}
%%%%%%%%%%%%%%%%%%%%%%%%%%%%%%%%%%%%%%%%%%%%%%%%%%%%%%
Let $T := \tan(\Om/2)$.  Then
$$
g_{E,\la}(\xi,\Om) = \frac{2q(T)}{1+T^2},
$$
where
\bea
q(T) & := & \left(\ga\sin\xi\cos\xi +{\textstyle\half}\cos^2\xi - aE -a\sin\xi\right)T^2 + 2\la\cos\xi T\\
&&\mbox{} + \left(\ga\sin\xi\cos\xi +{\textstyle\half}\cos^2\xi - aE +a\sin\xi\right).
\eea
Thus $q$ is a quadratic polynomial in $T$ with coefficients that are functions of $\xi$.  The discriminant of $q$ is
$$
\Delta_q(\xi) := \la^2\cos^2\xi - (\ga\sin\xi\cos\xi + {\textstyle\half}\cos^2\xi - aE)^2 + 4a^2\sin^2\xi.
$$
Let $\tau := \tan\xi$.  
Thus $-\infty<\tau<\infty$, and
$$
\Delta_q(\tau) = \frac{p(\tau)}{(1+\tau^2)^2}
$$
with 
$$
p(\tau) 
:= 
a^2(1-E^2) \tau^4 + 2\ga a E \tau^3 + \left( \la^2 -\ga^2 + a^2 + 2a({\textstyle\half} - aE)\right) \tau^2 - 2 \ga({\textstyle\half} - aE) \tau 
+ \la^2 - ({\textstyle\half} - aE)^2.
$$
Since $ |E|<1$,  $p$ is an irreducible quartic in $\tau$.
 Let us write it as $p(\tau) = \sum_{j=0}^4 c_j \tau^j$.
 Suppose that $E\in(0,1)$ and
\beq\label{paramrange}
a \in (0,{\textstyle\half}),\qquad \ga \in (- \sqrt{2a(1-2a)},0),\qquad \la \in [-1-a,-1+a]
\eeq
It then follows that
$$
c_4>0,\qquad c_3<0,\qquad c_2>0,\qquad c_1>0,\qquad c_0>0.
$$
By Descartes' Rule of Signs, then, $p(\tau)$ has either
 two or no real positive roots, and either two or no real negative roots, counting multiplicity.
For a more accurate root count, one needs to use the discriminant of the quartic.
  Since the discriminant theory for general quartics is somewhat complicated, here we opt for a 
simpler analysis by estimating $p$ from above and below with two {\em reducible} quartics.
To this end, first we note that
$
p(\tau) = Q(\tau) - (q_1(\tau))^2
$
with
$$
Q(\tau) = \la^2 + (\la^2+a^2) \tau^2 + a^2 \tau^4,\qquad q_1(\tau) = {\textstyle\half} - aE + \ga \tau - aE \tau^2.
$$
Thus on the one hand, by completing the square,
$$
Q(\tau) = \left(a\tau^2+\frac{\la^2+a^2}{2a}\right)^2 - \frac{(\la^2 - a^2)^2}{4a^2} \leq (q_2^+(\tau))^2,\qquad q_2^+(\tau) 
:= a\tau^2 + \frac{\la^2+a^2}{2a}
$$
and on the other hand,
$$
Q(\tau) \geq \la^2 + 2 |\la| a  \tau^2 + a^2 \tau^4  = (q_2^-(\tau))^2,\qquad q_2^-(\tau) = a\tau^2 + |\la|
$$
Thus we have upper and lower bounds for $p$ in terms of factorizable quartics $Q^\pm(\tau)$:
$$
Q^-(\tau) := \left(q_2^-(\tau)\right)^2 - \left(q_1(\tau)\right)^2 \leq p(\tau) \leq \left(q_2^+(\tau)\right)^2 - \left(q_1(\tau)\right)^2 =: Q^+(\tau)
$$
\begin{figure}[ht]
 \begin{center}
\includegraphics[scale =0.3]{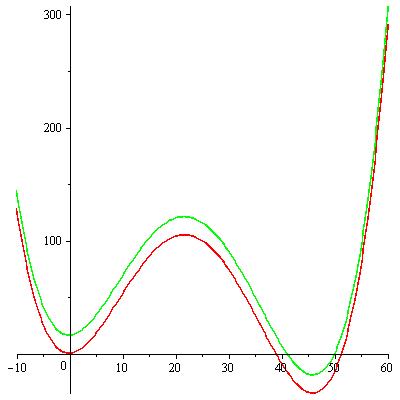}
\includegraphics[scale=0.3]{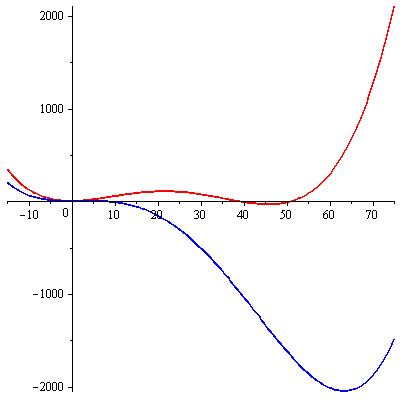}
 \end{center}
\caption{The bounding quartics: $Q^+$ and $p$ (left), $p$ and $Q^-$ (right) for parameter values $a=0.1$, $\ga=-0.2$, $\la = -0.9$, $E = 0.978$.}
\end{figure}

Consider first the upper quartic $Q^+$.
  We have $Q^+ = (q_2^+ + q_1)(q_2^+ - q_1) = q^+ q^-$ where $q^\pm$ are two quadratic polynomials.
  It is clear that if $q^+$ has any real roots, they will be positive, and if $q^-$ has any roots, 
they will be negative (recall that we are assuming $\ga<0$).
  The discriminants of $q^\pm$ are computed to be
$$
\Delta_\pm := \ga^2 - 2 (1 \mp E)\left( \la^2 + a^2 \pm a(1 -2aE)\right).
$$

Similarly, $Q^- = (q_2^-+q_1)(q_2^--q_1) = \tilde{q}^+\tilde{q}^-$ where $\tilde{q}^\pm$ are two quadratics, and once again,
 any real roots of $\tilde{q}^+$ must be positive and any real root of $\tilde{q}^-$, negative.
  The discriminants of $\tilde{q}^\pm$ are
$$
\tilde{\Delta}_\pm = \ga^2 - 2(1\mp E) (-2a\la \pm a(1 - 2aE) )\geq \Delta_\pm
$$
It thus follows that there are two subsets of the $(a,\ga,\la)$ parameter space \refeq{paramrange}  that are of interest:
 (R1) where both $\tilde{\Delta}_+$ and $\tilde{\Delta}_-$ are negative; and (R2) where $\Delta_+>0$ and $\tilde{\Delta}_-<0$.
  For parameter values in the region (R1) the quartic $Q^-$ will have no real zeros, and will be always positive, while 
for those in (R2) the quartic $Q^+$ will have exactly two positive roots and no negative root.  

Let us fix $a,\ga,\la$ as in \refeq{paramrange}.  We find that the range (R1) corresponds to $0\leq  E < E_l$, where
$$
E_l(\la) := \frac{1}{2a}\left[ -\la+a+{\textstyle\half} - \sqrt{(\la+a-{\textstyle\half})^2+\ga^2}\right],
$$
while range (R2) corresponds to $E_h< E\leq 1$, with
$$
E_h(\la) = \frac{1}{4a^2}\left[\la^2+3a^2+a - \sqrt{(\la^2 - a^2 + a)^2 + 4 a^2 \ga^2}\right]
$$
Note that  $0< E_l<E_h<1$ for all values of $a,\ga,\la$ as in \refeq{paramrange}.

 For $E\in(E_h,1]$, therefore, since $Q^+$ has two positive roots, the quartic $p(\tau)$ must also have at least two roots, one of 
which will definitely be positive.
  Thus by the Rule of Signs, $p(\tau)$ has exactly two positive roots.
  We call them $\tau_1^{}$ and $\tau_2^{}$, and $p(\tau)<0$ for $\tau_1^{}<\tau<\tau_2^{}$.
  It follows that the quadratic $q(T)$ will have two roots for $\tau\notin[\tau_1^{},\tau_2^{}]$, 
double roots at $\tau_1^{}$ and at $\tau_2^{}$, and no real roots for $\tau\in(\tau_1^{},\tau_2^{})$.  

For $E\in[0,E_l)$ since $Q^-$ has no real roots, $p$ cannot have any either.
 Thus $p$ is always positive and $q(T)=0$ will have two roots for all $\tau\in \RR$.  

Combining these two, one concludes that a critical value for the energy $E=E_c(a,\ga,\la)$ exists, $E_c\in[E_l,E_h]$, such 
that assumption ({\bf A}) is satisfied, with the role of parameter $\mu$  played by $E$.  
\begin{figure}[ht]
 \begin{center}
  \includegraphics[scale=0.3]{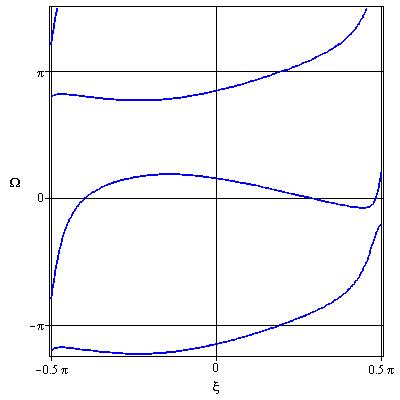}
\includegraphics[scale=0.3]{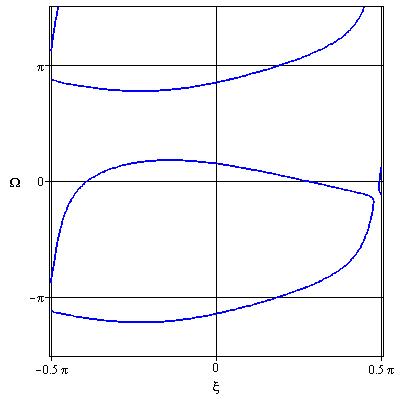}
 \end{center}
\caption{$\Omega$-nullclines for parameter values $E=0.8$ (left) and $E=0.93$ (right), with $a=0.1$, $\ga = -0.4$, 
and $\lambda = -0.9$.}
\end{figure}

%%%%%%%%%%%%%%%%%%%%%%%%%%%%%%%%%%%%%%%%%%%%%%%%%%%%%%
\subsubsection{Existence of corridors with unequal winding number}
%%%%%%%%%%%%%%%%%%%%%%%%%%%%%%%%%%%%%%%%%%%%%%%%%%%%%%
Throughout this section, $a$ will be a fixed number in $(0,\half)$ and $\ga$ a fixed number in $(- \sqrt{2a(1-2a)},0)$. 
 The following two propositions  help us get started:
\begin{prop}\label{prop:om1}
Given $\la \leq -1 +a$ there exists $\bar{E}\in(E_h(\la),1)$ such that for all $E\in[\bar{E},1)$ the corridor
 $\cK_1(E,\la)$ of the flow \refeq{dynsysOm} has winding number greater than or equal to one.
\end{prop}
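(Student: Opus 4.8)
The plan is to follow the template of the proof of Proposition~\ref{prop:lacorlow}, with two genuinely new features: the equilibria of \refeq{dynsysOm} are degenerate saddle-nodes rather than hyperbolic saddles, and the parameter being driven is the energy $E$ (toward its endpoint $1$) rather than $\la$. First I would record the local behaviour of $\cW^-$ at $S_- = (-\tfrac{\pi}{2},\,-\pi+\cos^{-1}E)$. Because $f'(-\tfrac{\pi}{2})=0$, the orbit $\cW^-$ is the unstable branch of the centre manifold of $S_-$ and is tangent there to the eigenvector of the zero eigenvalue of the linearisation of \refeq{dynsysOm}; using $\cos s_-=-E$ and $\sin s_- = -\sqrt{1-E^2}$ one computes
\[
\left.\frac{d\Om}{d\xi}\right|_{\cW^-}\!(S_-)\;=\;-\frac{\la}{a}\;-\;\frac{\ga}{a\sqrt{1-E^2}}\;,
\]
which is strictly positive (since $\la<0$ and $\ga<0$) and, crucially, tends to $+\infty$ as $E\to 1^-$. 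The same computation shows that the nullcline branch $\Ga_l$ through $S_-$ has this \emph{same} tangent line at $S_-$ --- this is forced by the degeneracy --- so, exactly as in Proposition~\ref{prop:lacorlow}, one has to pass to second order to decide on which side of $\Ga_l$ the orbit departs. Expanding both $\cW^-$ and $\Ga_l$ in powers of $\xi+\tfrac{\pi}{2}$ shows that $\cW^-$ enters the component $\cP_l$ of $\{\dot\Om>0\}$. (Assumption~(A) for this flow was verified in the preceding subsection; Assumption~(M) holds since $\partial_E g = -2a<0$.)

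Next I would track $\cW^-$ around the cylinder. Inside $\cP_l$ the coordinate $\Om$ increases, but the essential point is a bound on the total increase that is \emph{uniform} as $E\to 1$: the $\Om$-width of $\cP_l$ at the edge $\xi=-\tfrac{\pi}{2}$ is $2\cos^{-1}E$, and the linear regime at the saddle-node has $\xi$-width of order $1-E^2$ over which $\Om$ changes by only order $\sqrt{1-E^2}$, so the net climb of $\Om$ along $\cW^-$ before it leaves $\cP_l$ is $O(\sqrt{1-E^2})+O(\cos^{-1}E)$, which tends to $0$. Hence there is $\bar E<1$ such that for all $E\in[\bar E,1)$ the orbit $\cW^-$ exits $\cP_l$ at a value of $\Om$ still below $n_+=\cos^{-1}E$ --- indeed near $-\pi$, i.e. below $n_+$ by nearly $\pi$. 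After leaving $\cP_l$ the orbit lies in the connected region $\cN=\{\dot\Om<0\}$, where $\Om$ decreases strictly and monotonically until $\xi$ reaches $\tfrac{\pi}{2}$; since moreover $\cW^-$ cannot re-enter $\cP_l$ (its $\xi$-coordinate is increasing and bounded away from $-\tfrac{\pi}{2}$), it can never rise back to the level $n_+$, and its $\omega$-limit must therefore be the copy $\bar N_+ = (\tfrac{\pi}{2},\,\cos^{-1}E-2\pi)$ of $N_+$ lying immediately below, i.e. $k_+=1$. Together with the (easily checked) fact that $\cW^-\neq\cW^+$ for these parameters, this says that the corridor $\cK_1(E,\la)$ exists and has $w(\cK_1(E,\la))=1\ge 1$ for all $E\in[\bar E,1)$, as claimed. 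Alternatively one may deduce ``$\ge 1$ for all $E\in[\bar E,1)$'' from this at a single $E$ close to $1$ together with the monotonicity of $\cW^-$ in $E$ supplied by Lemma~\ref{lem:mon}.

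I expect the genuine difficulty to be the second ingredient of the first step combined with the climb estimate: controlling the centre-manifold dynamics at the degenerate equilibrium $S_-$ and proving the uniform-in-$E$ bound on how far $\cW^-$ can climb inside the shrinking lens $\cP_l$ as the pair $\{S_-,N_-\}$ coalesces and the slope $-\la/a-\ga/(a\sqrt{1-E^2})$ blows up. A route that avoids some of this local analysis is to work entirely in the universal cover and confine $\cW^-$ to $\{\Om<n_+\}$ until $\xi=\tfrac{\pi}{2}$ by a one-way-crossing barrier curve constructed in the spirit of the diagonal barrier $\Theta=-\theta$ used in Proposition~\ref{prop:lacorlow}; this again forces $k_+\ge 1$.
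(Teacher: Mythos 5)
There is a genuine gap in your main argument, at the step ``After leaving $\cP_l$ the orbit lies in the connected region $\cN$, where $\Om$ decreases strictly and monotonically until $\xi$ reaches $\tfrac{\pi}{2}$.'' For $E>E_h$ the set $\{g>0\}$ has a \emph{second} component $\cP_r$, a lens abutting the right boundary whose closure contains $S_+$ and $N_+$; since the $\omega$-limit of $\cW^-$ is a copy of the sink $N_+\in\overline{\cP_r}$, the orbit in general re-enters a copy of $\cP_r$ before reaching $\xi=\tfrac{\pi}{2}$, and there $\Om$ increases again. Which copy it enters is exactly the content of the winding number: if the copy of $\Ga_r$ through $\widetilde{S}_+,\widetilde{N}_+$ dips down in the interior to the height at which $\cW^-$ is travelling, the orbit can climb inside that lens and terminate at $\widetilde{N}_+$, giving $w=0$. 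Nothing in your proposal controls the interior $\Om$-extent of $\Ga_r$ (nor, for that matter, of $\Ga_l$: your ``uniform climb estimate'' inside $\cP_l$ is asserted from the width $2\cos^{-1}E$ of $\cP_l$ at the edge $\xi=-\tfrac{\pi}{2}$, but the lens can be much taller in the interior, so the $O(\sqrt{1-E^2})+O(\cos^{-1}E)$ bound is not established). Your local computation at the saddle-node (the common tangent of $\cW^-$ and $\Ga_l$, slope $-\la/a-\ga/(a\sqrt{1-E^2})$) is correct but does not by itself decide on which side of $\Ga_l$ the orbit departs, and in any case the difficulty is global, not local.

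The paper avoids all of this with the barrier you mention only in passing at the end: it computes the sign of $g_{E,\la}/f+1$ along the diagonal $L$ through $\widetilde{N}_-$ and $\widetilde{S}_+$ (crossable only downward for $\xi\geq\xi_0(E)$), computes the sign of $g_{E,\la}$ along a horizontal segment $L'$ at height $\tfrac{\pi}{2}-\eta_0$ (negative everywhere for $E$ near $1$), and concatenates the two into a curve $\Upsilon$ ending at $\widetilde{S}_+$ that no orbit can cross upward. Since $\widetilde{S}_-$ lies below $\Upsilon$, so does all of $\widetilde{\cW}^-$, which near $\xi=\tfrac{\pi}{2}$ pins it below $s_+$ and hence rules out $\omega$-limit $\widetilde{N}_+$ or any higher copy. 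Note that your suggested barrier ``$\{\Om<n_+\}$'' would not work as stated (a horizontal line at height $n_+$ is crossed upward near the left edge, where $N_-$ sits at height $n_-=\pi-\cos^{-1}E>n_+$); the specific choice of the height $\tfrac{\pi}{2}-\eta_0$ and of the diagonal through $\widetilde{N}_-$ and $\widetilde{S}_+$, together with the two sign computations, is the substance of the proof. To repair your write-up you should carry out that construction rather than the monotone-descent argument.
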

\begin{proof}
We compute the slope of solution orbits that cross the following line in $\bar{\cC}$
$$L := \{(\xi,\Om)\in \bar{\cC}\ |\ -{\textstyle\frac{\pi}{2}}\leq\xi\leq{\textstyle\frac{\pi}{2}},\quad \Om = {\textstyle\frac{\pi}{2}} - \cos^{-1}E - \xi\}$$
and compare it to the slope of $L$.
  Note that $L$ passes through $\widetilde{N}_-$ and $\widetilde{S}_+$.
  We have
\beq\label{slopes}
\frac{g_{E,\la}(\xi,\Om)}{f(\xi)} - (-1) = 2\left( 1 - (a-\la)E + [ \sqrt{1-E^2} (a-\la) + \ga]\tan\xi \right)
\eeq
Consider first the case $\la = -1+a$.  Let
$$
\xi_0(E) := \tan^{-1} \frac{1 - E}{-\ga-\sqrt{1-E^2}} \searrow 0 \mbox{ as } E \nearrow 1.
$$
For $\xi\geq \xi_0(E)$, the slope of any orbit of the flow crossing $L$ is less than (i.e. more negative than) the slope of $L$.
  Hence on the portion of $L$ where $\xi\geq \xi_0$ orbits can only cross $L$ from above to below.

On the other hand, suppose $\la<-1+a$.
  Let
$$
E_m := \max\left\{\frac{1}{a-\la},\sqrt{1-\frac{\ga^2}{(a-\la)^2}},E_h(\la)\right\}\in(E_h,1)
$$
For $E\in (E_m,1)$ we have $\xi_0(E)\in(-\frac{\pi}{2},0)$.
  Let 
$$
\eta_0 := \left\{\begin{array}{lr}
                     \cos^{-1}E & \la<-1+a \\ 
\cos^{-1}E +\xi_0(E) & \la =-1+a
                    \end{array}\right.
$$
We note that $\eta_0 \to 0$ as $E\to 1$.
  Let us consider the horizontal line
$$
L' := \{(\xi,\Om)\in\bar{\cC}\ |\ -{\textstyle\frac{\pi}{2}}\leq \xi\leq {\textstyle\frac{\pi}{2}},\quad \Om = {\textstyle\frac{\pi}{2}} - \eta_0\}
$$
We compute the slope of orbits crossing $L'$:
$$
h(\xi) := g_{E,\la}(\xi,{\textstyle\frac{\pi}{2}}-\eta_0) = 2a\sin\xi \sin\eta_0 + 2\la\cos\xi\cos\eta_0 + 2\ga \sin\xi\cos\xi + \cos^2\xi -2aE.
$$
Clearly
\bea
h(\xi) & \leq & 2a(\sin\eta_0 - E)+|\ga| +2\la\cos\xi\cos\eta_0+ \cos^2\xi \\
&  \leq &  2a(\sin\eta_0 - E)+|\ga| + 2\la\cos\eta_0 +1 \\
& \to & -2a + |\ga| + 2\la + 1\quad\mbox{as } E\to 1\\
& \leq & |\ga| - 1 < 0
\eea
Thus there exists $E$ sufficiently close to 1 such that $h(\xi)<0$ for $\xi\in(-{\textstyle\frac{\pi}{2}},{\textstyle\frac{\pi}{2}})$.  
\begin{figure}[ht]
 \begin{center}
%  \input{KTZ_zGKN_I_trapped.pictex}
%%%%%%%%%%%%%%%%%%%
%Title: /tmp/xfig-fig010956
%%Created by: fig2dev Version 3.2 Patchlevel 5
%%CreationDate: Wed Feb 18 13:00:43 1970
%%User: shadi@attar (Tahvildar-Zadeh A. Shadi)
%
%  Created by WinFIG version 5.01 
%  METADATA <version>1.0</version> 
%
\font\thinlinefont=cmr5
\begingroup\makeatletter\ifx\SetFigFont\undefined%
\gdef\SetFigFont#1#2#3#4#5{%
  \reset@font\fontsize{#1}{#2pt}%
  \fontfamily{#3}\fontseries{#4}\fontshape{#5}%
  \selectfont}%
\fi\endgroup%
\mbox{\beginpicture
\setcoordinatesystem units <0.50000cm,0.50000cm>
\unitlength=0.50000cm
\linethickness=1pt
\setplotsymbol ({\makebox(0,0)[l]{\tencirc\symbol{'160}}})
\setshadesymbol ({\thinlinefont .})
\setlinear
%  METADATA <id>2</id> 
%
% Fig POLYLINE object
%
\linethickness= 0.500pt
\setplotsymbol ({\thinlinefont .})
{\color[rgb]{0,0,0}\putrule from  5.080 11.430 to 15.240 11.430
}%
%  METADATA <id>3</id> 
%
% Fig POLYLINE object
%
\linethickness=1pt
\setplotsymbol ({\makebox(0,0)[l]{\tencirc\symbol{'160}}})
{\color[rgb]{0,0,0}\putrule from  5.080 21.590 to 15.240 21.590
}%
%  METADATA <id>6</id> 
%
% Fig POLYLINE object
%
\linethickness= 0.500pt
\setplotsymbol ({\thinlinefont .})
{\color[rgb]{0,0,0}\putrule from  5.080 17.462 to  5.080 16.034
%
% arrow head
%
\plot  5.017 16.288  5.080 16.034  5.143 16.288 /
}%
%  METADATA <id>7</id> 
%
% Fig POLYLINE object
%
\linethickness= 0.500pt
\setplotsymbol ({\thinlinefont .})
{\color[rgb]{0,0,0}\putrule from 15.240 12.383 to 15.240 10.795
%
% arrow head
%
\plot 15.176 11.049 15.240 10.795 15.304 11.049 /
}%
%  METADATA <id>1</id> 
%
% Fig POLYLINE object
%
\linethickness= 0.500pt
\setplotsymbol ({\thinlinefont .})
{\color[rgb]{0,0,0}\putrule from  5.080 16.510 to 15.240 16.510
}%
%  METADATA <id>4</id> 
%
% Fig POLYLINE object
%
\linethickness=1pt
\setplotsymbol ({\makebox(0,0)[l]{\tencirc\symbol{'160}}})
{\color[rgb]{0,0,0}\putrule from 15.240 21.590 to 15.240  6.315
\putrule from 15.240  6.350 to  5.045  6.350
\putrule from  5.080  6.350 to  5.080 21.590
}%
%
% Fig POLYLINE object
%
\linethickness= 0.500pt
\setplotsymbol ({\thinlinefont .})
{\color[rgb]{0,0,0}\plot 15.240 15.875  5.080 20.955 /
}%
%
% Fig POLYLINE object
%
\linethickness= 0.500pt
\setplotsymbol ({\thinlinefont .})
{\color[rgb]{0,0,0}\putrule from 15.240 18.415 to  5.080 18.415
}%
%
% Fig POLYLINE object
%
\linethickness= 0.500pt
\setplotsymbol ({\thinlinefont .})
{\color[rgb]{0,0,0}\putrule from  5.080 18.415 to 10.160 18.415
\plot 10.160 18.415 15.240 15.875 /
}%
%
% Fig POLYLINE object
%
\linethickness= 0.500pt
\setplotsymbol ({\thinlinefont .})
{\color[rgb]{0,0,0}\plot  6.032 18.733  6.668 18.098 /
%
% arrow head
%
\plot  6.443 18.232  6.668 18.098  6.533 18.322 /
}%
%
% Fig POLYLINE object
%
\linethickness= 0.500pt
\setplotsymbol ({\thinlinefont .})
{\color[rgb]{0,0,0}\plot  8.255 18.733  8.890 18.098 /
%
% arrow head
%
\plot  8.665 18.232  8.890 18.098  8.755 18.322 /
}%
%
% Fig POLYLINE object
%
\linethickness= 0.500pt
\setplotsymbol ({\thinlinefont .})
{\color[rgb]{0,0,0}\plot 11.430 17.939 11.906 17.304 /
%
% arrow head
%
\plot 11.703 17.469 11.906 17.304 11.805 17.545 /
}%
%
% Fig POLYLINE object
%
\linethickness= 0.500pt
\setplotsymbol ({\thinlinefont .})
{\color[rgb]{0,0,0}\plot 13.335 16.986 13.652 16.351 /
%
% arrow head
%
\plot 13.482 16.550 13.652 16.351 13.596 16.607 /
}%
%
% Fig POLYLINE object
%
\linethickness= 0.500pt
\setplotsymbol ({\thinlinefont .})
{\color[rgb]{0,0,0}\plot 15.240 15.875 15.234 15.879 /
\plot 15.234 15.879 15.221 15.890 /
\plot 15.221 15.890 15.200 15.909 /
\plot 15.200 15.909 15.168 15.934 /
\plot 15.168 15.934 15.128 15.968 /
\plot 15.128 15.968 15.079 16.010 /
\plot 15.079 16.010 15.028 16.055 /
\plot 15.028 16.055 14.973 16.101 /
\plot 14.973 16.101 14.920 16.148 /
\plot 14.920 16.148 14.872 16.192 /
\plot 14.872 16.192 14.825 16.235 /
\plot 14.825 16.235 14.783 16.273 /
\plot 14.783 16.273 14.747 16.311 /
\plot 14.747 16.311 14.713 16.345 /
\plot 14.713 16.345 14.685 16.375 /
\plot 14.685 16.375 14.660 16.404 /
\plot 14.660 16.404 14.639 16.432 /
\plot 14.639 16.432 14.620 16.457 /
\plot 14.620 16.457 14.605 16.482 /
\plot 14.605 16.482 14.590 16.512 /
\plot 14.590 16.512 14.577 16.540 /
\plot 14.577 16.540 14.567 16.567 /
\plot 14.567 16.567 14.561 16.597 /
\plot 14.561 16.597 14.554 16.624 /
\plot 14.554 16.624 14.552 16.654 /
\putrule from 14.552 16.654 to 14.552 16.681
\plot 14.552 16.681 14.554 16.709 /
\plot 14.554 16.709 14.558 16.736 /
\plot 14.558 16.736 14.565 16.764 /
\plot 14.565 16.764 14.575 16.789 /
\plot 14.575 16.789 14.586 16.815 /
\plot 14.586 16.815 14.599 16.838 /
\plot 14.599 16.838 14.613 16.859 /
\plot 14.613 16.859 14.628 16.880 /
\plot 14.628 16.880 14.647 16.899 /
\plot 14.647 16.899 14.664 16.916 /
\plot 14.664 16.916 14.685 16.933 /
\plot 14.685 16.933 14.709 16.950 /
\plot 14.709 16.950 14.734 16.967 /
\plot 14.734 16.967 14.764 16.984 /
\plot 14.764 16.984 14.796 16.999 /
\plot 14.796 16.999 14.834 17.016 /
\plot 14.834 17.016 14.876 17.033 /
\plot 14.876 17.033 14.925 17.050 /
\plot 14.925 17.050 14.978 17.067 /
\plot 14.978 17.067 15.033 17.084 /
\plot 15.033 17.084 15.088 17.101 /
\plot 15.088 17.101 15.138 17.115 /
\plot 15.138 17.115 15.181 17.128 /
\plot 15.181 17.128 15.212 17.137 /
\plot 15.212 17.137 15.232 17.143 /
\plot 15.232 17.143 15.238 17.145 /
\putrule from 15.238 17.145 to 15.240 17.145
}%
%
% Fig POLYLINE object
%
\linethickness= 0.500pt
\setplotsymbol ({\thinlinefont .})
{\color[rgb]{0,0,0}\plot  5.080 12.065  5.082 12.067 /
\plot  5.082 12.067  5.084 12.076 /
\plot  5.084 12.076  5.091 12.088 /
\plot  5.091 12.088  5.101 12.107 /
\plot  5.101 12.107  5.114 12.137 /
\plot  5.114 12.137  5.133 12.175 /
\plot  5.133 12.175  5.156 12.224 /
\plot  5.156 12.224  5.186 12.283 /
\plot  5.186 12.283  5.220 12.353 /
\plot  5.220 12.353  5.258 12.431 /
\plot  5.258 12.431  5.302 12.520 /
\plot  5.302 12.520  5.351 12.617 /
\plot  5.351 12.617  5.402 12.721 /
\plot  5.402 12.721  5.457 12.831 /
\plot  5.457 12.831  5.514 12.946 /
\plot  5.514 12.946  5.573 13.064 /
\plot  5.573 13.064  5.635 13.183 /
\plot  5.635 13.183  5.698 13.303 /
\plot  5.698 13.303  5.759 13.424 /
\plot  5.759 13.424  5.823 13.542 /
\plot  5.823 13.542  5.884 13.659 /
\plot  5.884 13.659  5.946 13.771 /
\plot  5.946 13.771  6.005 13.881 /
\plot  6.005 13.881  6.064 13.989 /
\plot  6.064 13.989  6.121 14.091 /
\plot  6.121 14.091  6.179 14.190 /
\plot  6.179 14.190  6.234 14.285 /
\plot  6.234 14.285  6.289 14.376 /
\plot  6.289 14.376  6.342 14.463 /
\plot  6.342 14.463  6.394 14.548 /
\plot  6.394 14.548  6.447 14.628 /
\plot  6.447 14.628  6.498 14.704 /
\plot  6.498 14.704  6.549 14.779 /
\plot  6.549 14.779  6.598 14.848 /
\plot  6.598 14.848  6.648 14.918 /
\plot  6.648 14.918  6.699 14.984 /
\plot  6.699 14.984  6.750 15.050 /
\plot  6.750 15.050  6.801 15.111 /
\plot  6.801 15.111  6.854 15.172 /
\plot  6.854 15.172  6.905 15.234 /
\plot  6.905 15.234  6.960 15.293 /
\plot  6.960 15.293  7.015 15.354 /
\plot  7.015 15.354  7.074 15.416 /
\plot  7.074 15.416  7.133 15.475 /
\plot  7.133 15.475  7.195 15.534 /
\plot  7.195 15.534  7.256 15.593 /
\plot  7.256 15.593  7.319 15.653 /
\plot  7.319 15.653  7.385 15.710 /
\plot  7.385 15.710  7.453 15.767 /
\plot  7.453 15.767  7.521 15.824 /
\plot  7.521 15.824  7.592 15.881 /
\plot  7.592 15.881  7.664 15.936 /
\plot  7.664 15.936  7.739 15.991 /
\plot  7.739 15.991  7.815 16.046 /
\plot  7.815 16.046  7.891 16.099 /
\plot  7.891 16.099  7.969 16.152 /
\plot  7.969 16.152  8.050 16.203 /
\plot  8.050 16.203  8.132 16.254 /
\plot  8.132 16.254  8.215 16.303 /
\plot  8.215 16.303  8.297 16.351 /
\plot  8.297 16.351  8.382 16.396 /
\plot  8.382 16.396  8.467 16.440 /
\plot  8.467 16.440  8.553 16.482 /
\plot  8.553 16.482  8.638 16.525 /
\plot  8.638 16.525  8.725 16.563 /
\plot  8.725 16.563  8.812 16.601 /
\plot  8.812 16.601  8.896 16.635 /
\plot  8.896 16.635  8.983 16.669 /
\plot  8.983 16.669  9.068 16.698 /
\plot  9.068 16.698  9.155 16.728 /
\plot  9.155 16.728  9.239 16.753 /
\plot  9.239 16.753  9.322 16.779 /
\plot  9.322 16.779  9.406 16.802 /
\plot  9.406 16.802  9.489 16.821 /
\plot  9.489 16.821  9.572 16.840 /
\plot  9.572 16.840  9.652 16.855 /
\plot  9.652 16.855  9.732 16.870 /
\plot  9.732 16.870  9.815 16.883 /
\plot  9.815 16.883  9.893 16.893 /
\plot  9.893 16.893  9.974 16.899 /
\plot  9.974 16.899 10.054 16.906 /
\plot 10.054 16.906 10.139 16.912 /
\plot 10.139 16.912 10.224 16.914 /
\putrule from 10.224 16.914 to 10.308 16.914
\putrule from 10.308 16.914 to 10.395 16.914
\plot 10.395 16.914 10.482 16.910 /
\plot 10.482 16.910 10.569 16.906 /
\plot 10.569 16.906 10.657 16.899 /
\plot 10.657 16.899 10.746 16.889 /
\plot 10.746 16.889 10.837 16.878 /
\plot 10.837 16.878 10.928 16.866 /
\plot 10.928 16.866 11.019 16.851 /
\plot 11.019 16.851 11.110 16.836 /
\plot 11.110 16.836 11.204 16.817 /
\plot 11.204 16.817 11.295 16.798 /
\plot 11.295 16.798 11.388 16.777 /
\plot 11.388 16.777 11.479 16.753 /
\plot 11.479 16.753 11.570 16.730 /
\plot 11.570 16.730 11.661 16.705 /
\plot 11.661 16.705 11.750 16.679 /
\plot 11.750 16.679 11.839 16.652 /
\plot 11.839 16.652 11.925 16.622 /
\plot 11.925 16.622 12.010 16.593 /
\plot 12.010 16.593 12.093 16.563 /
\plot 12.093 16.563 12.175 16.531 /
\plot 12.175 16.531 12.253 16.499 /
\plot 12.253 16.499 12.330 16.468 /
\plot 12.330 16.468 12.404 16.436 /
\plot 12.404 16.436 12.476 16.404 /
\plot 12.476 16.404 12.545 16.372 /
\plot 12.545 16.372 12.611 16.339 /
\plot 12.611 16.339 12.675 16.307 /
\plot 12.675 16.307 12.736 16.275 /
\plot 12.736 16.275 12.793 16.243 /
\plot 12.793 16.243 12.848 16.212 /
\plot 12.848 16.212 12.901 16.180 /
\plot 12.901 16.180 12.952 16.148 /
\plot 12.952 16.148 12.998 16.118 /
\plot 12.998 16.118 13.045 16.087 /
\plot 13.045 16.087 13.083 16.059 /
\plot 13.083 16.059 13.119 16.032 /
\plot 13.119 16.032 13.155 16.004 /
\plot 13.155 16.004 13.189 15.977 /
\plot 13.189 15.977 13.221 15.947 /
\plot 13.221 15.947 13.250 15.919 /
\plot 13.250 15.919 13.278 15.890 /
\plot 13.278 15.890 13.303 15.860 /
\plot 13.303 15.860 13.327 15.828 /
\plot 13.327 15.828 13.348 15.799 /
\plot 13.348 15.799 13.367 15.767 /
\plot 13.367 15.767 13.382 15.733 /
\plot 13.382 15.733 13.396 15.699 /
\plot 13.396 15.699 13.407 15.665 /
\plot 13.407 15.665 13.415 15.629 /
\plot 13.415 15.629 13.420 15.591 /
\plot 13.420 15.591 13.422 15.553 /
\putrule from 13.422 15.553 to 13.422 15.515
\plot 13.422 15.515 13.418 15.475 /
\plot 13.418 15.475 13.409 15.433 /
\plot 13.409 15.433 13.399 15.390 /
\plot 13.399 15.390 13.386 15.346 /
\plot 13.386 15.346 13.369 15.301 /
\plot 13.369 15.301 13.348 15.255 /
\plot 13.348 15.255 13.324 15.208 /
\plot 13.324 15.208 13.299 15.160 /
\plot 13.299 15.160 13.267 15.111 /
\plot 13.267 15.111 13.233 15.060 /
\plot 13.233 15.060 13.197 15.009 /
\plot 13.197 15.009 13.157 14.956 /
\plot 13.157 14.956 13.115 14.903 /
\plot 13.115 14.903 13.068 14.848 /
\plot 13.068 14.848 13.020 14.793 /
\plot 13.020 14.793 12.967 14.736 /
\plot 12.967 14.736 12.912 14.679 /
\plot 12.912 14.679 12.852 14.620 /
\plot 12.852 14.620 12.791 14.558 /
\plot 12.791 14.558 12.728 14.497 /
\plot 12.728 14.497 12.660 14.434 /
\plot 12.660 14.434 12.588 14.370 /
\plot 12.588 14.370 12.514 14.302 /
\plot 12.514 14.302 12.435 14.235 /
\plot 12.435 14.235 12.370 14.177 /
\plot 12.370 14.177 12.302 14.120 /
\plot 12.302 14.120 12.232 14.061 /
\plot 12.232 14.061 12.158 14.000 /
\plot 12.158 14.000 12.082 13.936 /
\plot 12.082 13.936 12.004 13.873 /
\plot 12.004 13.873 11.923 13.807 /
\plot 11.923 13.807 11.841 13.741 /
\plot 11.841 13.741 11.754 13.672 /
\plot 11.754 13.672 11.665 13.602 /
\plot 11.665 13.602 11.574 13.532 /
\plot 11.574 13.532 11.481 13.458 /
\plot 11.481 13.458 11.386 13.386 /
\plot 11.386 13.386 11.286 13.310 /
\plot 11.286 13.310 11.187 13.233 /
\plot 11.187 13.233 11.083 13.157 /
\plot 11.083 13.157 10.979 13.079 /
\plot 10.979 13.079 10.871 13.001 /
\plot 10.871 13.001 10.763 12.920 /
\plot 10.763 12.920 10.653 12.840 /
\plot 10.653 12.840 10.543 12.759 /
\plot 10.543 12.759 10.431 12.679 /
\plot 10.431 12.679 10.317 12.596 /
\plot 10.317 12.596 10.202 12.516 /
\plot 10.202 12.516 10.088 12.435 /
\plot 10.088 12.435  9.972 12.355 /
\plot  9.972 12.355  9.855 12.275 /
\plot  9.855 12.275  9.741 12.194 /
\plot  9.741 12.194  9.624 12.116 /
\plot  9.624 12.116  9.510 12.037 /
\plot  9.510 12.037  9.396 11.961 /
\plot  9.396 11.961  9.282 11.885 /
\plot  9.282 11.885  9.167 11.811 /
\plot  9.167 11.811  9.057 11.737 /
\plot  9.057 11.737  8.945 11.665 /
\plot  8.945 11.665  8.837 11.595 /
\plot  8.837 11.595  8.729 11.527 /
\plot  8.729 11.527  8.623 11.460 /
\plot  8.623 11.460  8.520 11.394 /
\plot  8.520 11.394  8.416 11.333 /
\plot  8.416 11.333  8.316 11.271 /
\plot  8.316 11.271  8.217 11.212 /
\plot  8.217 11.212  8.122 11.157 /
\plot  8.122 11.157  8.029 11.102 /
\plot  8.029 11.102  7.935 11.049 /
\plot  7.935 11.049  7.846 10.998 /
\plot  7.846 10.998  7.760 10.950 /
\plot  7.760 10.950  7.675 10.905 /
\plot  7.675 10.905  7.590 10.861 /
\plot  7.590 10.861  7.510 10.818 /
\plot  7.510 10.818  7.432 10.780 /
\plot  7.432 10.780  7.355 10.742 /
\plot  7.355 10.742  7.243 10.689 /
\plot  7.243 10.689  7.133 10.640 /
\plot  7.133 10.640  7.029 10.598 /
\plot  7.029 10.598  6.930 10.560 /
\plot  6.930 10.560  6.833 10.528 /
\plot  6.833 10.528  6.739 10.501 /
\plot  6.739 10.501  6.648 10.478 /
\plot  6.648 10.478  6.560 10.458 /
\plot  6.560 10.458  6.475 10.446 /
\plot  6.475 10.446  6.388 10.437 /
\plot  6.388 10.437  6.303 10.433 /
\putrule from  6.303 10.433 to  6.221 10.433
\plot  6.221 10.433  6.136 10.437 /
\plot  6.136 10.437  6.054 10.446 /
\plot  6.054 10.446  5.971 10.458 /
\plot  5.971 10.458  5.889 10.473 /
\plot  5.889 10.473  5.806 10.492 /
\plot  5.806 10.492  5.726 10.516 /
\plot  5.726 10.516  5.647 10.541 /
\plot  5.647 10.541  5.569 10.566 /
\plot  5.569 10.566  5.495 10.594 /
\plot  5.495 10.594  5.425 10.624 /
\plot  5.425 10.624  5.362 10.651 /
\plot  5.362 10.651  5.302 10.679 /
\plot  5.302 10.679  5.249 10.704 /
\plot  5.249 10.704  5.205 10.727 /
\plot  5.205 10.727  5.167 10.746 /
\plot  5.167 10.746  5.137 10.763 /
\plot  5.137 10.763  5.114 10.776 /
\plot  5.114 10.776  5.097 10.784 /
\plot  5.097 10.784  5.088 10.791 /
\plot  5.088 10.791  5.082 10.793 /
\plot  5.082 10.793  5.080 10.795 /
}%
%
% Fig POLYLINE object
%
\linethickness= 0.500pt
\setplotsymbol ({\thinlinefont .})
{\color[rgb]{0,0,0}\plot  5.080 12.065  5.082 12.069 /
\plot  5.082 12.069  5.086 12.076 /
\plot  5.086 12.076  5.095 12.090 /
\plot  5.095 12.090  5.108 12.112 /
\plot  5.108 12.112  5.127 12.143 /
\plot  5.127 12.143  5.150 12.181 /
\plot  5.150 12.181  5.177 12.230 /
\plot  5.177 12.230  5.211 12.285 /
\plot  5.211 12.285  5.249 12.349 /
\plot  5.249 12.349  5.292 12.416 /
\plot  5.292 12.416  5.334 12.488 /
\plot  5.334 12.488  5.381 12.565 /
\plot  5.381 12.565  5.429 12.639 /
\plot  5.429 12.639  5.476 12.715 /
\plot  5.476 12.715  5.524 12.789 /
\plot  5.524 12.789  5.571 12.863 /
\plot  5.571 12.863  5.618 12.933 /
\plot  5.618 12.933  5.664 13.003 /
\plot  5.664 13.003  5.709 13.068 /
\plot  5.709 13.068  5.753 13.132 /
\plot  5.753 13.132  5.798 13.193 /
\plot  5.798 13.193  5.840 13.252 /
\plot  5.840 13.252  5.884 13.312 /
\plot  5.884 13.312  5.929 13.369 /
\plot  5.929 13.369  5.973 13.424 /
\plot  5.973 13.424  6.020 13.481 /
\plot  6.020 13.481  6.066 13.538 /
\plot  6.066 13.538  6.115 13.595 /
\plot  6.115 13.595  6.166 13.652 /
\plot  6.166 13.652  6.208 13.701 /
\plot  6.208 13.701  6.253 13.752 /
\plot  6.253 13.752  6.297 13.801 /
\plot  6.297 13.801  6.346 13.854 /
\plot  6.346 13.854  6.394 13.906 /
\plot  6.394 13.906  6.445 13.959 /
\plot  6.445 13.959  6.498 14.014 /
\plot  6.498 14.014  6.551 14.069 /
\plot  6.551 14.069  6.608 14.127 /
\plot  6.608 14.127  6.665 14.186 /
\plot  6.665 14.186  6.725 14.243 /
\plot  6.725 14.243  6.786 14.304 /
\plot  6.786 14.304  6.847 14.364 /
\plot  6.847 14.364  6.911 14.425 /
\plot  6.911 14.425  6.977 14.486 /
\plot  6.977 14.486  7.042 14.550 /
\plot  7.042 14.550  7.110 14.611 /
\plot  7.110 14.611  7.178 14.673 /
\plot  7.178 14.673  7.245 14.736 /
\plot  7.245 14.736  7.313 14.798 /
\plot  7.313 14.798  7.383 14.857 /
\plot  7.383 14.857  7.451 14.918 /
\plot  7.451 14.918  7.521 14.978 /
\plot  7.521 14.978  7.588 15.037 /
\plot  7.588 15.037  7.656 15.094 /
\plot  7.656 15.094  7.724 15.149 /
\plot  7.724 15.149  7.791 15.204 /
\plot  7.791 15.204  7.857 15.257 /
\plot  7.857 15.257  7.923 15.310 /
\plot  7.923 15.310  7.988 15.361 /
\plot  7.988 15.361  8.052 15.409 /
\plot  8.052 15.409  8.115 15.458 /
\plot  8.115 15.458  8.177 15.505 /
\plot  8.177 15.505  8.238 15.549 /
\plot  8.238 15.549  8.299 15.593 /
\plot  8.299 15.593  8.361 15.636 /
\plot  8.361 15.636  8.424 15.680 /
\plot  8.424 15.680  8.490 15.725 /
\plot  8.490 15.725  8.553 15.767 /
\plot  8.553 15.767  8.619 15.809 /
\plot  8.619 15.809  8.683 15.852 /
\plot  8.683 15.852  8.750 15.892 /
\plot  8.750 15.892  8.816 15.932 /
\plot  8.816 15.932  8.884 15.970 /
\plot  8.884 15.970  8.951 16.010 /
\plot  8.951 16.010  9.019 16.049 /
\plot  9.019 16.049  9.089 16.087 /
\plot  9.089 16.087  9.159 16.123 /
\plot  9.159 16.123  9.231 16.161 /
\plot  9.231 16.161  9.301 16.195 /
\plot  9.301 16.195  9.373 16.231 /
\plot  9.373 16.231  9.442 16.264 /
\plot  9.442 16.264  9.514 16.296 /
\plot  9.514 16.296  9.584 16.328 /
\plot  9.584 16.328  9.656 16.358 /
\plot  9.656 16.358  9.726 16.387 /
\plot  9.726 16.387  9.794 16.415 /
\plot  9.794 16.415  9.864 16.442 /
\plot  9.864 16.442  9.931 16.468 /
\plot  9.931 16.468  9.997 16.491 /
\plot  9.997 16.491 10.063 16.514 /
\plot 10.063 16.514 10.128 16.535 /
\plot 10.128 16.535 10.192 16.557 /
\plot 10.192 16.557 10.253 16.576 /
\plot 10.253 16.576 10.315 16.595 /
\plot 10.315 16.595 10.376 16.612 /
\plot 10.376 16.612 10.435 16.626 /
\plot 10.435 16.626 10.494 16.641 /
\plot 10.494 16.641 10.552 16.656 /
\plot 10.552 16.656 10.611 16.669 /
\plot 10.611 16.669 10.674 16.684 /
\plot 10.674 16.684 10.740 16.696 /
\plot 10.740 16.696 10.808 16.709 /
\plot 10.808 16.709 10.873 16.720 /
\plot 10.873 16.720 10.941 16.730 /
\plot 10.941 16.730 11.009 16.739 /
\plot 11.009 16.739 11.079 16.747 /
\plot 11.079 16.747 11.151 16.753 /
\plot 11.151 16.753 11.220 16.760 /
\plot 11.220 16.760 11.295 16.764 /
\plot 11.295 16.764 11.366 16.768 /
\putrule from 11.366 16.768 to 11.441 16.768
\plot 11.441 16.768 11.515 16.770 /
\plot 11.515 16.770 11.589 16.768 /
\plot 11.589 16.768 11.665 16.766 /
\plot 11.665 16.766 11.739 16.760 /
\plot 11.739 16.760 11.813 16.756 /
\plot 11.813 16.756 11.887 16.747 /
\plot 11.887 16.747 11.961 16.739 /
\plot 11.961 16.739 12.033 16.726 /
\plot 12.033 16.726 12.105 16.715 /
\plot 12.105 16.715 12.175 16.701 /
\plot 12.175 16.701 12.245 16.686 /
\plot 12.245 16.686 12.313 16.669 /
\plot 12.313 16.669 12.378 16.650 /
\plot 12.378 16.650 12.444 16.631 /
\plot 12.444 16.631 12.509 16.609 /
\plot 12.509 16.609 12.573 16.586 /
\plot 12.573 16.586 12.636 16.563 /
\plot 12.636 16.563 12.700 16.535 /
\plot 12.700 16.535 12.747 16.516 /
\plot 12.747 16.516 12.795 16.495 /
\plot 12.795 16.495 12.842 16.472 /
\plot 12.842 16.472 12.891 16.447 /
\plot 12.891 16.447 12.939 16.421 /
\plot 12.939 16.421 12.988 16.394 /
\plot 12.988 16.394 13.037 16.364 /
\plot 13.037 16.364 13.085 16.332 /
\plot 13.085 16.332 13.136 16.298 /
\plot 13.136 16.298 13.185 16.264 /
\plot 13.185 16.264 13.236 16.226 /
\plot 13.236 16.226 13.286 16.184 /
\plot 13.286 16.184 13.339 16.142 /
\plot 13.339 16.142 13.390 16.097 /
\plot 13.390 16.097 13.441 16.049 /
\plot 13.441 16.049 13.492 15.998 /
\plot 13.492 15.998 13.542 15.945 /
\plot 13.542 15.945 13.593 15.888 /
\plot 13.593 15.888 13.644 15.828 /
\plot 13.644 15.828 13.695 15.767 /
\plot 13.695 15.767 13.746 15.704 /
\plot 13.746 15.704 13.794 15.636 /
\plot 13.794 15.636 13.843 15.566 /
\plot 13.843 15.566 13.890 15.494 /
\plot 13.890 15.494 13.936 15.420 /
\plot 13.936 15.420 13.983 15.344 /
\plot 13.983 15.344 14.027 15.263 /
\plot 14.027 15.263 14.069 15.183 /
\plot 14.069 15.183 14.112 15.098 /
\plot 14.112 15.098 14.154 15.014 /
\plot 14.154 15.014 14.194 14.925 /
\plot 14.194 14.925 14.232 14.834 /
\plot 14.232 14.834 14.271 14.740 /
\plot 14.271 14.740 14.307 14.645 /
\plot 14.307 14.645 14.340 14.548 /
\plot 14.340 14.548 14.374 14.448 /
\plot 14.374 14.448 14.408 14.347 /
\plot 14.408 14.347 14.440 14.241 /
\plot 14.440 14.241 14.470 14.133 /
\plot 14.470 14.133 14.499 14.023 /
\plot 14.499 14.023 14.520 13.942 /
\plot 14.520 13.942 14.539 13.858 /
\plot 14.539 13.858 14.561 13.773 /
\plot 14.561 13.773 14.580 13.684 /
\plot 14.580 13.684 14.599 13.593 /
\plot 14.599 13.593 14.616 13.500 /
\plot 14.616 13.500 14.635 13.405 /
\plot 14.635 13.405 14.652 13.303 /
\plot 14.652 13.303 14.669 13.202 /
\plot 14.669 13.202 14.688 13.094 /
\plot 14.688 13.094 14.704 12.984 /
\plot 14.704 12.984 14.721 12.867 /
\plot 14.721 12.867 14.738 12.749 /
\plot 14.738 12.749 14.753 12.624 /
\plot 14.753 12.624 14.770 12.495 /
\plot 14.770 12.495 14.787 12.359 /
\plot 14.787 12.359 14.804 12.220 /
\plot 14.804 12.220 14.821 12.076 /
\plot 14.821 12.076 14.836 11.925 /
\plot 14.836 11.925 14.853 11.769 /
\plot 14.853 11.769 14.870 11.608 /
\plot 14.870 11.608 14.887 11.441 /
\plot 14.887 11.441 14.903 11.267 /
\plot 14.903 11.267 14.920 11.089 /
\plot 14.920 11.089 14.937 10.907 /
\plot 14.937 10.907 14.954 10.719 /
\plot 14.954 10.719 14.969 10.526 /
\plot 14.969 10.526 14.986 10.331 /
\plot 14.986 10.331 15.003 10.132 /
\plot 15.003 10.132 15.020  9.931 /
\plot 15.020  9.931 15.037  9.728 /
\plot 15.037  9.728 15.054  9.525 /
\plot 15.054  9.525 15.069  9.322 /
\plot 15.069  9.322 15.085  9.121 /
\plot 15.085  9.121 15.100  8.924 /
\plot 15.100  8.924 15.115  8.729 /
\plot 15.115  8.729 15.128  8.541 /
\plot 15.128  8.541 15.143  8.361 /
\plot 15.143  8.361 15.155  8.187 /
\plot 15.155  8.187 15.166  8.022 /
\plot 15.166  8.022 15.179  7.870 /
\plot 15.179  7.870 15.189  7.728 /
\plot 15.189  7.728 15.198  7.599 /
\plot 15.198  7.599 15.206  7.482 /
\plot 15.206  7.482 15.212  7.379 /
\plot 15.212  7.379 15.219  7.290 /
\plot 15.219  7.290 15.225  7.211 /
\plot 15.225  7.211 15.229  7.148 /
\plot 15.229  7.148 15.232  7.097 /
\plot 15.232  7.097 15.236  7.057 /
\plot 15.236  7.057 15.238  7.027 /
\putrule from 15.238  7.027 to 15.238  7.008
\plot 15.238  7.008 15.240  6.996 /
\putrule from 15.240  6.996 to 15.240  6.987
\putrule from 15.240  6.987 to 15.240  6.985
}%
%  METADATA <id>39</id> 
%
% Fig TEXT object
%
\put{\SetFigFont{6}{7.2}{\rmdefault}{\mddefault}{\updefault}{\color[rgb]{0,0,0}$\bar{N}_-$}%
} [lB] at  3.810 10.636
%  METADATA <id>35</id> 
%
% Fig TEXT object
%
\put{\SetFigFont{6}{7.2}{\rmdefault}{\mddefault}{\updefault}{\color[rgb]{0,0,0}$\tilde{S}_-$}%
} [lB] at  3.810 12.065
%  METADATA <id>40</id> 
%
% Fig TEXT object
%
\put{\SetFigFont{6}{7.2}{\rmdefault}{\mddefault}{\updefault}{\color[rgb]{0,0,0}$\tilde{N}_-$}%
} [lB] at  3.810 20.796
%  METADATA <id>37</id> 
%
% Fig TEXT object
%
\put{\SetFigFont{6}{7.2}{\rmdefault}{\mddefault}{\updefault}{\color[rgb]{0,0,0}$\tilde{N}_+$}%
} [lB] at 15.399 16.986
%  METADATA <id>36</id> 
%
% Fig TEXT object
%
\put{\SetFigFont{6}{7.2}{\rmdefault}{\mddefault}{\updefault}{\color[rgb]{0,0,0}$\tilde{S}_+$}%
} [lB] at 15.399 15.716
%  METADATA <id>38</id> 
%
% Fig TEXT object
%
\put{\SetFigFont{6}{7.2}{\rmdefault}{\mddefault}{\updefault}{\color[rgb]{0,0,0}$\bar{N}_+$}%
} [lB] at 15.399  6.826
%
% Fig TEXT object
%
\put{\SetFigFont{6}{7.2}{\rmdefault}{\mddefault}{\updefault}{\color[rgb]{0,0,0}$L$}%
} [lB] at  7.144 20.003
%
% Fig TEXT object
%
\put{\SetFigFont{6}{7.2}{\rmdefault}{\mddefault}{\updefault}{\color[rgb]{0,0,0}$L'$}%
} [lB] at 12.700 18.415
%
% Fig TEXT object
%
\put{\SetFigFont{6}{7.2}{\rmdefault}{\mddefault}{\updefault}{\color[rgb]{0,0,0}$\cW^-$}%
} [lB] at  7.303 14.605
%
% Fig TEXT object
%
\put{\SetFigFont{6}{7.2}{\rmdefault}{\mddefault}{\updefault}{\color[rgb]{0,0,0}$\Ga_l$}%
} [lB] at 11.271 13.018
\linethickness=0pt
\putrectangle corners at  3.778 21.660 and 15.431  6.280
\endpicture}

%%%%%%%%%%%%%%%%%%%
 \end{center}
\caption{Construction of a barrier}
\end{figure}

Let $\Upsilon$ be the curve in $\cC$  consisting of two line segments: From $(-\frac{\pi}{2},\frac{\pi}{2} - \eta_0)$ along the horizontal 
line $L'$, upto the intersection point of $L$ and $L'$, and then from there along $L$ to $\widetilde{S}_+$.
 The curve $\Upsilon$ provides a barrier for the flow: no orbit can cross it from the region below $\Upsilon$ into the region above $\Upsilon$.
  Let $\widetilde{\cW}^-$ be the unstable manifold af $\widetilde{S}_-$.
  Thus $\widetilde{\cW}^-$ must stay below $\Upsilon$, and as a result the $\om$-limit of $\cW^-$ cannot be $\widetilde{N}_+$ so therefore its 
winding number is not zero or negative.
\end{proof}
\begin{prop}\label{prop:om0}
Given $\la \leq -1+a$ and $E\in[0,E_l(\la))$  the corridor $\cK_1(E,\la)$ of the flow \refeq{dynsysOm} has winding number equal to zero.
\end{prop}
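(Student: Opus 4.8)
The plan is to prove $w(\cK_1(E,\la))=0$ by trapping the distinguished orbit $\widetilde{\cW}^-$ of \refeq{dynsysOm} in the universal cover $\bar{\cC}$ and showing that it must terminate at the \emph{natural} copy $\widetilde{N}_+=(\tfrac\pi2,\cos^{-1}E)$ of $N_+$. This is the $\Om$-flow counterpart of Proposition~\ref{prop:thlowercor}, assembled from the same kind of one-directional barrier estimates as Proposition~\ref{prop:om1}, but arranged so as to forbid $\widetilde{\cW}^-$ from winding at all rather than to force it to wind.

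First I would collect what the preceding subsection already furnishes about the subcritical range $E\in[0,E_l(\la))$: there $\Delta_q(\tau)>0$ for every $\tau$, so the $\Om$-nullcline splits into two disjoint arcs $\Ga_u,\Ga_d$ with $S_-,N_+\in\overline{\Ga_u}$ and $N_-,S_+\in\overline{\Ga_d}$; moreover $\partial_E g_{E,\la}=-2a<0$, so $E$ may serve as the monotone parameter $\mu$, and whenever no saddles connector is present the corridor $\cK_1(E,\la)$ and the distinguished orbits $\widetilde{\cW}^\pm$ are defined (if one is present there is nothing to prove). Because $\Ga_u$ and $\Ga_d$ are disjoint, in $\bar{\cC}$ the two lifts $\widetilde{\Ga}_d$ and $\widetilde{\Ga}_d-2\pi$ bound a strip $\cR$ of vertical height $2\pi$ through whose interior $\widetilde{\Ga}_u$ runs; using the orientation of $\cN$ (where $\dot{\Om}<0$) and $\cP$ relative to $\widetilde{\Ga}_d$ one sees that no orbit can leave $\cR$ across either $\widetilde{\Ga}_d$ or $\widetilde{\Ga}_d-2\pi$, so $\cR$ is forward-invariant. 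Since $\widetilde{\cW}^-$ issues from $\widetilde{S}_-\in\widetilde{\Ga}_u\subset\cR$, it stays in $\cR$; being disjoint from $\widetilde{\cW}^+$ its $\om$-limit is a copy of $N_+$, and the unique copy of $N_+$ inside $\cR$ is the right endpoint of $\widetilde{\Ga}_u$.

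It therefore remains to show that this endpoint is $\widetilde{N}_+$ itself, i.e. that the lift $\widetilde{\Ga}_u$ through $\widetilde{S}_-$ does not wind. For this I would compare $\Ga_u$ with the line $M$ of slope $+1$ through $\widetilde{S}_-=(-\tfrac\pi2,\cos^{-1}E-\pi)$ and $\widetilde{N}_+=(\tfrac\pi2,\cos^{-1}E)$, namely $\Om=\xi+\cos^{-1}E-\tfrac\pi2$ — the $\Om$-analogue of the slope-$(-1)$ line joining $\widetilde{N}_-$ to $\widetilde{S}_+$ used in Proposition~\ref{prop:om1}. The estimate to verify is that every orbit crossing $M$ has slope at most $1$, i.e. $g_{E,\la}(\xi,\,\xi+\cos^{-1}E-\tfrac\pi2)<\cos^2\xi$ for $\xi\in(-\tfrac\pi2,\tfrac\pi2)$ (possibly after replacing a subarc of $M$ near one endpoint by a horizontal segment, exactly as in the $\Upsilon$ construction). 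Granting this, $M$ is a one-way barrier, so $\Ga_u$, like every orbit, cannot cross it from below to above; together with the symmetric estimate just below $\widetilde{S}_-$ this confines $\widetilde{\Ga}_u$ to the $2\pi$-strip around $M$ and pins its right endpoint at $\widetilde{N}_+$. Hence $w(\widetilde{\cW}^-)=0$ and $w(\cK_1(E,\la))=0$; combined with Proposition~\ref{prop:om1} and Proposition~\ref{prop:ssexists} applied with $\mu=E$, this is exactly the input needed to produce an $\Om$ saddles connector at an intermediate energy, as announced in the discussion before Theorem~\ref{thm:La}.

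The step I expect to be the main obstacle is the barrier estimate $g_{E,\la}(\xi,\xi+\cos^{-1}E-\tfrac\pi2)<\cos^2\xi$ (and its companion near the other endpoint), since it must hold uniformly over all $E\in[0,E_l(\la))$ and all admissible $(a,\ga,\la)$: near $E=E_l(\la)$ crude bounds on $\sqrt{1-E^2}$ are insufficient and one has to feed in the explicit formula for $E_l(\la)$, and one must separately keep track of the at most two values of $\xi$ at which the $\Om$-nullcline meets $\Om=\pm\pi$, since a stray unit of winding could otherwise enter $\widetilde{\Ga}_u$ there. As in Proposition~\ref{prop:om1}, I would handle this by splitting into the cases $\la=-1+a$ and $\la<-1+a$ and inserting an auxiliary horizontal barrier where needed.
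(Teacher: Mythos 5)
Your overall strategy (one-directional barrier lines in the spirit of Propositions~\ref{prop:om1} and~\ref{prop:thlowercor}) is the right family of ideas, but two of your key steps fail. First, the central barrier estimate is false: on your chord $M:\ \Om=\xi+\cos^{-1}E-\tfrac{\pi}{2}$ through $\widetilde{S}_-$ and $\widetilde{N}_+$ one computes
$g_{E,\la}\bigl(\xi,\xi+\cos^{-1}E-\tfrac{\pi}{2}\bigr)-\cos^2\xi=-2E(a+\la)\cos^2\xi+2\bigl[(a+\la)\sqrt{1-E^2}+\ga\bigr]\sin\xi\cos\xi$,
which at $\xi=0$ equals $-2E(a+\la)>0$ for every $E>0$ (since $a+\la\leq 2a-1<0$). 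So orbits cross $M$ \emph{upward} in the middle of the cylinder, and no horizontal patch near an endpoint can repair a failure at $\xi=0$. The paper's barrier is your line shifted \emph{down} by $\cos^{-1}E$, namely $L:\ \Om=\xi-\tfrac{\pi}{2}$ joining $(-\tfrac{\pi}{2},-\pi)$ to $(\tfrac{\pi}{2},0)$; there the relevant quantity $j(\xi)=2a(1-E)-2(\la+a)\cos^2\xi+2\ga\sin\xi\cos\xi$ is strictly \emph{positive}, so orbits cross $L$ only from below to above, and since $\widetilde{S}_-$ lies strictly above $L$ while every copy $(\tfrac{\pi}{2},n_+-2\pi k)$, $k\geq 1$, lies below it, positive winding is excluded. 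Note the inequality you need points the opposite way from the one you wrote: the barrier must keep $\widetilde{\cW}^-$ \emph{above} it.

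Second, your trapping construction is not sound. The curves $\Ga_d$ and $\Ga_u$ are nullclines, not orbits: at a point of $\Ga_d$ the vector field is $(\cos^2\xi,0)$, so an orbit crosses $\Ga_d$ horizontally and passes from $\cP$ to $\cN$ wherever $\Ga_d$ has positive slope (which it does over most of its length). Hence the strip between consecutive lifts of $\Ga_d$ is not forward-invariant. For the same reason, "$\Ga_u$, like every orbit, cannot cross $M$ from below to above'' is a category error: a one-way barrier for orbits says nothing about where the level set $\{g=0\}$ lies. The remaining half of the claim (no negative winding) is obtained in the paper without any second barrier: in the subcritical range $E<E_l(\la)$ the set $\cN$ is connected, and once $\widetilde{\cW}^-$ leaves $\cP$ and enters $\cN$ its $\Om$-component decreases, so it cannot climb to a copy of $N_+$ lying above $\widetilde{N}_+$. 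Replacing your $M$ by $L$ and your strip argument by this connectedness observation yields the paper's proof.
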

\begin{proof}
Once again we find a barrier that prevents $\cW^-$ from going down:  Let us compute the slope of orbits crossing the line  
$$
L = \{(\xi,\Om)\ |\ \Om = \xi-{\textstyle\frac{\pi}{2}}\}
$$ 
and compare it to the slope of this line. 
$$
j(\xi):= g_{E,\la}(\xi,\xi-{\textstyle\frac{\pi}{2}}) - \cos^2\xi = 2a(1-E) -2(\la+a)\cos^2\xi + 2\ga\sin\xi\cos\xi.
$$
Thus $j(\pm\pi/2) = 2a(1-E)>0$.
  Any interior minimum of $j$ must be achieved at a critical point:
$$
j'(\xi_0) = 0 \implies \xi_0 = {\textstyle\half} \tan^{-1}\frac{\ga}{\la+a} >0.
$$
However we have 
$$
j(\xi_0) = 2a(1-E) -(\la+a)- (\la+a)\cos2\xi_0 +\ga\sin2\xi_0 = 2a(1-E)- \frac{2(\la+a)^3}{(\la+a)^2 +\ga^2}> 2a(1-E).
$$
\begin{figure}[ht]
 \begin{center}
%  \input{KTZ_zGKN_I_trap0.pictex}
%%%%%%%%%%%%%%%%%%
%Title: /tmp/xfig-fig010956
%%Created by: fig2dev Version 3.2 Patchlevel 5
%%CreationDate: Wed Feb 18 13:00:43 1970
%%User: shadi@attar (Tahvildar-Zadeh A. Shadi)
%
%  Created by WinFIG version 5.01 
%  METADATA <version>1.0</version> 
%
\font\thinlinefont=cmr5
\begingroup\makeatletter\ifx\SetFigFont\undefined%
\gdef\SetFigFont#1#2#3#4#5{%
  \reset@font\fontsize{#1}{#2pt}%
  \fontfamily{#3}\fontseries{#4}\fontshape{#5}%
  \selectfont}%
\fi\endgroup%
\mbox{\beginpicture
\setcoordinatesystem units <0.50000cm,0.50000cm>
\unitlength=0.50000cm
\linethickness=1pt
\setplotsymbol ({\makebox(0,0)[l]{\tencirc\symbol{'160}}})
\setshadesymbol ({\thinlinefont .})
\setlinear
%  METADATA <id>2</id> 
%
% Fig POLYLINE object
%
\linethickness= 0.500pt
\setplotsymbol ({\thinlinefont .})
{\color[rgb]{0,0,0}\putrule from  5.080 11.430 to 15.240 11.430
}%
%  METADATA <id>3</id> 
%
% Fig POLYLINE object
%
\linethickness=1pt
\setplotsymbol ({\makebox(0,0)[l]{\tencirc\symbol{'160}}})
{\color[rgb]{0,0,0}\putrule from  5.080 21.590 to 15.240 21.590
}%
%  METADATA <id>6</id> 
%
% Fig POLYLINE object
%
\linethickness= 0.500pt
\setplotsymbol ({\thinlinefont .})
{\color[rgb]{0,0,0}\putrule from  5.080 17.462 to  5.080 16.034
%
% arrow head
%
\plot  5.017 16.288  5.080 16.034  5.143 16.288 /
}%
%  METADATA <id>7</id> 
%
% Fig POLYLINE object
%
\linethickness= 0.500pt
\setplotsymbol ({\thinlinefont .})
{\color[rgb]{0,0,0}\putrule from 15.240 12.383 to 15.240 10.795
%
% arrow head
%
\plot 15.176 11.049 15.240 10.795 15.304 11.049 /
}%
%  METADATA <id>1</id> 
%
% Fig POLYLINE object
%
\linethickness= 0.500pt
\setplotsymbol ({\thinlinefont .})
{\color[rgb]{0,0,0}\putrule from  5.080 16.510 to 15.240 16.510
}%
%  METADATA <id>4</id> 
%
% Fig POLYLINE object
%
\linethickness=1pt
\setplotsymbol ({\makebox(0,0)[l]{\tencirc\symbol{'160}}})
{\color[rgb]{0,0,0}\putrule from 15.240 21.590 to 15.240  6.315
\putrule from 15.240  6.350 to  5.045  6.350
\putrule from  5.080  6.350 to  5.080 21.590
}%
%
% Fig POLYLINE object
%
\linethickness= 0.500pt
\setplotsymbol ({\thinlinefont .})
{\color[rgb]{0,0,0}\plot  5.080 11.430 15.240 16.510 /
}%
%
% Fig POLYLINE object
%
\linethickness= 0.500pt
\setplotsymbol ({\thinlinefont .})
{\color[rgb]{0,0,0}\plot  7.620 12.383  8.255 13.335 /
%
% arrow head
%
\plot  8.167 13.088  8.255 13.335  8.061 13.159 /
}%
%
% Fig POLYLINE object
%
\linethickness= 0.500pt
\setplotsymbol ({\thinlinefont .})
{\color[rgb]{0,0,0}\plot 12.065 14.605 12.700 15.558 /
%
% arrow head
%
\plot 12.612 15.311 12.700 15.558 12.506 15.381 /
}%
%
% Fig POLYLINE object
%
\linethickness= 0.500pt
\setplotsymbol ({\thinlinefont .})
{\color[rgb]{0,0,0}\plot 15.240 17.462 15.236 17.458 /
\plot 15.236 17.458 15.225 17.452 /
\plot 15.225 17.452 15.208 17.439 /
\plot 15.208 17.439 15.181 17.420 /
\plot 15.181 17.420 15.143 17.393 /
\plot 15.143 17.393 15.096 17.361 /
\plot 15.096 17.361 15.041 17.323 /
\plot 15.041 17.323 14.980 17.283 /
\plot 14.980 17.283 14.912 17.240 /
\plot 14.912 17.240 14.844 17.198 /
\plot 14.844 17.198 14.774 17.158 /
\plot 14.774 17.158 14.707 17.120 /
\plot 14.707 17.120 14.637 17.081 /
\plot 14.637 17.081 14.571 17.050 /
\plot 14.571 17.050 14.506 17.020 /
\plot 14.506 17.020 14.442 16.995 /
\plot 14.442 16.995 14.381 16.974 /
\plot 14.381 16.974 14.319 16.954 /
\plot 14.319 16.954 14.258 16.940 /
\plot 14.258 16.940 14.194 16.927 /
\plot 14.194 16.927 14.131 16.919 /
\plot 14.131 16.919 14.065 16.910 /
\plot 14.065 16.910 13.998 16.906 /
\putrule from 13.998 16.906 to 13.949 16.906
\plot 13.949 16.906 13.900 16.904 /
\plot 13.900 16.904 13.851 16.906 /
\putrule from 13.851 16.906 to 13.799 16.906
\plot 13.799 16.906 13.746 16.908 /
\plot 13.746 16.908 13.688 16.910 /
\plot 13.688 16.910 13.631 16.912 /
\plot 13.631 16.912 13.570 16.916 /
\plot 13.570 16.916 13.506 16.921 /
\plot 13.506 16.921 13.443 16.923 /
\plot 13.443 16.923 13.375 16.929 /
\plot 13.375 16.929 13.305 16.933 /
\plot 13.305 16.933 13.231 16.938 /
\plot 13.231 16.938 13.157 16.942 /
\plot 13.157 16.942 13.079 16.946 /
\plot 13.079 16.946 13.001 16.950 /
\plot 13.001 16.950 12.918 16.954 /
\plot 12.918 16.954 12.835 16.957 /
\plot 12.835 16.957 12.751 16.959 /
\plot 12.751 16.959 12.664 16.963 /
\putrule from 12.664 16.963 to 12.575 16.963
\plot 12.575 16.963 12.486 16.965 /
\putrule from 12.486 16.965 to 12.395 16.965
\plot 12.395 16.965 12.302 16.963 /
\plot 12.302 16.963 12.209 16.961 /
\plot 12.209 16.961 12.116 16.959 /
\plot 12.116 16.959 12.021 16.954 /
\plot 12.021 16.954 11.925 16.948 /
\plot 11.925 16.948 11.830 16.942 /
\plot 11.830 16.942 11.735 16.933 /
\plot 11.735 16.933 11.637 16.925 /
\plot 11.637 16.925 11.540 16.912 /
\plot 11.540 16.912 11.441 16.902 /
\plot 11.441 16.902 11.341 16.887 /
\plot 11.341 16.887 11.242 16.872 /
\plot 11.242 16.872 11.140 16.853 /
\plot 11.140 16.853 11.057 16.840 /
\plot 11.057 16.840 10.975 16.823 /
\plot 10.975 16.823 10.890 16.806 /
\plot 10.890 16.806 10.806 16.787 /
\plot 10.806 16.787 10.719 16.768 /
\plot 10.719 16.768 10.630 16.747 /
\plot 10.630 16.747 10.541 16.724 /
\plot 10.541 16.724 10.448 16.701 /
\plot 10.448 16.701 10.355 16.677 /
\plot 10.355 16.677 10.262 16.650 /
\plot 10.262 16.650 10.164 16.622 /
\plot 10.164 16.622 10.067 16.593 /
\plot 10.067 16.593  9.967 16.563 /
\plot  9.967 16.563  9.866 16.531 /
\plot  9.866 16.531  9.764 16.497 /
\plot  9.764 16.497  9.663 16.463 /
\plot  9.663 16.463  9.559 16.427 /
\plot  9.559 16.427  9.453 16.391 /
\plot  9.453 16.391  9.349 16.353 /
\plot  9.349 16.353  9.243 16.315 /
\plot  9.243 16.315  9.138 16.275 /
\plot  9.138 16.275  9.032 16.233 /
\plot  9.032 16.233  8.926 16.190 /
\plot  8.926 16.190  8.822 16.148 /
\plot  8.822 16.148  8.716 16.104 /
\plot  8.716 16.104  8.613 16.059 /
\plot  8.613 16.059  8.511 16.015 /
\plot  8.511 16.015  8.410 15.968 /
\plot  8.410 15.968  8.310 15.924 /
\plot  8.310 15.924  8.211 15.877 /
\plot  8.211 15.877  8.113 15.828 /
\plot  8.113 15.828  8.018 15.782 /
\plot  8.018 15.782  7.925 15.735 /
\plot  7.925 15.735  7.834 15.687 /
\plot  7.834 15.687  7.745 15.638 /
\plot  7.745 15.638  7.658 15.591 /
\plot  7.658 15.591  7.571 15.543 /
\plot  7.571 15.543  7.489 15.494 /
\plot  7.489 15.494  7.408 15.447 /
\plot  7.408 15.447  7.330 15.399 /
\plot  7.330 15.399  7.254 15.350 /
\plot  7.254 15.350  7.180 15.301 /
\plot  7.180 15.301  7.108 15.253 /
\plot  7.108 15.253  7.040 15.206 /
\plot  7.040 15.206  6.972 15.157 /
\plot  6.972 15.157  6.907 15.107 /
\plot  6.907 15.107  6.828 15.047 /
\plot  6.828 15.047  6.754 14.988 /
\plot  6.754 14.988  6.682 14.927 /
\plot  6.682 14.927  6.612 14.863 /
\plot  6.612 14.863  6.545 14.800 /
\plot  6.545 14.800  6.479 14.734 /
\plot  6.479 14.734  6.416 14.666 /
\plot  6.416 14.666  6.352 14.597 /
\plot  6.352 14.597  6.291 14.525 /
\plot  6.291 14.525  6.229 14.448 /
\plot  6.229 14.448  6.168 14.368 /
\plot  6.168 14.368  6.109 14.285 /
\plot  6.109 14.285  6.049 14.201 /
\plot  6.049 14.201  5.988 14.110 /
\plot  5.988 14.110  5.929 14.017 /
\plot  5.929 14.017  5.867 13.917 /
\plot  5.867 13.917  5.808 13.818 /
\plot  5.808 13.818  5.747 13.712 /
\plot  5.747 13.712  5.687 13.606 /
\plot  5.687 13.606  5.628 13.496 /
\plot  5.628 13.496  5.569 13.386 /
\plot  5.569 13.386  5.512 13.276 /
\plot  5.512 13.276  5.455 13.166 /
\plot  5.455 13.166  5.402 13.060 /
\plot  5.402 13.060  5.351 12.956 /
\plot  5.351 12.956  5.304 12.861 /
\plot  5.304 12.861  5.260 12.770 /
\plot  5.260 12.770  5.222 12.687 /
\plot  5.222 12.687  5.188 12.615 /
\plot  5.188 12.615  5.158 12.554 /
\plot  5.158 12.554  5.133 12.501 /
\plot  5.133 12.501  5.116 12.461 /
\plot  5.116 12.461  5.101 12.429 /
\plot  5.101 12.429  5.091 12.408 /
\plot  5.091 12.408  5.084 12.393 /
\plot  5.084 12.393  5.082 12.387 /
\plot  5.082 12.387  5.080 12.383 /
}%
%
% Fig POLYLINE object
%
\linethickness= 0.500pt
\setplotsymbol ({\thinlinefont .})
{\color[rgb]{0,0,0}\plot 15.240 15.558 15.236 15.555 /
\plot 15.236 15.555 15.225 15.553 /
\plot 15.225 15.553 15.206 15.547 /
\plot 15.206 15.547 15.179 15.538 /
\plot 15.179 15.538 15.141 15.526 /
\plot 15.141 15.526 15.092 15.511 /
\plot 15.092 15.511 15.035 15.492 /
\plot 15.035 15.492 14.971 15.471 /
\plot 14.971 15.471 14.901 15.447 /
\plot 14.901 15.447 14.827 15.420 /
\plot 14.827 15.420 14.753 15.392 /
\plot 14.753 15.392 14.677 15.365 /
\plot 14.677 15.365 14.605 15.335 /
\plot 14.605 15.335 14.533 15.306 /
\plot 14.533 15.306 14.465 15.276 /
\plot 14.465 15.276 14.402 15.246 /
\plot 14.402 15.246 14.338 15.215 /
\plot 14.338 15.215 14.281 15.183 /
\plot 14.281 15.183 14.224 15.151 /
\plot 14.224 15.151 14.171 15.117 /
\plot 14.171 15.117 14.118 15.083 /
\plot 14.118 15.083 14.067 15.045 /
\plot 14.067 15.045 14.017 15.007 /
\plot 14.017 15.007 13.968 14.967 /
\plot 13.968 14.967 13.917 14.922 /
\plot 13.917 14.922 13.879 14.887 /
\plot 13.879 14.887 13.839 14.851 /
\plot 13.839 14.851 13.801 14.812 /
\plot 13.801 14.812 13.758 14.772 /
\plot 13.758 14.772 13.718 14.730 /
\plot 13.718 14.730 13.676 14.685 /
\plot 13.676 14.685 13.633 14.641 /
\plot 13.633 14.641 13.589 14.592 /
\plot 13.589 14.592 13.542 14.541 /
\plot 13.542 14.541 13.496 14.491 /
\plot 13.496 14.491 13.449 14.436 /
\plot 13.449 14.436 13.401 14.379 /
\plot 13.401 14.379 13.352 14.321 /
\plot 13.352 14.321 13.301 14.262 /
\plot 13.301 14.262 13.248 14.201 /
\plot 13.248 14.201 13.197 14.137 /
\plot 13.197 14.137 13.142 14.074 /
\plot 13.142 14.074 13.089 14.008 /
\plot 13.089 14.008 13.034 13.942 /
\plot 13.034 13.942 12.979 13.877 /
\plot 12.979 13.877 12.924 13.809 /
\plot 12.924 13.809 12.869 13.741 /
\plot 12.869 13.741 12.814 13.674 /
\plot 12.814 13.674 12.759 13.606 /
\plot 12.759 13.606 12.702 13.538 /
\plot 12.702 13.538 12.647 13.470 /
\plot 12.647 13.470 12.592 13.403 /
\plot 12.592 13.403 12.535 13.335 /
\plot 12.535 13.335 12.480 13.269 /
\plot 12.480 13.269 12.425 13.202 /
\plot 12.425 13.202 12.368 13.136 /
\plot 12.368 13.136 12.311 13.070 /
\plot 12.311 13.070 12.256 13.005 /
\plot 12.256 13.005 12.198 12.937 /
\plot 12.198 12.937 12.148 12.882 /
\plot 12.148 12.882 12.097 12.825 /
\plot 12.097 12.825 12.046 12.768 /
\plot 12.046 12.768 11.995 12.708 /
\plot 11.995 12.708 11.940 12.651 /
\plot 11.940 12.651 11.885 12.592 /
\plot 11.885 12.592 11.830 12.533 /
\plot 11.830 12.533 11.773 12.471 /
\plot 11.773 12.471 11.714 12.410 /
\plot 11.714 12.410 11.652 12.349 /
\plot 11.652 12.349 11.591 12.287 /
\plot 11.591 12.287 11.527 12.224 /
\plot 11.527 12.224 11.462 12.160 /
\plot 11.462 12.160 11.394 12.099 /
\plot 11.394 12.099 11.326 12.035 /
\plot 11.326 12.035 11.256 11.972 /
\plot 11.256 11.972 11.184 11.908 /
\plot 11.184 11.908 11.113 11.847 /
\plot 11.113 11.847 11.038 11.786 /
\plot 11.038 11.786 10.964 11.722 /
\plot 10.964 11.722 10.888 11.663 /
\plot 10.888 11.663 10.812 11.604 /
\plot 10.812 11.604 10.734 11.544 /
\plot 10.734 11.544 10.655 11.487 /
\plot 10.655 11.487 10.577 11.430 /
\plot 10.577 11.430 10.497 11.375 /
\plot 10.497 11.375 10.418 11.322 /
\plot 10.418 11.322 10.338 11.271 /
\plot 10.338 11.271 10.257 11.223 /
\plot 10.257 11.223 10.179 11.174 /
\plot 10.179 11.174 10.099 11.127 /
\plot 10.099 11.127 10.018 11.083 /
\plot 10.018 11.083  9.938 11.041 /
\plot  9.938 11.041  9.857 11.000 /
\plot  9.857 11.000  9.775 10.962 /
\plot  9.775 10.962  9.694 10.924 /
\plot  9.694 10.924  9.614 10.890 /
\plot  9.614 10.890  9.531 10.856 /
\plot  9.531 10.856  9.449 10.825 /
\plot  9.449 10.825  9.366 10.795 /
\plot  9.366 10.795  9.290 10.770 /
\plot  9.290 10.770  9.212 10.744 /
\plot  9.212 10.744  9.133 10.721 /
\plot  9.133 10.721  9.053 10.700 /
\plot  9.053 10.700  8.970 10.679 /
\plot  8.970 10.679  8.886 10.660 /
\plot  8.886 10.660  8.799 10.643 /
\plot  8.799 10.643  8.710 10.626 /
\plot  8.710 10.626  8.617 10.611 /
\plot  8.617 10.611  8.520 10.596 /
\plot  8.520 10.596  8.420 10.581 /
\plot  8.420 10.581  8.316 10.569 /
\plot  8.316 10.569  8.208 10.558 /
\plot  8.208 10.558  8.096 10.547 /
\plot  8.096 10.547  7.978 10.537 /
\plot  7.978 10.537  7.857 10.528 /
\plot  7.857 10.528  7.730 10.520 /
\plot  7.730 10.520  7.599 10.513 /
\plot  7.599 10.513  7.463 10.507 /
\plot  7.463 10.507  7.324 10.501 /
\plot  7.324 10.501  7.180 10.494 /
\plot  7.180 10.494  7.032 10.490 /
\plot  7.032 10.490  6.881 10.486 /
\plot  6.881 10.486  6.731 10.484 /
\plot  6.731 10.484  6.579 10.480 /
\plot  6.579 10.480  6.428 10.478 /
\putrule from  6.428 10.478 to  6.278 10.478
\plot  6.278 10.478  6.132 10.475 /
\putrule from  6.132 10.475 to  5.992 10.475
\plot  5.992 10.475  5.857 10.473 /
\putrule from  5.857 10.473 to  5.732 10.473
\putrule from  5.732 10.473 to  5.616 10.473
\putrule from  5.616 10.473 to  5.510 10.473
\plot  5.510 10.473  5.417 10.475 /
\putrule from  5.417 10.475 to  5.334 10.475
\putrule from  5.334 10.475 to  5.264 10.475
\putrule from  5.264 10.475 to  5.207 10.475
\plot  5.207 10.475  5.163 10.478 /
\putrule from  5.163 10.478 to  5.129 10.478
\putrule from  5.129 10.478 to  5.105 10.478
\putrule from  5.105 10.478 to  5.091 10.478
\putrule from  5.091 10.478 to  5.084 10.478
\putrule from  5.084 10.478 to  5.080 10.478
}%
%
% Fig POLYLINE object
%
\linethickness= 0.500pt
\setplotsymbol ({\thinlinefont .})
{\color[rgb]{0,0,0}\plot  5.080 12.383  5.082 12.387 /
\plot  5.082 12.387  5.086 12.393 /
\plot  5.086 12.393  5.095 12.406 /
\plot  5.095 12.406  5.110 12.427 /
\plot  5.110 12.427  5.129 12.454 /
\plot  5.129 12.454  5.154 12.493 /
\plot  5.154 12.493  5.186 12.537 /
\plot  5.186 12.537  5.222 12.588 /
\plot  5.222 12.588  5.262 12.643 /
\plot  5.262 12.643  5.306 12.704 /
\plot  5.306 12.704  5.353 12.766 /
\plot  5.353 12.766  5.402 12.829 /
\plot  5.402 12.829  5.453 12.895 /
\plot  5.453 12.895  5.503 12.958 /
\plot  5.503 12.958  5.556 13.020 /
\plot  5.556 13.020  5.609 13.079 /
\plot  5.609 13.079  5.662 13.138 /
\plot  5.662 13.138  5.715 13.193 /
\plot  5.715 13.193  5.768 13.248 /
\plot  5.768 13.248  5.823 13.299 /
\plot  5.823 13.299  5.880 13.352 /
\plot  5.880 13.352  5.937 13.401 /
\plot  5.937 13.401  5.999 13.451 /
\plot  5.999 13.451  6.062 13.500 /
\plot  6.062 13.500  6.128 13.551 /
\plot  6.128 13.551  6.198 13.602 /
\plot  6.198 13.602  6.272 13.652 /
\plot  6.272 13.652  6.322 13.686 /
\plot  6.322 13.686  6.375 13.722 /
\plot  6.375 13.722  6.430 13.758 /
\plot  6.430 13.758  6.485 13.796 /
\plot  6.485 13.796  6.545 13.832 /
\plot  6.545 13.832  6.606 13.871 /
\plot  6.606 13.871  6.672 13.911 /
\plot  6.672 13.911  6.737 13.951 /
\plot  6.737 13.951  6.805 13.991 /
\plot  6.805 13.991  6.877 14.034 /
\plot  6.877 14.034  6.949 14.074 /
\plot  6.949 14.074  7.025 14.118 /
\plot  7.025 14.118  7.104 14.161 /
\plot  7.104 14.161  7.184 14.205 /
\plot  7.184 14.205  7.269 14.252 /
\plot  7.269 14.252  7.353 14.296 /
\plot  7.353 14.296  7.440 14.343 /
\plot  7.440 14.343  7.531 14.389 /
\plot  7.531 14.389  7.622 14.436 /
\plot  7.622 14.436  7.715 14.482 /
\plot  7.715 14.482  7.808 14.529 /
\plot  7.808 14.529  7.906 14.575 /
\plot  7.906 14.575  8.003 14.624 /
\plot  8.003 14.624  8.100 14.671 /
\plot  8.100 14.671  8.200 14.717 /
\plot  8.200 14.717  8.302 14.764 /
\plot  8.302 14.764  8.401 14.810 /
\plot  8.401 14.810  8.503 14.855 /
\plot  8.503 14.855  8.604 14.899 /
\plot  8.604 14.899  8.706 14.944 /
\plot  8.706 14.944  8.807 14.988 /
\plot  8.807 14.988  8.909 15.033 /
\plot  8.909 15.033  9.011 15.075 /
\plot  9.011 15.075  9.112 15.115 /
\plot  9.112 15.115  9.214 15.157 /
\plot  9.214 15.157  9.315 15.198 /
\plot  9.315 15.198  9.415 15.236 /
\plot  9.415 15.236  9.517 15.276 /
\plot  9.517 15.276  9.618 15.314 /
\plot  9.618 15.314  9.718 15.352 /
\plot  9.718 15.352  9.819 15.388 /
\plot  9.819 15.388  9.923 15.424 /
\plot  9.923 15.424 10.020 15.460 /
\plot 10.020 15.460 10.118 15.494 /
\plot 10.118 15.494 10.219 15.528 /
\plot 10.219 15.528 10.319 15.562 /
\plot 10.319 15.562 10.420 15.596 /
\plot 10.420 15.596 10.524 15.629 /
\plot 10.524 15.629 10.628 15.663 /
\plot 10.628 15.663 10.734 15.697 /
\plot 10.734 15.697 10.842 15.731 /
\plot 10.842 15.731 10.950 15.765 /
\plot 10.950 15.765 11.060 15.797 /
\plot 11.060 15.797 11.170 15.831 /
\plot 11.170 15.831 11.280 15.864 /
\plot 11.280 15.864 11.392 15.898 /
\plot 11.392 15.898 11.504 15.932 /
\plot 11.504 15.932 11.618 15.964 /
\plot 11.618 15.964 11.731 15.998 /
\plot 11.731 15.998 11.845 16.032 /
\plot 11.845 16.032 11.957 16.063 /
\plot 11.957 16.063 12.071 16.095 /
\plot 12.071 16.095 12.184 16.127 /
\plot 12.184 16.127 12.294 16.159 /
\plot 12.294 16.159 12.406 16.190 /
\plot 12.406 16.190 12.514 16.222 /
\plot 12.514 16.222 12.622 16.252 /
\plot 12.622 16.252 12.728 16.281 /
\plot 12.728 16.281 12.831 16.311 /
\plot 12.831 16.311 12.935 16.341 /
\plot 12.935 16.341 13.034 16.368 /
\plot 13.034 16.368 13.132 16.396 /
\plot 13.132 16.396 13.227 16.423 /
\plot 13.227 16.423 13.318 16.451 /
\plot 13.318 16.451 13.409 16.476 /
\plot 13.409 16.476 13.496 16.502 /
\plot 13.496 16.502 13.578 16.525 /
\plot 13.578 16.525 13.661 16.550 /
\plot 13.661 16.550 13.739 16.574 /
\plot 13.739 16.574 13.813 16.595 /
\plot 13.813 16.595 13.887 16.618 /
\plot 13.887 16.618 13.957 16.639 /
\plot 13.957 16.639 14.023 16.660 /
\plot 14.023 16.660 14.086 16.681 /
\plot 14.086 16.681 14.150 16.703 /
\plot 14.150 16.703 14.209 16.722 /
\plot 14.209 16.722 14.311 16.758 /
\plot 14.311 16.758 14.404 16.794 /
\plot 14.404 16.794 14.491 16.828 /
\plot 14.491 16.828 14.569 16.861 /
\plot 14.569 16.861 14.641 16.895 /
\plot 14.641 16.895 14.709 16.931 /
\plot 14.709 16.931 14.770 16.967 /
\plot 14.770 16.967 14.827 17.005 /
\plot 14.827 17.005 14.880 17.043 /
\plot 14.880 17.043 14.931 17.084 /
\plot 14.931 17.084 14.978 17.124 /
\plot 14.978 17.124 15.020 17.168 /
\plot 15.020 17.168 15.060 17.211 /
\plot 15.060 17.211 15.098 17.253 /
\plot 15.098 17.253 15.130 17.295 /
\plot 15.130 17.295 15.160 17.333 /
\plot 15.160 17.333 15.183 17.369 /
\plot 15.183 17.369 15.204 17.399 /
\plot 15.204 17.399 15.219 17.424 /
\plot 15.219 17.424 15.229 17.441 /
\plot 15.229 17.441 15.236 17.454 /
\plot 15.236 17.454 15.238 17.460 /
\plot 15.238 17.460 15.240 17.462 /
}%
%
% Fig TEXT object
%
\put{\SetFigFont{6}{7.2}{\rmdefault}{\mddefault}{\updefault}{\color[rgb]{0,0,0}$\cW-$}%
} [lB] at  7.303 14.605
%
% Fig TEXT object
%
\put{\SetFigFont{6}{7.2}{\rmdefault}{\mddefault}{\updefault}{\color[rgb]{0,0,0}$L$}%
} [lB] at  9.842 13.494
%
% Fig TEXT object
%
\put{\SetFigFont{6}{7.2}{\rmdefault}{\mddefault}{\updefault}{\color[rgb]{0,0,0}$\Ga_d$}%
} [lB] at 11.748 12.224
%
% Fig TEXT object
%
\put{\SetFigFont{6}{7.2}{\rmdefault}{\mddefault}{\updefault}{\color[rgb]{0,0,0}$\Ga_u$}%
} [lB] at 11.589 16.986
%  METADATA <id>35</id> 
%
% Fig TEXT object
%
\put{\SetFigFont{6}{7.2}{\rmdefault}{\mddefault}{\updefault}{\color[rgb]{0,0,0}$\tilde{S}_-$}%
} [lB] at  3.810 12.224
%  METADATA <id>39</id> 
%
% Fig TEXT object
%
\put{\SetFigFont{6}{7.2}{\rmdefault}{\mddefault}{\updefault}{\color[rgb]{0,0,0}$\bar{N}_-$}%
} [lB] at  3.810 10.319
%  METADATA <id>37</id> 
%
% Fig TEXT object
%
\put{\SetFigFont{6}{7.2}{\rmdefault}{\mddefault}{\updefault}{\color[rgb]{0,0,0}$\tilde{N}_+$}%
} [lB] at 15.399 17.304
%  METADATA <id>36</id> 
%
% Fig TEXT object
%
\put{\SetFigFont{6}{7.2}{\rmdefault}{\mddefault}{\updefault}{\color[rgb]{0,0,0}$\tilde{S}_+$}%
} [lB] at 15.399 15.399
\linethickness=0pt
\putrectangle corners at  3.778 21.660 and 15.431  6.280
\endpicture}

%%%%%%%%%%%%%%%%%%
 \end{center}
\caption{Barrier for zero winding number}
\end{figure}

Thus $j(\xi)>0$ for all $\xi\in(-\frac{\pi}{2},\frac{\pi}{2})$.
  It follows that the slope of any orbit crossing the line is greater than the slope of the line.
  Thus orbits cannot cross this line from above to below.
  In particular, the orbit $\cW^-$ starts at $\widetilde{S}_-$, which is above this line.
  Hence $\cW^-$ cannot end at any copy of the node $N_+$ other than $\widetilde{N}_+$, so its winding number cannot be positive.
  Since the region $\cN$ is connected, once $\cW^-$ leaves $\cP$ and enters $\cN$, its $\Om$ must decrease, hence $\cW^-$ 
cannot end at any copy of the node $N_+$  that is  higher than $\widetilde{N}_+$ either, and therefore the winding number of 
$\cW^-$ cannot be negative, hence $w(\cW^-) = 0$.
\end{proof}
Let $\la_0 = -1+a$. 
 The above two propositions, in conjunction with the following immediate corollary of Proposition~\ref{prop:ssexists}, 
establish the existence a saddles connector $\cS_0^\Om(E_1,\la_0)$ forthe flow \refeq{dynsysOm}, for some $E_1\in(0,1)$:
\begin{cor}\label{cor:om}
Let $\la\leq  -1+a$ be fixed.
 Suppose that there exists $0\leq E_1< E_2< 1$ such that the flow \refeq{dynsysOm} has corridors $\cK_1(E_1,\la)$ 
and $\cK_1(E_2,\la)$ with $w(\cK_1(E_1,\la))=0$ and $w(\cK_1(E_2,\la))\geq 1$.  
  Then there is an $E\in(E_1,E_2)$ such that \refeq{dynsysOm} has a saddles connector $\cS(E,\la)$.
\end{cor}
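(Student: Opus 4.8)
The plan is to obtain this corollary as a direct application of Proposition~\ref{prop:ssexists}, taking the energy $E$ as the parameter $\mu$ and keeping $\la \le -1+a$ fixed. Concretely, I would set $x = \xi$, $y = \Om$, $x_-^{} = -\tfrac{\pi}{2}$, $x_+^{} = \tfrac{\pi}{2}$, and work on the open interval $I = (0,1)$ of energies, which contains $[E_1,E_2]$ since $E_2 < 1$; restricting to $E<1$ is important, as it keeps us away from the extra coalescence of equilibria that occurs at $E=1$, where the distinguished orbits $\cW^\pm$ and the corridor index are no longer well defined.

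The first task is to check that, under these identifications, the flow \refeq{dynsysOm} is an instance of the abstract parameter-dependent flow \refeq{eq:flowmu} satisfying all the standing hypotheses. Here $f(\xi) = \cos^2\xi$ obeys $f(\pm\tfrac{\pi}{2}) = 0$, $f>0$ on $(-\tfrac{\pi}{2},\tfrac{\pi}{2})$, $f'(\pm\tfrac{\pi}{2}) = 0$ (so we are in the non-hyperbolic, saddle-node case that the framework explicitly allows), and $f''(\pm\tfrac{\pi}{2}) = 2 \ne 0$; the four equilibria $S_\pm, N_\pm$ and their classification as degenerate saddle-nodes, source-node and sink-node, including the sign conditions \refeq{cond:hyp}, are exactly as determined in the subsection on the $\Om$ equation. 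Condition \refeq{assump} holds because, with $s_\pm^{},n_\pm^{}$ as listed there, $s_-^{} - n_-^{} = -2\pi + 2\cos^{-1}E \equiv 2\cos^{-1}E = n_+^{} - s_+^{}\pmod{2\pi}$. The monotonicity Assumption (M) is immediate, since $\p_E g_{E,\la}(\xi,\Om) = -2a \le 0$, which also feeds Lemma~\ref{lem:mon}; and the nullcline Assumption (A), with $E$ in the role of $\mu$ and a critical energy $E_c \in [E_l,E_h]$, is precisely what was established in the subsection ``Topology of the null-clines''. Finally, that the flow has no non-wandering points besides the four equilibria again follows from that nullcline picture.

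With all the structural hypotheses verified, the proof concludes by invoking Proposition~\ref{prop:ssexists} with $\mu_0^{} = E_1$, $\mu_1^{} = E_2$: by assumption the corridors $\cK_1(E_1,\la)$ and $\cK_1(E_2,\la)$ are nonempty and of finite winding number, with $w(\cK_1(E_1,\la)) = 0 \le 0$ and $w(\cK_1(E_2,\la)) \ge 1$, so the proposition produces an $E = \mu_s^{} \in (E_1,E_2)$ for which \refeq{dynsysOm} has a saddles connector $\cS(E,\la)$ whose lift to $\bar{\cC}$ runs from $\widetilde{S}_-$ to $\widetilde{S}_+$. I expect the only delicate point to be this hypothesis-checking step rather than any new analysis: one must be confident that the degeneracy of the $\Om$-flow equilibria and the presence of the symmetry-breaking parameter $\ga$ do not spoil any assumption of the abstract setup. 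They do not, because that setup was deliberately phrased to cover the saddle-node case and nowhere relies on the discrete symmetry that $\ga = 0$ would provide.
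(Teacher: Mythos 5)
Your proposal is correct and takes essentially the same route as the paper: the paper's proof of Corollary~\ref{cor:om} is the single line that Proposition~\ref{prop:ssexists} applies with $E$ playing the role of $\mu$, and the hypothesis-checking you carry out (degenerate saddle-nodes, condition \refeq{assump}, $\p_E g_{E,\la}=-2a\le 0$ for Assumption (M), and Assumption (A) via the nullcline analysis) is exactly what the paper establishes in the preceding subsections on the $\Om$ flow. No gaps.
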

\begin{proof}
Proposition~\ref{prop:ssexists} applies, with $E$ playing the role of the parameter $\mu$.
\end{proof}

 Let $n\geq 1$.
Suppose that given $\la_{n-1}\leq -1+a$ a saddles connector $\cS^\Om_{n-1}=\cS^\Om(E_n,\la_{n-1})$ has been found for \refeq{dynsysOm},
for some $E_n\in(0,1)$.
 In the previous subsection we saw how this newly-found $E_n$ can be used to prove the existence of a saddles connector 
for the $\Theta$ flow \refeq{dynsysTh}, namely $\cS^\Theta_n:= \cS^\Theta_n(E_n,\la_n)$ for some $\la_n\leq -1+a$.
  Coming back to the $\Om$ flow then, given the updated value $\la=\la_n$, a new saddles connector $\cS_n^\Om$ needs to be 
found with an updated energy $E_{n+1}$, {\em given that a saddles connector $\cS^\Om_{n-1}(E_{n},\la_{n-1})$ already exists.} 
% The following two propositions accomplish this task:
%\begin{prop}
%Suppose that the flow \refeq{dynsysOm} has a saddles connector for $\la \in(-1-a, -1+a)$ and $E\in[0,1)$.
% Then, for all $\la'\in (-1-a,\la)$, there exists a corridor $\cK_1(E,\la')$ of winding number $w(\cK_1) = 0$ for \refeq{dynsysOm}.
%\end{prop}
%\begin{proof}
%
%\end{proof}
%\begin{prop}
%Suppose that the flow \refeq{dynsysOm} has a saddles connector for $\la \in(-1-a, -1+a) $ and $E\in[E_b(\la),1)$. 
Then, for all $\la'\in (\la,-1+a)$, there exists a corridor $\cK_1(E,\la')$ of winding number $w(\cK_1) = 1$ for \refeq{dynsysOm}.
%\end{prop}
%\begin{proof}
%
%\end{proof}
More generally, we have
\begin{thm}\label{thm:E}
Fix $a\in(0,\half)$ and $\ga\in (- \sqrt{2a(1-2a)},0)$.
  Then given any $\la\in[-1-a,-1+a]$, there exists a unique $$E = \cE(\la)\in (0,1)$$ such that \refeq{dynsysOm} has a 
saddles connector $\cS^\Om(E,\la)$.
 Moreover, $\cE$ is a $C^1$ function, and $|\frac{\p \cE}{\p \la}| < \frac{1}{a}$. 
\end{thm}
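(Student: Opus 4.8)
\emph{Strategy and existence.} The argument mirrors that of Theorem~\ref{thm:La}, with two new wrinkles: the equilibria of the $\Om$ flow \eqref{dynsysOm} are degenerate (saddle--nodes), and their positions $S_\pm=(\pm\tfrac\pi2,s_\pm)$, with $s_-=-\pi+\cos^{-1}E$ and $s_+=-\cos^{-1}E$, now depend on $E$. Throughout, $a\in(0,\tfrac12)$ and $\ga\in(-\sqrt{2a(1-2a)},0)$ are fixed, and we take $E$ for the parameter ``$\mu$'' of the cylinder framework: this is legitimate since Assumption~(M) holds ($\p_E g_{E,\la}\equiv-2a<0$), Assumption~(A) was checked above with critical value $E_c\in[E_l(\la),E_h(\la)]$, and condition~\eqref{assump} holds because $s_--n_-=-2\pi+2\cos^{-1}E\equiv 2\cos^{-1}E=n_+-s_+\pmod{2\pi}$. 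Now fix $\la\in[-1-a,-1+a]$. Proposition~\ref{prop:om1} supplies $\bar E\in(E_h(\la),1)$ with $w(\cK_1(\bar E,\la))\ge1$, and Proposition~\ref{prop:om0} gives $w(\cK_1(E,\la))=0$ for all $E\in[0,E_l(\la))$; Corollary~\ref{cor:om} then produces an energy $E=\cE(\la)\in(E_l(\la),\bar E)\subset(0,1)$ for which \eqref{dynsysOm} has a saddles connector $\cS^\Om(\cE(\la),\la)$.

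\emph{Uniqueness and continuity.} Suppose for a fixed $\la$ there were connectors at two energies $E_1<E_2$, with $\Om$--components $\Om_1,\Om_2$; synchronize their lifts in a common flow time $\tau$, which is possible because $\dot\xi=\cos^2\xi$ involves neither $\Om$ nor $E$ nor $\la$. Put $h:=\Om_2-\Om_1$. At any $\tau_0$ with $h(\tau_0)=0$ one has $\dot h(\tau_0)=g_{E_2,\la}-g_{E_1,\la}=-2a(E_2-E_1)<0$, so $h$ can cross zero only downward; but $h(-\infty)=\cos^{-1}E_2-\cos^{-1}E_1<0$ while $h(+\infty)=\cos^{-1}E_1-\cos^{-1}E_2>0$, which forces an upward crossing --- a contradiction. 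Hence $\cE(\la)$ is unique. The same monotonicity, now through Lemma~\ref{lem:mon}, shows that the signed area $\ba(E,\la)$ of $\cK_1(E,\la)$ is $<0$ for $E<\cE(\la)$ and $>0$ for $E>\cE(\la)$, and since $\ba$ is continuous in $(E,\la)$ (as in the proof of Proposition~\ref{prop:ssexists}), a standard comparison argument then shows that $\cE$ is continuous in $\la$.

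\emph{$C^1$ dependence and the derivative bound.} That $\cE$ is $C^1$ follows either from the analytic perturbation theory of \cite{WINKLMEIERa} or, self-containedly, from the implicit function theorem applied to $\ba(E,\la)$, which is $C^1$ by smooth dependence of solutions of \eqref{dynsysOm} on parameters and satisfies $\p_E\ba\ne0$ at the connector by the monotonicity above --- the only care needed is near the saddle--nodes, where the connector reaches $S_\pm$ along the center manifold at the nonzero rate $\pm2a\sqrt{1-E^2}$. Granting this, let $E=\cE(\la)$ and let $(\xi(\tau),\Om_{E,\la}(\tau))$ denote the connector; set $v:=\p_\la\Om_{E,\la}$, the total derivative incorporating $E=\cE(\la)$. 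Differentiating the $\Om$--equation in \eqref{dynsysOm} gives
\begin{equation*}
\dot v=P(\tau)\,v+Q(\tau),\qquad P=\p_\Om g_{E,\la},\qquad Q=-2a\,\cE'(\la)+2\cos\xi\,\sin\Om_{E,\la}.
\end{equation*}
Here $P$ is bounded with $P(-\infty)=-2a\sqrt{1-E^2}<0$ and $P(+\infty)=2a\sqrt{1-E^2}>0$, so the integrating factor $U(\tau_1,\tau_2)=\exp\bigl(-\int_{\tau_1}^{\tau_2}P\,d\tau\bigr)$ obeys $U(\tau,\tau_1)\to0$ as $\tau_1\to\pm\infty$, while $v(\pm\infty)=\p_\la s_\pm=\pm\,\cE'(\la)/\sqrt{1-E^2}$ remains finite. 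Solving the variational equation and letting $\tau_1\to-\infty$, then $\tau_1\to+\infty$, produces two formulas for $v(0)$ whose equality gives $\int_\RR U(0,\tau')Q(\tau')\,d\tau'=0$, hence
\begin{equation*}
\cE'(\la)=\frac1a\cdot\frac{\int_\RR U(0,\tau')\cos\xi(\tau')\sin\Om_{E,\la}(\tau')\,d\tau'}{\int_\RR U(0,\tau')\,d\tau'}.
\end{equation*}
Since $U(0,\cdot)>0$ is integrable and $|\cos\xi\,\sin\Om|\le1$ with equality at most at the single instant where $\xi=0$, the numerator is strictly smaller in modulus than the denominator, so $\bigl|\p\cE/\p\la\bigr|<1/a$.

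\emph{Main obstacle.} The genuinely new difficulty relative to Theorem~\ref{thm:La} is the degeneracy of the $\Om$--equilibria together with the motion of $S_\pm$ with $E$; this is precisely what makes $v(\pm\infty)\ne0$ and is the source of the care needed both in justifying the $C^1$ regularity of $\cE$ and in passing to the limits $\tau_1\to\pm\infty$ in the variational computation. Everything else is the bookkeeping analogue of the $\Theta$--equation argument.
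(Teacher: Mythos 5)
Your proof is correct and follows the paper's route almost step for step: existence from Propositions~\ref{prop:om1} and \ref{prop:om0} via Corollary~\ref{cor:om}, uniqueness from the strict monotonicity $\p_E g_{E,\la}=-2a<0$ (your difference-function/zero-crossing formulation is equivalent to the paper's ``orbits can only cross from above to below'' argument), and the derivative bound from the variational equation with integrating factor $U$, yielding the identical formula $\cE'(\la)=\frac{1}{a}\int_\RR U(0,\tau)\cos\xi(\tau)\sin\Om_{E,\la}(\tau)\,d\tau\big/\int_\RR U(0,\tau)\,d\tau$. Two small points of divergence are worth noting. For the $C^1$ step the paper applies the implicit function theorem not to the signed area $\ba(E,\la)$ but to $\cF(E,\la):=\Om^+_{E,\la}(0)-\Om^-_{E,\la}(0)$, the gap between the two distinguished orbits at a fixed \emph{finite} time; this both sidesteps the integrability issues near the saddle--nodes and makes the nondegeneracy condition explicit, $\p_E\cF=2a\int_\RR U(0,\tau')\,d\tau'>0$. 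Your appeal to ``the monotonicity above'' only delivers $\p_E\ba\ge 0$ (Lemma~\ref{lem:mon} is non-strict), so as written the hypothesis of the implicit function theorem is not verified; you would need to run the same variational/integrating-factor computation on $y^+_E-y^-_E$ to get the strict inequality. Second, the paper computes $d\cE/d\la=-(\p_\la\cF)/(\p_E\cF)$ from the separate partial derivatives $u_\pm=\p_E\Om^\pm$ and $v_\pm=\p_\la\Om^\pm$, each with homogeneous or explicit data at $\mp\infty$, rather than your single total-derivative equation (which presupposes differentiability of $\cE$, as you acknowledge); both computations are legitimate once $C^1$ regularity is in hand and give the same bound $|\cE'|<1/a$.
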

\begin{proof}
  Existence of a saddles connector is guaranteed by Propositions~\ref{prop:om1} and ~\ref{prop:om0}, and Corollary~\ref{cor:om}.
  To see uniqueness, suppose that there exists two saddles connectors $\cS^\Om(E,\la)$ and $\cS^\Om(E',\la)$ for \refeq{dynsysOm}
 for $E$ and $E'$ in $(0,1)$, and suppose $E<E'$.
Let $\Om_{E,\la}$ be the $\Om$ component of $\cS^\Om(E,\la)$.
  We have
$$
g_{E',\la}(\xi,\Om_{E,\la}) = g_{E,\la}(\xi,\Om_{E,\la})-2a(E'-E) <\dot{\Om}_{E,\la}.
$$
It thus follows that orbits of the $(E',\la)$ flow can only cross $\cS^\Om(E,\la)$ from above to below.
  On the other hand, since $E'>E$ the equilibrium point $\widetilde{S}_-(E',\la)$ is situated below $\widetilde{S}_-(E,\la)$, 
while $\widetilde{S}_+(E',\la)$ is above $\widetilde{S}_+(E,\la)$.
  Since $\cS^\Om(E',\la)$ coincides with both $\cW^-(E',\la)$ and $\cW^+(E',\la)$ it begins below $\cS^\Om(E,\la)$ and 
it ends above it, which is a contradiction, hence $E'=E$.

Given $\la$, let $\cE(\la)$ denote the unique value of $E$ 
for which a saddles connector $\cS^\Om(\cE(\la),\la) = (\xi(\tau),\Om_{\cE(\la),\la}(\tau))$ exists.
  We now prove that $\cE$ is a $C^1$ function: Consider the two initial value problems for $\Om^\pm(\tau)$, 
the $\Om$ components of $\cW^\pm$:
$$
\dot{\Om}^\pm = g_{E,\la}(\xi(\tau),\Om^\pm),\qquad \Om^-(-\infty) = -\pi+\cos^{-1}E,\quad \Om^+(\infty) = -\cos^{-1}E
$$
By standard ODE theory these two problems have unique smooth solutions $\Om^\pm_{E,\la}(\tau)$ which also depend smoothly 
on the parameters $\la$ and $E$ (so long as $E<1$) for any finite $\tau$. 

Next recall that $\cW^-$ is a saddles connector if it coincides with $\cW^+$, which will be the case if these two orbits intersect 
at one point, e.g. if $\Om_{E,\la}^+(0) = \Om_{E,\la}^-(0)$.
  Let us define a smooth function
\beq\label{eq:IFT}
\cF(E,\la) := \Om^+_{E,\la}(0) -\Om^-_{E,\la}(0)
\eeq
Let $\la_0\in[-1-a,-1+a]$ be fixed, and set $E_0=\cE(\la_0)$.
  Then $\cF(E_0,\la_0) = 0$.
  By the Implicit Function Theorem, if 
\beq\label{cond:IFT}
\frac{\p \cF}{\p E} (E_0,\la_0) \ne 0,
\eeq
then there is a neighborhood $\cI$ of $\la_0$ and a $C^1$ function $\tilde{\cE}$ defined on $\cI$ such that 
$\cF(\tilde{\cE}(\la),\la) =0$ for all $\la\in\cI$. 
 By the uniqueness result we have already shown, we must have $\tilde{\cE} = \cE$.
  Thus we only need to verify the condition \refeq{cond:IFT}.

For $\la\in[-1-a,-1+a]$ and $E\in(0,1)$, let
 $$
u_\pm(\tau) := \frac{\p}{\p E} \Om^\pm_{E,\la}(\tau).
$$
Then $u_\pm$ satisfy the linear ODEs 
\beq\label{eq:v}
\frac{du_\pm}{d\tau} = P_\pm(\tau) u_\pm -2a,\qquad P_\pm(\tau):=-2a\sin\xi(\tau)\sin\Om^\pm_{E,\la}(\tau) + 2\la \cos\xi(\tau)\cos\Om^\pm_{E,\la}(\tau)
\eeq
together with the initial conditions
$$
u_-(-\infty) = \frac{-1}{\sqrt{1-E^2}},\qquad u_+(\infty) = \frac{1}{\sqrt{1-E^2}}.
$$
Moreover,
$$
\cF(E,\la) = u_+(0) - u_-(0).
$$
For $\tau_1^{},\tau_2^{}\in\RR$ let 
$$
U_\pm(\tau_1^{},\tau_2^{}) := e^{-\int_{\tau_1^{}}^{\tau_2^{}} P_\pm(\tau) d\tau}.
$$
Solving the ODEs \refeq{eq:v} for $u_\pm$ we obtain
\beq\label{Us}
U_\pm(\tau_1^{},\tau_2^{})u_\pm(\tau_2^{}) = u_\pm(\tau_1^{}) -2a \int_{\tau_1^{}}^{\tau_2^{}} U_\pm(\tau_1^{},\tau) d\tau.
\eeq
Note that 
$$
\lim_{\tau\to -\infty}P_-(\tau) = -2a\sqrt{1-E^2}< 0,\qquad 
\lim_{\tau\to \infty} P_+(\tau) = 2a\sqrt{1-E^2} > 0.
$$
Thus for any fixed $\tau$, 
$$
U_-(\tau,\tau_1^{})\to 0 \mbox{ as } \tau_1^{}\to -\infty,\qquad
U_+(\tau,\tau_2^{}) \to 0\mbox{ as } \tau_2^{} \to \infty.
$$
And so, from \refeq{Us} we obtain
\beq\label{solv}
u_-(\tau) =  -2a \int_{-\infty}^\tau U_-(\tau,\tau') d\tau',\qquad u_+(\tau) = 2a\int_\tau^\infty U_+(\tau,\tau') d\tau'.
\eeq
We know that $\Om^+_{E_0,\la_0}(\tau) = \Om^-_{E_0,\la_0}(\tau)$ for all $\tau$. 
 Hence $U_+(\tau_1^{},\tau_2^{}) = U_- (\tau_1^{},\tau_2^{})=: U(\tau_1^{},\tau_2^{})$ and thus 
$$
\frac{\p \cF}{\p E} (E_0,\la_0) = 2a\int_{-\infty}^\infty U(0,\tau') d\tau' >0
$$
so that \refeq{cond:IFT} is clearly satisfied.
  Since $\la_0$ was arbitrary we have shown that $\cE\in C^1((-1-a,-1+a))$.

We can furthermore compute the derivative of $\cE$ by implicit differentiation.
  Let 
$$
v_\pm(\tau) := \frac{\p}{\p\la} \Om^\pm_{E,\la}(\tau).
$$
Then $v_\pm$ satisfy
$$
\dot{v}_\pm = P_\pm(\tau) v_\pm + 2 \cos\xi(\tau)\sin\Om^\pm_{E,\la}(\tau),\qquad v_-(-\infty) =0,\quad v_+(\infty) = 0.
$$
Thus by a similar argument to above,
$$
v_-(\tau) = \int_{-\infty}^\tau U_-(\tau,\tau') \cos\xi(\tau')\sin\Om^-_{E,\la}(\tau') d\tau',\qquad
v_+(\tau) = \int_\tau^\infty U_+(\tau,\tau') \cos\xi(\tau')\sin\Om^+_{E,\la}(\tau') d\tau',
$$
so that
$$
\frac{\p \cF}{\p \la}(E_0,\la_0) = -2\int_{-\infty}^\infty U(0,\tau') \cos\xi(\tau')\sin\Om_{E_0,\la_0}(\tau') d\tau',
$$
and thus
$$
\frac{d\cE}{d\la} = -\frac{{\p \cF}/{\p \la}}{{\p \cF}/{\p E}} = 
 \frac{\int_{-\infty}^\infty U(0,\tau) \cos\xi(\tau)\sin\Om_{E,\la}(\tau) d\tau}{a\int_{-\infty}^\infty U(0,\tau)d\tau}.
$$
 Moreover, clearly
$$
\left|\int_{-\infty}^\infty U(0,\tau) \cos\xi(\tau)\sin\Om_{E,\la}(\tau) d\tau\right| <
\int_{-\infty}^\infty U(0,\tau)d\tau.
$$
so that
$$
\left|\frac{d\cE}{d\la}\right|  < \frac{1}{a}.
$$
\end{proof}

%%%%%%%%%%%%%%%%%%%%%%%%%%%%%%%%%%%%%%%%%%%%%%%%%%%%%%
\subsubsection{The iteration argument}
%%%%%%%%%%%%%%%%%%%%%%%%%%%%%%%%%%%%%%%%%%%%%%%%%%%%%%
\begin{thm}\label{thereAREsaddlesaddleCONNECTORS}
Let $a\in (0,\half)$ and $\ga \in (- \sqrt{2a(1-2a)},0)$ be fixed.  There exists a $\la\in [-1-a,-1+a] $ and $E \in (0,1)$ such that \refeq{dynsysTh} has a saddles connector $\cS^\Theta(E,\la)$ and
 \eqref{dynsysOm} has a saddles connector $\cS^\Om(E,\la)$.
\end{thm}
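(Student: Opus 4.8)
The plan is to realize the sought pair $(E,\la)$ as a fixed point obtained by composing the two ``half--solution'' maps already constructed. By Theorem~\ref{thm:La}, for every $E\in[0,1]$ there is a unique $\la=\Lambda(E)\in[-1-a,-1+a]$ for which \eqref{dynsysTh} has a saddles connector from $(0,0)$ to $(\pi,-\pi)$, and $\Lambda$ is $C^1$ with $0\le\Lambda'<a$. By Theorem~\ref{thm:E}, for every $\la\in[-1-a,-1+a]$ there is a unique $E=\cE(\la)\in(0,1)$ for which \eqref{dynsysOm} has a saddles connector, and $\cE$ is $C^1$ with $|\cE'|<\tfrac1a$. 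Since the range $[-1-a,-1+a]$ of $\Lambda$ is exactly the domain of $\cE$, the composition $\Phi:=\cE\circ\Lambda$ is a well--defined map $[0,1]\to(0,1)$; because $(0,1)\subset[0,1]$ it is in fact a continuous self--map of $[0,1]$. Any fixed point $E^*$ of $\Phi$ does the job: setting $\la^*:=\Lambda(E^*)\in[-1-a,-1+a]$ we get $\cE(\la^*)=\cE(\Lambda(E^*))=\Phi(E^*)=E^*\in(0,1)$, so Theorem~\ref{thm:La} yields a saddles connector $\cS^\Theta(E^*,\la^*)$ for \eqref{dynsysTh} and Theorem~\ref{thm:E} yields a saddles connector $\cS^\Om(E^*,\la^*)$ for \eqref{dynsysOm}.

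It therefore remains to produce a fixed point of $\Phi$, and here the iteration promised by the section heading enters. First I would observe that $\Phi([0,1])$ is the continuous image of a compact set lying in $(0,1)$, hence is contained in some compact interval $[\delta,1-\delta]\subset(0,1)$; consequently $\Phi$ restricts to a continuous self--map of $[\delta,1-\delta]$, and on this interval $\Lambda$ takes values in a compact subinterval of the \emph{open} interval $(-1-a,-1+a)$, since by the construction behind Corollary~\ref{cor:thconex} one has $\Lambda(E)\in(-1-a,-1+a)$ for $E\in(0,1)$. Thus on $[\delta,1-\delta]$ both factors of $\Phi$ are $C^1$, and
\[
|\Phi'(E)|=|\cE'(\Lambda(E))|\,\Lambda'(E)<\tfrac1a\cdot a=1 .
\]
As $|\Phi'|$ is continuous on the compact interval $[\delta,1-\delta]$ it attains its supremum there, so there is $q<1$ with $|\Phi'|\le q$, i.e.\ $\Phi$ is a $q$--contraction of $[\delta,1-\delta]$. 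By the Banach fixed point theorem $\Phi$ has a unique fixed point $E^*\in[\delta,1-\delta]$ (and since any fixed point of $\Phi$ in $[0,1]$ lies in $\Phi([0,1])\subseteq[\delta,1-\delta]$, this is the unique fixed point in $[0,1]$), to which the iteration $E_{n+1}=\Phi(E_n)$ converges geometrically from any starting value. A convenient one is $E_0=1$, $\la_0=\Lambda(1)=-1+a$, with $\cS^\Theta_0=(\theta,-\theta)$ the explicit connector of Section~\ref{explicit}, so that the scheme alternates between solving the angular ($\Theta$) and the radial ($\Om$) eigenvalue problems as described above. Together with the first paragraph, this proves the theorem.

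I do not expect a genuine obstruction in this last step, since all the hard analysis is already packaged in Theorems~\ref{thm:La} and~\ref{thm:E}. The single delicate point is that the product of the two a~priori derivative bounds, $a\cdot\tfrac1a$, equals exactly $1$, so strictness of the contraction constant must be extracted from the \emph{strict} pointwise inequalities $\Lambda'<a$ and $|\cE'|<\tfrac1a$ together with compactness; this is precisely why I would first shrink the domain to $[\delta,1-\delta]$ (on which, moreover, $\cE$ is genuinely $C^1$ on the relevant compact subinterval of $(-1-a,-1+a)$) before invoking the mean value inequality. If one wished to avoid even that bookkeeping, the intermediate value theorem applied to $E\mapsto\Phi(E)-E$ on $[\delta,1-\delta]$ already furnishes a fixed point, at the cost of losing uniqueness of $(E^*,\la^*)$.
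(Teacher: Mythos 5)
Your proposal is correct and follows essentially the same route as the paper: both realize the pair $(E,\la)$ as the limit of the iteration $E_{n+1}=\cE(\La(E_n))$ starting from $E_0=1$, $\la_0=-1+a$, and both obtain convergence from the contraction mapping principle using the derivative bounds $\La'<a$ and $|\cE'|<1/a$ of Theorems~\ref{thm:La} and~\ref{thm:E}. Your extra care in extracting a strict contraction constant $q<1$ from the pointwise strict inequalities via compactness addresses exactly the delicate point that the paper glosses over when it writes $\de<a\cdot\frac1a=1$.
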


\begin{proof}
Set $\la_0 = -1+a$.  For $n\geq 1$ let
$$E_n := \cE(\la_{n-1})\in(0,1),\qquad \la_n := \La(E_n)\in [-1-a,-1+a].
$$
Thus 
$$
E_{n+1}= \cE(\La(E_{n})).
$$
By Theorems~\ref{thm:E} and \ref{thm:La} we have
$$
|E_{n+1} - E_n| \leq \de |E_n- E_{n-1}| 
$$
where
$$
\de := \max_{0\leq E\leq1}\left|\frac{d\La}{dE}\right| \max_{-1-a\leq \la\leq -1+a}\left| \frac{d\cE}{d\la}\right|  < 
a\cdot \frac{1}{a} = 1
$$
Thus by the contraction mapping theorem, the sequence $E_n$ converges, and thus so does the sequence $\la_n$.
  Let $\la = \lim_{n\to \infty} \la_n \in [-1-a,-1+a]$ and $E := \lim_{n\to \infty} E_n$.
  We must have $E = \cE(\la)$ and thus $0<E<1$.  
\end{proof}

%%%%%%%%%%%%%%%%%%%%%%%%%%%%%%%%%%%%%%%%%%%%%%%%%%%%%%
%%%%%%%%%%%%%%%%%%%%%%%%%%%%%%%%%%%%%%%%%%%%%%%%%%%%%%
%%%%%%%%%%%%%%%%%%%%%%%%%%%%%%%%%%%%%%%%%%%%%%%%%%%%%%
\section{Summary and Outlook}
%%%%%%%%%%%%%%%%%%%%%%%%%%%%%%%%%%%%%%%%%%%%%%%%%%%%%%
%%%%%%%%%%%%%%%%%%%%%%%%%%%%%%%%%%%%%%%%%%%%%%%%%%%%%%
%%%%%%%%%%%%%%%%%%%%%%%%%%%%%%%%%%%%%%%%%%%%%%%%%%%%%%

        We have studied the Dirac equation for a point electron in static, electromagnetic, flat 
spacetimes with Zipoy topology which include the zero-gravity limit of the electromagnetic Kerr--Newman spacetimes as 
special case, but which can feature a generalization of the Appell--Sommerfeld electromagnetic fields with any 
charge $Q$ and current $I$ one wants; the zero-$G$ Kerr--Newman spacetimes correspond to $Q=I\pi a$.
        In contrast to similar-spirited studies of the Dirac equation for a point electron on the Kerr--Newman spacetime,
which are plagued by the presence of a Cauchy horizon and regions of closed timelike loops, 
\cite{KalMil1992,BelMar99,FinsterETalDperDNEerr,FinsterETalDperDNE,FinsterETalDKNa,FinsterETalDKNb,WINKLMEIERa,WINKLMEIERb,WINKLMEIERc,BelCac2010},
our zero-$G$ spacetimes do not possess any such physically troublesome features.
       Moreover, by working with the topologically non-trivial maximal analytical extension of z$G$KN and its electromagnetic
fields, our treatment does not encounter physically troublesome problems like infinite charges, currents, and masses, which plague the topologically trivial Minkowski spacetime interpretations 
\cite{Isr70,Lyn04,GaiLyn07,Kai04}.\footnote{For $G>0$, the disk singularity of the ``single-leafed" truncation of the maximal analytically extended z$G$KN spacetime also features a negative mass density rotating at superluminal speed.}

        We proved that the  spectrum of any self-adjoint extension of the pertinent Dirac Hamiltonian 
is symmetric about zero; this result holds for any charge $Q$ and current $I$ of the generalized z$G$KN spacetimes.\footnote{This 
  result was to be expected, for on the z$G$KN spacetime the Dirac electron ``sees'' a charge $Q$ in one sheet and a charge $-Q$ 
in the other, and recalling that the Dirac operator for an electron in the Coulomb potential of a point proton in Minkowski 
spacetime has a symmetric continuous spectrum, plus a positive point spectrum which maps into a negative mirror image of it 
under the switch of the coupling constant $-e^2\to e^2$ (by charge conjugation).}
  	We have also shown that the formal Dirac Hamiltonian on a spacelike slice of the maximal analytically 
extended static z$G$KN spacetime is essentially self-adjoint.\footnote{We note that by Stone's theorem there exists a 
  unitary one-parameter group on the Hilbert space, generated by the unique self-adjoint extension of the Hamiltonian, 
  which yields the time-evolution of the Dirac bi-spinors on the spacelike static slice of the z$G$KN spacetime. 
   Thus the naked ring singularity does not cause any trouble.}
           We also showed that the self-adjoint Dirac operator on the z$G$KN spacetime has a continuous spectrum 
with a gap about zero which, under two smallness conditions, contains a pure point spectrum.

         Our results are far from exhaustive. 
         In the following we list a number of interesting open problems which we hope will be solved in some future work.

         We begin with the Dirac point electron in z$G$KN spacetimes:

\begin{itemize}
%\item{\textbf{Problem 1}}
%Will the spectral gap around zero of Theorem~\ref{thm:ptspecGAP} disappear for $|a|\geq \half$?

\item{\textbf{Problem 1}}:
Characterize the point spectrum of the Dirac Hamiltonian on z$G$KN in complete detail; to the extent possible, compute it
analytically, or at least numerically in some representative situations.

\begin{rem}\textit{
We suspect that there is a countably infinite set of energy eigenvalues
which correspond  with pairs of saddles connectors of arbitrary  winding numbers $(w_1,w_2)\in\Zset^2$;
the possibility of such saddles connectors we already established, see Theorem \ref{thereAREsaddlesaddleCONNECTORS}.
% Moreover, we surmise that the energy levels (or certain finite families of energy levels, internally labelled
%by additional parameters) are labelled by the winding numbers of the saddles connectors, with left- and right-handedness of 
%one of the saddles connectors corresponding to the sign of the energy eigenvalues.
}
\end{rem}

\item {\textbf{Problem 2}}: 
Discuss the generalized scattering problem for Dirac spinor fields on the z$G$KN spacetimes.
In particular, investigate the evolution when (part of) the Dirac spinor field ``dives'' through the ring from one
sheet to the other.

\end{itemize}

We now come to the Dirac point electron in z$G$K spacetimes equipped with a generalization of the Appell--Sommerfeld electromagnetic 
fields to arbitrary charge $Q$ and current $I$:

\begin{itemize}
\item {\textbf{Problem 3}}: 
For the formal Dirac Hamiltonian on the $(Q,I)$-generalization of the z$G$KN spacetime (given $a$), show
that essential self-adjointness holds if the ``coupling constant'' $(Q-I\pi a)e$ is small in magnitude; 
perhaps using a so-called Hardy--Dirac type estimate.  

\item{\textbf{Problem 4}}:
Suppose essential self-adjointness fails if $|(Q-I\pi a)e|$ is too large. 
If so, what is the sharp constant for  $|(Q-I\pi a)e|$?
Determine the $(Q,I)$-parameter regimes (given $a$)
in which the Dirac Hamiltonian on a generalization of the z$G$KN spacetime with Sommerfeld  fields \refeq{def:AQI}
has several self-adjoint extensions, respectively has no self-adjoint extension.
Amongst the self-adjoint extensions, can one identify a distinguished one?

\item{\textbf{Problem 5}}:
In the cases of self-adjointness, characterize the spectrum of the Dirac Hamiltonian in complete detail; to the extent possible, 
compute the spectrum analytically, or at least numerically in some representative situations. 
In particular:

\item{\textbf{Problem 5a}}:
 The continuous spectrum of quantum physical operator families is usually very robust.  
Show that Theorem \ref{thm:essspec} holds for the $(Q,I)$ generalization of our Dirac operators, 
at least as long as $(Q -I\pi a)e$ is small in magnitude. 

\item{\textbf{Problem 5b}}:
Same consideration as above, for Theorem \ref{thm:ptspec}; thus:
  Can an eigenvalue of $\hat{H}$ on $\sf H$ for $Q =I\pi a$ be continuously deformed into an eigenvalue of $\hat{H}$ on $\sf H$ 
for $Q\ne I\pi a$ as long as $(Q -I\pi a)e$ is sufficiently small in magnitude? 
If so, does the size of the neighborhood of $Q=I\pi a$ into which an eigenvalue can be continued  depend on the eigenvalue, 
or can one have a uniform control on the spectrum w.r.t. the coupling constant $(Q -I\pi a)e$?
Is it possible that the point spectrum disappears completely if $(Q -I\pi a)e$ becomes too large in magnitude?

\item {\textbf{Problem 6}}: 
Same as Problem 2, now for Dirac spinor fields on z$G$K equipped with generalizations of the Appell--Sommerfeld 
fields to arbitrary $Q$ and $I$. 

\end{itemize}

So far our problems concern the Dirac equation on the zero-gravity limit case of the KN spacetimes, and its generalization
to arbitrary $Q$ and $I$. 
To make contact with the existing studies of Dirac's electron in Kerr--Newman spacetimes, the following bifurcation
problem suggests itself:

\begin{itemize}
\item {\textbf{Problem 7}}: 
Deform these zero-gravity spacetimes perturbatively by ``switching on'' $G$ and discuss the Dirac equation on them 
perturbatively as well.
In particular, is it possible to perturb into generalizations of the Kerr--Newman spacetime with gyromagnetic ratios 
amounting to $g$-factors $g\neq g_{\mbox{\tiny{KN}}}^{}=2$, or is such a perturbation feasible only if $g =g_{\mbox{\tiny{KN}}}^{}=2$,
viz. if $Q=I\pi a$?
\end{itemize}

In all the above problems,  the energy(-density)-momentum(-density)-stress tensor is of classical electromagnetic nature.
The Dirac spinor field does not influence the spacetime structure (in the zero-gravity limit the electromagnetic
fields do not influence the spacetime structure, either).
More to the point, Dirac's point electron is treated as a \emph{test particle} in this paper and in all the above problems,
which should be a good approximation if the z$G$KN ring singularity is much more massive and can approximately be 
treated as ``infinitely massive.''
 Yet, since test particles are only physical fiction, no matter how useful practically, an obvious task is to
investigate the Dirac electron \emph{not} as a test particle. 
Thus:

\begin{itemize}

\item {\textbf{Problem 8}}: 
Treat the quantum-mechanical interaction of the Dirac electron with the ring singularity of the zero-$G$ Kerr--Newman spacetime 
symmetrically as a two-body problem.

\begin{rem}\textit{
Of course, the same problem has not even been completely solved yet for the simpler setting of ``Dirac Hydrogen'' in Minkowski
spacetime; i.e., how to go beyond the traditional textbook problem where the relativistic Hydrogen problem is treated
by solving the Dirac equation of a point electron in flat Minkowski spacetime containing an \emph{infinitely massive}
positive point charge (representing the proton).
   Traditionally the problem of finite mass of the proton (or nucleus, more generally) 
is addressed by perturbation theory, starting from Pauli's two-body equation and adding
``relativistic corrections'' in powers of $1/c$; or by perturbative QED-type calculations.
 The constrained relativistic two-body approach of Bethe--Salpeter \cite{BeSaBOOK} is perhaps the
closest one has come to solving this problem.
  In a similar vein one may be able to take a finite ``ADM mass'' of the ring singularity into account.
 }
\end{rem}

\item {\textbf{Problem 9}}: 
Same as Problem 8, but now with the $(Q,I)$ generalizations of z$G$KN.

\item {\textbf{Problem 10}}: 
Same as Problems 8 and 9, but now perturbatively for $G>0$.

\item {\textbf{Problem 11}}: 
Compute the feedback of the Dirac spinor field onto the spacetime structure perturbatively when $G>0$.
This amounts to the perturbative discussion of the so-called Einstein--Maxwell--Dirac system for small $G$. 
In this problem the energy-momentum-stress tensor, in addition to classical electromagnetic fields, also
involves the Dirac spinor field which now influences the spacetime structure.

\end{itemize}

All these problems are difficult, and the amount of work needed to solve all of them can only be handled by the involvement of
many mathematical physicists. 
In this vein we hope that our paper inspires some readers to join us in our pursuit.
We ourselves have made some progress on problems 1, 3, 8, and 9, which we plan to report in forthcoming publications.

\bigskip

\noindent{\textbf{Acknowledgement:}} We thank Friedrich Hehl for his comments on an earlier version of
this paper and for bringing several references to our attention.
We are  also thankful to the anonymous referee for constructive comments.
%\newpage

\baselineskip=13pt
\bibliographystyle{plain}
%\bibliography{KTZzGKND}

\begin{thebibliography}{10}

\bibitem{Alz04}
F.~E. Alzofon.
\newblock {\em Two methods for the exact solution of diffraction problems},
  volume 127.
\newblock SPIE Press, 2004.

\bibitem{Appell}
P.~Appell.
\newblock Quelques remarques sur la th\'eorie des potentiels multiforms.
\newblock {\em Math. Ann.}, 30:155--156, 1887.

\bibitem{WINKLMEIERa}
D.~Batic, H.~Schmid, and M.~Winklmeier.
\newblock On the eigenvalues of the {C}handrasekhar-{P}age angular equation.
\newblock {\em J. Math. Phys.}, 46(1):012504, 35, 2005.

\bibitem{BelCac2010}
F.~Belgiorno and S.~L. Cacciatori.
\newblock The {D}irac equation in {K}err--{N}ewman--{A}d{S} black hole background.
\newblock {\em J. Math. Phys.}, 51(3):033517, 32, 2010.

\bibitem{BelMar99}
F.~Belgiorno and M.~Martellini.
\newblock Quantum properties of the electron field in {K}err–-{N}ewman black
  hole manifolds.
\newblock {\em Physics Letters B}, 453(1–2):17 -- 22, 1999.

\bibitem{BeSaBOOK}
H.A. Bethe and E.E. Salpeter.
\newblock {\em Quantum {M}echanics of {O}ne- and {T}wo-{E}lectron {A}toms}.
\newblock Plenum Press, 1977.

\bibitem{BoyLin67}
R.~H. Boyer and R.~W. Lindquist.
\newblock Maximal analytic extension of the {K}err metric.
\newblock {\em J. Math. Phys.}, 8(2):265--281, 1967.

\bibitem{BrillCohen66}
D.~R. Brill and J.~M. Cohen.
\newblock Cartan frames and the general relativistic {D}irac equation.
\newblock {\em J. Math. Phys.}, 7(2):238--243, 1966.

\bibitem{CanJad98}
D.~Canarutto and A.~Jadczyk.
\newblock Fundamental geometric structures for the dirac equation in general
  relativity.
\newblock {\em Acta Appl. Math.}, 51(1):59--92, 1998.

\bibitem{Car68}
B.~Carter.
\newblock Global structure of the {K}err family of gravitational fields.
\newblock {\em Phys. Rev.}, 174:1559--1571, 1968.

\bibitem{CarMcL82}
B.~Carter and R.~G. McLenaghan.
\newblock Generalized master equations for wave equation separation in a {K}err
  or {K}err-{N}ewman black hole background.
\newblock In R.~Ruffini, editor, {\em Proceedings of the second {M}arcel
  {G}rossmann meeting on general relativity}, pages 575--585. North-Holland
  Publishing Company, 1982.

\bibitem{ChandraDIRACinKERRgeomERR}
S.~Chandrasekhar.
\newblock Errata: ``{T}he solution of {D}irac's equation in {K}err geometry''.
\newblock {\em Proc. Roy. Soc. London Ser. A}, 350(1663):565, 1976.

\bibitem{ChandraDIRACinKERRgeom}
S.~Chandrasekhar.
\newblock The solution of {D}irac's equation in {K}err geometry.
\newblock {\em Proc. Roy. Soc. London Ser. A}, 349(1659):571--575, 1976.

\bibitem{ChandraBOOKonBH}
S.~Chandrasekhar.
\newblock {\em The Mathematical Theory of Black Holes}.
\newblock Oxford University Press, New York, 1983.

\bibitem{Che73}
P.~R. Chernoff.
\newblock Essential self-adjointness of powers of generators of hyperbolic
  equations.
\newblock {\em J. Funct. Anal.}, 12(4):401--414, 1973.

\bibitem{Evans51}
G.~C. Evans.
\newblock {\em Lectures on Multiple-Valued Harmonic Functions in Space}.
\newblock Univ. of California Press, Berkeley and Los Angeles, 1951.

\bibitem{FinsterETalDperDNEerr}
F.~Finster, N.~Kamran, J.~Smoller, and S.-T. Yau.
\newblock Erratum: ``{N}onexistence of time-periodic solutions of the {D}irac
  equation in an axisymmetric black hole geometry''.
\newblock {\em Comm. Pure Appl. Math.}, 53(9):1201, 2000.

\bibitem{FinsterETalDperDNE}
F.~Finster, N.~Kamran, J.~Smoller, and S.-T. Yau.
\newblock Nonexistence of time-periodic solutions of the {D}irac equation in an
  axisymmetric black hole geometry.
\newblock {\em Comm. Pure Appl. Math.}, 53(7):902--929, 2000.

\bibitem{FinsterETalDKNa}
F.~Finster, N.~Kamran, J.~Smoller, and S.-T. Yau.
\newblock Decay rates and probability estimates for massive {D}irac particles
  in the {K}err-{N}ewman black hole geometry.
\newblock {\em Comm. Math. Phys.}, 230(2):201--244, 2002.

\bibitem{FinsterETalDKNb}
F.~Finster, N.~Kamran, J.~Smoller, and S.-T. Yau.
\newblock The long-time dynamics of {D}irac particles in the {K}err-{N}ewman
  black hole geometry.
\newblock {\em Adv. Theor. Math. Phys.}, 7(1):25--52, 2003.

\bibitem{GaiLyn07}
J.~R. Gair and D.~Lynden-Bell.
\newblock Electromagnetic fields of separable spacetimes.
\newblock {\em Classical and Quantum Gravity}, 24(6):1557, 2007.

\bibitem{Gla65}
I.~M. Glazman.
\newblock {\em Direct methods of qualitative spectral analysis of singular
  differential operators}.
\newblock Israel Program for Scientific Translations Jerusalem, 1965.

\bibitem{HawEll73}
S.~Hawking and G.~Ellis.
\newblock {\em The Large Scale Structure of Space-Time}.
\newblock Cambridge Univ. Press, Cambridge, 1973.

\bibitem{HehlREVIEW}
C.~Heinicke and F.~W. Hehl.
\newblock Schwarzschild and {K}err solutions of {E}instein's equations -- an
  introduction.
\newblock {\em World Scientific Review volume, to appear}, 2014.

\bibitem{Isr70}
W.~Israel.
\newblock Source of the {K}err metric.
\newblock {\em Phys. Rev. D (3)}, 2:641--646, 1970.

\bibitem{Kai04}
G.~Kaiser.
\newblock Distributional sources for {N}ewman's holomorphic {C}oulomb field.
\newblock {\em J. Phys. A}, 37(36):8735, 2004.

\bibitem{KalMil1992}
E.~G. Kalnins and W.~Miller, Jr.
\newblock Series solutions for the {D}irac equation in {K}err-{N}ewman
  space-time.
\newblock {\em J. Math. Phys.}, 33(1):286--296, 1992.

\bibitem{Ker63}
R.~P. Kerr.
\newblock Gravitational field of a spinning mass as an example of algebraically
  special metrics.
\newblock {\em Phys. Rev. Lett.}, 11:237--238, 1963.

\bibitem{KieTah14b}
M.~K.-H. Kiessling and A.~S. Tahvildar-Zadeh.
\newblock A novel quantum-mechanical interpretation of the {D}irac equation.
\newblock {\em arXiv:1411.2296}, submitted, 2014.

\bibitem{Kin69}
W.~Kinnersley.
\newblock Type {D} vacuum metrics.
\newblock {\em J. Math. Phys.}, 10:1195, 1969.

\bibitem{Lyn04}
D.~Lynden-Bell.
\newblock Electromagnetic magic: The relativistically rotating disk.
\newblock {\em Phys. Rev. D}, 70:105017, Nov 2004.

\bibitem{Neu51}
S.~F. Neustadter.
\newblock {\em Multiple valued harmonic functions with circle as branch curve},
  volume~1.
\newblock University of California Press, 1951.

\bibitem{NCCEPT65}
E.~T. Newman, E.~Couch, K.~Chinnapared, A.~Exton, A.~Prakash, and R.~Torrence.
\newblock Metric of a rotating, charged mass.
\newblock {\em J. Math. Phys.}, \textbf{6}(6):918--919, 1965.

\bibitem{ONeillBOOK}
B.~O'Neill.
\newblock {\em The {G}eometry of {K}err black holes}.
\newblock L.K. Peters, Wellesley, 1995.

\bibitem{OppSny}
R.J. Oppenheimer and M.~Snyder.
\newblock On continued gravitational contraction.
\newblock {\em Phys. Rev.}, 56:455–--459, 1939.

\bibitem{Page76}
D.~Page.
\newblock Dirac equation around a charged, rotating black hole.
\newblock {\em Phys. Rev. D}, 14:1509--1510, 1976.

\bibitem{PenroseNAKEDsing}
R.~Penrose.
\newblock Naked singularities.
\newblock {\em Ann. New York Acad. Sci.}, 224:125--134, 1973.

\bibitem{PenroseCOSMICcensor}
R.~Penrose.
\newblock The question of cosmic censorship.
\newblock {\em J. Astrophys. Astr.}, 20:233--248, 1999.

\bibitem{Pru26}
H.~Pr{\"u}fer.
\newblock Neue {H}erleitung der {S}turm-{L}iouvilleschen {R}eihenentwicklung
  stetiger {F}unktionen.
\newblock {\em Math. Ann.}, 95(1):499--518, 1926.

\bibitem{Som97}
A.~Sommerfeld.
\newblock {\"U}ber verzweigte {P}otentiale im {R}aum.
\newblock {\em Proc. London Math. Soc.}, s1-28(1):395--429, 1896.

\bibitem{StraumannBOOK}
N.~Straumann.
\newblock {\em General Relativity and Relativistic Astrophysics}.
\newblock Springer-Verlag, Berlin, 1984.

\bibitem{SufFacCos83}
K.~G. Suffern, E.~D. Fackerell, and C.~M. Cosgrove.
\newblock Eigenvalues of the {C}handrasekhar–-{P}age angular functions.
\newblock {\em J. Math. Phys.}, 24(5):1350--1358, 1983.

\bibitem{zGKN}
A.~S. Tahvildar-Zadeh.
\newblock On a zero-gravity limit of the {K}err--{N}ewman spacetimes and their
  electromagnetic fields.
\newblock {\em J. Math. Phys.}, 56:042501, 2015.

\bibitem{Teu72}
S.~A. Teukolsky.
\newblock Rotating black holes: Separable wave equations for gravitational and
  electromagnetic perturbations.
\newblock {\em Phys. Rev. Lett.}, 29(16):1114, 1972.

\bibitem{Thaller}
B.~Thaller.
\newblock Potential scattering of {D}irac particles.
\newblock {\em J. Phys. A: Math. Gen.}, 14:3067--3083, 1981.

\bibitem{ThallerBOOK}
B.~Thaller.
\newblock {\em The {D}irac equation}.
\newblock Springer-Verlag, Berlin, 1992.

\bibitem{Too76}
N.~Toop.
\newblock The thermal radiation of electrons from a charged spinning black hole
  in a cosmological background.
\newblock {\em preprint DAMTP, Cambridge}, 1976.

\bibitem{Wei82}
J.~Weidmann.
\newblock Absolut stetiges {S}pektrum bei {S}turm-{L}iouville-{O}peratoren und
  {D}irac-{S}ystemen.
\newblock {\em Math. Z.}, 180(2):423--427, 1982.

\bibitem{WinklmeierPHD}
M.~Winklmeier.
\newblock {\em The Angular Part of the Dirac Equation in the Kerr-Newman
  Metric: Estimates for the Eigenvalues}.
\newblock PhD thesis, Universit{\"a}t Bremen, 2005.

\bibitem{WINKLMEIERb}
M.~Winklmeier and O.~Yamada.
\newblock Spectral analysis of radial {D}irac operators in the {K}err-{N}ewman
  metric and its applications to time-periodic solutions.
\newblock {\em J. Math. Phys.}, 47(10):102503, 17, 2006.

\bibitem{WINKLMEIERc}
M.~Winklmeier and O.~Yamada.
\newblock A spectral approach to the {D}irac equation in the non-extreme
  {K}err-{N}ewmann metric.
\newblock {\em J. Phys. A}, 42(29):295204, 15, 2009.

\bibitem{Zip66}
D.~M. Zipoy.
\newblock Topology of some spheroidal metrics.
\newblock {\em J. Math. Phys.}, 7(6):1137--1143, 1966.

\end{thebibliography}
\def\cprime{$'$}

\end{document}